\documentclass[lettersize,onecolumn]{IEEEtran}
\usepackage[utf8]{inputenc}
\usepackage[T1]{fontenc}
\usepackage{url}
\usepackage{ifthen}
\usepackage{amssymb,amsfonts}
\usepackage{amsthm}
\usepackage{mathrsfs}
\usepackage{graphicx}
\usepackage{cite}

\usepackage[cmex10]{amsmath}
\newtheorem{theorem}{Theorem}[]
\newtheorem{example}{Example}[]%
\newtheorem{definition}{Definition}[]
\newtheorem{remark}{Remark}[]%
\newtheorem{lemma}{Lemma}[]
\newtheorem{proposition}{Proposition}[]   
\newtheorem{corollary}{Corollary}[]  
\newtheorem{conjecture}{Conjecture}[]                          

\begin{document}
	\title{On the Hamming Distance and LCD Properties of Binary Polycyclic Codes and Their Duals} 
	\author{
		Sujata Bansal and
		Pramod Kumar Kewat
		\thanks{
			Sujata Bansal and Pramod Kumar Kewat are with the Department of Mathematics and Computing, 
			Indian Institute of Technology (ISM), Dhanbad, India (email: sujatabansal8@gmail.com, pramodk@iitism.ac.in).
		}
	}
	\maketitle
	
	\begin{abstract}
Polycyclic codes offer a natural generalization of cyclic codes and provide a broader algebraic framework for constructing linear codes with good parameters. In this paper, we study binary polycyclic codes associated with powers of irreducible polynomials. We first determine their complete algebraic structure and then develop general results on their minimum Hamming distance, including several exact values and bounds. We also examine the Euclidean duals of these codes and derive corresponding results on the Hamming distance of the dual codes. Furthermore, we study the LCD (linear complementary dual) properties of binary polycyclic codes, establish necessary and sufficient conditions for such codes to be LCD codes, and construct several families of binary LCD codes. Our constructions also yield many optimal and LCD optimal binary linear codes, including codes of larger lengths. We then focus on binary polycyclic codes associated with powers of the self-reciprocal irreducible trinomials $x^{2\cdot3^v}+x^{3^v}+1$, where $v\geq0$. For this class, we determine the exact Hamming distance of all such codes and show that these codes are reversible. Moreover, we show that these codes are LCD codes in certain cases. In addition, we propose a conjecture asserting that all  binary polycyclic codes associated with $\big(x^{2\cdot3^v}+x^{3^v}+1\big)^{2^\mathcal{T}}$, where  $v\geq 0$ and $\mathcal{T}\geq1$, are LCD codes.  These results demonstrate that binary polycyclic codes form a rich source of structured codes with strong distance, duality, reversibility, and LCD properties.
		
	\end{abstract}
	\begin{IEEEkeywords}
		Linear codes, Cyclic codes,  Polycyclic codes, Sequential codes, LCD codes.
	\end{IEEEkeywords}
\section{Introduction}\label{sec1}
Cyclic codes are one of the most classical and extensively studied families of error-correcting codes because of their elegant algebraic structure and practical efficiency. A cyclic code of length $n$ over a finite field $\mathbb{F}_q$ corresponds to an ideal of the quotient ring $\frac{\mathbb{F}_q[x]}{\langle x^n-1 \rangle}$. This representation enables the analysis of cyclic codes using algebraic methods and provides efficient encoding and decoding procedures. Consequently, cyclic codes have remained a central topic in coding theory for several decades \cite{macwilliams1977theory,huffman2010fundamentals}.
Despite their rich structure and extensive literature, cyclic codes are inherently constrained by the polynomial $x^n-1$. In particular, every generator polynomial of a cyclic code of length $n$ must divide $x^n-1$. Consequently, the search space for good codes within the cyclic family may be limited in certain situations. This has motivated the study of broader classes obtained by replacing $x^n-1$ with a more general polynomial $f(x)$. A polycyclic code can be viewed as an ideal of a quotient ring of the form $\frac{\mathbb{F}_q[x]}{\langle f(x) \rangle}$, where $f(x)\in \mathbb{F}_q[x]$ (see \cite{lopez2009dual}). This framework includes several well-known families, such as cyclic and constacyclic codes, and therefore significantly enlarges the search space for linear codes with desirable parameters. 

For a linear code, the fundamental parameters are its length $n$, dimension $k$, and minimum Hamming distance $d$, usually denoted as $[n,k,d]$. In cyclic and polycyclic codes, the parameters $n$ and $k$ can often be determined directly from the defining polynomial and the generator polynomial of the code. In contrast, the determination of the minimum Hamming distance $d$ is generally far more difficult and remains one of the central problems in coding theory. Consequently, the computation of minimum distance has been widely investigated for several structured families of codes, including cyclic codes, repeated-root cyclic codes, and constacyclic codes \cite{van2003minimum,dinh2008linear,moreno2002minimum,feng2002generalized,luo2024improved,alfarano2021roos,augot1996description}. This difficulty is particularly significant because the error-correcting capability of a code depends on $d$. In the present work, we address this problem for binary polycyclic codes associated with powers of irreducible polynomials by deriving general results, exact values, and bounds for the minimum Hamming distance of all codes in the family under consideration.

An important distinction between cyclic and polycyclic codes arises in the study of duality. The Euclidean dual of a cyclic code is again cyclic. In contrast, the dual of a polycyclic code does not necessarily remain polycyclic. Instead, it is naturally related to another class of linear codes known as sequential codes\cite{lopez2009dual}. Sequential codes were introduced by Hou et al.\cite{hou2009rational} in 2009, where several examples were given to show that, for certain parameters, these codes can achieve an optimal minimum distance that is unattainable within the class of cyclic codes. This has significantly increased interest in the study of sequential codes and in understanding their structural and distance properties. These observations indicate that duality in the polycyclic setting is considerably richer and more subtle than in the classical cyclic case, making the subject both intriguing and worthy of further investigation. During the past decade, polycyclic codes over many alphabets, such as Galois rings, finite fields, and local rings, have been extensively studied \cite{bajalan2022transform,bajalan2024structure,bajalan2025polycyclic,shi2020polycyclic,lopez2013polycyclic}. Recently, Aydin et al. \cite{aydin2022polycyclic}  investigated polycyclic codes over finite fields associated with monic trinomials, analyzed their algebraic properties, and constructed many optimal codes.

A similar contrast arises for LCD codes. A linear code $C$ is called an LCD code if $C \cap C^\perp=\{0\}$. LCD codes, introduced by Massey \cite{massey1992linear}, have attracted considerable attention because of their applications in communication systems, data storage, consumer electronics, and, more recently, cryptography, particularly in resisting certain side-channel and fault injection attacks (see, for example, \cite{carlet2014complementary,alahmadi2020multisecret,mesnager2017complementary}). For cyclic codes, the dual of a cyclic code is again cyclic, and hence the intersection $C \cap C^\perp$ can be characterized through the generator polynomials of $C$ and $C^\perp$. The authors in \cite{yang1994condition} provided the criteria necessary for a cyclic code to have a complementary dual. Therefore, in many situations, verifying whether a cyclic code is an LCD code is relatively simple. However, for polycyclic codes, no equally straightforward general criterion exists, and assessing the LCD property typically requires a deeper analysis of the interaction between the code and its dual. A major breakthrough was obtained by Carlet et al.\cite{carlet2018linear} in 2018, who proved that for $q>3$, every linear code over the finite field $\mathbb{F}_q$ is equivalent to an LCD code. This result narrows the construction problem primarily to the binary and ternary cases. More recently, in \cite{bansal2026binary}, the authors studied binary polycyclic codes and identified several families of binary linear codes that are LCD codes. This indicates that polycyclic codes are promising candidates for constructing LCD codes with good parameters.

Motivated by these observations, in this paper we study binary polycyclic codes associated with powers of irreducible polynomials. Let $P(x)$ be a binary irreducible polynomial of degree $m\geq 2,$ and let $\mathcal{L}\geq 2$ be a positive integer. We consider the quotient ring $\frac{\mathbb{F}_2[x]}{\langle P(x)^\mathcal{L}\rangle}$ and investigate the corresponding family of binary polycyclic codes. We first determine the complete algebraic structure of these binary polycyclic codes by describing all ideals of the ambient quotient ring. We then develop general methods to study their minimum Hamming distance and derive several exact values and bounds. Next, we investigate the Euclidean duals of these codes and obtain corresponding results on the Hamming distance of the dual family.

We also study the LCD property of binary polycyclic codes in this setting. We establish necessary and sufficient conditions for a polycyclic code associated with $P(x)^\mathcal{L}$ to be an LCD code.  Using these criteria, we construct several  families of binary LCD codes.  Another important aspect in this direction is determining the maximum possible Hamming distance, denoted by $LCD(n,k)$, of an LCD code with length $n$ and dimension $k.$ This problem has received considerable attention in the literature. The exact value of $LCD(n,k)$ for $1\leq k \leq 5$ is known from \cite{dougherty2017combinatorics,galvez2018bounds, harada2019binary, araya2021characterization, liu2024minimum}. In addition, several results are available for LCD codes of large dimension \cite{araya2021minimum, araya2024characterizations}. Significant progress has also been made in determining exact values or bounds for $LCD(n,k)$ when $n\leq 50$ \cite{bouyuklieva2021optimal,ishizuka2023construction,li2024several,wang2024new}. However, for lengths beyond $50$, comparatively fewer results are known, and the existing bounds allow estimation of $LCD(n,k)$ mainly when $1\leq k \leq 5$ or $n-7\leq k \leq n$. Our constructions yield many optimal and LCD optimal binary linear codes, including codes of larger lengths. These results demonstrate that binary polycyclic codes form a rich source of structured codes with strong distance and duality properties.

A substantial portion of this work focuses on irreducible trinomials of the form $x^{2n}+x^{n}+1$, where $n$ is a positive integer. It is well known that such trinomials are irreducible over $\mathbb{F}_2$ if and only if $n =3^v$ for some integer $v \geq 0$ \cite{lidl1997finite}. Consequently,  these polynomials characterize all self-reciprocal irreducible trinomials over $\mathbb{F}_2.$ For this family, we determine the Hamming distance of all associated binary polycyclic codes corresponding to arbitrary powers of $x^{2\cdot3^v}+x^{3^v}+1$. Moreover, since these trinomials are self-reciprocal, the corresponding codes are reversible, thereby endowing them with an additional structural property of both theoretical and practical importance. We show that these codes are LCD codes in certain cases. Motivated by computational evidence, we further conjecture that all  binary polycyclic codes associated with $\left(x^{2\cdot3^v}+x^{3^v}+1\right)^{2^\mathcal{T}}$, where  $v\geq 0$, $\mathcal{T}\geq1$, are LCD codes. 

Throughout this paper, we adopt the notation $\mathbb{F}_q$ to denote a finite field of order $q$, $P(x)$ to represent a binary irreducible polynomial of degree $m\geq2$, and $\mathcal{P}$ for the quotient ring $\frac{\mathbb{F}_2[x]}{\left\langle  P(x)^\mathcal{L} \right\rangle}$, where $\mathcal{L}\geq2.$ We also follow the convention that the zero polynomial is assigned a degree of $-1$. The organization of the paper is as follows: Section~\ref{sec2} presents the necessary preliminaries and notations. In Section~\ref{sec3}, we examine the structure and Hamming distance of binary polycyclic codes associated with powers of irreducible polynomials.   Section~\ref{sec4} is devoted to the Hamming distance of the dual codes.  In Section~\ref{sec5}, we investigate the LCD property and present constructions of LCD codes. Section~\ref{sec6} studies codes associated with powers of the self-reciprocal irreducible trinomials $x^{2\cdot3^v}+x^{3^v}+1$, including their Hamming distance, LCD property, and reversibility. Finally, in Section~\ref{sec7}, we conclude the paper and discuss some directions for further research.
\section{Preliminaries}\label{sec2}
\begin{definition}
	Let $\mathfrak{C}$ be a subset of $\mathbb{F}_q^n,$ where $\mathbb{F}_q$ is the finite field of order $q$. Then $\mathfrak{C}$ is called a code of length $n$ over  $\mathbb{F}_q$. Moreover, if $\mathfrak{C}$ is a vector subspace of  $\mathbb{F}_q^n$,  then it is said to be a linear code. The dual code of $\mathfrak{C}$ is defined as 
	\begin{equation*}
		\mathfrak{C}^\perp= \left\{ \mathbf{a} \in \mathbb{F}_q^n \mid \mathbf{a}\cdot\mathbf{c}=0  \text{ for all } \mathbf{c}\in \mathfrak{C}  \right\},
	\end{equation*} where $\mathbf{a}\cdot\mathbf{c}$ is the Euclidean inner product of $\mathbf{a}= (a_0, a_1,\ldots,a_{n-1})$  and $\mathbf{c}= (c_0, c_1,\ldots,c_{n-1})$, defined as $\mathbf{a}\cdot\mathbf{c}= \sum_{i=0}^{n-1} a_ic_i. $
\end{definition}
\begin{definition}
	Let $\mathbf{u}$ and $\mathbf{v}$ be two vectors in $\mathbb{F}_q^n$. Then the Hamming weight of $\mathbf{u}=(u_0, u_1,\ldots, u_{n-1})$ is defined as
	\begin{equation*}
		wt_H(\mathbf{u}) =  | \{i\mid u_i\ne 0\} |.
	\end{equation*} The Hamming distance between $\mathbf{u}=(u_0, u_1,\ldots, u_{n-1})$ and $\mathbf{v}=(v_0, v_1,\ldots, v_{n-1})$ is defined as 
	\begin{equation*}
		d_H(\mathbf{u}, \mathbf{v} )= |\{i\mid u_i \ne v_i\}|.
	\end{equation*} For a code $\mathfrak{C}$ of length $n$ over $\mathbb{F}_q$, the minimum Hamming distance of  $\mathfrak{C}$ is defined as 
	\begin{equation*}
		d_H( \mathfrak{C})= \min \{d_H(\mathbf{u}, \mathbf{v}) \mid \mathbf{u}, \mathbf{v} \in  \mathfrak{C},\ \mathbf{u} \ne \mathbf{v}  \}.
	\end{equation*}
	If $\mathfrak{C}$ is a linear code of length $n$ over $\mathbb{F}_q$  with dimension $k$ and minimum Hamming distance $d$, then $\mathfrak{C}$ is said to have parameters $[n, k, d]_q$, and for such $\mathfrak{C}$, $\mathfrak{C}^\perp$ is a linear code  over $\mathbb{F}_q$ with parameters $[n, n-k, d^\perp]_q$ for some  $d^\perp \geq1$.
\end{definition}
\begin{definition}\cite{lidl1997finite}
	Let $f(x)$ be a non-zero polynomial over $\mathbb{F}_q$ with constant term non-zero. Then order of $f(x)$ is defined as
	\begin{equation*}
		Ord (f(x))= \min \{ m \in \mathbb{N}\mid\ f(x) | x^m-1 \}.
	\end{equation*}
\end{definition}
\begin{definition}\cite{dinh2007complete}
	For $f(x)\in \mathbb{F}_q^n$, the coefficient weight of $f(x)= f_0+f_1x+\ldots+f_{n-1}x^{n-1}$ is defined as
	\begin{equation*}
		cw (f(x))= \begin{cases}
			0, & \text{ if } wt_H(f(x)) \text{ is } 0 \text{ or } 1,\\
			\min\{ |i-j|;\ f_i \ne 0,\ f_j \ne 0, i\ne j \}, & \text{ otherwise.}
		\end{cases}
	\end{equation*}
\end{definition}
\begin{definition}\cite{lopez2009dual}
	Let $\mathfrak{C}$ be a linear code over $\mathbb{F}_q$ with length $n$ and $\mathbf{a} = (a_0, a_1,\ldots, a_{n-1}) \in \mathbb{F}_q^n.$ Then $\mathfrak{C}$ is said to be a right (respectively, left) polycyclic code induced by $\mathbf{a}$ if for every $\mathbf{c}= (c_0, c_1,\ldots, c_{n-1})\in \mathfrak{C}$, $(0,c_0,c_1,\ldots,c_{n-2})+c_{n-1} (a_0, a_1,\ldots,a_{n-1}) \in \mathfrak{C}$ $\left(\text{respectively, } (c_1, c_2,\ldots,c_{n-1},0)+ c_0(a_0, a_1,\ldots,a_{n-1}) \in  \mathfrak{C} \right)$. The vector $\mathbf{a}$ is called an associate vector of $\mathfrak{C}$. Under the natural isomorphism $\sigma$ from $\mathbb{F}_q^n$ to the set of polynomials over $\mathbb{F}_q$ with degree less than $n$, each vector $\mathbf{v}=(v_0, v_1,\ldots, v_{n-1})\in \mathbb{F}_q^n$ can be identified as $v(x)=v_0+v_1x+\ldots+v_{n-1}x^{n-1}\in \mathbb{F}_q[x]$.
	
	Let $f(x)=x^n-a(x)$ and $f'(x)= x^n-a'(x)$, where $a(x)=a_0+a_1x+\ldots+a_{n-1}x^{n-1}$ and $a'(x)= a_{n-1}+a_{n-2}x+\ldots+a_1x^{n-2}+a_0x^{n-1}$. Then  right polycyclic codes of length $n$ induced by $\mathbf{a}$ correspond to the ideals of the ring $\frac{\mathbb{F}_q[x]}{\langle f(x) \rangle}$, and left polycyclic codes of length $n$ induced by $\mathbf{a}$ correspond to the ideals of the ring $\frac{\mathbb{F}_q[x]}{\langle f'(x) \rangle}$.
\end{definition}
\begin{definition}\cite{hou2009rational}
	A code $\mathfrak{C}$ of length $n$ over the finite field $\mathbb{F}_q$ is called right sequential if, for each $(c_0, c_1,\ldots, c_{n-1})\in \mathfrak{C}$, there exists $\alpha\in \mathbb{F}_q$ such that $(c_1, c_2,\ldots, c_{n-1}, \alpha) \in \mathfrak{C}.$ Similarly, $\mathfrak{C}$ is said to be left sequential if, for each $(c_0, c_1,\ldots, c_{n-1})\in \mathfrak{C}$, there exists $\beta\in \mathbb{F}_q$ such that $(\beta, c_0, c_1,\ldots,c_{n-2}) \in \mathfrak{C}. $
\end{definition}
In this work, the term ``polycyclic codes'' refers to right polycyclic codes and ``sequential codes'' refers to right sequential codes. The Euclidean dual of a polycyclic code is not necessarily polycyclic; however, it is always a sequential code. In fact, a linear code is sequential if and only if its dual is polycyclic (see \cite{lopez2009dual}). For convenience, throughout this paper, we use the terms ``weight'' and ``distance'' to refer to Hamming weight and Hamming distance, respectively, and we denote them as $wt(\cdot)$ and $d(\cdot)$ in place of $wt_H(\cdot)$ and $d_H(\cdot)$.
\section{The Hamming distance of binary polycyclic codes}\label{sec3}
Let $P(x)$ be a binary irreducible polynomial of degree $m$, where $m\geq 2.$ In this section, we study the Hamming distance of binary polycyclic codes associated with the polynomial $ P(x)^\mathcal{L}$, where $\mathcal{L} \geq 2 $ is a positive integer.   These codes are precisely the ideals of the quotient ring
\begin{equation*}
	\mathcal{P}= \frac{\mathbb{F}_2[x]}{\left\langle  P(x)^\mathcal{L} \right\rangle}.
\end{equation*}
Note that for each $\mathcal{L}\geq 2$, there exists a unique positive integer $\mathcal{T}$ such that $2^{\mathcal{T}-1} <\mathcal{L} \leq 2^{\mathcal{T}}$.
We retain these notations throughout the paper. We now describe the structure of the ideals of the ring  $\mathcal{P}$. 
\begin{proposition}\label{proposition1}
	The ring	$\mathcal{P}$ is a chain ring, and its ideals are of the form $I_j=\langle  P(x)^j \rangle, $ where $0\leq j \leq \mathcal{L}.$
\end{proposition}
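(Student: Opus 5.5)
The plan is to use the correspondence between ideals of a quotient of the principal ideal domain $\mathbb{F}_2[x]$ and ideals of $\mathbb{F}_2[x]$ containing $\langle P(x)^\mathcal{L}\rangle$. Since $\mathbb{F}_2[x]$ is a PID, every ideal of $\mathbb{F}_2[x]$ containing $\langle P(x)^\mathcal{L}\rangle$ is of the form $\langle g(x)\rangle$ with $g(x)$ monic and $g(x)\mid P(x)^\mathcal{L}$. Hence every ideal of $\mathcal{P}$ is $\langle g(x)\rangle/\langle P(x)^\mathcal{L}\rangle$ for such a $g(x)$.

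The key step is to determine the monic divisors of $P(x)^\mathcal{L}$. Because $P(x)$ is irreducible and $\mathbb{F}_2[x]$ is a unique factorization domain, any monic divisor of $P(x)^\mathcal{L}$ is $P(x)^j$ for a unique $j$ with $0\le j\le \mathcal{L}$. Here we may take $P(x)$ monic, which is automatic over $\mathbb{F}_2$. Thus the ideals of $\mathcal{P}$ are exactly $I_j=\langle P(x)^j\rangle$ for $0\le j\le\mathcal{L}$. These are pairwise distinct, since distinct monic divisors give distinct ideals in the correspondence.

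To conclude that $\mathcal{P}$ is a chain ring, observe that $P(x)^{j+1}$ is a multiple of $P(x)^j$. This gives the chain
\begin{equation*}
\mathcal{P}=I_0\supsetneq I_1\supsetneq\cdots\supsetneq I_{\mathcal{L}}=\{0\},
\end{equation*}
so the ideals are totally ordered by inclusion. Equivalently, $\mathcal{P}$ is a local ring whose maximal ideal $\langle P(x)\rangle$ is principal and nilpotent. It is local because any element not divisible by $P(x)$ is coprime to $P(x)^\mathcal{L}$ and hence a unit.

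There is no real obstacle here. The only point needing care is the correct invocation of the ideal correspondence theorem together with unique factorization in $\mathbb{F}_2[x]$, which shows that no other divisors, and hence no other ideals, occur.
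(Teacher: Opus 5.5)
Your proof is correct and follows the same skeleton as the paper: every ideal of $\mathcal{P}$ is generated by a divisor of $P(x)^{\mathcal{L}}$, and irreducibility of $P(x)$ forces that divisor to be $P(x)^j$, which gives the chain. The only difference is that the paper proves the first step by hand, using a unique minimal-degree element of the ideal and the division algorithm inside $\mathcal{P}$. You instead get it from the correspondence theorem and the fact that $\mathbb{F}_2[x]$ is a PID, which is shorter and also gives the distinctness of the $I_j$ (strict inclusions) directly.
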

\begin{proof}
	Every element of $\mathcal{P}$ can be uniquely expressed as
	\begin{equation*}
		b(x)=\sum_{i=0}^{m\mathcal{L}-1} b_ix^{i}, \quad \text{where } b_i \in \mathbb{F}_2,\ 0\leq i\leq m\mathcal{L}-1.
	\end{equation*}
	Let $I$ be a non-zero ideal of  $\mathcal{P}$, and let $\mathbf{i}(x) \in I$ be a non-zero  element  in $I$ of the least degree. We first show that such an element is unique. Suppose there exist   two  distinct non-zero polynomials $\mathbf{i}_1(x),\ \mathbf{i}_2(x) \in I$ with the same minimal degree. Then $\mathbf{i}_1(x)-\mathbf{i}_2(x) \in I$, and since the coefficients of $\mathbf{i}_1(x)$ and $\mathbf{i}_2(x)$ lie in $\mathbb{F}_2$, we have
	$
	\deg\left(\mathbf{i}_1(x)-\mathbf{i}_2(x)\right) < \deg\left(\mathbf{i}_1(x)\right)= \deg \left(\mathbf{i}_2(x)\right).
	$ This contradicts the minimality of the degree. Hence, the minimal-degree polynomial  in $I$ must be unique, which we denote  by $\mathbf{i}(x)$. Now, let $a(x)\in I$. By the division algorithm, we can write
	\begin{equation*}
		a(x)= a_Q(x) \mathbf{i}(x)+ a_R(x), \quad \text{where }  a_Q(x), a_R(x) \in \mathbb{F}_2[x],\ \deg \left( a_R(x)\right) < \deg \left( \mathbf{i}(x)\right).
	\end{equation*}
	Since $\mathbf{i}(x)\in I,$ we have $a_Q(x) \mathbf{i}(x)\in I$, and hence $ a_R(x) \in I.$ Since $ \deg \left( a_R(x)\right) < \deg \left( \mathbf{i}(x)\right),$  thus by the minimality of $\deg(\mathbf{i}(x))$, we have $a_R(x)=0$. Therefore, $a(x)\in \langle \mathbf{i}(x) \rangle, $ and hence $I = \langle \mathbf{i}(x) \rangle$.

	Next, consider the zero element of $\mathcal{P}$, namely $P(x)^\mathcal{L}$. Applying the division algorithm again, we write
	\begin{equation*}
		P(x)^\mathcal{L}= f(x) \mathbf{i}(x)+ g(x),\quad \text{where } f(x),\, g(x) \in \mathbb{F}_2[x],\ \deg \left(g(x)\right) < \deg  \left( \mathbf{i}(x)\right).
	\end{equation*}
	Since  $P(x)^\mathcal{L}$ is the zero element of $\mathcal{P}$, we have $	P(x)^\mathcal{L}- f(x) \mathbf{i}(x)=g(x) \in I.$ But the minimality of  $\deg   \left( \mathbf{i}(x)\right)$ gives us $g(x)=0$. Hence, $\mathbf{i}(x) $ divides $P(x)^\mathcal{L}$. Because $P(x)$ is an irreducible polynomial,   all the divisors of $P(x)^\mathcal{L}$ are of the form $P(x)^j$, where $0\leq j\leq \mathcal{L}$. Thus, $\mathbf{i}(x)=  P(x)^j$ for some  $0\leq j \leq \mathcal{L}$.  Therefore, every ideal of  $\mathcal{P}$ is of the form  $I_j=\langle  P(x)^j \rangle$, where $0\leq j \leq \mathcal{L}$. 	Moreover, these  ideals form the chain
	\begin{equation*}
		\{0\}=	I_{\mathcal{L}} \subset I_{\mathcal{L}-1} \subset \ldots \subset I_1 \subset I_0 = \mathcal{P}, 
	\end{equation*}
	where $I_1= \langle P(x) \rangle$ is the maximal ideal of $\mathcal{P}$.
	Therefore, $\mathcal{P}$ is a chain ring.
\end{proof}
The following result describes the size of corresponding family of binary polycyclic codes.
\begin{theorem}\label{theorem1}
	The binary polycyclic codes associated with the polynomial $ P(x)^\mathcal{L}$ are precisely the ideals $C_j= \langle P(x)^j \rangle\subseteq \mathcal{P}$, where $0\leq j \leq\mathcal{L}.$ Moreover, the dimension of $C_j$ is $m\left(\mathcal{L}-j\right)$, and its size is $2^{m\left(\mathcal{L}-j\right)}$.
\end{theorem}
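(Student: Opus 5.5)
The first assertion follows directly from Proposition~\ref{proposition1}. By the definition of polycyclic codes, the binary polycyclic codes associated with $P(x)^\mathcal{L}$ are exactly the ideals of $\mathcal{P}$. That proposition lists these ideals as $I_j=\langle P(x)^j\rangle$ for $0\leq j\leq \mathcal{L}$, so $C_j=I_j$.

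The remaining work is the dimension count. I will show that $C_j$, viewed as an $\mathbb{F}_2$-subspace of $\mathcal{P}\cong\mathbb{F}_2^{m\mathcal{L}}$, has basis
\begin{equation*}
	\mathcal{B}_j=\left\{ x^iP(x)^j \mid 0\leq i\leq m(\mathcal{L}-j)-1\right\}.
\end{equation*}

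For linear independence, each element of $\mathcal{B}_j$ has degree $mj+i\leq m\mathcal{L}-1$. Hence these polynomials are already reduced modulo $P(x)^\mathcal{L}$. Their degrees are distinct, so they are linearly independent over $\mathbb{F}_2$.

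For spanning, take an arbitrary element $a(x)P(x)^j$ of $C_j$. Divide $a(x)$ by $P(x)^{\mathcal{L}-j}$ to get $a(x)=q(x)P(x)^{\mathcal{L}-j}+r(x)$ with $\deg r(x)<m(\mathcal{L}-j)$. Then
\begin{equation*}
	a(x)P(x)^j=q(x)P(x)^{\mathcal{L}}+r(x)P(x)^j\equiv r(x)P(x)^j \pmod{P(x)^\mathcal{L}}.
\end{equation*}
Since $\deg r(x)<m(\mathcal{L}-j)$, the product $r(x)P(x)^j$ is an $\mathbb{F}_2$-combination of elements of $\mathcal{B}_j$.

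It follows that $\dim C_j=m(\mathcal{L}-j)$ and $|C_j|=2^{m(\mathcal{L}-j)}$. The boundary cases $j=0$ (all of $\mathcal{P}$) and $j=\mathcal{L}$ (the zero code, dimension $0$) fit the same formula.

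The only step needing care is the reduction argument: one must use that multiples of $P(x)^\mathcal{L}$ vanish in $\mathcal{P}$, so the quotient by $P(x)^{\mathcal{L}-j}$ contributes nothing. Everything else is routine.

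Alternatively, one could use the ring isomorphism $C_j\cong \mathbb{F}_2[x]/\langle P(x)^{\mathcal{L}-j}\rangle$, which gives the same count.
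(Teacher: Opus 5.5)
Your proposal is correct and follows the paper's route. The paper likewise writes every codeword as $P(x)^j a(x)$ with $\deg a(x) < m(\mathcal{L}-j)$ and reads off the dimension. Your division-by-$P(x)^{\mathcal{L}-j}$ step (showing every codeword has such a form) and your distinct-degree independence argument simply supply the details the paper leaves as ``clearly''.

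One small correction to your closing aside: $a(x)\mapsto a(x)P(x)^j$ induces an isomorphism of $\mathbb{F}_2[x]$-modules (hence of $\mathbb{F}_2$-vector spaces) $\mathbb{F}_2[x]/\langle P(x)^{\mathcal{L}-j}\rangle \to C_j$, not a ring isomorphism. For instance, $C_j^2=0$ when $2j\geq \mathcal{L}$. This does not affect the dimension count.
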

\begin{proof}
	The first statement follows directly from the definition of polycyclic codes together with Proposition~\ref{proposition1}. Any element  $c(x)\in C_j$ can be written as
	\begin{equation*}
		c(x)= P(x)^j a(x),\quad \text{where } a(x)\in \mathbb{F}_2[x],\ \deg \left(a(x)\right)<m \mathcal{L}-mj.
	\end{equation*} Thus, clearly $\dim (C_j)=m\left(\mathcal{L}-j\right)$, and hence  $|C_j|=2^{m\left(\mathcal{L}-j\right)}$.
\end{proof}
We now determine the Hamming distance  of the codes $C_j$, denoted by $d_j$. Since $C_0 = \mathcal{P}$ and $C_{\mathcal{L}} = {0}$, we have $d_0 = 1$ and $d_{\mathcal{L}} = m \mathcal{L}$. We first consider the range $1\leq j \leq 2^{\mathcal{T}-1}$. The next two theorems provide general bounds on $d_j$ depending on the order of $P(x)$. In addition, Theorem~\ref{theorem4} gives exact values in certain cases. In particular, when $P(x)$ is a trinomial, we completely determine $d_j$ for $1\leq j\leq2^{\mathcal{T}-1}$ in Corollary~\ref{corollary1}.
\begin{theorem}\label{theorem2}
	If $ Ord\left(P(x)\right) \geq m \mathcal{L}$, then
	\begin{equation*}
		3 \leq d_1 \leq d_2 \leq \ldots \leq d_{2^{\mathcal{T}-1}} \leq wt\left(P(x)\right).
	\end{equation*}
\end{theorem}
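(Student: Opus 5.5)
The proof splits into three parts: monotonicity, the upper bound via an explicit codeword, and the lower bound $d_1\geq 3$.

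\textbf{Monotonicity.} By Proposition~\ref{proposition1} the codes form a chain $C_{j+1}\subseteq C_j$, and a subcode cannot have smaller minimum distance. Hence $d_1\leq d_2\leq\cdots\leq d_{2^{\mathcal{T}-1}}$, as long as every code in this range is nonzero. This holds because $2^{\mathcal{T}-1}<\mathcal{L}$.

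\textbf{Upper bound.} Set $j=2^{\mathcal{T}-1}$. In characteristic $2$ the Frobenius map gives
\begin{equation*}
P(x)^{2^{\mathcal{T}-1}}=P\bigl(x^{2^{\mathcal{T}-1}}\bigr).
\end{equation*}
The right-hand side has exactly $wt(P(x))$ nonzero coefficients. Its degree is $m2^{\mathcal{T}-1}<m\mathcal{L}$, so it is already reduced modulo $P(x)^{\mathcal{L}}$. It is therefore a nonzero codeword of $C_{2^{\mathcal{T}-1}}$ of weight $wt(P(x))$, which gives $d_{2^{\mathcal{T}-1}}\leq wt(P(x))$.

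\textbf{Lower bound $d_1\geq 3$.} Every nonzero codeword of $C_1$ is a polynomial $c(x)$ with $\deg c(x)<m\mathcal{L}$ and $P(x)\mid c(x)$. We rule out weights $1$ and $2$.
\begin{itemize}
\item \emph{Weight $1$.} Then $c(x)=x^i$. Since $m\geq 2$ and $P(x)$ is irreducible, $P(0)\neq 0$, so $P(x)\nmid x^i$.
\item \emph{Weight $2$.} Then $c(x)=x^i+x^k=x^i\bigl(x^{k-i}+1\bigr)$ with $0\leq i<k<m\mathcal{L}$. Since $\gcd(P(x),x)=1$, we get $P(x)\mid x^{k-i}-1$. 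This forces $Ord(P(x))\leq k-i\leq m\mathcal{L}-1$, contradicting $Ord(P(x))\geq m\mathcal{L}$.
\end{itemize}
Hence $d_1\geq 3$, and combining the three parts gives the full chain of inequalities.

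The only delicate points are the degree bookkeeping: the Frobenius codeword must have degree strictly below $m\mathcal{L}$, and the weight-$2$ case needs the strict inequality $k-i<m\mathcal{L}$. Both follow directly from $2^{\mathcal{T}-1}<\mathcal{L}$ and from representing elements of $\mathcal{P}$ by polynomials of degree less than $m\mathcal{L}$.
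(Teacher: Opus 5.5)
Your proof is correct and follows essentially the same route as the paper. You use the chain $C_{j+1}\subseteq C_j$ for monotonicity, the Frobenius identity $P(x)^{2^{\mathcal{T}-1}}=P\bigl(x^{2^{\mathcal{T}-1}}\bigr)$ for the upper bound, and the order argument ($P(x)\mid x^{s}(1+x^{t})$ forces $Ord(P(x))\leq t<m\mathcal{L}$) to exclude weight $2$. The only differences are cosmetic: you exclude weight $1$ via $P(x)\nmid x^i$ rather than by noting that the proper ideal $C_1$ contains no units, and you make explicit the degree check $m2^{\mathcal{T}-1}<m\mathcal{L}$, which the paper leaves implicit.
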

\begin{proof} Since $P(x)$ is an irreducible polynomial of degree $m$, it has a nonzero constant term and can be written as
	\begin{equation*}
		P(x)= 1+p_1x+\ldots+p_{m-1}x^{m-1}+x^m, \quad \text{where } p_i \in \mathbb{F}_2,\ 1\leq  i\leq m-1.
	\end{equation*}
	Clearly, $P(x)^{\mathcal{L}}$ has a non-zero constant term and  degree $m\mathcal{L}$. Since $P(x)^{\mathcal{L}}=0 $ in $\mathcal{P}$, it follows that $x$ is  invertible  in  $\mathcal{P}$.
	
	Consider the code  $C_1= \langle P(x) \rangle.$ Since $C_1 \ne \mathcal{P},$ it  contains no unit, and thus $ d_1 \geq 2.$  If $d_1=2$, then there  exists  $a(x)\in \mathbb{F}_2[x]$ such that
	\begin{equation*}
		P(x) a(x)= x^s \left(1+x^t\right), \quad \text{where } s \geq0,\ t>0, \ s+t < m \mathcal{L}.
	\end{equation*}
	Since  $\gcd \left(x, P(x)  \right)=1$, this implies that $ P(x) \mid  \left(1+x^t\right)$ in $\mathbb{F}_2[x]$. This gives $Ord\left(P(x)\right) \leq t < m \mathcal{L}$, which contradicts the assumption that $Ord\left(P(x)\right) \geq m \mathcal{L}$. Hence, $d_1 \geq 3.$
	
	Now, we  consider $C_{2^{\mathcal{T}-1}}= \langle P(x)^{2^{\mathcal{T}-1}} \rangle$. Since $\mathcal{P}$ is a finite commutative ring of characteristic $2$, thus
	\begin{align*}
		P(x)^{2^{\mathcal{T}-1}}&= \left(1+p_1x+\ldots+p_{m-1}x^{m-1}+x^m\right)^{2^{\mathcal{T}-1}}= 1+p_1 x^{2^{\mathcal{T}-1}}+ p_2 x^{2 \cdot 2^{\mathcal{T}-1}}+\ldots+ p_{m-1} x^{(m-1) 2^{\mathcal{T}-1}}+ x^{m  2^{\mathcal{T}-1}}.
	\end{align*} 
	Thus, the Hamming weight of $P(x)^{2^{\mathcal{T}-1}}$ equals  $wt(P(x))$. Since $P(x)^{2^{\mathcal{T}-1}} \in C_{2^{\mathcal{T}-1}}$, it follows that
	$
	d_{2^{\mathcal{T}-1}} \leq  wt \left(P(x)\right).
	$
	Now $C_{2^{\mathcal{T}-1}} \subset C_{2^{\mathcal{T}-1}-1} \subset \ldots \subset C_2 \subset C_1,$ we obtain
	$
	3 \leq d_1 \leq d_2 \leq \ldots \leq d_{2^{\mathcal{T}-1}} \leq wt\left(P(x)\right).
	$ This completes the proof.
\end{proof}
\begin{theorem}\label{theorem3}
	Let $Ord \left(P(x)\right) = e < m\mathcal{L}$, and let $\mathcal{J}$ be the smallest positive integer such that $e  2^{\mathcal{T}-\mathcal{J}} < m \mathcal{L}$. Then 
	\begin{equation*}
		d_j = 2 \quad \text{for } 1\leq j \leq 2^{\mathcal{T}-\mathcal{J}},
	\end{equation*} and 
	\begin{equation*}
		3\leq d_j \leq wt\left(P(x)\right) \quad \text{for } 2^{\mathcal{T}-\mathcal{J}}+1 \leq j \leq 2^{\mathcal{T}-1}.
	\end{equation*}
\end{theorem}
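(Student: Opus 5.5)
The plan has two parts: the first range is handled by exhibiting an explicit codeword of weight $2$, and the second by showing that no codeword of weight $2$ exists. Throughout I use three facts from the proof of Theorem~\ref{theorem2}. First, $x$ is a unit in $\mathcal{P}$ and $\gcd(x,P(x))=1$. Second, for $j\geq 1$ the ideal $C_j$ is proper, so it contains no monomial $x^s$; hence $d_j\geq 2$. Third, $C_{2^{\mathcal{T}-1}}\subset \cdots \subset C_1$, so the $d_j$ are nondecreasing, and $P(x)^{2^{\mathcal{T}-1}}$ has weight $wt(P(x))$, which gives $d_{2^{\mathcal{T}-1}}\leq wt(P(x))$.

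\textbf{First range.} Since $e<m\mathcal{L}$ and $m\mathcal{L}\leq m2^{\mathcal{T}}$, the integer $\mathcal{J}$ exists with $1\leq\mathcal{J}\leq \mathcal{T}$; this bound should be checked briefly. Since $P(x)\mid x^e+1$, raising to the power $2^{\mathcal{T}-\mathcal{J}}$ in characteristic $2$ gives
\[
P(x)^{2^{\mathcal{T}-\mathcal{J}}} \mid x^{e2^{\mathcal{T}-\mathcal{J}}}+1 .
\]
This binomial has degree $e2^{\mathcal{T}-\mathcal{J}}<m\mathcal{L}$, so it is a nonzero codeword of weight $2$ in $C_{2^{\mathcal{T}-\mathcal{J}}}$. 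Hence it lies in every $C_j$ with $1\leq j\leq 2^{\mathcal{T}-\mathcal{J}}$. Combined with $d_j\geq 2$, this gives $d_j=2$ on that range.

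\textbf{Second range.} This range is nonempty only when $\mathcal{J}\geq 2$. In that case the minimality of $\mathcal{J}$ gives $e2^{\mathcal{T}-\mathcal{J}+1}\geq m\mathcal{L}$. Suppose $C_j$ with $j\geq 2^{\mathcal{T}-\mathcal{J}}+1$ contains a weight-$2$ word $x^s(1+x^t)$ with $0<t$ and $s+t<m\mathcal{L}$. As in Theorem~\ref{theorem2}, coprimality with $x$ gives $P(x)^j\mid 1+x^t$ in $\mathbb{F}_2[x]$.

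The key step is computing the exact multiplicity of $P(x)$ in $1+x^t$. Write $t=2^a u$ with $u$ odd. Then $1+x^t=(1+x^u)^{2^a}$. The polynomial $1+x^u$ is squarefree, because its derivative $x^{u-1}$ is coprime to it. Therefore $P(x)$ divides $1+x^t$ with multiplicity exactly $2^a$ if $e\mid u$, and with multiplicity $0$ otherwise. So $P(x)^j\mid 1+x^t$ forces $e\mid u$ and $2^a\geq j>2^{\mathcal{T}-\mathcal{J}}$, that is, $2^a\geq 2^{\mathcal{T}-\mathcal{J}+1}$. Consequently
\[
t\geq e2^{\mathcal{T}-\mathcal{J}+1}\geq m\mathcal{L},
\]
which contradicts $t<m\mathcal{L}$. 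Hence $d_j\geq 3$ on this range. The upper bound $d_j\leq wt(P(x))$ then follows from monotonicity and the weight of $P(x)^{2^{\mathcal{T}-1}}$ noted above.

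I expect the main obstacle to be the multiplicity computation, namely the squarefreeness argument together with $e\mid u$. The rest is bookkeeping with the definition of $\mathcal{J}$, including the edge case $\mathcal{J}=1$, where the second range is empty.
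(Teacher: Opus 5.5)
Your proof is correct. The first range is handled exactly as in the paper, but for the second range you take a different and shorter route.

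In the first range, both you and the paper use the binomial $x^{e2^{\mathcal{T}-\mathcal{J}}}+1=\big(P(x)U(x)\big)^{2^{\mathcal{T}-\mathcal{J}}}$ as a weight-$2$ codeword of $C_{2^{\mathcal{T}-\mathcal{J}}}$.

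For the bound $d_j\geq 3$, the paper treats only $C_{2^{\mathcal{T}-\mathcal{J}}+1}$ and then uses monotonicity. It writes a codeword as $P(x)^{2^{\mathcal{T}-\mathcal{J}}}a(x)$ with $a(x)$ nilpotent and divides $a(x)$ by $U(x)^{2^{\mathcal{T}-\mathcal{J}}}$. Minimality of $\mathcal{J}$ shows the quotient has degree below $e2^{\mathcal{T}-\mathcal{J}}$. A three-case analysis follows, with subcases on $wt(a_Q(x))$. The fact $Ord\big(P(x)^{2^{\mathcal{T}-\mathcal{J}}}\big)=e2^{\mathcal{T}-\mathcal{J}}$ is invoked only for the low-degree pieces.

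You instead extend the divisibility argument of Theorem~\ref{theorem2} directly. A weight-$2$ word forces $P(x)^j\mid 1+x^t$ with $t<m\mathcal{L}$. Your key lemma is that the multiplicity of $P(x)$ in $1+x^t$ is either $0$ or exactly the $2$-part $2^a$ of $t$. This is the content of the standard formula $Ord\big(P(x)^j\big)=e\,2^{\lceil\log_2 j\rceil}$. It forces $2^a\geq 2^{\mathcal{T}-\mathcal{J}+1}$, and hence $t\geq e2^{\mathcal{T}-\mathcal{J}+1}\geq m\mathcal{L}$.

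What each approach buys:
\begin{itemize}
\item Your argument avoids all the quotient/remainder degree bookkeeping and treats every $j$ in the range at once.
\item It also shows why $2^{\mathcal{T}-\mathcal{J}}$ is exactly the threshold: $C_j$ contains a weight-$2$ word if and only if $e\,2^{\lceil\log_2 j\rceil}<m\mathcal{L}$.
\item The paper's decomposition describes how codewords of $C_{2^{\mathcal{T}-\mathcal{J}}+1}$ split relative to the binomial $x^{e2^{\mathcal{T}-\mathcal{J}}}+1$, for example weight at least $4$ when $a_R(x)=0$. This finer information is not needed for the statement.
\end{itemize}

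One small correction: $\mathcal{J}$ exists with $\mathcal{J}\leq\mathcal{T}$ simply because $\mathcal{J}=\mathcal{T}$ satisfies $e2^{0}=e<m\mathcal{L}$. The inequality $m\mathcal{L}\leq m2^{\mathcal{T}}$ you cite plays no role.
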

\begin{proof}
	Since $C_1$ is  a non-trivial ideal of $\mathcal{P}$, as in Theorem~\ref{theorem2}, we have $d_1 \geq 2.$ Given that $Ord \left(P(x)\right) = e < m\mathcal{L}$, there exists a polynomial $U(x) \in \mathbb{F}_2[x]$ such that
	\begin{equation*}
		P(x) U(x) = x^{e}+1.
	\end{equation*}
	Since $\deg \left(P(x)\right)=m$,  it follows that $ e  \mid 2^m-1$ (see \cite{lidl1997finite}), and  hence $e$ must be odd. Consequently, $x^{e}+1$ has no repeated factors over $\mathbb{F}_2$. This implies that $U(x)$ is not a multiple of $P(x)$ and therefore, $U(x)$ is a unit in $\mathcal{P}$.
	
	We first prove that $d_j=2$ for $1\leq j \leq 2^{\mathcal{T}-\mathcal{J}}.$ For such $j$,
	\begin{equation*}
		P(x)^j \Big( P(x)^{2^{\mathcal{T}-\mathcal{J}}-j} U(x)^{2^{\mathcal{T}-\mathcal{J}}} \Big)= P(x)^{2^{\mathcal{T}-\mathcal{J}}} U(x)^{2^{\mathcal{T}-\mathcal{J}}} = \left(x^{e}+1\right)^{2^{\mathcal{T}-\mathcal{J}}}= x^{e  2^{\mathcal{T}-\mathcal{J}}}+1 \in \mathcal{P}.
	\end{equation*}
	Hence, $d_j=2$ for  $1\leq j \leq 2^{\mathcal{T}-\mathcal{J}}$. This proves the first part.
	
	We now prove the second part. If $\mathcal{J}=1,$ then $ d_j=2$ for all $1\leq j \leq 2^{\mathcal{T}-1}$. Now, assume $\mathcal{J} \geq 2$ and consider the code $C_{2^{\mathcal{T}-\mathcal{J}}+1}$. Let $c(x)$ be a non-zero codeword of $C_{2^{\mathcal{T}-\mathcal{J}}+1}$. Then it can be expressed as
	$
	c(x)= P(x)^{2^{\mathcal{T}-\mathcal{J}}} a(x), 
	$ where $a(x)$ is a nilpotent element in $\mathcal{P}$ with $\deg (a(x)) < m \mathcal{L}-m  2^{\mathcal{T}-\mathcal{J}}. $ Applying the division algorithm, we  write 
	\begin{equation*}
		a(x)= U(x)^{2^{\mathcal{T}-\mathcal{J}}} a_Q(x)+ a_R(x), \quad \text{where } \deg \left(a_Q(x)\right) < m\mathcal{L}-e 2^{\mathcal{T}-\mathcal{J}},\  \deg \left(a_R(x)\right) < (e-m)2^{\mathcal{T}-\mathcal{J}}.
	\end{equation*}
	We claim that $ \deg \left(a_Q(x)\right) < e 2^{\mathcal{T}-\mathcal{J}}.$	If $ \deg \left(a_Q(x)\right) \geq e 2^{\mathcal{T}-\mathcal{J}}$, then $ e2^{\mathcal{T}-\mathcal{J}} <  m\mathcal{L}-e 2^{\mathcal{T}-\mathcal{J}}, $ which gives $ e  2^{\mathcal{T}-(\mathcal{J}-1)} < m \mathcal{L}$. This contradicts the minimality of $\mathcal{J}$. Therefore, $ \deg \left(a_Q(x)\right) < e 2^{\mathcal{T}-\mathcal{J}}.$
	Now, we consider the following three cases:\\
	\textbf{Case I:} If $a_R(x)=0$ and $a_Q(x) \ne 0,$ then $a(x)= U(x)^{2^{\mathcal{T}-\mathcal{J}}} a_Q(x),$ and 
	\begin{equation*}
		c(x)= P(x)^{2^{\mathcal{T}-\mathcal{J}}}  U(x)^{2^{\mathcal{T}-\mathcal{J}}} a_Q(x)= \big( x^{e 2^{\mathcal{T}-\mathcal{J}}}+1\big) a_Q(x).
	\end{equation*}
	Here, $a(x)$ is a nilpotent  and $U(x)$ is a unit element in $\mathcal{P}$. Thus, $a_Q(x)$ must be a nilpotent element in $\mathcal{P}$.  Hence, $wt\left(a_Q(x)\right) \geq 2.$ Also, $cw \big( x^{e  2^{\mathcal{T}-\mathcal{J}}}+1\big)= e  2^{\mathcal{T}-\mathcal{J}} > \deg \left(a_Q(x)\right).$ Therefore, no cancellation occurs, and
	\begin{equation*}
		wt(c(x)) = wt \big( x^{e  2^{\mathcal{T}-\mathcal{J}}}+1\big) \cdot wt\left(a_Q(x)\right) = 2   wt\left(a_Q(x)\right) \geq 4.
	\end{equation*}
	\textbf{Case II:} If $a_Q(x)=0$ and $a_R(x) \ne 0$, then $a(x)= a_R(x)$, and 
	$
	c(x)=  P(x)^{2^{\mathcal{T}-\mathcal{J}}} a_R(x).
	$
	Suppose $wt(c(x))=2,$ then 
	\begin{equation*}
		P(x)^{2^{\mathcal{T}-\mathcal{J}}} a_R(x)= x^s \left(1+x^t\right), \quad \text{where } s \geq 0,\ t >0,\ s+t< e 2^{\mathcal{T}-\mathcal{J}}.
	\end{equation*}
	Since $Ord \big( P(x)^{2^{\mathcal{T}-\mathcal{J}}}  \big)= e  2^{\mathcal{T}-\mathcal{J}}$ (see \cite{lidl1997finite}), therefore  $	P(x)^{2^{\mathcal{T}-\mathcal{J}}} \nmid x^s \left(1+x^t\right)$. Hence, $wt(c(x)) \ne 2.$ Thus, in this case $wt(c(x)) \geq 3.$\\
	\textbf{Case III:} If $a_Q(x)\ne 0$ and $a_R(x) \ne 0$, then $a(x)=U(x)^{2^{\mathcal{T}-\mathcal{J}}} a_Q(x)+ a_R(x) $, and 
	\allowdisplaybreaks
	\begin{align*}
		c(x)&= P(x)^{2^{\mathcal{T}-\mathcal{J}}}  \Big( U(x)^{2^{\mathcal{T}-\mathcal{J}}} a_Q(x)+ a_R(x)  \Big)=  \big( x^{e  2^{\mathcal{T}-\mathcal{J}}}+1\big) a_Q(x) + P(x)^{2^{\mathcal{T}-\mathcal{J}}} a_R(x)\\
		&= \Big(a_Q(x)+ P(x)^{2^{\mathcal{T}-\mathcal{J}}} a_R(x)\Big) + x^{e 2^{\mathcal{T}-\mathcal{J}}} a_Q(x).
	\end{align*}
	From the previous arguments, we have $\deg \big(a_Q(x)+ P(x)^{2^{\mathcal{T}-\mathcal{J}}} a_R(x)\big) < e  2^{\mathcal{T}-\mathcal{J}} $, whereas  degree of every  term of $\big(x^{e  2^{\mathcal{T}-\mathcal{J}}} a_Q(x)\big)$ is at least $ e  2^{\mathcal{T}-\mathcal{J}}.$
	Hence, the Hamming weight of $c(x)$ satisfies
	\begin{equation*}
		wt(c(x))= wt \left(a_Q(x)+ P(x)^{2^{\mathcal{T}-\mathcal{J}}} a_R(x)\right)+ wt  \left(a_Q(x)\right).
	\end{equation*} We now consider three subcases based on  $wt  \left(a_Q(x)\right).$
	
	\textbf{Subcase I:} If $wt\left(a_Q(x)\right) \geq 3$,  then clearly $wt(c(x)) \geq 3.$
	
	\textbf{Subcase II:} If $wt\left(a_Q(x)\right) =2$, then by the same argument as in Case II, we have $ wt \big( P(x)^{2^{\mathcal{T}-\mathcal{J}}} a_R(x)\big) \geq 3.$ Hence, 
	$
	wt \big(a_Q(x)+ P(x)^{2^{\mathcal{T}-\mathcal{J}}} a_R(x)\big) \geq 1.
	$
	Therefore, $wt(c(x)) \geq 1+2=3.$
	
	\textbf{Subcase III:} If $wt\left(a_Q(x)\right) =1$, then again as in Case II, we have $ wt \big( P(x)^{2^{\mathcal{T}-\mathcal{J}}} a_R(x)\big) \geq 3.$ Thus, 
	$
	wt \big(a_Q(x)+ P(x)^{2^{\mathcal{T}-\mathcal{J}}} a_R(x)\big) \geq 2,
	$
	which gives $wt(c(x)) \geq 2+1=3.$
	
	Combining all the cases, we conclude that $d_{2^{\mathcal{T}-\mathcal{J}}+1} \geq 3.$ Furthermore, as in Theorem~\ref{theorem2}, we have $d_{2^{\mathcal{T}-1}} \leq wt\left(P(x)\right).$ This completes the proof.
\end{proof}
Theorems~\ref{theorem2} and \ref{theorem3} yield the following corollary.
\begin{corollary}\label{corollary1}
	Let $P(x)$ be a binary irreducible trinomial. Then 
	\begin{enumerate}
		\item If $Ord\left(P(x)\right) \geq m \mathcal{L}$, then \begin{equation*}
			d_1=d_2=\ldots=d_{2^{\mathcal{T}-1}}=3.
		\end{equation*}
		\item If $Ord \left(P(x)\right) = e < m\mathcal{L}$ and  $\mathcal{J}$ is the smallest positive integer such that $e  2^{\mathcal{T}-\mathcal{J}} < m \mathcal{L}$, then  \begin{equation*}
			d_j= \begin{cases}
				2 & \text{for } 1 \leq j \leq 2^{\mathcal{T}-\mathcal{J}},\\
				3 &\text{for } 2^{\mathcal{T}-\mathcal{J}}+1 \leq j \leq 2^{\mathcal{T}-1}.
			\end{cases}
		\end{equation*}
	\end{enumerate}
\end{corollary}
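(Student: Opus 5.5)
The plan is to read both parts of the corollary off Theorems~\ref{theorem2} and \ref{theorem3}, using only that a trinomial has $wt(P(x))=3$. Once this is substituted, the upper bound $wt(P(x))$ in both theorems becomes $3$ and meets the lower bound $3$, so the chains of inequalities collapse to equalities.

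\textbf{Part 1.} Suppose $Ord(P(x))\geq m\mathcal{L}$. Theorem~\ref{theorem2} gives
\begin{equation*}
3\leq d_1\leq d_2\leq \ldots\leq d_{2^{\mathcal{T}-1}}\leq wt(P(x))=3.
\end{equation*}
Hence every $d_j$ with $1\leq j\leq 2^{\mathcal{T}-1}$ equals $3$. Concretely, $P(x)^{2^{\mathcal{T}-1}}=1+x^{a2^{\mathcal{T}-1}}+x^{m2^{\mathcal{T}-1}}$ is a weight-$3$ codeword of $C_{2^{\mathcal{T}-1}}$. It lies in every larger code $C_j$ for $j\leq 2^{\mathcal{T}-1}$.

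\textbf{Part 2.} Suppose $Ord(P(x))=e<m\mathcal{L}$ and $\mathcal{J}$ is as in the statement.
\begin{itemize}
\item For $1\leq j\leq 2^{\mathcal{T}-\mathcal{J}}$, Theorem~\ref{theorem3} gives $d_j=2$ directly.
\item For $2^{\mathcal{T}-\mathcal{J}}+1\leq j\leq 2^{\mathcal{T}-1}$, Theorem~\ref{theorem3} gives $3\leq d_j\leq wt(P(x))=3$, so $d_j=3$.
\end{itemize}
If $\mathcal{J}=1$, the second range is empty and there is nothing more to show.

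\textbf{Checking monotonicity.} For the second range I would confirm that Theorem~\ref{theorem3} really covers every $j$ in it, not just the endpoint $2^{\mathcal{T}-\mathcal{J}}+1$ that its proof treats. This follows from the chain of inclusions $C_{2^{\mathcal{T}-1}}\subseteq C_j\subseteq C_{2^{\mathcal{T}-\mathcal{J}}+1}$. The inclusions give
\begin{equation*}
3\leq d_{2^{\mathcal{T}-\mathcal{J}}+1}\leq d_j\leq d_{2^{\mathcal{T}-1}}\leq 3.
\end{equation*}
This is the only point needing any care. Otherwise the corollary is immediate from the two theorems.
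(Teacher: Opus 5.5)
Your proposal is correct and matches the paper, which obtains the corollary by substituting $wt(P(x))=3$ into Theorems~\ref{theorem2} and~\ref{theorem3}. Your sandwich argument via $C_{2^{\mathcal{T}-1}}\subseteq C_j\subseteq C_{2^{\mathcal{T}-\mathcal{J}}+1}$ makes explicit the monotonicity step that the paper's proof of Theorem~\ref{theorem3} leaves implicit.
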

From Corollary~\ref{corollary1}, it is obvious that when $wt\left(P(x)\right)=3$, the exact value of $d_j$ for $1 \leq j \leq 2^{\mathcal{T}-1}$ is completely determined.
Some illustrative values of the Hamming distance for binary irreducible trinomials $P(x)$ and selected values of $\mathcal{L}$ are listed in Table~\ref{tab1}.
\begin{table}[h]
	\caption{The Hamming distance of $C_j$ for $1\leq j \leq 2^{\mathcal{T}-1}$}
	\label{tab1}
	\centering
	\begin{tabular}{|c|c|c|c|c|c|}
		\hline
		$P(x) $ & $ \mathcal{L} $ & $\mathcal{T}$& $Ord \left(P(x)\right)$ & $C_j$ & $d_j$  \\ \hline
		$x^2+x+1$ & $5$ & $3$ & $3$ & $\langle \left(x^2+x+1\right)^j \rangle $ &
		$	d_j= \begin{cases}
			2 & \text{for } 1 \leq j \leq 2,\\
			3 &\text{for } 3 \leq j \leq 4.
		\end{cases}$
		\\\hline
		$x^3+x+1$ & $2$ & $1$ & $7$ & $ \langle \left(x^3+x+1\right)^j \rangle $ &$d_1= 3$.\\ \hline
		$x^4+x+1$ & $7$ & $3$ & $15$ & $ \langle \left(x^4+x+1\right)^j \rangle $ &$	d_j= \begin{cases}
			2 & \text{for }  j = 1,\\
			3 &\text{for } 2 \leq j \leq 4.
		\end{cases}$\\ \hline
		$x^5+x^3+1$ & $6$ & $3$ & $31$ & $ \langle \left(x^5+x^3+1\right)^j \rangle $ &  $d_j=3$ for $1\leq j \leq 4$.\\ \hline
		$x^6+x+1$ & $9$ & $4$ & $63$ & $ \langle \left(x^6+x+1\right)^j \rangle $ & $d_j=3$ for $1\leq j \leq 8$. \\\hline
		$x^7+x^4+1$ & $31$ & $5$ & $127$ & $ \langle \left(x^7+x^4+1\right)^j \rangle $ & $	d_j= \begin{cases}
			2 & \text{for }  j = 1,\\
			3 &\text{for } 2 \leq j \leq 16.
		\end{cases}$\\ \hline
	\end{tabular}
\end{table}
We now turn  to the computation of $d_j$ for the case $j=2^{\mathcal{T}-s}$, where $1 \leq s \leq \mathcal{T}$ and $wt\left(P(x)\right) \geq 4$.
\begin{theorem}\label{theorem4}
	For $1\leq s \leq \mathcal{T}$,	let $\lambda_s$ be the positive integer satisfying $\left(\lambda_s-1\right) 2^{\mathcal{T}-s} < m \left(\mathcal{L}-2^{\mathcal{T}-s} \right) \leq \lambda_s 2^{\mathcal{T}-s}$. Then the Hamming distance of $C_{2^{\mathcal{T}-s} }$ is given by
	\begin{equation*}
		d_{2^{\mathcal{T}-s}}= \begin{cases}
			wt\left( P(x)^{2^{\mathcal{T}-s}} \right), & \text{if } \lambda_s=1,\\
			\min \left\{wt (f(x)) \mid f(x) \in S_{\mathcal{T},s}^{\lambda_s}\right\},
			&\text{if } \lambda_s>1,
		\end{cases}
	\end{equation*}
	where the set $S_{\mathcal{T},s}^{\lambda_s}$ is defined as
	\begin{equation*}
		S_{\mathcal{T},s}^{\lambda_s}= \left\{ a(x)^{2^{\mathcal{T}-s}} P(x)^{2^{\mathcal{T}-s}}\ \big|\ a(x)\in \mathbb{F}_2[x],\ \deg (a(x)) \leq \lambda_s-1,\ a(x) \text{ has constant term  }  1  \right\}.
	\end{equation*}

\end{theorem}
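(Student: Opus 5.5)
The plan is to work with the explicit description of codewords from Theorem~\ref{theorem1}, and then decompose the multiplier according to residues of exponents modulo $N:=2^{\mathcal{T}-s}$. This turns the weight of a codeword into a sum of weights of $N$-th powers.

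By Theorem~\ref{theorem1}, every codeword of $C_N$ is uniquely of the form $c(x)=P(x)^N a(x)$ with $\deg(a(x))<m(\mathcal{L}-N)$. Such a product has degree $<m\mathcal{L}$, so no reduction modulo $P(x)^\mathcal{L}$ occurs and its Hamming weight is just the number of nonzero coefficients of this polynomial. Split $a(x)$ by the residue of each exponent modulo $N$:
\begin{equation*}
a(x)=\sum_{r=0}^{N-1} x^r\, a_r\!\left(x^N\right), \qquad a_r(x)\in\mathbb{F}_2[x].
\end{equation*}
Since we are in characteristic $2$ and $N$ is a power of $2$, we have $a_r(x^N)=a_r(x)^N$. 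Hence
\begin{equation*}
c(x)=\sum_{r=0}^{N-1} x^r \big(a_r(x)P(x)\big)^N .
\end{equation*}
The polynomial $\big(a_r(x)P(x)\big)^N$ is supported only on exponents divisible by $N$. Therefore the $N$ summands have pairwise disjoint supports (the exponent classes $r \bmod N$), and
\begin{equation*}
wt(c(x))=\sum_{r} wt\big((a_rP)^N\big),
\end{equation*}
with no cancellation between summands.

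Next comes the degree bookkeeping. From $N\deg(a_r)+r\le \deg(a(x))<m(\mathcal{L}-N)\le \lambda_s N$ we get $\deg(a_r)\le \lambda_s-1$ for every $r$. For a nonzero codeword, at least one $a_r$ is nonzero, so $wt(c(x))\ge wt\big((a_rP)^N\big)$ for that $r$. Write $a_r=x^k b$ with $b$ having constant term $1$; then $\deg b\le\lambda_s-1$ and
\begin{equation*}
wt\big((a_rP)^N\big)=wt\big(b^NP^N\big),
\end{equation*}
and $b^NP^N$ lies in $S_{\mathcal{T},s}^{\lambda_s}$. This gives the lower bound $d_N\ge \min\{wt(f(x)) \mid f(x)\in S_{\mathcal{T},s}^{\lambda_s}\}$.

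For the reverse inequality, take any $b$ with constant term $1$ and $\deg b\le \lambda_s-1$. Then $\deg b^N\le(\lambda_s-1)N<m(\mathcal{L}-N)$ by the left inequality defining $\lambda_s$. Hence $b(x)^N$ is an admissible multiplier, and $b^NP^N$ is a nonzero codeword of $C_N$. This shows the minimum over $S_{\mathcal{T},s}^{\lambda_s}$ is attained, so equality holds. When $\lambda_s=1$, the only admissible $b$ is $1$, giving $d_N=wt\big(P(x)^N\big)$.

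The step needing the most care is the disjoint-support decomposition. One has to make sure that the decomposition of $a(x)$ into the pieces $x^ra_r(x^N)$ is legitimate. One also has to check that the degree bound on the individual $a_r$ follows from the bound on $a$, which uses exactly the right-hand inequality $m(\mathcal{L}-N)\le\lambda_sN$, while the achievability direction uses the left-hand inequality. The remaining steps are routine.
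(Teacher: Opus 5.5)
Your proof is correct and follows the paper's route. You split the codeword by exponent residues modulo $2^{\mathcal{T}-s}$, use $a_r\bigl(x^{2^{\mathcal{T}-s}}\bigr)=a_r(x)^{2^{\mathcal{T}-s}}$ to write each residue class as $x^i\,\bigl(b(x)P(x)\bigr)^{2^{\mathcal{T}-s}}$ with $\deg b\le\lambda_s-1$, and normalize $b$ to constant term $1$. The only difference is presentation: you take the monomial decomposition $a(x)=\sum_r x^r a_r\bigl(x^{2^{\mathcal{T}-s}}\bigr)$ directly, while the paper reaches the same components by iterated division by $1+x^{k2^{\mathcal{T}-s}}$ and a parity case analysis, so yours is simply a cleaner version of the same argument.
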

\begin{proof}
	Let $c(x)$ be a  non-zero codeword of $C_{2^{\mathcal{T}-s} }$. Then it can be written as 
	\begin{equation}\label{eq1}
		c(x)= P(x)^{2^{\mathcal{T}-s} } g(x), \quad \text{where } g(x)\in \mathbb{F}_2[x],\ \deg\left(g(x)\right) < m \left(\mathcal{L}-2^{\mathcal{T}-s}\right).
	\end{equation}
	Let $\lambda_s$ be the positive integer such that $\left(\lambda_s-1\right) 2^{\mathcal{T}-s} < m \left(\mathcal{L}-2^{\mathcal{T}-s} \right) \leq \lambda_s 2^{\mathcal{T}-s}$.
	
	If $\lambda_s=1,$ then $\deg (g(x)) < 2^{\mathcal{T}-s}$. Since  $cw\big(P(x)^{2^{\mathcal{T}-s}}\big) \geq 2^{\mathcal{T}-s}> \deg (g(x))$, it follows that no cancellation occurs in the product $P(x)^{2^{\mathcal{T}-s} } g(x)$, and hence 
	\begin{equation*}
		wt \left(c(x)\right)= wt\big(P(x)^{2^{\mathcal{T}-s} } \big) \cdot wt\left(g(x)\right) \geq  wt\big(P(x)^{2^{\mathcal{T}-s} } \big).
	\end{equation*}
	Moreover, since  $P(x)^{2^{\mathcal{T}-s} }  \in C_{2^{\mathcal{T}-s} }$, therefore
	$d_{2^{\mathcal{T}-s}}=	wt\big( P(x)^{2^{\mathcal{T}-s}} \big)$.
	
	Next, we assume that $\lambda_s>1.$ By applying the division algorithm,  the polynomial $g(x)$ can be written as
	\begin{equation*}
		g(x)= \left( 1+ x^{(\lambda_s-1) 2^{\mathcal{T}-s}}\right) g_1(x)+ g_1'(x),  
	\end{equation*} where
	$g_1(x),\ g_1'(x)\in \mathbb{F}_2[x]$ with  $   \deg \left(g_1(x)\right) < m \left(\mathcal{L}- 2^{\mathcal{T}-s}\right)- \left(\lambda_s-1\right)2^{\mathcal{T}-s}\leq 2^{\mathcal{T}-s}$ and $ \deg \left(g_1'(x)\right) < (\lambda_s-1) 2^{\mathcal{T}-s}.$	Applying the division algorithm again, we can write
	\begin{equation*}
		g_1'(x)= \left( 1+ x^{(\lambda_s-2) 2^{\mathcal{T}-s}}\right) g_2(x)+ g_2'(x), 
	\end{equation*} where 
	$ g_2(x),\ g_2'(x)\in \mathbb{F}_2[x]$  with  $\deg \left(g_2(x)\right) < 2^{\mathcal{T}-s}$ and $ \deg \left(g_2'(x)\right) < (\lambda_s-2) 2^{\mathcal{T}-s}.$ Repeating this process iteratively, the polynomial $g(x)$ can be expressed in the following form 
	\begin{equation}\label{eq2}
		g(x)= \left( 1+ x^{(\lambda_s-1) 2^{\mathcal{T}-s}}\right) g_1(x)+ \left( 1+ x^{(\lambda_s-2) 2^{\mathcal{T}-s}}\right) g_2(x)+ \ldots+ \left( 1+ x^{ 2^{\mathcal{T}-s}}\right) g_{\lambda_s-1}(x)+ g_{\lambda_s}(x),
	\end{equation}
	where each $ g_l(x) \in \mathbb{F}_2[x],\  \deg\left(g_1(x)\right)<  m \left(\mathcal{L}- 2^{\mathcal{T}-s}\right)- \left(\lambda_s-1\right)2^{\mathcal{T}-s}\leq 2^{\mathcal{T}-s}$ and $  \deg \left(g_l(x)\right) < 2^{\mathcal{T}-s}$ for $2\leq l \leq \lambda_s$. Furthermore, each  $g_l(x)$ can be expressed as follows:
	\begin{equation}\label{eq3}
		g_l(x)= \sum_{i=0}^{2^{\mathcal{T}-s}-1 } g_{l,i}x^{i}, \quad \text{where } g_{l,i} \in \mathbb{F}_2,\ 0\leq i \leq 2^{\mathcal{T}-s}-1.
	\end{equation}
	Substituting  the values of $g(x)$ and $g_l(x)$ from Eqs.~\eqref{eq2} and \eqref{eq3} into Eq.~\eqref{eq1}, we obtain
	\allowdisplaybreaks
	\begin{align}\label{eq4}
		\nonumber
		c(x)	=& \Big( \left(1+x^{\lambda_s-1}\right) P(x) \Big)^{2^{\mathcal{T}-s}} \sum_{i=0}^{2^{\mathcal{T}-s}-1 } g_{1,i}x^{i}+ \Big( \left(1+x^{\lambda_s-2}\right) P(x) \Big)^{2^{\mathcal{T}-s}} \sum_{i=0}^{2^{\mathcal{T}-s}-1 } g_{2,i}x^{i}+ \ldots\\& + \Big( \left(1+x\right) P(x) \Big)^{2^{\mathcal{T}-s}} \sum_{i=0}^{2^{\mathcal{T}-s}-1 } g_{\lambda_s-1,i}x^{i}+   P(x)^{2^{\mathcal{T}-s}} \sum_{i=0}^{2^{\mathcal{T}-s}-1 } g_{\lambda_s,i}x^{i}.
	\end{align}
	In each of the polynomials $ \big( \left(1+x^{\lambda_s-1}\right) P(x) \big)^{2^{\mathcal{T}-s}}, \big( \left(1+x^{\lambda_s-2}\right) P(x) \big)^{2^{\mathcal{T}-s}},\ldots, \big( \left(1+x\right) P(x) \big)^{2^{\mathcal{T}-s}}$, and $P(x)^{2^{\mathcal{T}-s}}$, all the exponents of $x$ are multiples of $2^{\mathcal{T}-s}$.
	Next, we group the terms of $c(x)$ according to their exponents modulo $2^{\mathcal{T}-s}$.  
	For each $0\leq i \leq 2^{\mathcal{T}-s}-1$,	let $c_i(x)$ denote the sum of all terms of $c(x)$ whose exponents are congruent to $i \bmod (2^{\mathcal{T}-s}).$ 
	Then 
	$
	c(x)= c_0(x)+c_1(x)+\ldots+c_{2^{\mathcal{T}-s}-1} (x).
	$
	Clearly, for $i_1 \ne i_2,$ no term in $c_{i_1}(x)$ coincides with a term in $c_{i_2}(x)$, since $ 2^{\mathcal{T}-s} b+i_1 \ne 2^{\mathcal{T}-s} b'+i_2$ for any $b, b' \geq0$ and $i_1 \ne i_2. $ 	Thus, 
	\begin{equation*}
		wt\left(c(x)\right)= wt\left(c_0(x)\right)+ wt\left(c_1(x)\right)+\ldots+ wt\left(c_{2^{\mathcal{T}-s}-1} (x)\right).
	\end{equation*}  Using Eq.~\eqref{eq4}, each $c_i(x)$ can be written as
	\begin{align}\label{eq5}
		\nonumber	c_i(x)= \Big( &  \left( \left(1+x^{\lambda_s-1}\right) P(x) \right)^{2^{\mathcal{T}-s}} g_{1,i}+ \left( \left(1+x^{\lambda_s-2}\right) P(x) \right)^{2^{\mathcal{T}-s}} g_{2,i}+\ldots + \left( \left(1+x\right) P(x) \right)^{2^{\mathcal{T}-s}} g_{\lambda_s-1,i}\\\nonumber &+\left(  P(x) \right)^{2^{\mathcal{T}-s}}g_{\lambda_s,i} \Big) x^{i}\\
		=& \Big( \left(1+x^{\lambda_s-1}\right)^{2^{\mathcal{T}-s}} g_{1,i}+\left(1+x^{\lambda_s-2}\right)^{2^{\mathcal{T}-s}} g_{2,i}+\ldots+ \left(1+x\right)^{2^{\mathcal{T}-s}} g_{\lambda_s-1,i}+ g_{\lambda_s,i} \Big) P(x)^{2^{\mathcal{T}-s}} x^{i}.
	\end{align}
	Since $c(x) \ne 0,$  we must have $c_j(x) \ne 0$ for some index  $j$.  Suppose that for this $j$,
	\begin{equation*}
		g_{m_1,j}= g_{m_2,j}= \ldots= g_{m_\mu, j}=1,\quad 1\leq m_1<m_2<\ldots<m_\mu \leq \lambda_s,
	\end{equation*}
	and all the remaining $g_{l,j}=0.$ Now, we consider the following cases:\\
	\textbf{Case I:} If $m_\mu <\lambda_s$, then  from Eq. \eqref{eq5}, we have 
	\begin{align*}
		c_j(x)&= \Big(  \left(1+x^{\lambda_s-m_1}\right)^{2^{\mathcal{T}-s}} +  \left(1+x^{\lambda_s-m_2}\right)^{2^{\mathcal{T}-s}}+\ldots+  \left(1+x^{\lambda_s-m_\mu}\right)^{2^{\mathcal{T}-s}} \Big)P(x)^{2^{\mathcal{T}-s}} x^{j}\\
		&= \left( 1+ x^{\lambda_s-m_1}+ 1+x^{\lambda_s-m_2}+\ldots+ 1+x^{\lambda_s-m_\mu}  \right)^{2^{\mathcal{T}-s}} P(x)^{2^{\mathcal{T}-s}} x^{j}.
	\end{align*}
	Now, we consider the following two subcases.
	
	\textbf{Subcase I:} If $\mu$ is odd, then we have
	\begin{align}\label{eq6}
		\nonumber	c_j(x)&= \left( 1+ x^{\lambda_s-m_1}+x^{\lambda_s-m_2}+\ldots+ x^{\lambda_s-m_\mu}  \right)^{2^{\mathcal{T}-s}} P(x)^{2^{\mathcal{T}-s}} x^{j},\\
		wt \left(c_j(x)\right) & = wt \Big(\left( 1+ x^{\lambda_s-m_1}+x^{\lambda_s-m_2}+\ldots+ x^{\lambda_s-m_\mu}  \right)^{2^{\mathcal{T}-s}} P(x)^{2^{\mathcal{T}-s}}\Big).
	\end{align}
	
	\textbf{Subcase II:} If $\mu$ is even, then we have 
	\begin{align}\label{eq7}
		\nonumber	c_j(x)&= \left(  x^{\lambda_s-m_1}+x^{\lambda_s-m_2}+\ldots+ x^{\lambda_s-m_\mu}  \right)^{2^{\mathcal{T}-s}} P(x)^{2^{\mathcal{T}-s}} x^{j}\\
		\nonumber	&=  \left(x^{\lambda_s-m_\mu}\right) ^{2^{\mathcal{T}-s}}\left( 1+ x^{m_\mu-m_{\mu-1}} + x^{m_\mu-m_{\mu-2}} +\ldots+  x^{m_\mu-m_1}  \right)^{2^{\mathcal{T}-s}} P(x)^{2^{\mathcal{T}-s}} x^{j},\\
		wt \left(c_j(x)\right) & = wt \left( \left( 1+ x^{m_\mu-m_{\mu-1}} + x^{m_\mu-m_{\mu-2}} +\ldots+  x^{m_\mu-m_1}  \right)^{2^{\mathcal{T}-s}} P(x)^{2^{\mathcal{T}-s}}  \right).
	\end{align}
	\textbf{Case II:} If $m_\mu =\lambda_s$, then we consider the following three subcases.
	
	\textbf{Subcase I:} If $\mu =1$, then from Eq. \eqref{eq5}, we have 
	\begin{align}\label{eq8}
		\nonumber	c_j(x)= & P(x)^{2^{\mathcal{T}-s}} x^{j},\\
		wt \left(c_j(x)\right) = & wt\left(P(x)^{2^{\mathcal{T}-s}}\right).
	\end{align}

	\textbf{Subcase II:} If $\mu$ is odd and $\mu \geq3,$ then from Eq.~\eqref{eq5}, we have
	\begin{align}\label{eq9}
		\nonumber	c_j(x)&= \Big(   \left(1+x^{\lambda_s-m_1}\right)^{2^{\mathcal{T}-s}} +  \left(1+x^{\lambda_s-m_2}\right)^{2^{\mathcal{T}-s}}+\ldots+  \left(1+x^{\lambda_s-m_{\mu-1}}\right)^{2^{\mathcal{T}-s}}+1 \Big)P(x)^{2^{\mathcal{T}-s}} x^{j}\\
		\nonumber	&= 	 \left( 1+ x^{\lambda_s-m_1} + x^{\lambda_s-m_2} +\ldots+  x^{\lambda_s-m_{\mu-1}}  \right)^{2^{\mathcal{T}-s}} P(x)^{2^{\mathcal{T}-s}} x^{j},\\
		wt \left(c_j(x)\right) &= wt \Big(\left( 1+ x^{\lambda_s-m_1} + x^{\lambda_s-m_2} +\ldots+  x^{\lambda_s-m_{\mu-1}}  \right)^{2^{\mathcal{T}-s}}  P(x)^{2^{\mathcal{T}-s}}\Big).
	\end{align}
	
	\textbf{Subcase III:} If $\mu$ is even, then we have 
	\begin{align}\label{eq10}
		\nonumber	c_j(x)&= \Big(   \left(1+x^{\lambda_s-m_1}\right)^{2^{\mathcal{T}-s}} +  \left(1+x^{\lambda_s-m_2}\right)^{2^{\mathcal{T}-s}}+\ldots+  \left(1+x^{\lambda_s-m_{\mu-1}}\right)^{2^{\mathcal{T}-s}}+1 \Big)P(x)^{2^{\mathcal{T}-s}} x^{j}\\
		\nonumber	&=	\left(  x^{\lambda_s-m_1}+x^{\lambda_s-m_2}+\ldots+ x^{\lambda_s-m_{\mu-1}}  \right)^{2^{\mathcal{T}-s}} P(x)^{2^{\mathcal{T}-s}} x^{j}\\
		\nonumber	&=  \left(x^{\lambda_s-m_{\mu-1}}\right) ^{2^{\mathcal{T}-s}} \left( 1+ x^{m_{\mu-1}-m_{\mu-2}} + x^{m_{\mu-1}-m_{\mu-3}} +\ldots+  x^{m_{\mu-1}-m_1}  \right)^{2^{\mathcal{T}-s}} P(x)^{2^{\mathcal{T}-s}} x^{j},\\
		wt \left(c_j(x)\right) &= wt \left( \left( 1+ x^{m_{\mu-1}-m_{\mu-2}} + x^{m_{\mu-1}-m_{\mu-3}} +\ldots+  x^{m_{\mu-1}-m_1}  \right)^{2^{\mathcal{T}-s}} P(x)^{2^{\mathcal{T}-s}}  \right).
	\end{align}
	From Eqs.~\eqref{eq6}--\eqref{eq10}, we conclude that in all the cases
	$
	wt\left(c_j(x)\right) \geq \min \left\{ wt \left(f(x)\right) \mid f(x) \in S_{\mathcal{T},s}^{\lambda_s} \right\}.
	$
	Therefore,
	\begin{equation}\label{eq11}
		wt\left(c(x)\right) \geq \min \left\{wt (f(x)) \mid f(x) \in S_{\mathcal{T},s}^{\lambda_s}\right\}.
	\end{equation}
	Since $ S_{\mathcal{T},s}^{\lambda_s} \subseteq  C_{2^{\mathcal{T}-s}}$, the reverse inequality  $ d_{2^{\mathcal{T}-s}} \leq \min \left\{wt (f(x)) \mid f(x) \in S_{\mathcal{T},s}^{\lambda_s}\right\}$ also holds. This completes the proof.
\end{proof}
\begin{remark}\label{remark1}
	For $1\leq s \leq \mathcal{T}$,	the size of the code $C_{2^{\mathcal{T}-s}}$  is $2^{m\left(L-2^{\mathcal{T}-s}\right) }$, while the cardinality of the set $S_{\mathcal{T},s}^{\lambda_s}$ is $2^{\lambda_s-1}$, where $\left(\lambda_s-1\right) 2^{\mathcal{T}-s} < m \left(\mathcal{L}-2^{\mathcal{T}-s} \right) \leq \lambda_s 2^{\mathcal{T}-s}$. Clearly, $ \lambda_s-1 < \frac{m \left(\mathcal{L}-2^{\mathcal{T}-s} \right) }{2^{\mathcal{T}-s}}$. Computing the Hamming distance of $C_{2^{\mathcal{T}-s}}$ directly would require examining the weights  of all  $2^{m\left(L-2^{\mathcal{T}-s}\right) }-1$ non-zero codewords and then identifying the minimum weight. Therefore, Theorem~\ref{theorem4} substantially reduces the computational complexity of determining $d_{2^{\mathcal{T}-s}}$, since it suffices to examine only the $2^{\lambda_s-1}$ elements of $S_{\mathcal{T},s}^{\lambda_s}$ instead of all nonzero codewords of the code.
\end{remark}
We illustrate the above theorem for $m=4,\ \mathcal{L}=2^{\mathcal{T}}-1$  and $ s=1$ in the following corollary.
\begin{corollary}\label{corollary2}
	Let $P(x)$ be a binary irreducible polynomial of degree $4$ and $\mathcal{L}=2^\mathcal{T}-1,\ \mathcal{T}\geq 4$. Then the Hamming distance of the code $C_{2^{\mathcal{T}-1}}$ is 
	\begin{equation*}
		d_{2^{\mathcal{T}-1}}=  \min \left\{wt (f(x)) \mid f(x) \in S_{\mathcal{T},1}^4\right\}, 
	\end{equation*}
	where
	\allowdisplaybreaks
	\begin{align*}
		S_{\mathcal{T},1}^4= \biggl\{ &  P(x)^{2^{\mathcal{T}-1}},\ \left(1+x\right)^{2^{\mathcal{T}-1}} P(x)^{2^{\mathcal{T}-1}},\ \left(1+x^{2 }\right)^{2^{\mathcal{T}-1}} P(x)^{2^{\mathcal{T}-1}},\  \left(1+x^{3 }\right)^{2^{\mathcal{T}-1}} P(x)^{2^{\mathcal{T}-1}},\   \left(1+x+x^{2 }\right)^{2^{\mathcal{T}-1}} \\ &  P(x)^{2^{\mathcal{T}-1}},\ \left(1+x+x^{3 }\right)^{2^{\mathcal{T}-1}}  P(x)^{2^{\mathcal{T}-1}},\ \left(1+x^{2 }+x^{3 }\right)^{2^{\mathcal{T}-1}}  P(x)^{2^{\mathcal{T}-1}},\ \left(1+x+x^{2 }+x^{3}\right)^{2^{\mathcal{T}-1}}  P(x)^{2^{\mathcal{T}-1}}\biggr\}.
	\end{align*}
\end{corollary}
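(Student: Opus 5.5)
The plan is to apply Theorem~\ref{theorem4} directly with $m=4$, $\mathcal{L}=2^{\mathcal{T}}-1$ and $s=1$. The work consists of checking the hypotheses, computing $\lambda_1$, and writing out the set $S_{\mathcal{T},1}^{\lambda_1}$ explicitly.

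First I check that $\mathcal{T}$ is the integer attached to $\mathcal{L}$ in the paper's convention, i.e. $2^{\mathcal{T}-1}<\mathcal{L}\leq 2^{\mathcal{T}}$. For $\mathcal{L}=2^{\mathcal{T}}-1$ the right inequality is immediate. The left one, $2^{\mathcal{T}-1}<2^{\mathcal{T}}-1$, holds for every $\mathcal{T}\geq 2$, so in particular for $\mathcal{T}\geq 4$. Also $s=1$ lies in $1\leq s\leq\mathcal{T}$, and $C_{2^{\mathcal{T}-s}}=C_{2^{\mathcal{T}-1}}$ is the code in question.

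Next I compute $\lambda_1$, defined by $(\lambda_1-1)2^{\mathcal{T}-1}<m(\mathcal{L}-2^{\mathcal{T}-1})\leq \lambda_1 2^{\mathcal{T}-1}$. Here
\begin{equation*}
m\left(\mathcal{L}-2^{\mathcal{T}-1}\right)=4\left(2^{\mathcal{T}-1}-1\right)=4\cdot 2^{\mathcal{T}-1}-4 .
\end{equation*}
I claim $\lambda_1=4$. The upper inequality $4\cdot2^{\mathcal{T}-1}-4\leq 4\cdot 2^{\mathcal{T}-1}$ is trivial. The lower inequality $3\cdot 2^{\mathcal{T}-1}<4\cdot 2^{\mathcal{T}-1}-4$ is equivalent to $2^{\mathcal{T}-1}>4$, i.e. $\mathcal{T}\geq 4$. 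This is exactly where the hypothesis $\mathcal{T}\geq4$ is needed, and it is the only point requiring care: for $\mathcal{T}=3$ one gets equality and $\lambda_1=3$ instead.

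Since $\lambda_1=4>1$, Theorem~\ref{theorem4} gives $d_{2^{\mathcal{T}-1}}=\min\{wt(f(x))\mid f(x)\in S_{\mathcal{T},1}^{4}\}$. Here $S_{\mathcal{T},1}^{4}$ consists of the products $a(x)^{2^{\mathcal{T}-1}}P(x)^{2^{\mathcal{T}-1}}$ with $\deg(a(x))\leq 3$ and constant term $1$. Such $a(x)$ are $1+\epsilon_1x+\epsilon_2x^2+\epsilon_3x^3$ with $\epsilon_i\in\mathbb{F}_2$, giving $2^3=8$ choices:
\begin{equation*}
1,\ 1+x,\ 1+x^2,\ 1+x^3,\ 1+x+x^2,\ 1+x+x^3,\ 1+x^2+x^3,\ 1+x+x^2+x^3 .
\end{equation*}
These are precisely the eight elements listed in the statement, which completes the argument.
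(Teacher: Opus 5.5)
Your proposal is correct and follows the paper's route. The paper's proof also computes $\lambda_1=4$, but instead of citing Theorem~\ref{theorem4} it re-runs that theorem's argument in this special case, splitting $g(x)$ into $g_1,\dots,g_4$ and tabulating all fifteen nonzero coefficient patterns in Table~\ref{tab2}. Your explicit check that the strict inequality $3\cdot 2^{\mathcal{T}-1}<4\left(2^{\mathcal{T}-1}-1\right)$ is exactly where $\mathcal{T}\geq 4$ enters is a detail the paper leaves implicit.
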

\begin{proof}
	Let $c(x)$ be a non-zero element of $C_{2^{\mathcal{T}-1}}$. Then it can be expressed as 
	\begin{equation*}
		c(x)= P(x)^{2^{\mathcal{T}-1}} g(x), \quad \text{where } g(x)\in \mathbb{F}_2[x],\ \deg\left(g(x)\right)< 4 \left(2^\mathcal{T}-1-2^{\mathcal{T}-1}\right).
	\end{equation*}
	Let $\lambda_1$ be the positive integer such that
	\begin{equation*}
		\left(\lambda_1-1\right)2^{\mathcal{T}-1} < 4 \left( 2^{\mathcal{T}}-1-2^{\mathcal{T}-1}\right)= 4 \left(2^{\mathcal{T}-1}-1\right) \leq \lambda_1 2^{\mathcal{T}-1}. 
	\end{equation*} Then $\lambda_1=4.$ Using the division algorithm, $g(x)$ can be represented as 
	\begin{equation*}
		g(x)= \big(1+x^{3 \cdot 2^{\mathcal{T}-1}}\big) g_1(x)+g_1'(x),
	\end{equation*}
	where $g_1(x),\ g_1'(x)\in \mathbb{F}_2[x]$ with $\deg \left(g_1(x)\right) < 2^{\mathcal{T}-1}-4$ and $\deg \left(g_1'(x)\right) < 3 \cdot 2^{\mathcal{T}-1}$. By applying the division algorithm three times, the polynomial $g(x)$ can be expressed in the following form
	\begin{equation*}
		g(x)= \big(1+x^{3\cdot 2^{\mathcal{T}-1} }\big) g_1(x)+ \big(1+x^{2\cdot 2^{\mathcal{T}-1} }\big) g_2(x)+ \big(1+x^{ 2^{\mathcal{T}-1} }\big) g_3(x)+ g_4(x),
	\end{equation*}
	where $g_1(x), g_2(x), g_3(x), g_4(x)\in \mathbb{F}_2[x]$ and $\deg\left(g_1(x)\right) < 2^{\mathcal{T}-1}-4, \ \deg \left(g_l(x)\right)< 2^{\mathcal{T}-1} $ for  $2 \leq l\leq 4.$ Furthermore, each $g_l(x)$ can be expanded as
	\begin{equation*}
		g_l(x)=\sum_{i=0}^{2^{\mathcal{T}-1}-1} g_{l,i} x^{i}, \quad \text{where }  g_{l,i}\in \mathbb{F}_2,\ 0\leq i \leq 2^{\mathcal{T}-1}-1.
	\end{equation*} Substituting this into $c(x)$, we obtain
	\allowdisplaybreaks
	\begin{align}\label{eq12}
		\nonumber c(x)=& \big(1+x^{3\cdot 2^{\mathcal{T}-1} }\big) P(x)^{2^{\mathcal{T}-1}} \sum_{i=0}^{2^{\mathcal{T}-1}-1} g_{1,i} x^{i}+ \big(1+x^{2\cdot 2^{\mathcal{T}-1} }\big) P(x)^{2^{\mathcal{T}-1}} \sum_{i=0}^{2^{\mathcal{T}-1}-1} g_{2,i} x^{i}\\& + \big(1+x^{ 2^{\mathcal{T}-1} }\big) P(x)^{2^{\mathcal{T}-1}} \sum_{i=0}^{2^{\mathcal{T}-1}-1} g_{3,i} x^{i}+ P(x)^{2^{\mathcal{T}-1}} \sum_{i=0}^{2^{\mathcal{T}-1}-1} g_{4,i} x^{i}.
	\end{align}
	It is quite obvious that every term of $c(x)$ has exponent of $x$ of the form $b 2^{\mathcal{T}-1}+i$ with $b \geq 0$ and $0\leq i \leq 2^{\mathcal{T}-1}-1.$ Let $c_i(x)$ be the polynomial containing the terms of $c(x)$ whose exponent of $x$ is of the form $b 2^{\mathcal{T}-1}+i$. From Eq. \eqref{eq12}, we get
	\begin{equation*}
		c_i(x)= \left(  \big(1+x^{3\cdot 2^{\mathcal{T}-1} }\big)  g_{1,i}+ \big(1+x^{2\cdot 2^{\mathcal{T}-1} }\big) g_{2,i}+ \big(1+x^{ 2^{\mathcal{T}-1} }\big) g_{3,i}+g_{4,i} \right)P(x)^{2^{\mathcal{T}-1}} x^{i}
	\end{equation*}
	Since $ c(x)\ne 0,$ thus there exists some $j$ for which $c_j(x)\ne 0.$ For this $j$, we  list all the  possibilities of $\left(g_{1,j},\ g_{2,j},\ g_{3,j},\  g_{4,j}\right)$ in Table~\ref{tab2} and observe the weight of the resulting $c_j(x)$.
	\begin{table}[htbp]
		\caption{All possible choices of $c_j(x)$ and their corresponding weights  (Corollary~\ref{corollary2})}
		\label{tab2}
		\centering
		\begin{tabular}{|c|c|c|}
			\hline
			$\left(g_{1,j},\ g_{2,j},\ g_{3,j},\ g_{4,j}\right)$ & $c_j(x)$ & $wt(c_j(x))$ \\ \hline
			$(1,0,0,0)$ &  $\left(1+x^{3 }\right)^{2^{\mathcal{T}-1}}  P(x)^{2^{\mathcal{T}-1}} x^{i}$  & $wt\left(\left(1+x^{3}\right)^{2^{\mathcal{T}-1}} P(x)^{2^{\mathcal{T}-1}}\right)$
			\\ \hline
			$(0,1,0,0)$ & $\left(1+x^{2}\right)^{2^{\mathcal{T}-1} }  P(x)^{2^{\mathcal{T}-1}} x^{i}$ & $ wt\left( \left(1+x^{2}\right)^{2^{\mathcal{T}-1} }  P(x)^{2^{\mathcal{T}-1}} \right)$\\\hline
			$(0,0,1,0)$ & $\left(1+x\right)^{2^{\mathcal{T}-1} }  P(x)^{2^{\mathcal{T}-1}} x^{i}$	& $ wt\left( \left(1+x\right)^{2^{\mathcal{T}-1} }  P(x)^{2^{\mathcal{T}-1}} \right)$\\\hline
			$(0,0,0,1)$ & $P(x)^{2^{\mathcal{T}-1}} x^{i}$ & $wt\left(   P(x)^{2^{\mathcal{T}-1}} \right)$\\\hline
			$(1,1,0,0)$ & $ x^{2\cdot 2^{\mathcal{T}-1} } \left(1+x\right)^{2^{\mathcal{T}-1} } P(x)^{2^{\mathcal{T}-1}} x^{i}$ & $wt\left( \left(1+x\right)^{2^{\mathcal{T}-1} }  P(x)^{2^{\mathcal{T}-1}} \right)$\\\hline
			$(1,0,1,0)$ & $ x^{ 2^{\mathcal{T}-1} } \left(1+x^{2 }\right)^{2^{\mathcal{T}-1} }  P(x)^{2^{\mathcal{T}-1}} x^{i}$ & $ wt\left( \left(1+x^{ 2 }\right)^{2^{\mathcal{T}-1} }  P(x)^{2^{\mathcal{T}-1}} \right)$\\\hline
			$(1,0,0,1)$ & $x^{3\cdot 2^{\mathcal{T}-1} } P(x)^{2^{\mathcal{T}-1}} x^{i}$ & $wt\left(   P(x)^{2^{\mathcal{T}-1}} \right)$\\\hline
			$(0,1,1,0)$ & $x^{ 2^{\mathcal{T}-1} } \left(1+x\right)^{2^{\mathcal{T}-1} } P(x)^{2^{\mathcal{T}-1}} x^{i}$ & $ wt\left( \left(1+x\right)^{2^{\mathcal{T}-1} }  P(x)^{2^{\mathcal{T}-1}} \right)$\\\hline
			$(0,1,0,1)$ & $x^{2\cdot 2^{\mathcal{T}-1} } P(x)^{2^{\mathcal{T}-1}} x^{i}$ & $ wt\left(   P(x)^{2^{\mathcal{T}-1}} \right)$ \\\hline
			$(0,0,1,1)$ & $x^{2^{\mathcal{T}-1} } P(x)^{2^{\mathcal{T}-1}} x^{i}$ & $wt\left(   P(x)^{2^{\mathcal{T}-1}} \right)$\\\hline
			$(1,1,1,0)$ & $\left(1+ x+x^{2}+x^{3 } \right)^{2^{\mathcal{T}-1} }P(x)^{2^{\mathcal{T}-1}} x^{i}$ & $wt \left( \left(1+ x +x^2+x^{3 } \right)^{2^{\mathcal{T}-1} }P(x)^{2^{\mathcal{T}-1}} \right)$\\\hline
			$(1,1,0,1)$ & $\left(1+x^{2 }+x^{3 } \right)^{2^{\mathcal{T}-1} }P(x)^{2^{\mathcal{T}-1}} x^{i}$ & $wt \left( \left(1+ x^{2 }+x^{3} \right)^{2^{\mathcal{T}-1} }P(x)^{2^{\mathcal{T}-1}} \right)$\\\hline
			$(1,0,1,1)$ & $\left(1+x+x^{3 } \right)^{2^{\mathcal{T}-1} }P(x)^{2^{\mathcal{T}-1}} x^{i}$ & $wt \left( \left(1+ x+x^{3} \right)^{2^{\mathcal{T}-1} }P(x)^{2^{\mathcal{T}-1}} \right)$ \\\hline
			$(0,1,1,1)$ & $\left(1+x+x^{2 } \right)^{2^{\mathcal{T}-1} }P(x)^{2^{\mathcal{T}-1}} x^{i}$ & $wt \left( \left(1+ x+x^{2 } \right)^{2^{\mathcal{T}-1} }P(x)^{2^{\mathcal{T}-1}} \right)$\\\hline
			$(1,1,1,1)$ & $ x^{2^{\mathcal{T}-1} } \left(1+x+x^{2} \right)^{2^{\mathcal{T}-1} }	P(x)^{2^{\mathcal{T}-1}} x^{i}$ & $wt \left( \left(1+ x+x^{2 } \right)^{2^{\mathcal{T}-1} }P(x)^{2^{\mathcal{T}-r}} \right)$
			\\\hline
			
		\end{tabular}
	\end{table}
	Considering all the possibilities, we observe that 
	\begin{equation*}
		wt(c(x)) \geq wt\left(c_j(x)\right) \geq \min \left\{wt(f(x)) \mid f(x) \in S_{\mathcal{T},1}^4\right\}. 
	\end{equation*}
	On the other hand, since $ S_{\mathcal{T},1}^4 \subseteq  C_{2^{\mathcal{T}-1}}$, it follows that
	\begin{equation*}
		d_{2^{\mathcal{T}-1}} \leq \min \left\{wt (f(x)) \mid f(x) \in S_{\mathcal{T},1}^4\right\}.
	\end{equation*}
	Combining both the  inequalities, we conclude the result.
\end{proof}
In the following example, we illustrate the computation of the Hamming distance for  the specific case when $P(x)=x^5+ x^4+x^2+x+1$ and $\mathcal{L}=5$.
\begin{example}\label{ex1}
	The Hamming distance of binary polycyclic codes associated with $\left(x^5+ x^4+x^2+x+1\right)^{5}.$
	
	We have $P(x)=x^5+ x^4+x^2+x+1$ and  $\mathcal{L}=5$, thus $m=5$ and $ \mathcal{T}=3$. The order of $P(x)$ is $e= 31$, which satisfies $e\geq m \mathcal{L}= 25.$ Therefore, by Theorem~\ref{theorem2}, we have 
	\begin{equation}\label{eq13}
		3\leq d_1 \leq d_2 \leq d_3 \leq d_4 \leq 5.
	\end{equation}
	For $1\leq s \leq 3$, let $\lambda_s$ be the positive integer such that $\left( \lambda_s-1\right)2^{3-s} < 5 \left(5-2^{3-s}\right) \leq \lambda_s 2^{3-s}$. This gives 
	$
	\lambda_1=2$, $\lambda_2=8$ and $\lambda_3=20.
	$
	By Theorem~\ref{theorem4},
	$
	d_{2^{3-1}} = d_4= \min\left\{wt(f(x)) \mid f(x)\in S_{3,1}^2 \right\},
	$
	where $
	S_{3,1}^2= \left\{  P(x)^4,\ \left(1+x^4\right) P(x)^4\right\}.
	$
	We have 
	\begin{equation*}
		P(x)^4= x^{20}+x^{16}+x^8+x^4+1,\quad \left(1+x^4\right) P(x)^4= x^{24}+x^{16}+x^{12}+1.
	\end{equation*}
	Therefore,
	\begin{equation}\label{eq14}
		d_4=4.
	\end{equation}
	Again, by Theorem~\ref{theorem4}, 
	$
	d_{2^{3-2}}=d_2= \min\left\{wt(f(x)) \mid f(x)\in S_{3,2}^8 \right\},
	$ where
	\begin{equation*}
		S_{3,2}^{8}= \left\{ a(x)^{2} P(x)^{2} \mid a(x)\in \mathbb{F}_2[x],\ \deg (a(x)) \leq 7,\ a(x) \text{ has constant term  }  1  \right\}.
	\end{equation*}
	Now, $\left(1+x+x^2\right)^2P(x)^2= x^{14}+x^4+1\in 	S_{3,2}^{8}$. Thus, $d_2 \leq 3.$ Therefore, from Eqs.~\eqref{eq13} and \eqref{eq14} we conclude that
	\begin{equation*}
		d_1=d_2=3,\quad 3\leq d_3\leq 4,\quad d_4=4.
	\end{equation*}
\end{example}
We have  determined the Hamming distance $d_j$ for $1\leq j \leq 2^{\mathcal{T}-1}$. We now proceed to compute $d_j$ for $2^{\mathcal{T}-1} < j <\mathcal{L}$. We divide the analysis into three cases according to the value of $\mathcal{L}$.
\subsection{The Hamming distance of $C_j$  for $2^{\mathcal{T}-1} < j <\mathcal{L}$, when $\mathcal{L}= 2^{\mathcal{T}}$}\label{sec3A}
For each integer $j$ satisfying $2^{\mathcal{T}-1} < j < 2^{\mathcal{T}}$, there exists an interval of the form
$
2^{\mathcal{T}}-2^{\mathcal{T}-r}+1 \leq j \leq 2^{\mathcal{T}}-2^{\mathcal{T}-r-1},
$
where $1 \leq r \leq \mathcal{T}-1.$
As a first step, we compute the Hamming distance of the codes $C_{	2^{\mathcal{T}}-2^{\mathcal{T}-r}}$, for $1\leq r \leq \mathcal{T}$. 
\begin{theorem}\label{theorem5}
	For $1\leq r \leq \mathcal{T}$,	the Hamming distance of  $C_{2^{\mathcal{T}}-2^{\mathcal{T}-r}}$ is given by
	\begin{equation*}
		d_{2^{\mathcal{T}}-2^{\mathcal{T}-r}}=  \min \left\{wt (f(x)) \mid f(x) \in S_{\mathcal{T},r}^{'m}\right\},
	\end{equation*} where the set $S_{\mathcal{T},r}^{'m}$ is defined as
	
	\begin{equation*}
		S_{\mathcal{T},r}^{'m}= \left\{ a(x)^{2^{\mathcal{T}-r}} P(x)^{2^{\mathcal{T}}-2^{\mathcal{T}-r}}\ \Big|\ a(x)\in \mathbb{F}_2[x],\ \deg (a(x)) \leq m-1,\ a(x) \text{ has constant term  }  1  \right\}.
	\end{equation*}
	
	\end{theorem}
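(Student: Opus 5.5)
Here $\mathcal{L}=2^{\mathcal{T}}$, so we work in $\mathcal{P}=\mathbb{F}_2[x]/\langle P(x)^{2^{\mathcal{T}}}\rangle$ and put $N=2^{\mathcal{T}-r}$ and $j=2^{\mathcal{T}}-N$. The plan is to rerun the argument of Theorem~\ref{theorem4}, with $N$ as the step size. There is one new feature: the factor $P(x)^{j}$ is not itself a $N$-th power in general. It becomes one after factoring $P(x)^{j}=P(x)^{N}\cdot P(x)^{j-N}$, but that factorization is awkward. It is cleaner to multiply through by something that makes everything a $N$-th power. I will work directly instead.

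A nonzero codeword of $C_j$ has the form $c(x)=P(x)^{j}g(x)$ with $\deg g(x)<m(2^{\mathcal{T}}-j)=mN$. Divide $g$ successively by $1+x^{(m-1)N},\dots,1+x^{N}$, exactly as in Eq.~\eqref{eq2}. This gives
\[
g(x)=\sum_{l=1}^{m-1}\bigl(1+x^{(m-l)N}\bigr)g_l(x)+g_m(x),\qquad \deg g_l(x)<N .
\]
Here $\lambda=m$ exactly, since $(m-1)N<mN\le mN$. The parallel with Theorem~\ref{theorem4} would be complete if $P(x)^{j}$ had all its exponents divisible by $N$. It does not: $P(x)^{j}=P(x)^{2^{\mathcal{T}}}\big/P(x)^{N}$, and only $P(x)^{2^{\mathcal{T}}}$ has that property.

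The key observation is that $c(x)$ is an honest polynomial of degree $<m2^{\mathcal{T}}$, and $P(x)^{2^{\mathcal{T}}}$ is a polynomial in $x^{N}$. So I reorganise using $P(x)^{N}$ as the divisor. Write $Q(x)=P(x)^{2^{\mathcal{T}}}$, a polynomial in $y=x^{N}$. Then $c(x)P(x)^{N}=Q(x)g(x)$, and I decompose $g(x)=\sum_{i=0}^{N-1}x^{i}h_i(x^{N})$ with $\deg h_i<m$. Each residue class then gives
\[
Q(x)x^{i}h_i(x^{N})=x^{i}\bigl(P(x)^{2^{r}}h_i(x)\bigr)^{N}\Big|_{\text{in }x}.
\]
Dividing by $P(x)^{N}$ does not respect residue classes, however, so this route looks problematic.

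The alternative route, and the one I will commit to, is to group the terms of $c(x)$ itself by exponent modulo $N$. We have $c(x)=P(x)^{N(2^{r}-1)}g(x)$, and $P(x)^{N(2^{r}-1)}=\bigl(P(x)^{2^{r}-1}\bigr)^{N}$ is a polynomial in $x^{N}$. Writing $g=\sum_{i}x^{i}h_i(x^{N})$ with $\deg h_i\le m-1$ (since $\deg g<mN$), we get
\[
c_i(x)=x^{i}\,\bigl(h_i(x)\,P(x)^{2^{r}-1}\bigr)\big|_{x\mapsto x^{N}}
=x^{i}\,h_i(x)^{N}P(x)^{2^{\mathcal{T}}-N}.
\]
The classes are disjoint, so $wt(c)=\sum_i wt(c_i)\ge wt(c_{i_0})$ for any $i_0$ with $h_{i_0}\neq0$. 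Write $h_{i_0}=x^{t}a(x)$ with $a(0)=1$ and $\deg a\le m-1$. Then $wt(c_{i_0})=wt\bigl(a(x)^{N}P(x)^{2^{\mathcal{T}}-N}\bigr)$, and this polynomial lies in $S_{\mathcal{T},r}^{'m}$. No reduction modulo $P(x)^{2^{\mathcal{T}}}$ ever occurs, since every degree is below $m2^{\mathcal{T}}$.

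For the reverse inequality, note that each element of $S_{\mathcal{T},r}^{'m}$ is a multiple of $P(x)^{j}$ of degree at most $(m-1)N+m(2^{\mathcal{T}}-N)<m2^{\mathcal{T}}$. Hence it is a genuine nonzero codeword of $C_j$, and equality follows.

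The main obstacle is making sure the substitution identity $h_i(x^{N})(P(x)^{2^{r}-1})^{N}=(h_iP^{2^{r}-1})^{N}$ is applied correctly. This is Frobenius: $h(x^{N})=h(x)^{N}$ over $\mathbb{F}_2$. One must also check the degree bound $\deg h_i\le m-1$, which follows from $\deg g<mN$. This is a simpler mechanism than the iterated division of Theorem~\ref{theorem4}, and it reduces the problem directly to the set $S_{\mathcal{T},r}^{'m}$.
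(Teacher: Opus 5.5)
Your committed argument is correct, but it reaches the result by a different decomposition than the paper. The paper proceeds in three steps:
\begin{enumerate}
\item it divides $g(x)$ successively by $1+x^{(m-1)N},\dots,1+x^{N}$ to write $g=\sum_{l}(1+x^{(m-l)N})g_l+g_m$ with $\deg g_l<N$;
\item it groups the terms of $c(x)$ by exponent modulo $N$;
\item it reruns the case analysis of Theorem~\ref{theorem4} (parity of the number of nonzero $g_{l,i}$, and whether $g_{m,i}=1$) to show that each nonzero residue component is, up to a power of $x$, of the form $a(x)^{N}P(x)^{2^{\mathcal{T}}-N}$ with $a(0)=1$.
\end{enumerate}
You instead split $g$ directly by residue class, $g=\sum_i x^{i}h_i(x^{N})=\sum_i x^{i}h_i(x)^{N}$ with $\deg h_i\le m-1$. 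Each component is then immediately $x^{i}h_i(x)^{N}P(x)^{2^{\mathcal{T}}-N}$, and the only normalisation needed is to pull $x^{t}$ out of $h_i$.

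The paper's basis $\{1+x^{kN}\}\cup\{1\}$ is just a change of basis from your monomial basis $\{x^{kN}\}$, and its parity case analysis then undoes that change. Your route removes this step entirely and makes it transparent why exactly $S_{\mathcal{T},r}^{'m}$ appears. You also check explicitly two points the paper leaves implicit: that $\deg c<m2^{\mathcal{T}}$, so no reduction modulo $P(x)^{2^{\mathcal{T}}}$ occurs, and that every element of $S_{\mathcal{T},r}^{'m}$ is a genuine nonzero codeword. The same residue-class decomposition would shorten Theorems~\ref{theorem4} and~\ref{theorem9} essentially verbatim; only the degree bound on $h_i$ changes.

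One correction to your write-up: your opening claims that $P(x)^{j}$ is not an $N$-th power and does not have all exponents divisible by $N$ are false. Since $j=N(2^{r}-1)$, we have $P(x)^{j}=\bigl(P(x)^{2^{r}-1}\bigr)^{N}$, which is precisely the fact your final argument (and the paper's) relies on. The abandoned preliminary routes should simply be deleted.
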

	\begin{proof}
		Let $c(x)$ be a  non-zero element of $ C_{2^{\mathcal{T}}-2^{\mathcal{T}-r}}$. Then it can be expressed as
		\begin{equation}\label{eq15}
			c(x)= P(x)^{2^{\mathcal{T}}-2^{\mathcal{T}-r}} g(x),\quad \text{where } g(x)\in \mathbb{F}_2[x],\ \deg\left(g(x)\right) < m 2^{\mathcal{T}-r}.
		\end{equation}
		By applying the division algorithm as in Theorem~\ref{theorem4},  the polynomial $g(x)$ can be written as
		\begin{equation}\label{eq16}
			g(x)= \big( 1+ x^{(m-1) 2^{\mathcal{T}-r}}\big) g_1(x)+ \big( 1+ x^{(m-2) 2^{\mathcal{T}-r}}\big) g_2(x)+ \ldots+ \big( 1+ x^{ 2^{\mathcal{T}-r}}\big) g_{m-1}(x)+ g_m(x),
		\end{equation}
		where each $ g_l(x) \in \mathbb{F}_2[x]$ with $\deg \left(g_l(x)\right) < 2^{\mathcal{T}-r}$. Furthermore, each  $g_l(x)$ can be expressed as follows:
		\begin{equation}\label{eq17}
			g_l(x)= \sum_{i=0}^{2^{\mathcal{T}-r}-1 } g_{l,i}x^{i}, \quad \text{where } g_{l,i} \in \mathbb{F}_2,\ 0\leq i \leq 2^{\mathcal{T}-r}-1.
		\end{equation}
		Substituting the values of $g(x)$ and $g_l(x)$ from Eqs.~\eqref{eq16} and \eqref{eq17} into Eq. \eqref{eq15}, we obtain
		\allowdisplaybreaks
		\begin{align*}
			\nonumber	c(x)&= \big( 1+ x^{(m-1) 2^{\mathcal{T}-r}}\big) P(x)^{2^{\mathcal{T}}-2^{\mathcal{T}-r}} \sum_{i=0}^{2^{\mathcal{T}-r}-1 } g_{1,i}x^{i}+ \big( 1+ x^{(m-2) 2^{\mathcal{T}-r}}\big) P(x)^{2^{\mathcal{T}}-2^{\mathcal{T}-r}} \sum_{i=0}^{2^{\mathcal{T}-r}-1 } g_{2,i}x^{i}+  \ldots\\\nonumber & \quad+ \big( 1+ x^{ 2^{\mathcal{T}-r}}\big) P(x)^{2^{\mathcal{T}}-2^{\mathcal{T}-r}} \sum_{i=0}^{2^{\mathcal{T}-r}-1 } g_{m-1,i}x^{i}+ P(x)^{2^{\mathcal{T}}-2^{\mathcal{T}-r}} \sum_{i=0}^{2^{\mathcal{T}-r}-1 } g_{m,i}x^{i}\\\nonumber
			&= \Big( \left(1+x^{m-1}\right) P(x)^{2^r-1} \Big)^{2^{\mathcal{T}-r}} \sum_{i=0}^{2^{\mathcal{T}-r}-1 } g_{1,i}x^{i}+ \Big( \left(1+x^{m-2}\right) P(x)^{2^r-1} \Big)^{2^{\mathcal{T}-r}} \sum_{i=0}^{2^{\mathcal{T}-r}-1 } g_{2,i}x^{i}+ \ldots\\& \quad+ \Big( \left(1+x\right) P(x)^{2^r-1} \Big)^{2^{\mathcal{T}-r}} \sum_{i=0}^{2^{\mathcal{T}-r}-1 } g_{m-1,i}x^{i}+ \Big(  P(x)^{2^r-1} \Big)^{2^{\mathcal{T}-r}} \sum_{i=0}^{2^{\mathcal{T}-r}-1 } g_{m,i}x^{i}.
		\end{align*}
		Following the similar argument as in Theorem~\ref{theorem4}, we conclude the result.
	\end{proof}
	Next, we give the bounds on  the Hamming distance of the codes $C_{2^\mathcal{T}-2^{\mathcal{T}-r}+1}$ for $1\leq r\leq \mathcal{T}-2$. 
	\begin{theorem}\label{theorem6}
		For $1\leq r \leq \mathcal{T}-2,$  the Hamming distance of $C_{2^\mathcal{T}-2^{\mathcal{T}-r}+1}$ satisfies 
		\begin{equation*}
			2 d_{2^\mathcal{T}-2^{\mathcal{T}-r}} \leq	d_{2^\mathcal{T}-2^{\mathcal{T}-r}+1} \leq d_{2^\mathcal{T}-2^{\mathcal{T}-r-1}}.
		\end{equation*}
	\end{theorem}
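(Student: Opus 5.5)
Throughout, write $N=2^{\mathcal{T}-r}$ and $k=2^{\mathcal{T}}-N$, so that $\mathcal{L}=2^{\mathcal{T}}$ and $P(x)^{k}=\big(P(x)^{2^r-1}\big)^{N}$. The plan is to get the upper bound from the chain of ideals and the lower bound from the residue-class splitting used in Theorems~\ref{theorem4} and~\ref{theorem5}.

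\textbf{Upper bound.} We have $2^{\mathcal{T}}-2^{\mathcal{T}-r-1} = k+2^{\mathcal{T}-r-1} \geq k+1$, and this index is less than $\mathcal{L}$ because $r\leq \mathcal{T}-2$. By Proposition~\ref{proposition1} the ideals form a chain, so $C_{2^{\mathcal{T}}-2^{\mathcal{T}-r-1}} \subseteq C_{k+1}$. A subcode has minimum distance at least that of the larger code, which gives $d_{k+1}\leq d_{2^{\mathcal{T}}-2^{\mathcal{T}-r-1}}$.

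\textbf{Lower bound, setup.} Take a nonzero codeword $c(x)\in C_{k+1}$ and write $c(x)=P(x)^{k}g(x)$ with $g(x)=P(x)h(x)$. We may take $\deg h(x)<m(\mathcal{L}-k-1)=m(N-1)$, so $\deg g(x)<mN$. Group the monomials of $g$ by exponent modulo $N$:
\begin{equation*}
g(x)=\sum_{i=0}^{N-1}x^{i}G_i\big(x^{N}\big)=\sum_{i=0}^{N-1}x^{i}G_i(x)^{N}.
\end{equation*}
The degree bound on $g$ forces $\deg G_i\leq m-1$, and the second equality holds because we are in characteristic $2$. This is the same decomposition as in the proof of Theorem~\ref{theorem5}.

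\textbf{Lower bound, splitting the weight.} The $i$-th piece of $c$ is
\begin{equation*}
c_i(x)=x^{i}\big(G_i(x)P(x)^{2^r-1}\big)^{N}.
\end{equation*}
All its exponents are congruent to $i$ modulo $N$, so different pieces have disjoint supports and $wt(c)=\sum_i wt(c_i)$. Each nonzero $c_i$, after removing the factor $x^i$, is the polynomial $G_i(x)^{N}P(x)^{k}$. This is a nonzero element of $C_k$, and its degree is below $mN+mk=m\mathcal{L}$, so it is reduced and no wrap-around occurs. Hence $wt(c_i)\geq d_k$ whenever $G_i\neq 0$.

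\textbf{Key step: at least two pieces are nonzero.} Suppose only one $G_i$ is nonzero. Then $P(x)h(x)=g(x)=x^{i}G_i(x)^{N}$. Since $\gcd(P(x),x)=1$ and $P(x)$ is irreducible, $P(x)\mid G_i(x)$. This is impossible because $0\leq \deg G_i\leq m-1<m$. So at least two indices $i$ have $G_i\neq 0$, and therefore $wt(c)\geq 2d_k$. This gives $2d_{k}\leq d_{k+1}$.

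The main obstacle is this last step, which is where the extra factor $P(x)$ is actually used. One must check carefully that the degree bound $\deg G_i\leq m-1$ holds for the reduced representative of $c$; this is what makes the divisibility contradiction work. The remaining steps are the same no-cancellation bookkeeping already carried out in Theorems~\ref{theorem4} and~\ref{theorem5}.
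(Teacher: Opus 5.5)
Your proposal is correct and follows essentially the same route as the paper. You split $g(x)=P(x)h(x)$ into residue classes modulo $2^{\mathcal{T}-r}$, show that at least two classes are nonzero, bound each class's contribution below by $d_{2^{\mathcal{T}}-2^{\mathcal{T}-r}}$, and take the upper bound from the ideal chain. The paper phrases the key step as ``a nonzero polynomial of degree $<m$ is a unit in the chain ring $\mathcal{P}$, contradicting nilpotency of $g(x)$'', which is equivalent to your $P(x)\mid G_i(x)$ argument. It also bounds each piece through membership in $S_{\mathcal{T},r}^{'m}$ rather than directly as a nonzero codeword of $C_k$, but these are cosmetic differences.
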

	\begin{proof}
		Let $c(x)$ be a non-zero codeword in $C_{2^\mathcal{T}-2^{\mathcal{T}-r}+1}$. Then 
		\begin{equation*}
			c(x)= P(x)^{2^\mathcal{T}-2^{\mathcal{T}-r}+1} f(x), \quad \text{where } f(x) \in \mathbb{F}_2[x],\ \deg \left(f(x)\right)  < m \left( 2^{\mathcal{T}-r}-1 \right). 
		\end{equation*}
		Equivalently, we may write
		\begin{equation*}
			c(x)= P(x)^{2^\mathcal{T}-2^{\mathcal{T}-r}} g(x),\quad \text{where } g(x)= P(x)f(x) \in \mathbb{F}_2[x],\ \deg \left(g(x)\right)  < m  2^{\mathcal{T}-r}. 
		\end{equation*}
		Since $g(x)= P(x)f(x)$, it is a nilpotent element of $\mathcal{P}$. Hence, $wt(g(x)) \geq 2.$ Suppose $wt(g(x)) = \omega.$ We group the terms of $g(x)$ according to their exponents modulo $2^{\mathcal{T}-r}$. Then  $g(x)$ can be represented in the following form
		\begin{equation}\label{eq18}
			g(x)= \sum_{i=1}^{s_1} x^{2^{\mathcal{T}-r}l_{1,i}+m_1} +  \sum_{i=1}^{s_2} x^{2^{\mathcal{T}-r}l_{2,i}+m_2}+\ldots+ \sum_{i=1}^{s_t} x^{2^{\mathcal{T}-r}l_{t,i}+m_t},
		\end{equation}
		where $t \geq 1,\ 0\leq m_1<m_2<\ldots, m_t \leq 2^{\mathcal{T}-r}-1,\ s_1, s_2,\ldots, s_t \geq 1$, and for each $1\leq \mu \leq t,$ we have $ 0\leq l_{\mu,1} < l_{\mu,2} < \ldots < l_{\mu,s_{\mu}} \leq m-1$.  Moreover, $s_1+s_2+\ldots+s_t= \omega$.
		
		We first show that $t\geq2.$	Suppose  $t=1$. Then 
		\begin{equation*}
			g(x)= \sum_{i=1}^{s_1} x^{2^{\mathcal{T}-r}l_{1,i}+m_1}=x^{m_1} \left( x^{l_{1,1}}+ x^{l_{1,2}}+\ldots+x^{l_{1,s_1}} \right)^{2^{\mathcal{T}-r}}, 
		\end{equation*} where $0\leq m_1 \leq 2^{\mathcal{T}-r}-1,\ s_1 \geq1,\ 0\leq l_{1,1} < l_{1,2} <\ldots< l_{1,s_1}\leq m-1.$
		Since $\deg \big( x^{l_{1,1}}+ x^{l_{1,2}}+\ldots+x^{l_{1,s_1}} \big) <m,$ and $\mathcal{P}$ is a chain ring with the maximal ideal $\langle P(x) \rangle$, thus $x^{l_{1,1}}+ x^{l_{1,2}}+\ldots+x^{l_{1,s_1}}$ must be a unit in $\mathcal{P}$. This implies that $g(x)$ is also a unit in $\mathcal{P}$,  contradicting the fact that $g(x)$ is nilpotent. Therefore, we must have $t \geq 2.$
		
		Using \eqref{eq18},	 we rewrite $c(x)$ as
		\allowdisplaybreaks
		\begin{align*}
			c(x)&= \sum_{\mu=1}^{t} \Big(x^{m_\mu} \left( x^{l_{\mu,1}}+ x^{l_{\mu,2}}+\ldots+x^{l_{\mu,s_\mu}} \right)^{2^{\mathcal{T}-r}} P(x)^{2^\mathcal{T}-2^{\mathcal{T}-r}}\Big)\\
			&=\sum_{\mu=1}^{t} \Big(x^{m_\mu} x^{l_{\mu,1}2^{\mathcal{T}-r}} \left( 1+ x^{l_{\mu,2}-l_{\mu,1}}+x^{l_{\mu,3}-l_{\mu,1}}+  \ldots+ x^{l_{\mu,s_\mu}-l_{\mu,1}}\right)^{2^{\mathcal{T}-r}} P(x)^{2^\mathcal{T}-2^{\mathcal{T}-r}} \Big). 
		\end{align*}
		Therefore, the Hamming weight of $c(x)$ is given by
		\allowdisplaybreaks
		\begin{align*}
			wt(c(x))&= \sum_{\mu=1}^{t} wt \Big( x^{m_\mu} x^{l_{\mu,1}2^{\mathcal{T}-r}} \left( 1+ x^{l_{\mu,2}-l_{\mu,1}}+x^{l_{\mu,3}-l_{\mu,1}}+  \ldots+ x^{l_{\mu,s_\mu}-l_{\mu,1}}\right)^{2^{\mathcal{T}-r}} P(x)^{2^\mathcal{T}-2^{\mathcal{T}-r}} \Big)\\
			&= \sum_{\mu=1}^{t} wt \Big( \left( 1+ x^{l_{\mu,2}-l_{\mu,1}}+x^{l_{\mu,3}-l_{\mu,1}}+  \ldots+ x^{l_{\mu,s_\mu}-l_{\mu,1}}\right)^{2^{\mathcal{T}-r}} P(x)^{2^\mathcal{T}-2^{\mathcal{T}-r}} \Big).
		\end{align*}
		By Theorem \ref{theorem5}, each polynomial $\left( 1+ x^{l_{\mu,2}-l_{\mu,1}}+x^{l_{\mu,3}-l_{\mu,1}}+  \ldots+ x^{l_{\mu,s_\mu}-l_{\mu,1}}\right)^{2^{\mathcal{T}-r}} P(x)^{2^\mathcal{T}-2^{\mathcal{T}-r}}$ belongs to $S_{\mathcal{T},r}^{'m}$. Therefore, for every $\mu$, 
		\begin{equation*}
			wt \Big( \left( 1+ x^{l_{\mu,2}-l_{\mu,1}}+x^{l_{\mu,3}-l_{\mu,1}}+  \ldots+ x^{l_{\mu,s_\mu}-l_{\mu,1}}\right)^{2^{\mathcal{T}-r}} P(x)^{2^\mathcal{T}-2^{\mathcal{T}-r}} \Big) \geq d_{2^\mathcal{T}-2^{\mathcal{T}-r} }.
		\end{equation*}
		Since $t \geq 2,$ we obtain $ wt(c(x)) \geq 2 d_{2^\mathcal{T}-2^{\mathcal{T}-r} }.$ This proves the lower bound. The upper bound follows clearly, since $C_{2^\mathcal{T}-2^{\mathcal{T}-r-1}} \subseteq C_{2^\mathcal{T}-2^{\mathcal{T}-r}+1}$.  This completes the proof.
	\end{proof}
	\begin{remark}\label{remark2}
		The bounds established in Theorem~\ref{theorem6} cannot be further improved for a general polynomial $P(x)$. This can be verified through the following example.
		
		Consider $P(x)=x^4+x+1$ and $\mathcal{L}=16$. In this case, $m=4$ and $\mathcal{T}=4$. For $r=2$, Theorem~\ref{theorem6} gives
		$
		2 d_{2^{4}-2^{4-2}} \leq d_{2^{4}-2^{4-2}+1} \leq d_{2^{4}-2^{4-3}}.
		$
		Equivalently, 
		$
		2 d_{12} \leq d_{13} \leq d_{14}.
		$
		Using Theorem~\ref{theorem5}, we obtain
		$
		d_{12}=8$ and $ d_{14}=16,
		$ which yields $d_{13}=16$. Therefore, in this case, both lower and upper bounds  are same, demonstrating that the bounds provided in Theorem~\ref{theorem6} are tight and cannot be further improved for general \( P(x) \). 
	\end{remark} 
	Since $C_{2^\mathcal{T}-2^{\mathcal{T}-r-1}} \subset C_{2^\mathcal{T}-2^{\mathcal{T}-r-1}-1} \subset \ldots \subset C_{2^\mathcal{T}-2^{\mathcal{T}-r}+2} \subset C_{2^\mathcal{T}-2^{\mathcal{T}-r}+1},\  1\leq r \leq \mathcal{T}-2$, thus Theorem~\ref{theorem6} immediately gives the following result.
	\begin{theorem}\label{theorem7}
		For $ 1\leq r \leq \mathcal{T}-2$ and $1 \leq i \leq 2^{\mathcal{T}-r-1}$, the Hamming distance of  $C_{2^\mathcal{T}-2^{\mathcal{T}-r}+i}$ satisfies
		\begin{equation*}
			2  d_{2^\mathcal{T}-2^{\mathcal{T}-r} }\leq	d_{2^\mathcal{T}-2^{\mathcal{T}-r}+i}\leq d_{2^\mathcal{T}-2^{\mathcal{T}-r-1} }.
		\end{equation*}
	\end{theorem}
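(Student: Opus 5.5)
The plan is to deduce Theorem~\ref{theorem7} from Theorem~\ref{theorem6} together with the nesting of the ideals $C_j$ from Proposition~\ref{proposition1}. Fix $1\leq r\leq \mathcal{T}-2$ and $1\leq i\leq 2^{\mathcal{T}-r-1}$. Set $a=2^\mathcal{T}-2^{\mathcal{T}-r}+1$ and $b=2^\mathcal{T}-2^{\mathcal{T}-r-1}$. Since $2^{\mathcal{T}-r}-2^{\mathcal{T}-r-1}=2^{\mathcal{T}-r-1}$, the index $j=2^\mathcal{T}-2^{\mathcal{T}-r}+i$ satisfies $a\leq j\leq b$. Also $b<\mathcal{L}=2^\mathcal{T}$, because $r\leq \mathcal{T}-2$ gives $2^{\mathcal{T}-r-1}\geq 2$. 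So all three codes $C_a$, $C_j$, $C_b$ are nonzero ideals and their minimum distances are defined.

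The first step records the monotonicity. From the chain $I_{\mathcal{L}}\subset\cdots\subset I_0$ we have $C_b\subseteq C_j\subseteq C_a$. For nonzero linear codes $D\subseteq D'$, every nonzero codeword of $D$ lies in $D'$, so $d(D')\leq d(D)$. This gives
\begin{equation*}
d_a\leq d_j\leq d_b .
\end{equation*}

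The second step applies Theorem~\ref{theorem6} with the same $r$. It gives $2d_{2^\mathcal{T}-2^{\mathcal{T}-r}}\leq d_a$. Chaining this with the first step yields
\begin{equation*}
2d_{2^\mathcal{T}-2^{\mathcal{T}-r}}\leq d_a\leq d_j\leq d_{2^\mathcal{T}-2^{\mathcal{T}-r-1}},
\end{equation*}
which is the claim.

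There is no real obstacle. The only points needing care are the index bookkeeping and checking that $C_b$ is not the zero code, which is exactly where the hypothesis $r\leq\mathcal{T}-2$ is used.
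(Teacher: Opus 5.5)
Your proposal is correct and is exactly the paper's argument: Theorem~\ref{theorem6} bounds $d_{2^\mathcal{T}-2^{\mathcal{T}-r}+1}$ from below, and the chain $C_{2^\mathcal{T}-2^{\mathcal{T}-r-1}}\subseteq C_{2^\mathcal{T}-2^{\mathcal{T}-r}+i}\subseteq C_{2^\mathcal{T}-2^{\mathcal{T}-r}+1}$ carries both bounds to every intermediate index. One small correction: $C_{2^\mathcal{T}-2^{\mathcal{T}-r-1}}$ is already nonzero for $r\leq\mathcal{T}-1$, so the hypothesis $r\leq\mathcal{T}-2$ is really inherited from Theorem~\ref{theorem6} rather than used at that step.
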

	Using Theorems \ref{theorem2}-\ref{theorem7}, we can determine the Hamming distance of all the binary polycyclic codes associated with the polynomial $P(x)^\mathcal{L}$ for  $\mathcal{L}=2^\mathcal{T},\ \mathcal{T}\geq 1.$ In the following example, we illustrate the computation of the Hamming distance for  the specific case when $P(x)= x^4+x+1$ and $\mathcal{L}=16$.
	\begin{example}\label{ex2}
		The Hamming distance of binary polycyclic codes associated with $\left(x^4+x+1\right)^{16}.$
		
		We have $P(x)= x^4+x+1$ and  $\mathcal{L}=16$, thus $m=4$ and $ \mathcal{T}=4$. The order of $P(x)$ is $E= 15$, which satisfies $E< m \mathcal{L}= 64.$ The smallest positive integer $\mathcal{J}$ such that $ 15 \cdot  2^{4-\mathcal{J}} < 64$ is $\mathcal{J}=2$. Therefore, by Corollary~\ref{corollary1}, we have
		\begin{equation}\label{eq19}
			d_j= \begin{cases}
				2 & \text{for } 1 \leq j \leq 4,\\
				3 &\text{for } 5 \leq j \leq 8.
			\end{cases}
		\end{equation}
		By Theorem~\ref{theorem5}, for $2\leq r \leq 4$, the Hamming distance of the code $C_{16-2^{4-r}}$ is
		\begin{equation*}
			d_{16-2^{4-r}}= \min \left\{wt(f(x)) \mid f(x)\in S_{4,r}^{'4} \right\},
		\end{equation*}
		where $ 	S_{4,r}^{'4}= \Bigl\{  \big(P(x)^{2^r-1}\big)^{2^{4-r}}, \big((1+x) P(x)^{2^r-1}\big)^{2^{4-r}}$, $\big((1+x^2) P(x)^{2^r-1}\big)^{2^{4-r}}$, $\big((1+x^3) P(x)^{2^r-1}\big)^{2^{4-r}}, \big((1+x+x^2) P(x)^{2^r-1}\big)^{2^{4-r}}, \big((1+x+x^3) P(x)^{2^r-1}\big)^{2^{4-r}}, \big((1+x^2+x^3) P(x)^{2^r-1}\big)^{2^{4-r}}, \big((1+x+x^2+x^3) P(x)^{2^r-1}\big)^{2^{4-r}} \Bigr\}. $
		Hence, to compute $	d_{16-2^{4-r}}$, we need to determine the weights of the following polynomials.
		\begin{equation*}
			P(x)^{2^r-1},\  \big(1+x\big) P(x)^{2^r-1},\ \big(1+x^2\big) P(x)^{2^r-1},\ \big(1+x^3\big) P(x)^{2^r-1},\ \big(1+x+x^2\big) P(x)^{2^r-1}, 
		\end{equation*}
		\begin{equation*}
			\big(1+x+x^3\big) P(x)^{2^r-1}, \	\big(1+x^2+x^3\big) P(x)^{2^r-1},\ \big(1+x+x^2+x^3\big) P(x)^{2^r-1}.
		\end{equation*}
		These explicit polynomials and their weights are listed in Table~\ref{tab3} for
		$2\leq r \leq 4.$
		\begin{table}[htbp]
			\caption{Hamming weights of some polynomials for $P(x)= x^4+x+1$ (Example~\ref{ex2})}
			\label{tab3}
			\centering
			\begin{tabular}{|c|c|c|c|}
				\hline
				Polynomial and its weight & $ r=2$ & $r=3$& $r=4$ \\ \hline
			
				$wt \left( P(x)^{2^r-1}\right)$ & $9$ & $17$ & $33$\\\hline

				$wt \left(\left(1+x\right) P(x)^{2^r-1}\right) $ & $8$ & $18$ & $34$\\\hline

				$wt \left(\left(1+x^2\right) P(x)^{2^r-1}\right)$ & $8$ & $16$ & $34$\\\hline

				$wt\left(\left(1+x^3\right) P(x)^{2^r-1}\right) $ & $8$ & $18$ & $34$\\\hline

				$wt \left(\left(1+x+x^2\right) P(x)^{2^r-1}\right)$ & $9$ & $17$ & $35$\\\hline

				$wt \left(\left(1+x+x^3\right) P(x)^{2^r-1}\right)$ & $9$ & $17$ & $35$\\\hline

				$wt \left( \left(1+x^2+x^3\right) P(x)^{2^r-1}  \right)$ & $9$ & $17$ & $35$\\\hline

				$ wt\left(\left(1+x+x^2+x^3\right) P(x)^{2^r-1}\right)$ & $8$ & $16$ & $36$\\\hline
			\end{tabular}
		\end{table}
		
		From Table~\ref{tab3}, we deduce that
		\begin{equation}\label{eq20}
			d_{12}=8,\quad d_{14}=16,\quad d_{15}=33.
		\end{equation}
		Applying Theorem~\ref{theorem7} for $1\leq r \leq 2$, we have
		\begin{equation*}
			2  d_{16-2^{4-r}} \leq d_{16-2^{4-r}+i} \leq d_{16-2^{4-r-1}},\quad 1\leq i \leq 2^{4-r-1}.
		\end{equation*}
		For $r=1$, 
		$
		2  d_8 \leq d_{8+i} \leq d_{12}, \text{ where } 1 \leq i\leq 4.
		$
		Since $d_8=3$ and $d_{12}=8$, we obtain
		\begin{equation}\label{eq21}
			6 \leq d_9 \leq d_{10} \leq d_{11} \leq 8.
		\end{equation}
		For $r=2$, we have
		$
		2  d_{12} \leq d_{12+i} \leq d_{14}, \text{ where }  1\leq i \leq 2.
		$
		Since $d_{12}=8$ and $d_{14}=16,$ thus
		\begin{equation}\label{eq22}
			d_{13}=16.
		\end{equation}
		Collecting all  results~\eqref{eq19}--\eqref{eq22}, the Hamming distances of all binary polycyclic codes associated with $\left(x^4+x+1\right)^{16}$ are 
		\begin{equation*}
			d_0=1, \quad d_1=d_2=d_3=d_4=2,  \quad d_5=d_6=d_7=d_8=3,
		\end{equation*}
		\begin{equation*}
			6\leq d_9\leq d_{10} \leq d_{11} \leq 8,\quad d_{12}=8, \quad d_{13}=d_{14}=16, \quad d_{15}=33, \quad d_{16}=64.
		\end{equation*}
	\end{example}
	Next, we consider the case  $2^{\mathcal{T}-1} < \mathcal{L} < 2^\mathcal{T}$. Then there exists an integer $R$, with $1\leq R \leq \mathcal{T}-1$, such that 
	$
	2^\mathcal{T} -2^{\mathcal{T}-R}+1 \leq \mathcal{L} \leq  2^\mathcal{T} -2^{\mathcal{T}-R-1}.
	$
	We divide the discussion into two subcases. 
	First, when 
	$ 2^{\mathcal{T}-1}+1 \leq \mathcal{L} \leq 2^\mathcal{T}-2^{\mathcal{T}-2}$, and  second, when $ 2^\mathcal{T} -2^{\mathcal{T}-R}+1 \leq \mathcal{L} \leq  2^\mathcal{T} -2^{\mathcal{T}-R-1}$ for   $2\leq R \leq \mathcal{T}-1$.
	
	\subsection{The Hamming distance of $C_j$ for $2^{\mathcal{T}-1} < j < \mathcal{L}$, when $2^{\mathcal{T}-1}+1 < \mathcal{L} \leq 2^\mathcal{T}-2^{\mathcal{T}-2}$}\label{sec3B}
	If $\mathcal{L}=2^{\mathcal{T}-1}+1$, then there exists no $j$ such that $2^{\mathcal{T}-1} < j < \mathcal{L}$. Hence, assume  $2^{\mathcal{T}-1}+1 < \mathcal{L} \leq 2^\mathcal{T}-2^{\mathcal{T}-2}$. Write $\mathcal{L}= 2^{\mathcal{T}-1}+\mathcal{L}'$, where  $1< \mathcal{L}' \leq 2^{\mathcal{T}-2}$. We consider $C_j$ for $2^{\mathcal{T}-1}< j <2^{\mathcal{T}-1}+\mathcal{L}'$.  The Hamming distance of $ C_{2^{\mathcal{T}-1}}$ is obtained from Theorem~\ref{theorem4}.
	Proceeding  as in Theorem~\ref{theorem6}, we obtain
	$
	d_{2^{\mathcal{T}-1}+1} \geq 2  d_{2^{\mathcal{T}-1}}.
	$ Since  $C_{2^{\mathcal{T}-1}+i} \subseteq C_{2^{\mathcal{T}-1}+1}$
	for $ 1 \leq i < \mathcal{L}'$,  we immediately get the following result.
	\begin{theorem}\label{theorem8}
		For $ 1 \leq i < \mathcal{L}',$ the Hamming distance of $C_{2^{\mathcal{T}-1}+i}$ satisfies
		\begin{equation*}
			d_{2^{\mathcal{T}-1}+i} \geq 2  d_{2^{\mathcal{T}-1}}.
		\end{equation*}	
	\end{theorem}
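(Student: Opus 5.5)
The plan is to prove $d_{2^{\mathcal{T}-1}+1} \geq 2\, d_{2^{\mathcal{T}-1}}$, mimicking the proof of Theorem~\ref{theorem6}. Monotonicity then finishes the argument. Indeed, $C_{2^{\mathcal{T}-1}+i} \subseteq C_{2^{\mathcal{T}-1}+1}$ for $1\leq i<\mathcal{L}'$, so every nonzero codeword of $C_{2^{\mathcal{T}-1}+i}$ has weight at least $d_{2^{\mathcal{T}-1}+1}$.

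Let $c(x)\in C_{2^{\mathcal{T}-1}+1}$ be nonzero, and write $c(x)=P(x)^{2^{\mathcal{T}-1}}g(x)$ with $g(x)=P(x)f(x)$. The degree bound is $\deg g(x) < m(\mathcal{L}-2^{\mathcal{T}-1}) = m\mathcal{L}' \leq m\,2^{\mathcal{T}-2}$, so $g(x)$ is nilpotent in $\mathcal{P}$. Group the terms of $g(x)$ by exponent modulo $2^{\mathcal{T}-1}$:
\begin{equation*}
g(x)=\sum_{\mu=1}^{t} x^{m_\mu}\, h_\mu(x)^{2^{\mathcal{T}-1}},
\end{equation*}
with distinct residues $0\leq m_\mu\leq 2^{\mathcal{T}-1}-1$ and nonzero $h_\mu(x)$. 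The degree bound gives $2^{\mathcal{T}-1}\deg h_\mu < m\,2^{\mathcal{T}-2}$, hence $\deg h_\mu(x) < m/2 < m$.

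The key step is to show $t\geq 2$. If $t=1$, then $g(x)=x^{m_1}h_1(x)^{2^{\mathcal{T}-1}}$. Here $\deg h_1<m$, so $P(x)\nmid h_1(x)$ and $h_1(x)$ is a unit in the chain ring $\mathcal{P}$ (Proposition~\ref{proposition1}). The element $x$ is also a unit, so $g(x)$ would be a unit, contradicting nilpotence.

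Next, $c(x)=\sum_{\mu} x^{m_\mu}\big(h_\mu(x)P(x)\big)^{2^{\mathcal{T}-1}}$. Each summand has all exponents congruent to $m_\mu \bmod 2^{\mathcal{T}-1}$, and these residues are distinct, so there is no cancellation between summands. Hence
\begin{equation*}
wt(c(x))=\sum_{\mu=1}^{t} wt\big(h_\mu(x)^{2^{\mathcal{T}-1}}P(x)^{2^{\mathcal{T}-1}}\big).
\end{equation*}
For each $\mu$, factor out the lowest power of $x$ from $h_\mu$. The $\mu$-th summand then has the same weight as $x^{m_\mu}a_\mu(x)^{2^{\mathcal{T}-1}}P(x)^{2^{\mathcal{T}-1}}$, where $a_\mu$ has constant term $1$. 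The degree bound on $h_\mu$ transfers to $a_\mu$: we get $\deg\big(a_\mu^{2^{\mathcal{T}-1}}\big) < m(\mathcal{L}-2^{\mathcal{T}-1})$. So this polynomial, of degree less than $m\mathcal{L}$, is a genuine nonzero codeword of $C_{2^{\mathcal{T}-1}}$. Therefore its weight is at least $d_{2^{\mathcal{T}-1}}$; equivalently, it lies in the set $S_{\mathcal{T},1}^{\lambda_1}$ of Theorem~\ref{theorem4}.

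Since $t\geq 2$, we conclude $wt(c(x))\geq 2\,d_{2^{\mathcal{T}-1}}$. The main point needing care is the degree bookkeeping. We must make sure that each piece $x^{m_\mu}(h_\mu P)^{2^{\mathcal{T}-1}}$ is a nonzero element of $C_{2^{\mathcal{T}-1}}$ of degree less than $m\mathcal{L}$, so that no reduction modulo $P(x)^{\mathcal{L}}$ occurs, and that $\deg h_\mu<m$ holds so the unit argument for $t\geq 2$ applies.
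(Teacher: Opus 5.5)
Your proof is correct and follows the paper's route: the paper obtains $d_{2^{\mathcal{T}-1}+1}\geq 2\,d_{2^{\mathcal{T}-1}}$ by repeating the argument of Theorem~\ref{theorem6} (group $g=Pf$ by exponents modulo $2^{\mathcal{T}-1}$, use the unit argument to force $t\geq 2$, and bound each non-cancelling block below by $d_{2^{\mathcal{T}-1}}$), and then concludes via the inclusion $C_{2^{\mathcal{T}-1}+i}\subseteq C_{2^{\mathcal{T}-1}+1}$. Your explicit degree checks ($\deg c<m\mathcal{L}$, $\deg h_\mu<m$, and each block being a genuine codeword of $C_{2^{\mathcal{T}-1}}$) supply details that the paper's ``proceeding as in Theorem~\ref{theorem6}'' leaves implicit.
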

	Using Theorems~\ref{theorem2},~\ref{theorem3},~\ref{theorem4}, and~\ref{theorem8}, we can determine the Hamming distance for all binary polycyclic codes associated with  $P(x)^\mathcal{L}$, when $2^{\mathcal{T}-1}+1 \leq \mathcal{L} \leq 2^\mathcal{T}-2^{\mathcal{T}-2}.$ The following example illustrates the computation.
	
	\begin{example}\label{ex3}
		The Hamming distance of binary polycyclic codes associated with $\left(x^5+x^4+x^2+x+1\right)^{12}.$
		
		Here, $P(x)= x^5+x^4+x^2+x+1,\ \mathcal{L}=12=8+4$, thus $m=5,\ \mathcal{T}=4$, and $\mathcal{L}'=4$. The order of $P(x)$ is $E= 31$, which satisfies $E< m \mathcal{L}= 60.$ The smallest positive integer $\mathcal{J}$ such that $ 31 \cdot  2^{4-\mathcal{J}} < 60$ is $\mathcal{J}=4$. By Theorem~\ref{theorem3},
		\begin{equation}\label{eq23}
			d_1=2,
		\end{equation}
		and 
		\begin{equation}\label{eq24}
			3\leq d_j \leq 5 \quad \text{ for } 2\leq j\leq 8.
		\end{equation}
		The positive integer $\lambda_1$ such that $8 \left(\lambda_1-1\right)  < 20 \leq 8\lambda_1$ is $\lambda_1=3$. By Theorem~\ref{theorem4}, 
		\begin{equation*}
			d_8= \min \left\{wt \left(  a(x)^8 P(x)^8 \right)\ \big|\ a(x)\in \mathbb{F}_2[x],\ \deg (a(x)) \leq 2,\ a(x) \text{ has constant term } 1\right\}.
		\end{equation*}
		Equivalently,
		\begin{align*}
			d_8= \min \Bigl\{ &  wt\left(P(x)^8\right),\ wt\left( \left(1+x^8\right) P(x)^8\right),\ wt\left(\left(1+x^{16}\right)P(x)^8\right),\ wt\left(\left(1+x^8+x^{16}\right)P(x)^8\right) \Bigr\}.
		\end{align*}
		For $P(x)= x^5+x^4+x^2+x+1$, we have 
		\begin{equation*}
			P(x)^8=x^{40} + x^{32} + x^{16} + x^8 + 1,\quad
			\left(1+x^8\right)P(x)^8 = x^{48} + x^{32} + x^{24} + 1,
		\end{equation*}
		\begin{equation*}
			\left(1+x^{16}\right)P(x)^8 = x^{56}+x^{48}+x^{40}+x^{24}+x^8+1,\quad
			\left(1+x^8+x^{16}\right)P(x)^8= x^{56} + x^{16} + 1.
		\end{equation*}
		Thus, $d_8=3.$ Therefore, Eq.~\eqref{eq24} reduces to
		\begin{equation}\label{eq25}
			d_j=3 \quad \text{ for } 2\leq j \leq 8.
		\end{equation}
		Now,	by Theorem~\ref{theorem8}, we have
		$
		d_{8+i} \geq 2 d_8$,  where  $1\leq i <4.
		$
		Consequently, 
		\begin{equation}\label{eq26}
			6\leq 	d_9 \leq d_{10} \leq d_{11}. 
		\end{equation}
		From Eqs.~\eqref{eq23}, \eqref{eq25} and \eqref{eq26}, we conclude that
		\begin{equation*}
			d_0=1,\quad d_1=2,\quad d_2=d_3=\ldots=d_8=3,\quad 6\leq 	d_9 \leq d_{10} \leq d_{11},\quad d_{12}=60.
		\end{equation*}
	\end{example}
	\subsection{The Hamming distance of $C_j$ for $2^{\mathcal{T}-1} < j < \mathcal{L}$, when $2^\mathcal{T}-2^{\mathcal{T}-R}+1 \leq \mathcal{L} \leq 2^\mathcal{T}-2^{\mathcal{T}-R-1}$ for some $2\leq R\leq \mathcal{T}-1$}\label{sec3C}
	Let $2^\mathcal{T}-2^{\mathcal{T}-R}+1 \leq \mathcal{L} \leq 2^\mathcal{T}-2^{\mathcal{T}-R-1}$, where $2\leq R\leq \mathcal{T}-1$. Then we may write $\mathcal{L}=  2^{\mathcal{T}} -2^{\mathcal{T}-R}+\mathcal{L}'$, where $1\leq \mathcal{L}' \leq 2^{\mathcal{T}-R-1}$. First, we compute the Hamming distance of $C_{2^{\mathcal{T}}-2^{\mathcal{T}-r} }$ for $1\leq r \leq R.$ 
	\begin{theorem}\label{theorem9}
		For $1\leq r \leq R$,  the Hamming distance of $ C_{2^{\mathcal{T}}-2^{\mathcal{T}-r}}$ is given by 
		
		\begin{equation*}
			d_{2^{\mathcal{T}}-2^{\mathcal{T}-r}}=
			\begin{cases}
				wt\left( P(x)^{2^{\mathcal{T}}-2^{\mathcal{T}-r}} \right), 
				& \text{if } \lambda_r'=1,\\[6pt]
				\begin{aligned}
					\min \Big\{ & wt \left( a(x)^{2^{\mathcal{T}-r}} P(x)^{2^{\mathcal{T}}-2^{\mathcal{T}-r}} \right) \ \Big|\ 
					a(x)\in \mathbb{F}_2[x],\\
					& \deg(a(x)) \le \lambda_r'-1,\ a(x) \text{ has constant term } 1
					\Big\},
				\end{aligned}
				& \text{if } \lambda_r'>1,
			\end{cases}
		\end{equation*}
		where $\lambda_r'$ is the positive integer satisfying 
		$
		\left(\lambda_r'-1\right)2^{\mathcal{T}-r} < m 2^{\mathcal{T}-r} - m \left(2^{\mathcal{T}-R}-\mathcal{L}'\right) \leq \lambda_r' 2^{\mathcal{T}-r}.
		$
	\end{theorem}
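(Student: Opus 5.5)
The plan is to follow the proof of Theorem~\ref{theorem4} essentially verbatim, replacing $P(x)^{2^{\mathcal{T}-s}}$ by $P(x)^{2^{\mathcal{T}}-2^{\mathcal{T}-r}}$ and the degree bound on the cofactor by the one appropriate to $\mathcal{L}=2^{\mathcal{T}}-2^{\mathcal{T}-R}+\mathcal{L}'$.

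Take a nonzero $c(x)\in C_{2^{\mathcal{T}}-2^{\mathcal{T}-r}}$. Write it as $c(x)=P(x)^{2^{\mathcal{T}}-2^{\mathcal{T}-r}}g(x)$, where
\[
\deg g(x)<m\bigl(\mathcal{L}-2^{\mathcal{T}}+2^{\mathcal{T}-r}\bigr)=m2^{\mathcal{T}-r}-m\bigl(2^{\mathcal{T}-R}-\mathcal{L}'\bigr).
\]
This is exactly the quantity bounded in the definition of $\lambda_r'$. Since $r\le R$, the exponent $2^{\mathcal{T}}-2^{\mathcal{T}-r}$ is divisible by $2^{\mathcal{T}-r}$. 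Hence $P(x)^{2^{\mathcal{T}}-2^{\mathcal{T}-r}}=\bigl(P(x)^{2^r-1}\bigr)^{2^{\mathcal{T}-r}}$ has all its exponents divisible by $2^{\mathcal{T}-r}$. This is the only structural feature of $P(x)^{2^{\mathcal{T}-s}}$ that the proof of Theorem~\ref{theorem4} used, apart from the cw bound in the case $\lambda=1$.

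\textbf{Case $\lambda_r'=1$.} Here $\deg g<2^{\mathcal{T}-r}$. Every gap between exponents of $P(x)^{2^{\mathcal{T}}-2^{\mathcal{T}-r}}$ is a positive multiple of $2^{\mathcal{T}-r}$, so $cw\bigl(P(x)^{2^{\mathcal{T}}-2^{\mathcal{T}-r}}\bigr)\ge 2^{\mathcal{T}-r}>\deg g$. Therefore no cancellation occurs and
\[
wt(c)=wt\bigl(P(x)^{2^{\mathcal{T}}-2^{\mathcal{T}-r}}\bigr)\,wt(g)\ge wt\bigl(P(x)^{2^{\mathcal{T}}-2^{\mathcal{T}-r}}\bigr).
\]
Equality is attained by $g=1$.

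\textbf{Case $\lambda_r'>1$.} Apply the iterated division algorithm as in Eq.~\eqref{eq2}, writing
\[
g(x)=\sum_{l=1}^{\lambda_r'-1}\bigl(1+x^{(\lambda_r'-l)2^{\mathcal{T}-r}}\bigr)g_l(x)+g_{\lambda_r'}(x),
\]
with $\deg g_l<2^{\mathcal{T}-r}$ for every $l$. For $g_1$ this bound follows from the upper inequality in the definition of $\lambda_r'$. Expand each $g_l$ coefficientwise as $g_l=\sum_i g_{l,i}x^i$ and split $c(x)$ into the residue classes of exponents modulo $2^{\mathcal{T}-r}$. The classes have disjoint supports, and using the Frobenius identity each class becomes
\[
c_i(x)=\Bigl(\sum_{l}g_{l,i}\bigl(1+x^{\lambda_r'-l}\bigr)+g_{\lambda_r',i}\Bigr)^{2^{\mathcal{T}-r}}P(x)^{2^{\mathcal{T}}-2^{\mathcal{T}-r}}x^i .
\]
Some $c_j\neq 0$. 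Run the same parity case analysis as in Eqs.~\eqref{eq6}--\eqref{eq10}, according to whether $g_{\lambda_r',j}=1$ and whether the number of nonzero $g_{l,j}$ is odd or even. In each case, after factoring out a monomial, the inner polynomial has constant term $1$ and degree at most $\lambda_r'-1$. Hence $wt(c)\ge wt(c_j)$ is at least the stated minimum.

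The reverse inequality holds because every $a(x)^{2^{\mathcal{T}-r}}P(x)^{2^{\mathcal{T}}-2^{\mathcal{T}-r}}$ with $\deg a\le\lambda_r'-1$ lies in the code. Indeed, its cofactor $a(x)^{2^{\mathcal{T}-r}}$ has degree at most $(\lambda_r'-1)2^{\mathcal{T}-r}$, which is less than $m2^{\mathcal{T}-r}-m(2^{\mathcal{T}-R}-\mathcal{L}')$ by the lower inequality defining $\lambda_r'$. So it is a valid cofactor.

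The only step needing care is this degree bookkeeping: confirming that $\lambda_r'$ is defined with the correct cofactor bound $m(\mathcal{L}-2^{\mathcal{T}}+2^{\mathcal{T}-r})$, and that the top block $g_1$ still has degree below $2^{\mathcal{T}-r}$. The rest is identical to Theorem~\ref{theorem4}.
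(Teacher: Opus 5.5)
Your proposal is correct and follows the paper's proof essentially step for step. It uses the same factorization $c(x)=P(x)^{2^{\mathcal{T}}-2^{\mathcal{T}-r}}g(x)$ with cofactor bound $m2^{\mathcal{T}-r}-m(2^{\mathcal{T}-R}-\mathcal{L}')$, the $cw$ argument when $\lambda_r'=1$, and, when $\lambda_r'>1$, the iterated division and residue-class splitting carried over from Theorem~\ref{theorem4}; your explicit checks that $\deg g_1<2^{\mathcal{T}-r}$ and that each $a(x)^{2^{\mathcal{T}-r}}$ is an admissible cofactor simply fill in details the paper leaves implicit.
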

	\begin{proof}
		Let $c(x)$ be a non-zero element of $ C_{2^{\mathcal{T}}-2^{\mathcal{T}-r}}$. Then 
		\begin{equation*}
			c(x)= P(x)^{2^{\mathcal{T}}-2^{\mathcal{T}-r}} g(x),  
		\end{equation*} where $g(x)\in \mathbb{F}_2[x]$ and $ \deg\left(g(x)\right) < m \mathcal{L}- m \left(2^{\mathcal{T}}-2^{\mathcal{T}-r}\right)= m 2^{\mathcal{T}-r}- m \left(2^{\mathcal{T}-R}-\mathcal{L}'\right).$
		Let $\lambda_r'$ be the positive integer satisfying 
		$
		\left(\lambda_r'-1\right)2^{\mathcal{T}-r} < m 2^{\mathcal{T}-r} - m \left(2^{\mathcal{T}-R}-\mathcal{L}'\right) \leq \lambda_r' 2^{\mathcal{T}-r}.
		$
		Now, we consider the following two cases:\\
		\textbf{Case I:} If $\lambda_r'=1,$ then $c(x)=  P(x)^{2^{\mathcal{T}}-2^{\mathcal{T}-r}} g(x)$, where $\deg(g(x)) < 2^{\mathcal{T}-r}$. Since $ cw \left(P(x)^{2^{\mathcal{T}}-2^{\mathcal{T}-r}}\right) \geq 2^{\mathcal{T}-r}$, thus
		\begin{equation*}
			wt \left(c(x)\right)= wt \left(P(x)^{2^{\mathcal{T}}-2^{\mathcal{T}-r}}\right) \cdot wt(g(x)) \geq wt\left(P(x)^{2^{\mathcal{T}}-2^{\mathcal{T}-r}}\right).
		\end{equation*}
		Moreover, since $P(x)^{2^{\mathcal{T}}-2^{\mathcal{T}-r}}  \in C_{2^{\mathcal{T}}-2^{\mathcal{T}-r}} $, thus $ d_{2^{\mathcal{T}}-2^{\mathcal{T}-r}} = wt\left( P(x)^{2^{\mathcal{T}}-2^{\mathcal{T}-r}}  \right)$. \\
		\textbf{Case II:} If $1< \lambda_r' \leq m,$ then $ m 2^{\mathcal{T}-r} - m \left(2^{\mathcal{T}-R}-\mathcal{L}'\right) =\left(\lambda_r'-1\right)2^{\mathcal{T}-r}+ \alpha_r$, where $1\leq \alpha_r \leq 2^{\mathcal{T}-r}$. Using the division algorithm, we can express $g(x)$ as
		\begin{equation*}
			g(x)= \left(1+x^{\left(\lambda_r'-1\right)2^{\mathcal{T}-r}}\right) g_1(x)+    \left(1+x^{\left(\lambda_r'-2\right)2^{\mathcal{T}-r}}\right) g_2(x)+\ldots+  \left(1+x^{2^{\mathcal{T}-r}}\right) g_{\lambda_r'-1}(x)+g_{\lambda_r'}(x),
		\end{equation*} where each $g_l(x) \in \mathbb{F}_2[x],\  \deg \left(g_1(x)\right) < \alpha_r$ and $\deg \left(g_l(x)\right) < 2^{\mathcal{T}-r}$ for $2\leq l\leq
		\lambda_r'.$ Following the same reasoning as in Theorem~\ref{theorem4}, we obtain
		\allowdisplaybreaks
		\begin{equation*}
			d_{2^{\mathcal{T}}-2^{\mathcal{T}-r}} =	\min \Big\{ wt \left( a(x)^{2^{\mathcal{T}-r}} P(x)^{2^{\mathcal{T}}-2^{\mathcal{T}-r}} \right) \ \Big|\ 
			a(x)\in \mathbb{F}_2[x],\
			\deg(a(x)) \le \lambda_r'-1,\ a(x) \text{ has constant term } 1
			\Big\}.
		\end{equation*}
		This completes the proof.
	\end{proof}
	Proceeding as in Theorem~\ref{theorem6}, we get the following result.
	\begin{theorem}\label{theorem10}
		For $ 1\leq r \leq R-1,$ the Hamming distance of  $C_{2^\mathcal{T}-2^{\mathcal{T}-r}+i}$ satisfies
		\begin{equation*}
			2  d_{2^\mathcal{T}-2^{\mathcal{T}-r} }\leq	d_{2^\mathcal{T}-2^{\mathcal{T}-r}+i}\leq d_{2^\mathcal{T}-2^{\mathcal{T}-r-1} }, \quad  1 \leq i \leq 2^{\mathcal{T}-r-1},
		\end{equation*}
		and 
		\begin{equation*}
			d_{2^\mathcal{T}-2^{\mathcal{T}-R} +i} \geq 2  d_{2^\mathcal{T}-2^{\mathcal{T}-R}}, \quad  1\leq i < \mathcal{L}'.
		\end{equation*}
	\end{theorem}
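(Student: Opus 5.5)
The plan is to reproduce the argument of Theorem~\ref{theorem6} with the degree bounds adjusted to $\mathcal{L}=2^\mathcal{T}-2^{\mathcal{T}-R}+\mathcal{L}'$. For the upper bounds there is nothing to prove. For $1\le r\le R-1$ and $1\le i\le 2^{\mathcal{T}-r-1}$ we have $2^\mathcal{T}-2^{\mathcal{T}-r}+i\le 2^\mathcal{T}-2^{\mathcal{T}-r-1}$. Since $r+1\le R$, the index $2^\mathcal{T}-2^{\mathcal{T}-r-1}$ is at most $2^\mathcal{T}-2^{\mathcal{T}-R}<\mathcal{L}$. So $C_{2^\mathcal{T}-2^{\mathcal{T}-r-1}}\subseteq C_{2^\mathcal{T}-2^{\mathcal{T}-r}+i}$, and the chain of ideals gives $d_{2^\mathcal{T}-2^{\mathcal{T}-r}+i}\le d_{2^\mathcal{T}-2^{\mathcal{T}-r-1}}$. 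Likewise, all the lower bounds follow by monotonicity once we prove $d_{2^\mathcal{T}-2^{\mathcal{T}-r}+1}\ge 2d_{2^\mathcal{T}-2^{\mathcal{T}-r}}$ for every $1\le r\le R$; for $r=R$ this requires $\mathcal{L}'\ge 2$, and otherwise the range $1\le i<\mathcal{L}'$ is empty.

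Fix $r$ and let $c(x)\neq 0$ lie in $C_{2^\mathcal{T}-2^{\mathcal{T}-r}+1}$. Write $c(x)=P(x)^{2^\mathcal{T}-2^{\mathcal{T}-r}}g(x)$ with $g(x)=P(x)f(x)$. Reducing modulo $P(x)^{\mathcal{L}}$, we may take $\deg g(x)<m\mathcal{L}-m(2^\mathcal{T}-2^{\mathcal{T}-r})=m2^{\mathcal{T}-r}-m(2^{\mathcal{T}-R}-\mathcal{L}')\le m2^{\mathcal{T}-r}$. Here $g(x)$ is a nonzero nilpotent element of $\mathcal{P}$. Group the monomials of $g(x)$ by exponent modulo $2^{\mathcal{T}-r}$, exactly as in Eq.~\eqref{eq18}. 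This gives $g(x)=\sum_{\mu=1}^t x^{m_\mu}h_\mu(x)^{2^{\mathcal{T}-r}}$ with distinct residues $m_\mu$ and each $\deg h_\mu<m$ (because $\deg g<m2^{\mathcal{T}-r}$). If $t=1$, then $h_1$ is a nonzero polynomial of degree less than $m$, so it is coprime to $P(x)$ and hence a unit of the chain ring $\mathcal{P}$. Then $g$ would be a unit, which is a contradiction. Therefore $t\ge 2$.

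Next, multiplying by $P(x)^{2^\mathcal{T}-2^{\mathcal{T}-r}}=(P(x)^{2^r-1})^{2^{\mathcal{T}-r}}$ preserves the exponent classes modulo $2^{\mathcal{T}-r}$. Also, $\deg c<m\mathcal{L}$, so no reduction modulo $P(x)^\mathcal{L}$ occurs. Hence $wt(c)=\sum_\mu wt\big(h_\mu^{2^{\mathcal{T}-r}}P(x)^{2^\mathcal{T}-2^{\mathcal{T}-r}}\big)$. Each summand is, up to a power of $x$, the weight of a nonzero codeword of $C_{2^\mathcal{T}-2^{\mathcal{T}-r}}$ of degree less than $m\mathcal{L}$. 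This is because the summands are separated pieces of $c(x)$, each of degree below $m\mathcal{L}$, and each is a multiple of $P(x)^{2^\mathcal{T}-2^{\mathcal{T}-r}}$. So every summand is at least $d_{2^\mathcal{T}-2^{\mathcal{T}-r}}$, the value computed in Theorem~\ref{theorem9}, and $t\ge2$ gives the bound $2d_{2^\mathcal{T}-2^{\mathcal{T}-r}}$.

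The main obstacle is the degree bookkeeping. Unlike Theorem~\ref{theorem6}, the ambient length is shortened, so I have to check two things. First, each shifted piece $x^{m_\mu}h_\mu^{2^{\mathcal{T}-r}}P^{2^\mathcal{T}-2^{\mathcal{T}-r}}$ must be a genuine codeword, that is, have degree below $m\mathcal{L}$. This holds since it is a sub-sum of the monomials of $c(x)$ after the residue-class split, and the split is exact because $c(x)$ itself has degree less than $m\mathcal{L}$ with no wraparound. Second, the $t=1$ unit argument needs only $\deg h_1<m$, which follows from $\deg g<m2^{\mathcal{T}-r}$.
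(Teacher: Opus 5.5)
Your proposal is correct and follows the paper's route. The paper proves this statement by repeating the argument of Theorem~\ref{theorem6}: it splits $g(x)=P(x)f(x)$ into residue classes modulo $2^{\mathcal{T}-r}$, rules out $t=1$ by the unit argument, adds the weights of the separated pieces, and gets the upper bound from the inclusion of ideals. Your degree bookkeeping for the shortened length $m\mathcal{L}$ is right. Bounding each piece directly as a nonzero codeword of $C_{2^\mathcal{T}-2^{\mathcal{T}-r}}$ of degree below $m\mathcal{L}$ is a slightly cleaner substitute for appealing to the minimizing set of Theorem~\ref{theorem5} (here Theorem~\ref{theorem9}).
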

	Using Theorems~\ref{theorem2},~\ref{theorem3},~\ref{theorem4},~\ref{theorem9}, and~\ref{theorem10}, we can determine the Hamming distance of all binary polycyclic codes associated with the polynomial $P(x)^\mathcal{L}$ when $2^\mathcal{T}-2^{\mathcal{T}-R}+1 \leq \mathcal{L} \leq 2^\mathcal{T}-2^{\mathcal{T}-R-1}$ for some $2\leq R\leq \mathcal{T}-1$. To illustrate, we compute the Hamming distance of binary polycyclic codes associated with $\left(x^6+x^5+x^3+x^2+1\right)^{25}$.
	\begin{example}\label{ex4}
		The Hamming distance of binary polycyclic codes associated with $\left(x^6+x^5+x^3+x^2+1\right)^{25}$.
		
		Here, $P(x)=x^6+x^5+x^3+x^2+1$ and $ \mathcal{L}=25$. Since $2^4< \mathcal{L}< 2^5$, we have $m=6$ and $\mathcal{T}=5.$ Moreover, $ \mathcal{L}=2^5-2^3+1,$ thus $R=2$ and $\mathcal{L}'=1.$ The order of $P(x)$ is $E= 63 < m \mathcal{L}= 150.$ The smallest  positive integer $\mathcal{J}$ such that $63 \cdot 2^{5-\mathcal{J}} < 150$ is $\mathcal{J}=4.$ Hence, by Theorem~\ref{theorem3},
		\begin{equation}\label{eq27}
			d_j=2 \quad \text{ for } 1\leq j \leq 2,
		\end{equation}
		\begin{equation}\label{eq28}
			3\leq d_j \leq 5\quad \text{ for } 3 \leq j \leq 16.
		\end{equation}
		For $1\leq r\leq 2,$ let $\lambda_r'$ be the positive integer such that
		\begin{equation*}
			\left(\lambda_r'-1\right)2^{5-r} < 6\cdot 2^{5-r}-6\left(2^3-1\right)= 6\cdot 2^{5-r}- 42 \leq \lambda_r' 2^{5-r}.
		\end{equation*}
		This gives
		$
		\lambda_1'=4\text{ and } \lambda_2'=1.
		$
		By Theorem~\ref{theorem9}, 
		\begin{equation*}
			d_{2^5-2^4}= d_{16}= \min  \left\{wt \left(a(x)^{16} P(x)^{16}\right)\ \big|\ a(x)\in \mathbb{F}_2[x],\ \deg(a(x)) \leq 3,\ a(x) \text{ has constant term } 1 \right\}.
		\end{equation*}
		Explicitly,
		\begin{align*}
			d_{16} =\min \Bigl\{ & wt\left(P(x)^{16}\right),\  wt \left( \left(1+x^{16}\right)P(x)^{16} \right),\  wt \left( \left(1+x^{32}\right)P(x)^{16} \right),\  wt \left( \left(1+x^{48}\right)P(x)^{16} \right),\\&  wt \left( \left(1+x^{16}+x^{32}\right)P(x)^{16} \right),\  wt \left( \left(1+x^{16}+x^{48}\right)P(x)^{16} \right), \ wt \left( \left(1+x^{32}+x^{48}\right)P(x)^{16} \right),\\&  wt \left( \left(1+x^{16}+x^{32}+x^{48}\right)P(x)^{16} \right) \Bigr\}.
		\end{align*}
		These polynomials and their weights are listed in Table~\ref{tab4}.
		
		\begin{table}[htbp]
			\caption{Binary Expansion and Hamming weights of some polynomials  for $P(x)=x^6+x^5+x^3+x^2+1$ (Example~\ref{ex4})}
			\label{tab4}
			\centering
			\begin{tabular}{|c|c|c|}
				\hline
				Polynomial &  Binary Expansion & Weight of the Polynomial \\\hline
				$P(x)^{16}$ &  $x^{96} + x^{80 }+ x^{48} + x^{32} + 1$ & $5$\\\hline
				$ \left(1+x^{16}\right) P(x)^{16}$ & $x^{112 }+ x^{80} + x^{64} + x^{32} + x^{16} + 1$ & $6$ \\\hline
				$ \left(1+x^{32}\right) P(x)^{16}$ & $x^{128} + x^{112} + x^{96} + x^{64} + x^{48} + 1$ & $6$ \\\hline
				$ \left(1+x^{48}\right) P(x)^{16}$ & $x^{144 }+ x^{128 }+ x^{32} + 1$ & $4$\\\hline
				$ \left(1+x^{16}+x^{32}\right) P(x)^{16}$ & $x^{128 }+ x^{16} + 1$ & $3$\\\hline
				$ \left(1+x^{16}+x^{48}\right) P(x)^{16}$ & $x^{144 }+ x^{128 }+ x^{112} + x^{96 }+ x^{64} + x^{48} + x^{32} + x^{16} + 1$ & $9$\\\hline
				$\left(1+x^{32}+x^{48}\right) P(x)^{16} $ & $x^{144 }+ x^{112} + x^{80} + x^{64} + 1$ & $5$ \\\hline
				$\left(1+x^{16}+x^{32}+x^{48}\right) P(x)^{16} $ & $x^{144} + x^{96} + x^{80} + x^{48} + x^{16} + 1
				$ & $6$\\\hline		
			\end{tabular}
		\end{table}
		From Table~\ref{tab4}, we conclude that $d_{16}=3.$ Thus, Eq.~\eqref{eq28} reduces to
		\begin{equation}\label{eq29}
			d_j=3 \quad \text{ for }\ 3\leq j\leq 16.
		\end{equation}
		Next, by Theorem~\ref{theorem9}, we have
		$
		d_{32-8}=d_{24}= wt\left(P(x)^{24}\right).
		$
		Since
		$
		P(x)^{24}=	x^{144} + x^{136 }+ x^{128} + x^{112} + x^{104} + x^{96} + x^{88} + x^{64} + x^{56 }+ x^{48} + x^{40}
		+ x^{32} + x^{24} + x^{16} + 1,
		$
		thus
		\begin{equation}\label{eq30}
			d_{24}=15.
		\end{equation}
		Applying Theorem~\ref{theorem10}, we get
		$
		2  d_{16} \leq d_{16+i} \leq d_{24}$,  where  $1\leq i \leq 8.
		$
		This implies that
		\begin{equation}\label{eq31}
			6 \leq d_{17}\leq d_{18}\leq \ldots \leq d_{23}\leq 15.
		\end{equation}
		Combining Eqs.~\eqref{eq27}, \eqref{eq29}, \eqref{eq30}, and \eqref{eq31}, we obtain
		\begin{equation*}
			d_0=1,\quad d_1=d_2=2,\quad d_3=d_4=\ldots=d_{16}=3,\quad 	6 \leq d_{17}\leq d_{18}\leq \ldots \leq d_{23}\leq 15,\quad 	d_{24}=15,\quad d_{25}=150.
		\end{equation*}
	\end{example}
Therefore, this section develops a general framework for studying the Hamming distances of binary polycyclic codes associated with $P(x)^\mathcal{L}$, yielding exact results in many cases and useful bounds in the remaining ones. We next consider the Euclidean duals of these types of codes and investigate their distance properties.

\section{The Hamming Distance of the dual codes}\label{sec4}
In this section, we investigate the Euclidean dual of binary polycyclic codes associated with $P(x)^\mathcal{L}$, where $P(x)$ is a binary irreducible polynomial of degree $m \geq 2$, and $\mathcal{L} \geq2$ is a positive integer.
By Theorem~\ref{theorem1}, the corresponding binary polycyclic codes are precisely the ideals
\begin{equation*}
	C_j= \langle P(x)^j \rangle \subseteq \mathcal{P}=\frac{\mathbb{F}_2[x]}{\langle P(x)^{\mathcal{L}} \rangle},\quad 0\leq j \leq \mathcal{L}.
\end{equation*} Since $C_0= \mathcal{P}$ and $C_{\mathcal{L}}= \{0\}$, we immediately have
\begin{equation*}
	C_0^\perp= \{0\}, \quad C_{\mathcal{L}}^\perp= \mathcal{P}.
\end{equation*}
We now determine $C_j^\perp$ for $1\leq j \leq \mathcal{L}-1$.

To derive an explicit description of the dual codes, we first introduce the following notations.
Let $e$ be the order of  $P(x)$, and let $\mathcal{T}$ be the unique positive integer satisfying $2^{\mathcal{T}-1} < \mathcal{L} \leq 2^\mathcal{T}$.  Then  $Ord\left(P(x)^\mathcal{L}\right)= e2^\mathcal{T}$(see \cite{lidl1997finite}).
Let $U(x) \in \mathbb{F}_2[x]$ such that 
$
P(x)U(x)= x^{e}+1.
$
Then
\begin{equation*}
	P^*(x) U^*(x)= \left( x^{e}+1\right)^*= x^{e}+1,
\end{equation*} where $^*$ denotes the reciprocal polynomial.
Now, we define
\begin{equation*}
	h(x)= \frac{ x^{e2^\mathcal{T}}+1 }{P(x)^j}=  \frac{\left(x^{e}+1\right)^{2^\mathcal{T}}}{P(x)^j}= \frac{P(x)^{2^\mathcal{T}}  U(x)^{2^\mathcal{T}}}{P(x)^j} = P(x)^{2^\mathcal{T}-j}  U(x)^{2^\mathcal{T}}.
\end{equation*}
By taking reciprocals, we obtain
\begin{equation*}
	h^*(x)= P^*(x)^{2^\mathcal{T}-j} U^*(x)^{2^\mathcal{T}}= \left( P^*(x) U^*(x) \right)^{2^\mathcal{T}-j} U^*(x)^j= \left(x^{e}+1\right)^{2^\mathcal{T}-j} U^*(x)^j.
\end{equation*}
By Theorem~5.3 of \cite{bansal2026binary}, the dual code of $C_j$ is given as follows:
\begin{theorem}\label{theorem11}
	For $1\leq j \leq \mathcal{L}-1$, the dual  of the code $C_j$ is given by
	\begin{equation*}
		C_j^\perp= \left\{  \left(x^{e}+1\right)^{2^\mathcal{T}-j} U^*(x)^j a(x) \bmod \left(x^{m\mathcal{L}}\right) \mid a(x) \in \mathbb{F}_2[x],\ \deg(a(x)) < mj \right\}.
	\end{equation*}
\end{theorem}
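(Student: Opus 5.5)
We prove this directly from the standard description of the dual of a polycyclic code, rather than only citing Theorem~5.3 of \cite{bansal2026binary}. Put $f(x)=P(x)^{\mathcal{L}}$ and $n=m\mathcal{L}$. The key object is $\mathcal{C}=\{h^*(x)a(x)\bmod x^n : \deg a(x)<mj\}$. Define $g(x)=P(x)^j$ and, by abuse, let $h(x)$ be $f(x)/g(x)$ times the cofactor making $g(x)h(x)=x^{e2^{\mathcal{T}}}+1$, as set up before the statement. The plan has three steps:
\begin{enumerate}
\item Show every element of $\mathcal{C}$ is orthogonal to every codeword of $C_j$.
\item Show $\dim \mathcal{C}= mj = n-\dim C_j$; the count $\dim C_j=m(\mathcal{L}-j)$ comes from Theorem~\ref{theorem1}.
\item Conclude $\mathcal{C}=C_j^\perp$ from the dimension count.
\end{enumerate}

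\textbf{Orthogonality.} Write the inner product as a coefficient of a product. For vectors $\mathbf{u},\mathbf{v}$ of length $n$ we have $\mathbf{u}\cdot\mathbf{v}=[x^{n-1}]\,u(x)\,x^{n-1}v(1/x)$, i.e.\ the coefficient of $x^{n-1}$ in $u(x)v^{*}(x)$ when $v$ is reversed with respect to length $n$.

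A codeword of $C_j$ has the form $c(x)=P(x)^j b(x)$ with $\deg b<m(\mathcal{L}-j)$. This is an honest polynomial product of degree $<n$, so no reduction modulo $f$ is needed; this is why Theorem~\ref{theorem1} writes codewords that way.

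Take $w(x)=h^*(x)a(x)\bmod x^n$. Truncation modulo $x^n$ does not affect the coefficients of $x^0,\dots,x^{n-1}$, so for the inner product we may replace $w$ by $h^*(x)a(x)$ itself. Reversing, the quantities $\mathbf{c}\cdot\mathbf{w}$ are coefficients of $c(x)\,x^{N}h(1/x)a(1/x)$ for a suitable shift $N$. Up to powers of $x$ this equals $P(x)^jb(x)h(x)a^*(x)=(x^{e2^{\mathcal{T}}}+1)b(x)a^*(x)$.

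We must check that the relevant coefficients, indexed $0,\dots,n-1$ shifted by the degree of $a$, all vanish. The product is $b a^*+x^{e2^{\mathcal{T}}}ba^*$, with $\deg(ba^*)<n-1$. The degree bookkeeping with $e2^{\mathcal{T}}\ge n$ then places every target coefficient strictly between the two blocks. To make this clean, I would first prove the general identity $\mathbf{c}\cdot(x^k\text{-shift of }\mathbf{w})=0$ for all shifts $k$, by viewing it as orthogonality of the sequence generated by the recurrence with characteristic polynomial $g(x)$.

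\textbf{Cleaner alternative for step 1.} The vector $\mathbf{w}$ is the length-$n$ prefix of a linear recurring sequence whose generating function is $h^*(x)a(x)/(x^{e2^{\mathcal{T}}}+1)$. Such a sequence is annihilated by the reciprocal characteristic polynomial. Its dot product with any shift of $P(x)^j$ is therefore zero. Since $C_j$ is spanned by $x^iP(x)^j$ for $0\le i<m(\mathcal{L}-j)$, this suffices.

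\textbf{Dimension.} The map $a\mapsto h^*a \bmod x^n$ is linear. Its kernel is trivial because $h^*(x)=(x^e+1)^{2^{\mathcal{T}}-j}U^*(x)^j$ has constant term $1$. Hence $h^*$ is invertible modulo $x^n$, and $h^*a\equiv0 \pmod{x^n}$ forces $a\equiv 0\pmod{x^n}$. With $\deg a<mj\le n$, this gives $a=0$, so $\dim\mathcal{C}=mj$. Together with orthogonality, this yields $\mathcal{C}=C_j^\perp$.

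\textbf{Main obstacle.} The obstacle is the index bookkeeping in the orthogonality step. Showing that the coefficients $x^{n-1+i}$ for the relevant range of $i$ fall in the zero gap needs $e2^{\mathcal{T}}\ge m\mathcal{L}$. This holds because $e\ge m$ and $2^{\mathcal{T}}\ge\mathcal{L}$. I expect to handle it through the linear-recurrence viewpoint rather than raw coefficient tracking.
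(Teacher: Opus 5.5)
Your proof is correct, and it takes a different route from the paper. The paper gives no argument of its own here: it notes $Ord\left(P(x)^{\mathcal{L}}\right)=e2^{\mathcal{T}}$, sets $h(x)=(x^{e2^{\mathcal{T}}}+1)/P(x)^j$, and invokes Theorem~5.3 of \cite{bansal2026binary}. You instead prove the statement directly in two steps. First, containment in $C_j^\perp$ by orthogonality. Second, equality by a dimension count: $h^*(0)=1$ together with $mj<m\mathcal{L}$ makes $a\mapsto h^*a \bmod x^{m\mathcal{L}}$ injective, and $\dim C_j^\perp=mj$ follows from Theorem~\ref{theorem1}.

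There is one small slip in the orthogonality step. After reversing you need $h^*(1/x)=x^{-\deg h}h(x)$, not $h(1/x)$. Written out with $\tilde a(x)=x^{mj-1}a(1/x)$, the inner product of $P(x)^jb(x)$ with $h^*(x)a(x)\bmod x^{m\mathcal{L}}$ is a single coefficient: the coefficient of $x^{e2^{\mathcal{T}}-1}$ in $(x^{e2^{\mathcal{T}}}+1)b(x)\tilde a(x)$. It vanishes because $\deg(b\tilde a)\le m\mathcal{L}-2<e2^{\mathcal{T}}-1$, while $x^{e2^{\mathcal{T}}}b\tilde a$ starts at degree $e2^{\mathcal{T}}$.

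Your linear-recurrence variant is the most transparent version. Since $P^*(x)^jh^*(x)=x^{e2^{\mathcal{T}}}+1\equiv 1 \pmod{x^{m\mathcal{L}}}$, the set in the statement is exactly $\{a(x)P^*(x)^{-j}\bmod x^{m\mathcal{L}} : \deg a<mj\}$. These are the length-$m\mathcal{L}$ prefixes of sequences with characteristic polynomial $P(x)^j$, which matches the sequential-code picture of \cite{lopez2009dual}.

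The citation buys the paper brevity and uniformity with a general result. Your route is self-contained and does not need the order of $P(x)^{\mathcal{L}}$. It isolates the only numerical fact used, $e2^{\mathcal{T}}\ge m\mathcal{L}$ (from $e\ge m$ and $2^{\mathcal{T}}\ge\mathcal{L}$). It also shows that any $N\ge m\mathcal{L}$ with $P(x)^j\mid x^N+1$ could replace $e2^{\mathcal{T}}$.
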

We first determine the Hamming distance of $C_j^\perp$ for $j= 2^{\mathcal{T}-s}$, where $1\leq s \leq \mathcal{T}$.
\begin{theorem}\label{theorem12}
		For $1\leq s \leq \mathcal{T}$,
	the Hamming distance of the code $C_{2^{\mathcal{T}-s}}^{\perp}$  is given by
	\begin{equation*}
		d_{2^{\mathcal{T}-s}}^\perp= \min  \left\{  wt \left(b(x)\right)\mid b(x) \in B_{\mathcal{L},\mathcal{T}-s}^m	\right\},
	\end{equation*}
	where the set $B_{\mathcal{L},\mathcal{T}-s}^m$ is defined as
	\begin{equation*}
		B_{\mathcal{L},\mathcal{T}-s}^m = \left\{ \left(\ell(x)V(x)\right)^{2^{\mathcal{T}-s}} x^{2^{\mathcal{T}-s}-1}  \bmod \left(x^{m\mathcal{L}}\right) \Big|\ \ell(x)\in \mathbb{F}_2[x],\ \deg(\ell(x))=m-1 \right\},
	\end{equation*} and $V(x)= \left(x^{e}+1\right)^{2^{s}-1} U^*(x)$.
\end{theorem}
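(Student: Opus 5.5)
Set $j=2^{\mathcal{T}-s}$. The plan is to rewrite the generator from Theorem~\ref{theorem11} as a $j$-th power, split every dual codeword by exponents modulo $j$, and then compare each piece with an element of $B_{\mathcal{L},\mathcal{T}-s}^m$ using a truncation argument.

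\textbf{Step 1: rewrite the generator.} Since $2^{\mathcal{T}}-j=j\,(2^{s}-1)$, the generator in Theorem~\ref{theorem11} is
\[
\left(x^{e}+1\right)^{2^{\mathcal{T}}-j}U^*(x)^{j}=\Big(\left(x^{e}+1\right)^{2^{s}-1}U^*(x)\Big)^{j}=V(x)^{j}.
\]
Hence $C_j^\perp=\{V(x)^j a(x)\bmod x^{m\mathcal{L}} \mid \deg a(x)<mj\}$. Every such $a(x)$ splits by exponents modulo $j$ as $a(x)=\sum_{i=0}^{j-1}x^{i}a_i(x^{j})$ with $\deg a_i\le m-1$. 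Because $j$ is a power of $2$ and we are in characteristic $2$, $a_i(x^j)=a_i(x)^j$. This gives
\[
c(x)=\sum_{i=0}^{j-1}x^{i}\big(a_i(x)V(x)\big)^{j}\bmod x^{m\mathcal{L}}.
\]

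\textbf{Step 2: the weight splits over residue classes.} All exponents in $(a_iV)^j$ are multiples of $j$, so the $i$-th summand only has exponents $\equiv i \pmod j$. Reduction modulo $x^{m\mathcal{L}}$ just deletes monomials and does not mix residue classes. Therefore, as in the proof of Theorem~\ref{theorem4}, $wt(c(x))=\sum_i wt\big(x^{i}(a_iV)^j \bmod x^{m\mathcal{L}}\big)$. A nonzero codeword has some $a_i\neq 0$, so it suffices to bound a single component from below.

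\textbf{Step 3: the truncation lemma (main point).} For any polynomial $f$ and any $t\ge 0$,
\[
wt\big(x^{t}f \bmod x^{N}\big)=wt\big(f\bmod x^{N-t}\big)\le wt\big(f\bmod x^{N}\big),
\]
because the left side is the weight of a prefix of $f$. Take a component with $a_i\ne 0$ and $d=\deg a_i\le m-1$. Put $\ell(x)=x^{m-1-d}a_i(x)$, which has degree exactly $m-1$. Then
\[
x^{j-1}\big(\ell V\big)^j=x^{(j-1-i)+(m-1-d)j}\cdot x^{i}(a_iV)^j,
\]
which is an upward shift of the $i$-th component by a nonnegative amount. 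By the lemma, the weight of the corresponding element of $B_{\mathcal{L},\mathcal{T}-s}^m$ is at most the weight of that component, which is at most $wt(c(x))$. This gives $d^\perp_{j}\ge\min\{wt(b)\mid b\in B_{\mathcal{L},\mathcal{T}-s}^m\}$.

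\textbf{Step 4: the reverse inequality.} For $\ell$ of degree $m-1$, the element $x^{j-1}(\ell V)^j$ equals $V^j a$ with $a=x^{j-1}\ell(x)^j$. Here $\deg a=(m-1)j+j-1=mj-1<mj$, so by Theorem~\ref{theorem11} every element of $B_{\mathcal{L},\mathcal{T}-s}^m$ lies in $C_j^\perp$.

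Each such element is also nonzero. The map $a\mapsto V^ja \bmod x^{m\mathcal{L}}$ on $\{\deg a<mj\}$ is injective, since its image has dimension $mj=\dim C_j^\perp$. This yields the matching upper bound and hence equality.

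The only delicate point is Step 3: handling components where $\deg a_i<m-1$ or $i<j-1$. I expect the prefix-monotonicity of truncation to settle both at once, which is why the set $B_{\mathcal{L},\mathcal{T}-s}^m$ can be restricted to the single shift $x^{2^{\mathcal{T}-s}-1}$ and to $\ell$ of degree exactly $m-1$.
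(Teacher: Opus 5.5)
Your argument is correct and follows the same skeleton as the paper's proof. You split by exponents modulo $2^{\mathcal T-s}$, use the fact that shifting a component by a power of $x$ cannot increase its weight after truncation modulo $x^{m\mathcal L}$ (the paper's Step II), and invoke $B_{\mathcal L,\mathcal T-s}^m\subseteq C_{2^{\mathcal T-s}}^\perp$ for the reverse bound. Your differences are improvements rather than a new route: the Frobenius splitting $a(x)=\sum_i x^i a_i(x)^{2^{\mathcal T-s}}$ replaces the paper's iterated division by $(1+x^{k})^{2^{\mathcal T-s}}$ and its parity case analysis, and your injectivity check that every element of $B_{\mathcal L,\mathcal T-s}^m$ is nonzero supplies a detail the paper's Step V leaves implicit.
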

\begin{proof}
		Let $c(x)$ be a non-zero codeword in $C_{2^{\mathcal{T}-s}}^{\perp}$.  By Theorem~\ref{theorem11}, it can be written as
	\begin{equation*}
		c(x)= \left(x^{e}+1\right)^{2^{\mathcal{T}}-2^{\mathcal{T}-s}} U^*(x)^{2^{\mathcal{T}-s}} a(x)  \bmod \left(x^{m\mathcal{L}}\right),\quad \text{where } a(x)\in \mathbb{F}_2[x],\ \deg (a(x)) < m2^{\mathcal{T}-s}.
	\end{equation*}
	Let $V(x)= \left(x^{e}+1\right)^{2^{s}-1} U^*(x). $ Then 
	\begin{equation*}
		c(x)= V(x)^{2^{\mathcal{T}-s}}  a(x)  \bmod \left(x^{m\mathcal{L}}\right).
	\end{equation*}
	Applying the division algorithm as in Theorem~\ref{theorem4}, we can express $a(x)$ as
	\begin{equation*}
		a(x)= \left(1+x^{m-1}\right)^{2^{\mathcal{T}-s}} a_1(x)+  \left(1+x^{m-2}\right)^{2^{\mathcal{T}-s}} a_2(x)+ \ldots+ (1+x)^{2^{\mathcal{T}-s}} a_{m-1}(x)+a_m(x),
	\end{equation*} where  each  $a_i(x)\in \mathbb{F}_2[x]$ and $\deg \left(a_i(x)\right)< 2^{\mathcal{T}-s}$. Thus,  $a_i(x)$ can be written as 
	\begin{equation*}
		a_i(x)= \sum_{j=0}^{2^{\mathcal{T}-s}-1} a_{i,j}x^{j}, \quad  a_{i,j}\in \mathbb{F}_2.
	\end{equation*}
	Substituting the values of $a(x)$ and $a_i(x)$ into $c(x)$, we obtain
	\allowdisplaybreaks
		\allowdisplaybreaks
	\begin{align}\label{eq32}
		\nonumber	c(x)= & \bigg(\left( \left(1+x^{m-1}\right) V(x) \right)^{2^{\mathcal{T}-s}} \sum_{j=0}^{2^{\mathcal{T}-s}-1} a_{1,j}x^{j} +  \left( \left(1+x^{m-2}\right) V(x) \right)^{2^{\mathcal{T}-s}}\sum_{j=0}^{2^{\mathcal{T}-s}-1} a_{2,j}x^{j}+\ldots+\\& \left( \left(1+x\right) V(x)\right)^{2^{\mathcal{T}-s}}\sum_{j=0}^{2^{\mathcal{T}-s}-1} a_{m-1,j}x^{j}+  V(x)^{2^{\mathcal{T}-s}}\sum_{j=0}^{2^{\mathcal{T}-s}-1} a_{m,j}x^{j}\bigg) \bmod \left(x^{m\mathcal{L}}\right).
	\end{align}
	For each $j$, where $0\leq j \leq 2^{\mathcal{T}-s}-1$, let
		\begin{equation}\label{eq33}
		c_j(x)=\Big(\left( \left(1+x^{m-1}\right) V(x) \right)^{2^{\mathcal{T}-s}} a_{1,j}+\left( \left(1+x^{m-2}\right) V(x) \right)^{2^{\mathcal{T}-s}} a_{2,j}+\ldots + \left( \left(1+x\right) V(x) \right)^{2^{\mathcal{T}-s}} a_{m-1,j}+  V(x)^{2^{\mathcal{T}-s}} a_{m,j}\Big) x^j.
	\end{equation}
 Then
\begin{align}\label{eq34}
	c(x)= & \Bigg(\sum_{j=0}^{2^{\mathcal{T}-s}-1} c_j(x)\Bigg)  \bmod \left(x^{m\mathcal{L}}\right)=  \sum_{j=0}^{2^{\mathcal{T}-s}-1} \left(c_j(x) \bmod \left(x^{m\mathcal{L}}\right)\right).
\end{align}
\textbf{Step I: Decomposition of the Hamming weight of $c(x)$.
}\\
It follows from Eq.~\eqref{eq33} that,  for $0 \leq j_1, j_2 \leq 2^{\mathcal{T}-s}-1$ with $ j_1 \ne j_2$, no term of $c_{j_1}(x)$ coincides with any term of $c_{j_2}(x).$ We now show that this remains true after reduction modulo $x^{m\mathcal{L}}$. We may write
\begin{equation*}
	c_j(x)= \sum_{l=0}^{L_j} c_{j,l} x^{2^{\mathcal{T}-s}l+j},\quad \text{where } L_j \geq 0,\ c_{j,l} \in \mathbb{F}_2.
\end{equation*}
If $2^{\mathcal{T}-s} L_j +j < m \mathcal{L}$, then  no term is removed under reduction modulo $x^{m\mathcal{L}}$, and hence 
$
c_j(x) \bmod \left(x^{m\mathcal{L}}\right)= 	c_j(x).
$ Otherwise, the terms of degree at least	${m\mathcal{L}}$ vanish.  Let $N_j$ be the largest integer satisfying   $2^{\mathcal{T}-s} N_j +j < m \mathcal{L}$, then
$
c_j(x) \bmod \left(x^{m\mathcal{L}}\right)= 	\sum_{l=0}^{N_j} c_{j,l} x^{2^{\mathcal{T}-s}l+j}.
$
Therefore,  for $ j_1 \ne j_2$, no term of  $c_{j_1}(x) \bmod \left(x^{m\mathcal{L}}\right)$ coincides with any term of $c_{j_2}(x) \bmod \left(x^{m\mathcal{L}}\right).$ Hence, using Eq.~\eqref{eq34} we obtain
\begin{equation}\label{eq35}
	wt (c(x))= \sum_{j=0}^{2^{\mathcal{T}-s}-1} wt \Big(c_j(x) \bmod \left(x^{m\mathcal{L}}\right)\Big).
\end{equation}
\textbf{Step II:  The weight of $c_j(x) \bmod \left(x^{m\mathcal{L}}\right)$  does not increase under multiplication by $x^r$.}\\
Let $\deg (c_j(x))= S_j \geq0$. Then we may write
\begin{equation*}
	c_j(x)= \sum_{i=0}^{S_j} \mathfrak{c}_{j,i}x^{i}, \quad \text{where } \mathfrak{c}_{j,i}\in \mathbb{F}_2, \ \mathfrak{c}_{j,S_j}=1.
\end{equation*}
Now, we examine the following possible cases for $\deg(c_j(x))$:
\begin{itemize}
	\item If $ S_j < m\mathcal{L}-1$, then 
	\begin{equation*}
		wt \Big(c_j(x) \bmod \left(x^{m\mathcal{L}}\right)\Big)= wt \left(c_j(x)\right)= wt \left(\mathfrak{c}_{j,0},\ \mathfrak{c}_{j,1},\ldots, \mathfrak{c}_{j,S_j}\right),
	\end{equation*}
	\begin{equation*}
		wt \Big(xc_j(x) \bmod \left(x^{m\mathcal{L}}\right)\Big)= wt \left(xc_j(x)\right)= wt \left(\mathfrak{c}_{j,0},\ \mathfrak{c}_{j,1},\ldots, \mathfrak{c}_{j,S_j}\right)=	wt \Big(c_j(x) \bmod \left(x^{m\mathcal{L}}\right)\Big).
	\end{equation*}
	\item  If  $S_j = m\mathcal{L}-1$, then 
	\begin{equation*}
		wt \Big(c_j(x) \bmod \left(x^{m\mathcal{L}}\right)\Big)= wt \left(c_j(x)\right)= wt \left(\mathfrak{c}_{j,0},\ \mathfrak{c}_{j,1},\ldots, \mathfrak{c}_{j,S_j}\right),
	\end{equation*}
	\begin{equation*}
		wt \Big(xc_j(x) \bmod \left(x^{m\mathcal{L}}\right)\Big)=  wt \left(\mathfrak{c}_{j,0},\ \mathfrak{c}_{j,1},\ldots, \mathfrak{c}_{j,S_j-1}\right)= 	wt \Big(c_j(x) \bmod \left(x^{m\mathcal{L}}\right)\Big)-1.
	\end{equation*}
	\item If  $ S_j \geq m\mathcal{L}$, then 
	\begin{equation*}
		wt \Big(c_j(x) \bmod \left(x^{m\mathcal{L}}\right)\Big)= wt \left(\mathfrak{c}_{j,0},\ \mathfrak{c}_{j,1},\ldots, \mathfrak{c}_{j,m\mathcal{L}-1}\right),
	\end{equation*} 
	\begin{equation*}
		wt \Big(xc_j(x) \bmod \left(x^{m\mathcal{L}}\right)\Big)=  wt \left(\mathfrak{c}_{j,0},\ \mathfrak{c}_{j,1},\ldots, \mathfrak{c}_{j,m\mathcal{L}-2}\right) \leq  	wt \Big(c_j(x) \bmod \left(x^{m\mathcal{L}}\right)\Big).
	\end{equation*}
\end{itemize}
Therefore, in all cases 
\begin{equation*}
	wt  \Big(c_j(x) \bmod \left(x^{m\mathcal{L}}\right)\Big) \geq 	wt \Big(xc_j(x) \bmod \left(x^{m\mathcal{L}}\right)\Big).
\end{equation*}
Repeating the same argument inductively, we obtain that for every $r\geq0$,
\begin{equation}\label{eq36}
	wt  \Big(c_j(x) \bmod \left(x^{m\mathcal{L}}\right)\Big) \geq 	wt \Big(x^r c_j(x) \bmod \left(x^{m\mathcal{L}}\right)\Big).
\end{equation}
\textbf{Step III: A lower bound on the weight for each nonzero component $c_j(x) \bmod \left(x^{m\mathcal{L}}\right)$.}\\
Since $c(x) \ne 0$, there exists at least one index $j$ such that $c_j(x)  \bmod \left(x^{m\mathcal{L}}\right) \ne 0$. Fix such a $j$. Suppose that in Eq.~\eqref{eq33}, the non-zero coefficients among $a_{1,j}, a_{2,j},\ldots,a_{m,j}$ occur exactly at the positions $1\leq t_1 < t_2 <\ldots< t_\sigma \leq m,$ that is,
\begin{equation}\label{eq37}
	a_{t_1,j}= a_{t_2,j}=\ldots= a_{t_\sigma,j}=1, \ 
\end{equation} and all  remaining $a_{t,j}=0$.	 Now, we consider the following cases depending on $t_\sigma$ (see Eq.~\eqref{eq37}):\\
	\textbf{Case I:} If $t_\sigma < m$, then from Eqs.~\eqref{eq33} and \eqref{eq37}, we have
\begin{equation*}
	c_j(x)= \Big( \left(1+x^{m-t_1}\right) V(x) + \left(1+x^{m-t_2}\right) V(x) + \ldots + \left(1+x^{m-t_\sigma}\right) V(x) \Big)^{2^{\mathcal{T}-s}} x^{j}.
\end{equation*}
Now, consider the following subcases:

	\textbf{Subcase I:} If $\sigma$ is odd, then
\allowdisplaybreaks
\begin{align*}
	c_j(x)=& \Big(\left(1+x^{m-t_1}+x^{m-t_2}+\ldots+x^{m-t_\sigma}\right)V(x)\Big)^{2^{\mathcal{T}-s}}x^j
	= \Big(\left(1+x^{m-t_\sigma}+x^{m-t_{\sigma-1}}+\ldots+x^{m-t_1}\right)V(x)\Big)^{2^{\mathcal{T}-s}}x^j.
\end{align*}
Consequently,\allowdisplaybreaks
\begin{align*}
	x^{(t_1-1)2^{\mathcal{T}-s}} x^{2^{\mathcal{T}-s}-1-j} c_j(x) = & \Big(\big(x^{t_1-1}+x^{(m-1)-\left(t_\sigma-t_1\right)}+x^{(m-1)- \left(t_{\sigma-1}-t_1\right)}+\ldots+x^{(m-1)-\left(t_2-t_1\right)}+x^{m-1}\big)V(x)\Big)^{2^{\mathcal{T}-s}}\\&x^{2^{\mathcal{T}-s}-1}.
\end{align*}
Using Eq.~\eqref{eq36}, we conclude that \allowdisplaybreaks
\begin{align} \label{eq38}
	\nonumber	wt  \Big(c_j(x) \bmod \left(x^{m\mathcal{L}}\right)\Big) \geq  &  wt \Big(	x^{(t_1-1)2^{\mathcal{T}-s}} x^{2^{\mathcal{T}-s}-1-j} c_j(x) \bmod \left(x^{m\mathcal{L}}\right)\Big)\\\nonumber =& wt \bigg(\Big(\big(x^{t_1-1}+x^{(m-1)-\left(t_\sigma-t_1\right)}+x^{(m-1)- \left(t_{\sigma-1}-t_1\right)}+\ldots+x^{(m-1)-\left(t_2-t_1\right)}+x^{m-1}\big) V(x)\Big)^{2^{\mathcal{T}-s}}\\&\quad\quad x^{2^{\mathcal{T}-s}-1} \bmod \left(x^{m\mathcal{L}}\right)\bigg).
\end{align}

	\textbf{Subcase II:} If $\sigma$ is even, then
\begin{align*}
	c_j(x)= &  \Big(\left(x^{m-t_1}+x^{m-t_2}+\ldots+x^{m-t_\sigma}\right)V(x)\Big)^{2^{\mathcal{T}-s}}x^j= \Big(\left(x^{m-t_\sigma}+x^{m-t_{\sigma-1}}+\ldots+x^{m-t_1}\right)V(x)\Big)^{2^{\mathcal{T}-s}}x^j,
\end{align*}
and 
\begin{align*}
	x^{(t_1-1)2^{\mathcal{T}-s}} x^{2^{\mathcal{T}-s}-1-j} c_j(x) = & \Big(\big(x^{(m-1)-\left(t_\sigma-t_1\right)}+x^{(m-1)- \left(t_{\sigma-1}-t_1\right)}+\ldots+x^{(m-1)-\left(t_2-t_1\right)}+x^{m-1}\big)V(x)\Big)^{2^{\mathcal{T}-s}}x^{2^{\mathcal{T}-s}-1}.
\end{align*}
Thus, by Eq.~\eqref{eq36}, we have 
\allowdisplaybreaks
\begin{align}\label{eq39}
	\nonumber	wt \Big(c_j(x) \bmod \left(x^{m\mathcal{L}}\right)\Big)\geq & wt \bigg(\Big(\Big(x^{(m-1)-\left(t_\sigma-t_1\right)}+x^{(m-1)- \left(t_{\sigma-1}-t_1\right)}+\ldots+x^{(m-1)-\left(t_2-t_1\right)}+x^{m-1}\Big) V(x)\Big)^{2^{\mathcal{T}-s}} \\&\quad\quad x^{2^{\mathcal{T}-s}-1} \bmod \left(x^{m\mathcal{L}}\right)\bigg).
\end{align}
	\textbf{Case II:} If $t_\sigma= m$, then we consider the following subcases:

\textbf{Subcase I:} If $\sigma=1$, then 
\begin{equation*}
	c_j(x)= V(x)^{2^{\mathcal{T}-s}} x^j,
\end{equation*}
\begin{equation}\label{eq40}
	wt \Big(c_j(x) \bmod \left(x^{m\mathcal{L}}\right)\Big)\geq wt \Big( \left(x^{m-1} V(x)\right)^{2^{\mathcal{T}-s}} x^{2^{\mathcal{T}-s}-1} \bmod \left(x^{m\mathcal{L}}\right) \Big).
\end{equation}

\textbf{Subcase II:} If $\sigma$ is odd and $\sigma\geq 3$, then from Eqs.~\eqref{eq33} and \eqref{eq37},
\allowdisplaybreaks
\begin{align*}
	c_j(x)=  &\Big( \left(1+x^{m-t_1}\right) V(x) + \left(1+x^{m-t_2}\right) V(x) + \ldots + \left(1+x^{m-t_{\sigma-1}}\right) V(x)+V(x) \Big)^{2^{\mathcal{T}-s}} x^{j}\\
	=& \Big(\left(1+x^{m-t_{\sigma-1}}+x^{m-t_{\sigma-2}}+\ldots+x^{m-t_{1}}\right)V(x)\Big)^{2^{\mathcal{T}-s}}x^j.
\end{align*}
Thus,  as in Subcase I of Case I, we get
\allowdisplaybreaks
\begin{align}\label{eq41}
	\nonumber	wt  \Big(c_j(x) \bmod \left(x^{m\mathcal{L}}\right)\Big) \geq  &   wt \bigg(\Big(\big(x^{t_1-1}+x^{(m-1)-\left(t_{\sigma-1}-t_1\right)}+x^{(m-1)- \left(t_{\sigma-2}-t_1\right)}+\ldots+x^{(m-1)-\left(t_2-t_1\right)}+x^{m-1}\big) V(x)\Big)^{2^{\mathcal{T}-s}}\\&\quad\quad x^{2^{\mathcal{T}-s}-1} \bmod \left(x^{m\mathcal{L}}\right)\bigg).
\end{align}

\textbf{Subcase III:} If $\sigma$ is even and $\sigma\geq 2$, then 
\begin{equation*}
	c_j(x)= \Big(\left(x^{m-t_1}+x^{m-t_2}+\ldots+x^{m-t_{\sigma-1}}\right)V(x)\Big)^{2^{\mathcal{T}-s}}x^j,
\end{equation*}
\allowdisplaybreaks
\begin{align}\label{eq42}
	\nonumber	wt \Big(c_j(x) \bmod \left(x^{m\mathcal{L}}\right)\Big)\geq & wt \bigg(\Big(\big(x^{(m-1)-\left(t_{\sigma-1}-t_1\right)}+x^{(m-1)- \left(t_{\sigma-2}-t_1\right)}+\ldots+x^{(m-1)-\left(t_2-t_1\right)}
	+x^{m-1}\big) V(x)\Big)^{2^{\mathcal{T}-s}}	\\&\quad\quad x^{2^{\mathcal{T}-s}-1} \bmod \left(x^{m\mathcal{L}}\right)\bigg).
\end{align}
	From Eqs.~\eqref{eq38}-\eqref{eq42}, we conclude that in all the cases, 
\begin{equation*}
	wt \Big(c_j(x) \bmod \left(x^{m\mathcal{L}}\right)\Big)\geq  \min  \left\{  wt \left(b(x)\right)\mid b(x) \in B_{\mathcal{L},\mathcal{T}-s}^m	\right\}. 
\end{equation*}
\textbf{Step IV: A lower bound on the weight of every nonzero codeword}\\
By Eq.~\eqref{eq35}, for every non-zero codeword $c(x)\in 	C_{2^{\mathcal{T}-s}}^\perp$, we have
\begin{equation*}
	wt(c(x)) \geq 	wt \Big(c_j(x) \bmod \left(x^{m\mathcal{L}}\right)\Big)\geq \min  \left\{  wt \left(b(x)\right)\mid b(x) \in B_{\mathcal{L},\mathcal{T}-s}^m	\right\}. 
\end{equation*} Therefore, 
\begin{equation} \label{eq43}
	d_{2^{\mathcal{T}-s}}^\perp  \geq  \min  \left\{  wt \left(b(x)\right)\mid b(x) \in B_{\mathcal{L},\mathcal{T}-s}^m	\right\}. 
\end{equation}
\textbf{Step V: Reverse inequality} \\
Since $B_{\mathcal{L},\mathcal{T}-s}^m \subseteq C_{2^{\mathcal{T}-s}}^\perp$, it follows that
\begin{equation}\label{eq44}
	d_{2^{\mathcal{T}-s}}^\perp \leq \min  \left\{  wt \left(b(x)\right)\mid b(x) \in B_{\mathcal{L},\mathcal{T}-s}^m	\right\}. 
\end{equation}
Combining Inequalities~\eqref{eq43} and \eqref{eq44}, we obtain the desired equality. This completes the proof.
\end{proof}

The following corollary gives the Hamming distance of $C_1^\perp$ when  $P(x)=x^{2\cdot3^v}+x^{3^v}+1,\ v\geq0$ and $\mathcal{L}=2^\mathcal{T}$.
\begin{corollary}\label{corollary3}
	Let $C_1= \langle x^{2\cdot3^v}+x^{3^v}+1 \rangle$ be the binary polycyclic code associated with $\left( x^{2\cdot3^v}+x^{3^v}+1 \right)^{2^\mathcal{T}}$, where $\mathcal{T}\geq 1$ and  $v \geq 0$. Then the Hamming distance of $C_1^\perp$ is given by 
	\begin{equation*}
		d_1^\perp= \begin{cases}
			\frac{2^{\mathcal{T}+2}-2}{3},& \text{ if } \mathcal{T} \text{ is odd,}\\
			\frac{2^{\mathcal{T}+2}-1}{3},& \text{ if } \mathcal{T} \text{ is even.}
		\end{cases}
	\end{equation*}
\end{corollary}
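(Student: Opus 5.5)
The plan is to apply Theorem~\ref{theorem12} with $s=\mathcal{T}$, so that $2^{\mathcal{T}-s}=1$ and $C_{2^{\mathcal{T}-s}}=C_1$. Then $d_1^\perp$ is the minimum weight over
\[
B=\left\{\ell(x)V(x) \bmod x^{m\mathcal{L}} \;\middle|\; \deg(\ell(x))=m-1\right\},
\qquad
V(x)=\left(x^{e}+1\right)^{2^{\mathcal{T}}-1}U^*(x).
\]
We will compute $V(x)$ explicitly, reduce the minimization to a two-case count, and read off the answer.

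\textbf{Computing $V(x)$.} Put $n=3^v$, so $P(x)=x^{2n}+x^n+1$, $m=2n$ and $m\mathcal{L}=2n\cdot 2^{\mathcal{T}}$. From $x^{3n}+1=(x^n+1)P(x)$ and the known order of this trinomial, $e=3n$ and $U(x)=x^n+1=U^*(x)$. All binomial coefficients of $(1+z)^{2^{\mathcal{T}}-1}$ are odd, so
\[
V(x)=\sum_{k=0}^{2^{\mathcal{T}}-1}\left(x^{3nk}+x^{3nk+n}\right).
\]
In particular $V(x)=W(x^n)$ with $W(y)=\sum_{k=0}^{2^{\mathcal{T}}-1}\left(y^{3k}+y^{3k+1}\right)$, and $x^{m\mathcal{L}}=y^{N}$ with $N=2^{\mathcal{T}+1}$.

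\textbf{Reduction to two candidates.} Write $\ell(x)=\sum_{i=0}^{n-1}x^{i}\ell_i(x^n)$ with $\deg(\ell_i)\le 1$. Since $V$ lies in $\mathbb{F}_2[x^n]$, the product $\ell V \bmod x^{m\mathcal{L}}$ splits into residue classes modulo $n$, exactly as in Step I of the proof of Theorem~\ref{theorem12}, so the weights of the classes add. The class $i$ contributes $\mathrm{wt}\big(\ell_i(y)W(y)\bmod y^{N}\big)$ after truncating to exponents $ni+nt<2n2^{\mathcal{T}}$; this is at least the weight obtained with $i=0$, by the monotonicity of Step II. The condition $\deg(\ell)=m-1=2n-1$ forces $\ell_{n-1}(y)\in\{y,\,1+y\}$. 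Hence
\[
d_1^\perp=\min\left\{\mathrm{wt}\big(yW(y)\bmod y^{N}\big),\ \mathrm{wt}\big((1+y)W(y)\bmod y^{N}\big)\right\},
\]
and both values are attained by taking $\ell(x)=x^{2n-1}$ or $\ell(x)=x^{n-1}+x^{2n-1}$.

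\textbf{The count.} We have $yW=\sum_k\left(y^{3k+1}+y^{3k+2}\right)$ and $(1+y)W=\sum_k\left(y^{3k}+y^{3k+2}\right)$. There is no cancellation, and since $3(2^{\mathcal{T}}-1)+2\ge N-1$, the exponent range below $N$ is governed only by the truncation. So the two weights are the numbers of $t<N$ with $t\equiv1,2 \pmod 3$, respectively $t\equiv 0,2\pmod 3$.
\begin{itemize}
\item If $\mathcal{T}$ is odd, then $N\equiv1 \pmod 3$. Writing $N=3q+1$, the residue classes $0,1,2$ contain $q+1,q,q$ elements. The minimum is $2q=(2^{\mathcal{T}+2}-2)/3$.
\item If $\mathcal{T}$ is even, then $N=3q+2$ and the classes contain $q+1,q+1,q$ elements. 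Both candidates give $2q+1=(2^{\mathcal{T}+2}-1)/3$.
\end{itemize}

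The main obstacle is the reduction step. We must check that the residue-class decomposition and the shift monotonicity of Theorem~\ref{theorem12} still apply for $x^i$-shifts with $i<n$ and the truncation at $x^{m\mathcal{L}}$. We must also check that components other than the forced one can only increase the weight. I would handle this by redoing Steps I--II with modulus $n$ in place of $2^{\mathcal{T}-s}$.
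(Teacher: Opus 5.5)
Your argument is correct and gives the paper's values, but you organize the count differently. The paper also applies Theorem~\ref{theorem12} with $s=\mathcal{T}$, and then works with contiguous blocks in $x$:
\begin{itemize}
\item it writes $b(x)=\sum_k x^{k3^{v+1}}(x^{3^v}+1)a(x)\bmod x^{3^v2^{\mathcal{T}+1}}$;
\item it notes that the blocks have length $3^{v+1}$ and do not overlap, so there are $M=\lfloor 2^{\mathcal{T}+1}/3\rfloor$ full blocks plus one truncated block;
\item it splits $a=(x^{3^v}+1)a_1+a_2$, so that each full block has weight $wt(a_1+a_2)+wt(a_2)+wt(a_1)$;
\item it case-splits on the parity of $\mathcal{T}$ and on $a_1=a_2$ versus $a_1\neq a_2$, using $a_1\neq 0$ (which comes from $\deg a=2\cdot 3^v-1$), and $a=x^{2\cdot3^v-1}$ attains the bound.
\end{itemize}
You instead interleave by residues modulo $3^v$. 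Since $V(x)=W(x^{3^v})$, each residue class behaves like the $v=0$ code in the variable $y$. The minimization then reduces to comparing $wt(\ell W\bmod y^N)$ for $\ell\in\{1,y,1+y\}$, and a residue count modulo $3$ finishes it.

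Your route shows immediately why $d_1^\perp$ does not depend on $v$, and it replaces the $a_1=a_2$ case analysis with a comparison of three numbers. In fact $wt(W\bmod y^N)$, which equals $2q+1$ (resp.\ $2q+2$), is never the smallest of the three. So you could start from Theorem~\ref{theorem11} directly and not need Theorem~\ref{theorem12} at all. The paper's computation stays in $x$, gives an explicit weight formula in terms of $a_1,a_2$, and matches the block expansions it reuses in the LCD proofs of Section~\ref{sec6}.

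One justification needs fixing. Step~II of Theorem~\ref{theorem12} says that multiplying by $x^r$ and truncating can only \emph{decrease} the weight. So it bounds the class-$i$ contribution \emph{above} by the $i=0$ value, not below as you state. What you need is equality, and it holds for a simpler reason: the truncation point $x^{2\cdot 3^v 2^{\mathcal{T}}}$ is a multiple of $n=3^v$. Hence for $0\le i\le n-1$ you have $ni+nt<nN$ if and only if $t<N$. Class $i$ therefore contributes exactly $wt(\ell_i(y)W(y)\bmod y^N)$. The classes add, because truncation only deletes terms and their exponents stay distinct modulo $n$. This settles the ``main obstacle'' you flagged.

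Finally, you take $\mathrm{Ord}(P(x))=3^{v+1}$ as known, whereas the paper proves it inside this proof, and $U(x)=x^{3^v}+1$ depends on it. The argument is one line: if $e<3^{v+1}$, then $P(x)g(x)=x^e+1$ with $\deg g<3^v=cw(P(x))$. There is then no cancellation, so the weight is $3\,wt(g)\ge 3$, which contradicts $wt(x^e+1)=2$. You should include it.
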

\begin{proof}
		We have $P(x)= x^{2\cdot3^v}+x^{3^v}+1$.  Let $e$ be the order of the polynomial $x^{2\cdot3^v}+x^{3^v}+1$. Since $\left(x^{2\cdot3^v}+x^{3^v}+1\right) \left(1+x^{3^v}\right)= x^{3^{v+1}}+1$, thus $e \leq 3^{v+1}$.  If we assume that $e< 3^{v+1}$, then there exists $g(x)\in \mathbb{F}_2[x]$ with $\deg \left(g(x)\right)< 3^v$ such that 
		$
		\left(x^{2\cdot3^v}+x^{3^v}+1\right) g(x)= x^{e}+1.
		$
		Since $cw 	\left(x^{2\cdot3^v}+x^{3^v}+1\right) =3^v $ and $\deg \left(g(x)\right)< 3^v$, thus
		$
		wt \left(	\left(x^{2\cdot3^v}+x^{3^v}+1\right) g(x)\right)= wt 	\left(x^{2\cdot3^v}+x^{3^v}+1\right)  \cdot wt\left(g(x)\right) \geq 3,
		$ which contradicts the fact that $wt\left(x^{e}+1\right)=2$.  Therefore,
		\begin{equation*}
			Ord \big(x^{2\cdot3^v}+x^{3^v}+1\big) = 3^{v+1}\quad \text{ for every } v \geq0.
		\end{equation*} Thus, here $Ord(P(x))=e=3^{v+1}$, $\mathcal{L}=2^\mathcal{T}$.  Also,  $U(x)=U^*(x)= x^{3^v}+1$. By Theorem~\ref{theorem12},
	\begin{equation*}
		d_1^\perp= \min \left\{wt(b(x))\ \big| \ b(x)\in B_{2^\mathcal{T}, \mathcal{T}-\mathcal{T}}^{2\cdot3^v}  \right\},
	\end{equation*} where 
	\begin{equation*}
		B_{2^\mathcal{T}, \mathcal{T}-\mathcal{T}}^{2\cdot3^v} = \left\{ a(x)V(x)   \bmod \big(x^{3^v 2^{\mathcal{T}+1}}\big) \Big|\ a(x)\in \mathbb{F}_2[x],\ \deg(a(x))=2\cdot3^v-1 \right\},
	\end{equation*} and 
	\begin{equation*}
		V(x)= \big(x^{3^{v+1}}+1\big)^{2^\mathcal{T}-1} \big(x^{3^v}+1\big).
	\end{equation*}
	Let $b(x)$ be an element of $B_{2^\mathcal{T}, \mathcal{T}-\mathcal{T}}^{2\cdot3^v}$. Then
	\begin{equation*}
		b(x)= \big(x^{3^{v+1}}+1\big)^{2^\mathcal{T}-1} \big(x^{3^v}+1\big) a(x) \mod \Big(x^{3^v 2^{\mathcal{T}+1}}\Big),
	\end{equation*} where $a(x)\in \mathbb{F}_2[x]$ and $\deg(a(x))= 2\cdot3^v-1.$ It can be rewritten as 
	\allowdisplaybreaks
	\begin{align*}
		b(x)= & \left(1+x^{3^{v+1}}+x^{2(3^{v+1})}+\ldots+x^{(2^\mathcal{T}-1)3^{v+1}}\right)  \big(x^{3^v}+1\big) a(x) \mod \left(x^{3^v 2^{\mathcal{T}+1}}\right)\\
		=&  \big(x^{3^v}+1\big) a(x)+ x^{3^{v+1}}  \big(x^{3^v}+1\big) a(x)+ \ldots+ x^{(2^\mathcal{T}-1)3^{v+1}} \big(x^{3^v}+1\big) a(x) \mod \left(x^{3^v 2^{\mathcal{T}+1}}\right).
	\end{align*} 
	Let $M$ be the largest positive integer such that $M \leq 2^\mathcal{T}-1$ and  $M 3^{v+1} < 3^v2^{\mathcal{T}+1}$. Then 
		\begin{equation*}
		M= \bigg\lfloor \frac{2^{\mathcal{T}+1}}{3} \bigg\rfloor=\begin{cases}
			\frac{2^{\mathcal{T}+1}-1}{3}, & \text{ if } \mathcal{T} \text{ is odd,}\\
			\frac{2^{\mathcal{T}+1}-2}{3}, & \text{ if } \mathcal{T} \text{ is even.}
		\end{cases}
	\end{equation*} Now, we consider the following two cases:\\
	\textbf{Case I:} If $\mathcal{T}$ is odd, then
	\begin{equation*}
		b(x)=\big(x^{3^v}+1\big) a(x)+ x^{3^{v+1}}  \big(x^{3^v}+1\big) a(x)+\ldots+ x^{ \left(\frac{2^{\mathcal{T}+1}-1}{3}\right)3^{v+1}} \big(x^{3^v}+1\big) a(x) \mod \left(x^{3^v 2^{\mathcal{T}+1}}\right).
	\end{equation*} By using the division algorithm, the polynomial $a(x)$ can be expressed as 
	\begin{equation*}
		a(x)= \big(x^{3^v}+1\big) a_1(x)+a_2(x),\quad \text{where } a_1(x),\ a_2(x)\in \mathbb{F}_2,\ \deg(a_1(x))=3^v-1,\ \deg(a_2(x))<3^v.
	\end{equation*} Thus, \allowdisplaybreaks
	\begin{align*}
		\big(x^{3^v}+1\big) a(x)=& \big(x^{2\cdot3^v}+1\big) a_1(x)+ \big(x^{3^v}+1\big) a_2(x)=  \left(a_1(x)+a_2(x)\right)+x^{3^v}a_2(x)+x^{2\cdot3^v}a_1(x).
	\end{align*} Therefore, we have \allowdisplaybreaks
\begin{align*}
	b(x)= & \big(x^{3^v}+1\big) a(x)+ x^{3^{v+1}}  \big(x^{3^v}+1\big) a(x)+\ldots+ x^{ \left(\frac{2^{\mathcal{T}+1}-4}{3}\right)3^{v+1}} \big(x^{3^v}+1\big) a(x)\\ &+ x^{ \left(\frac{2^{\mathcal{T}+1}-1}{3}\right)3^{v+1}} \left(a_1(x)+a_2(x)\right) \mod \left(x^{3^v 2^{\mathcal{T}+1}}\right).
\end{align*}
 Hence,\allowdisplaybreaks
	\begin{align*}
		wt(b(x))= & \frac{2^{\mathcal{T}+1}-1}{3} 	wt \Big(\big(x^{3^v}+1\big) a(x)\Big)+ wt \big(a_1(x)+a_2(x)\big)\\
		=& \frac{2^{\mathcal{T}+1}-1}{3} \Big( wt \big(a_1(x)+a_2(x)\big)+ wt(a_2(x))+wt(a_1(x))\Big) + wt \big(a_1(x)+a_2(x)\big).
	\end{align*}    
	If $a_1(x)=a_2(x)$, then 
	\begin{equation*}
		wt(b(x))=  \frac{2^{\mathcal{T}+2}-2}{3} wt(a_1(x)) \geq \frac{2^{\mathcal{T}+2}-2}{3}.
	\end{equation*} If $a_1(x)\ne a_2(x)$, then
	\begin{equation*}
		wt(b(x))=  \frac{2^{\mathcal{T}+1}-1}{3} \Big( wt \big(a_1(x)+a_2(x)\big)+ wt(a_2(x))+wt(a_1(x))\Big) + wt \left(a_1(x)+a_2(x)\right) \geq \frac{2^{\mathcal{T}+2}-2}{3}+1.
	\end{equation*} Moreover, for $a(x)= \left(x^{3^v}+1\right)x^{3^v-1}+ x^{3^v-1}$, we have 
	\begin{equation*}
		wt(b(x))= \frac{2^{\mathcal{T}+2}-2}{3}.
	\end{equation*} Therefore, in this case $d_1^\perp= \frac{2^{\mathcal{T}+2}-2}{3}.$\\
	\textbf{Case II:} If $\mathcal{T}$ is even, then 
	\begin{equation*}
		b(x)=\big(x^{3^v}+1\big) a(x)+ x^{3^{v+1}}  \big(x^{3^v}+1\big) a(x)+\ldots+ x^{ (\frac{2^{\mathcal{T}+1}-2}{3})3^{v+1}} \big(x^{3^v}+1\big) a(x) \mod \left(x^{3^v 2^{\mathcal{T}+1}}\right).
	\end{equation*} Expressing $a(x)$ as in Case~I, we have \allowdisplaybreaks
	\begin{align*}
		b(x)=& \big(x^{3^v}+1\big) a(x)+ x^{3^{v+1}}  \big(x^{3^v}+1\big) a(x)+\ldots+ x^{ (\frac{2^{\mathcal{T}+1}-5}{3})3^{v+1}} \big(x^{3^v}+1\big) a(x)\\&+ x^{ (\frac{2^{\mathcal{T}+1}-2}{3})3^{v+1}} \Big(a_1(x)+a_2(x)+x^{3^v}a_2(x)\Big) \mod \left(x^{3^v 2^{\mathcal{T}+1}}\right).
	\end{align*}
	Consequently,
	\begin{equation*}
		wt(b(x))= \frac{2^{\mathcal{T}+1}-2}{3} \Big( wt \big(a_1(x)+a_2(x)\big)+ wt(a_2(x))+wt(a_1(x))\Big) + wt \big(a_1(x)+a_2(x)\big)+wt(a_2(x)).
	\end{equation*} If $a_1(x)=a_2(x)$, then
	\begin{equation*}
		wt(b(x))= \Big( \frac{2^{\mathcal{T}+2}-4}{3}+1\Big) wt(a_1(x)) \geq \frac{2^{\mathcal{T}+2}-1}{3}.
	\end{equation*} If $a_1(x)\ne a_2(x)$, then\allowdisplaybreaks
	\begin{align*}
		wt(b(x))=  & \frac{2^{\mathcal{T}+1}-2}{3} \Big( wt \big(a_1(x)+a_2(x)\big)+ wt(a_2(x))+wt(a_1(x))\Big) + wt \left(a_1(x)+a_2(x)\right)+wt(a_2(x))\\ \geq & \frac{2^{\mathcal{T}+2}-4}{3}+1=\frac{2^{\mathcal{T}+2}-1}{3}.
	\end{align*}
	Moreover, for $a(x)= \left(x^{3^v}+1\right)x^{3^v-1}+ x^{3^v-1}$, we have 
	\begin{equation*}
		wt(b(x))= \frac{2^{\mathcal{T}+2}-1}{3}.
	\end{equation*} Therefore, in this case $d_1^\perp= \frac{2^{\mathcal{T}+2}-1}{3}.$
	This completes the proof.
\end{proof}
In the following example, we illustrate Theorem~\ref{theorem12} for $P(x)=x^3+x+1$ and $\mathcal{L}=9$.
\begin{example}\label{ex5}
	The Hamming distances of  $C_{2^{4-s}}^\perp=\big\langle \left(x^3+x+1\right)^{2^{4-s}} \big\rangle,\ 1\leq s \leq 4$, for  $P(x)=x^3+x+1$ and $\mathcal{L}=9$.
	
	Here, $P(x)=x^3+x+1$ and $\mathcal{L}=9$. Thus, $m=3$ and $\mathcal{T}=4.$ The order of $P(x)$ is $e=7$, and since $\left(x^3+x+1\right) \big(x^4+x^2+x+1\big)=x^7+1$, thus $U(x)= x^4+x^2+x+1$ and  $U^*(x)= x^4+x^3+x^2+1$. Now,
	\begin{equation*}
		V(x)= \left(x^{7}+1\right)^{2^s-1} U^*(x)= \left(x^{7}+1\right)^{2^s-1} \left(x^4+x^3+x^2+1\right).
	\end{equation*}
	Since 	\begin{equation*}
		\left(x^7+1\right)^{2^s-1} = 1+x^7+x^{14}+\ldots+x^{7\left(2^s-1\right)}= \sum_{j=0}^{2^s-1} x^{7j}.
	\end{equation*}
	Thus, for $1\leq s \leq 4$, 
	\begin{equation*}
		V(x)= \sum_{j=0}^{2^s-1} x^{7j} \left(1+x^2+x^3+x^4\right).
	\end{equation*}
	By Theorem~\ref{theorem12},  	for $1\leq s \leq 4$,
	the Hamming distance of the code $C_{2^{4-s}}^{\perp}$  is given by
	\begin{equation*}
		d_{2^{4-s}}^\perp= \min  \left\{  wt \left(b(x)\right)\mid b(x) \in B_{9,4-s}^3	\right\},
	\end{equation*}
	where 	\allowdisplaybreaks
	\begin{align*}
		B_{9,4-s}^3 = & \left\{ \left(a(x)V(x)\right)^{2^{4-s}} x^{2^{4-s}-1}  \bmod \left(x^{27}\right) \Big|\ a(x)\in \mathbb{F}_2[x],\ \deg(a(x))=2 \right\}\\ =& \biggl\{   \big(x^2V(x)\big)^{2^{4-s}} x^{2^{4-s}-1}\bmod  \left(x^{27}\right),\ \Big( \big(1+x^2\big) V(x)\Big)^{2^{4-s}} x^{2^{4-s}-1}  \bmod \left(x^{27}\right), \\& \Big( \big(x+x^2\big) V(x)\Big)^{2^{4-s}} x^{2^{4-s}-1}  \bmod \left(x^{27}\right),
		\Big( \big(1+x+x^2\big) V(x)\Big)^{2^{4-s}} x^{2^{4-s}-1}  \bmod \left(x^{27}\right)	 \biggr\}.
	\end{align*}
The Hamming weights of the elements of	$	B_{9,4-s}^3$  for $1\leq s \leq 4$ are listed in Table~\ref{tab5}.
	\begin{table}[htbp]
		\caption{Hamming weights of elements of   $	B_{9,4-s}^3$ for $P(x)=x^3+x+1$ (Example~\ref{ex5})}
		\label{tab5}
		\centering
		\begin{tabular}{|c|c|c|c|c|}
			\hline
			Polynomial and its weight & $ s=1$ & $s=2$& $s=3$ & $s=4$ \\ \hline

			$wt \Big( \big(x^2V(x)\big)^{2^{4-s}} x^{2^{4-s}-1} \bmod \left(x^{27}\right) \Big)$ 
			& $1$ & $3$ & $7$ & $15$\\\hline

			$wt \Big( \big((1+x^2)V(x)\big)^{2^{4-s}} x^{2^{4-s}-1}  \bmod  \left(x^{27}\right) \Big)$
			& $1$ & $3$ & $7$ & $15$\\\hline

			$wt \Big( \big((x+x^2)V(x)\big)^{2^{4-s}} x^{2^{4-s}-1}  \bmod  \left(x^{27}\right) \Big)$ 
			& $2$ & $3$ & $7$ & $15$\\\hline

			$wt \Big( \big((1+x+x^2)V(x)\big)^{2^{4-s}}   x^{2^{4-s}-1} \bmod  \left(x^{27}\right) \Big)$  
			& $2$ & $3$ & $7$ & $15$\\\hline
			
		\end{tabular} 
	\end{table}
	
	From Table~\ref{tab5}, we obtain
	\begin{equation*}
		d_8^\perp= 1,\quad d_{4}^\perp=3,\quad d_2^\perp=7,\quad d_1^\perp=15.
	\end{equation*}
\end{example}
We next study the Hamming distance of  $C_{2^\mathcal{T}-2^{\mathcal{T}-r}}^\perp$  for  $1\leq r \leq \mathcal{T}$ when $\mathcal{L}=2^\mathcal{T}$, and for $1\leq r \leq R$ when $\mathcal{L}= 2^\mathcal{T}-2^{\mathcal{T}-R}+\mathcal{L}'$, where $2\leq R \leq \mathcal{T}-1$ and $1\leq \mathcal{L}' \leq 2^{\mathcal{T}-R-1}$.
\begin{theorem}\label{theorem13}
	If $\mathcal{L}=2^\mathcal{T}$, then for $1\leq r \leq \mathcal{T}$, the Hamming distance of $C_{2^\mathcal{T}-2^{\mathcal{T}-r}}^\perp$  is given by
	\begin{equation*}
		d_{2^\mathcal{T}-2^{\mathcal{T}-r}}^\perp= 
		\min  \left\{  wt \left(b(x)\right)\mid b(x) \in B_{\mathcal{L},\mathcal{T}-r}^{'m}	\right\},
	\end{equation*}
	where the set $B_{\mathcal{L},\mathcal{T}-r}^{'m}$ is defined as
	\begin{equation*}
		B_{\mathcal{L},\mathcal{T}-r}^{'m} = \left\{ \left(\ell(x)V'(x)\right)^{2^{\mathcal{T}-r}} x^{2^{\mathcal{T}-r}-1}  \bmod \left(x^{m\mathcal{L}}\right)\Big|\ \ell(x)\in \mathbb{F}_2[x],\ \deg(\ell(x))= m(2^r-1)-1 \right\},
	\end{equation*}
	and $V'(x)=\left(x^{e}+1\right)  U^*(x)^{2^r-1}$.	
\end{theorem}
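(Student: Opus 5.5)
The plan is to mirror the proof of Theorem~\ref{theorem12}, with $j=2^\mathcal{T}-2^{\mathcal{T}-r}$ in place of $2^{\mathcal{T}-s}$.

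\textbf{Rewriting a codeword.} With $\mathcal{L}=2^\mathcal{T}$, Theorem~\ref{theorem11} says every codeword of $C_{2^\mathcal{T}-2^{\mathcal{T}-r}}^\perp$ has the form
\begin{equation*}
c(x)=\left(x^{e}+1\right)^{2^{\mathcal{T}-r}}U^*(x)^{2^\mathcal{T}-2^{\mathcal{T}-r}}a(x) \bmod \left(x^{m\mathcal{L}}\right),\quad \deg(a(x))<m\left(2^\mathcal{T}-2^{\mathcal{T}-r}\right)=m(2^r-1)2^{\mathcal{T}-r}.
\end{equation*}
Since we are in characteristic $2$, this equals $V'(x)^{2^{\mathcal{T}-r}}a(x)$ with $V'(x)=(x^e+1)U^*(x)^{2^r-1}$.

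\textbf{Splitting $a(x)$.} Put $M=m(2^r-1)$. Apply the iterated division algorithm as in Theorems~\ref{theorem4} and~\ref{theorem12}:
\begin{equation*}
a(x)=\sum_{l=1}^{M}\left(1+x^{M-l}\right)^{2^{\mathcal{T}-r}}a_l(x),\quad \deg(a_l(x))<2^{\mathcal{T}-r},
\end{equation*}
where the $l=M$ term is simply $a_M(x)$. Expand each $a_l(x)=\sum_j a_{l,j}x^j$ and group the terms of $c(x)$ by exponent modulo $2^{\mathcal{T}-r}$. This gives components
\begin{equation*}
c_j(x)=\Big(\sum_{l}a_{l,j}\big(1+x^{M-l}\big)V'(x)\Big)^{2^{\mathcal{T}-r}}x^j.
\end{equation*}
Step~I of Theorem~\ref{theorem12} applies verbatim: the supports of different residue classes stay disjoint after truncation mod $x^{m\mathcal{L}}$. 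Hence $wt(c(x))=\sum_j wt\big(c_j(x)\bmod x^{m\mathcal{L}}\big)$. Step~II also applies verbatim, so multiplying by $x^r$ never increases the weight of a truncated polynomial.

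\textbf{Normalizing a nonzero component.} Fix $j$ with $c_j\not\equiv 0$ and let $t_1<\dots<t_\sigma$ be the indices with $a_{t_i,j}=1$. Run the four-case analysis of Theorem~\ref{theorem12} (on whether $t_\sigma=M$ and the parity of $\sigma$) to write $c_j(x)=(\ell_0(x)V'(x))^{2^{\mathcal{T}-r}}x^j$, where $\ell_0$ is a nonzero polynomial of degree at most $M-1$. Then multiply by $x^{(M-1-\deg\ell_0)2^{\mathcal{T}-r}}x^{2^{\mathcal{T}-r}-1-j}$. This turns $c_j$ into $(\ell(x)V'(x))^{2^{\mathcal{T}-r}}x^{2^{\mathcal{T}-r}-1}$ with $\deg\ell=M-1$, and by Step~II the truncated weight does not increase. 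So each nonzero component has weight at least $\min\{wt(b): b\in B_{\mathcal{L},\mathcal{T}-r}^{'m}\}$.

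I expect this normalization to be the delicate point. One must check that any nonzero $\ell_0$ of degree at most $M-1$ can be shifted to have degree exactly $M-1$; left-multiplying by $x^{M-1-\deg\ell_0}$ does this cleanly, and is simpler than the explicit exponent bookkeeping in Theorem~\ref{theorem12}. One must also check that the shift is a multiplication by a power of $x$, so that Step~II applies. Both hold because $\ell_0$ is nonzero and the shift exponent is nonnegative.

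\textbf{Reverse inequality.} For this we must show $B_{\mathcal{L},\mathcal{T}-r}^{'m}\subseteq C^\perp_{2^\mathcal{T}-2^{\mathcal{T}-r}}$. An element of the set equals $V'(x)^{2^{\mathcal{T}-r}}a(x)$ with
\begin{equation*}
a(x)=\ell(x)^{2^{\mathcal{T}-r}}x^{2^{\mathcal{T}-r}-1},\quad \deg(a(x))=(M-1)2^{\mathcal{T}-r}+2^{\mathcal{T}-r}-1<M2^{\mathcal{T}-r}.
\end{equation*}
This degree is within the range allowed by Theorem~\ref{theorem11}, so the element is a codeword. Combining the two inequalities yields the formula for $d_{2^\mathcal{T}-2^{\mathcal{T}-r}}^\perp$.
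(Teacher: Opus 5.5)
Your proposal is correct and follows the paper's route. The paper also rewrites each codeword as $V'(x)^{2^{\mathcal{T}-r}}a(x) \bmod (x^{m\mathcal{L}})$, applies the same iterated division of $a(x)$, and defers to the argument of Theorem~\ref{theorem12}, whose Steps I--V you carry out; your shift by $x^{M-1-\deg \ell_0}$ is the paper's shift $x^{(t_1-1)2^{\mathcal{T}-s}}$ (since $\deg \ell_0 = M-t_1$), and your explicit check that $B_{\mathcal{L},\mathcal{T}-r}^{'m}\subseteq C^\perp_{2^\mathcal{T}-2^{\mathcal{T}-r}}$ spells out a step the paper leaves implicit.
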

\begin{proof}
	Let $c(x)$ be a non-zero element of $C_{2^\mathcal{T}-2^{\mathcal{T}-r}}^\perp$. Then, by Theorem~\ref{theorem11}, it can be expressed as
	\allowdisplaybreaks
	\begin{align*}
		c(x)= & \left(x^{e}+1\right)^{2^{\mathcal{T}-r}} U^*(x)^{2^\mathcal{T}-2^{\mathcal{T}-r}} a(x) \bmod \left(x^{m\mathcal{L}}\right)
		= \left(\left(x^{e}+1\right) U^*(x)^{2^r-1}\right)^{2^{\mathcal{T}-r}} a(x) \bmod \left(x^{m\mathcal{L}}\right),
	\end{align*}
	where $a(x)\in \mathbb{F}_2[x],\ \deg(a(x)) <m \left(2^\mathcal{T}-2^{\mathcal{T}-r}\right)$.  Let $V'(x)= \left(x^{e}+1\right) U^*(x)^{2^r-1}.  $ Then 
	\begin{equation*}
		c(x)= V'(x)^{2^{\mathcal{T}-r}} a(x) \bmod \left(x^{m\mathcal{L}}\right).
	\end{equation*}
	Since $\deg(a(x)) < m \left(2^\mathcal{T}-2^{\mathcal{T}-r}\right)= m \left(2^r-1\right) 2^{\mathcal{T}-r}$, thus by  using the division algorithm it can be represented as
	\allowdisplaybreaks
	\begin{align*}
		a(x)= & \Big(1+x^{m (2^r-1)-1}\Big)^{2^{\mathcal{T}-r}} a_1(x)+  \Big(1+x^{m (2^r-1)-2}\Big)^{2^{\mathcal{T}-r}} a_2(x)+ \ldots +\left(1+x\right)^{2^{\mathcal{T}-r}} a_{m(2^r-1)-1}(x)+ a_{m(2^r-1)}(x),
	\end{align*} where each $a_i(x)\in \mathbb{F}_2[x]$ and $\deg(a_i(x)) < 2^{\mathcal{T}-r}. $
	Proceeding in a  similar way as in  the proof of Theorem~\ref{theorem12}, we obtain the result.
\end{proof}
The proof of the next theorem follows the same argument as that of Theorem~\ref{theorem13}; therefore, we omit the details.
\begin{theorem}\label{theorem14}
	Let $\mathcal{L}= 2^\mathcal{T}-2^{\mathcal{T}-R}+\mathcal{L}'$, where $2\leq R \leq \mathcal{T}-1$ and $1\leq \mathcal{L}' \leq 2^{\mathcal{T}-R-1}$. Then, for $1\leq r \leq R$,  the Hamming distance of $C_{2^\mathcal{T}-2^{\mathcal{T}-r}}^\perp$  is given by
	\begin{equation*}
		d_{2^\mathcal{T}-2^{\mathcal{T}-r}}^\perp= 
		\min  \left\{  wt \left(b(x)\right)\mid b(x) \in B_{\mathcal{L},\mathcal{T}-r}^{'m}	\right\},
	\end{equation*}
	where the set $B_{\mathcal{L},\mathcal{T}-r}^{'m}$ is   defined as in Theorem~\ref{theorem13}.
\end{theorem}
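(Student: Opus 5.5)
The argument follows the proof of Theorem~\ref{theorem13} almost verbatim; the only changes are the code length $m\mathcal{L}$ and the range of the multiplier. I start from Theorem~\ref{theorem11} with $j=2^\mathcal{T}-2^{\mathcal{T}-r}$. Since $\mathcal{L}>2^{\mathcal{T}-1}$, the exponent of $x^e+1$ there is $2^\mathcal{T}-j=2^{\mathcal{T}-r}$, so every nonzero codeword has the form
\begin{equation*}
c(x)=V'(x)^{2^{\mathcal{T}-r}}a(x)\bmod\left(x^{m\mathcal{L}}\right),\qquad V'(x)=\left(x^e+1\right)U^*(x)^{2^r-1},\quad \deg(a(x))<m\left(2^\mathcal{T}-2^{\mathcal{T}-r}\right)=m(2^r-1)2^{\mathcal{T}-r}.
\end{equation*}
These are exactly the same data as in Theorem~\ref{theorem13}. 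Only the truncation degree $m\mathcal{L}$ differs, and $m\mathcal{L}$ is now strictly less than $m2^\mathcal{T}$.

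Next I expand $a(x)$ by repeated division, exactly as in Theorem~\ref{theorem13}:
\begin{equation*}
a(x)=\sum_{i=1}^{m(2^r-1)-1}\left(1+x^{m(2^r-1)-i}\right)^{2^{\mathcal{T}-r}}a_i(x)+a_{m(2^r-1)}(x),\qquad \deg(a_i(x))<2^{\mathcal{T}-r}.
\end{equation*}
Writing $a_{i,j}$ for the coefficient of $x^j$ in $a_i(x)$, I define the components $c_j(x)$ for $0\le j\le 2^{\mathcal{T}-r}-1$ by the analogue of Eq.~\eqref{eq33}. Steps I and II of Theorem~\ref{theorem12} carry over unchanged, because they use only the residue classes of exponents modulo $2^{\mathcal{T}-r}$ and the fact that multiplying by $x$ and then truncating at $x^{m\mathcal{L}}$ cannot increase weight. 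Together they give $wt(c(x))=\sum_j wt\bigl(c_j(x)\bmod x^{m\mathcal{L}}\bigr)$, and each summand dominates the weight of any $x$-shift of itself.

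For Step III, I fix a $j$ with $c_j(x)\bmod x^{m\mathcal{L}}\neq0$. I then run the parity case analysis on the set of nonzero positions among $a_{1,j},\dots,a_{m(2^r-1),j}$, with the same case split as in Theorem~\ref{theorem12} but with $m$ replaced by $m(2^r-1)$. In every case the polynomial multiplying $V'(x)$ inside the $2^{\mathcal{T}-r}$-th power can be shifted by $x^{(t_1-1)2^{\mathcal{T}-r}}x^{2^{\mathcal{T}-r}-1-j}$ into the form $\ell(x)$ with $\deg(\ell(x))=m(2^r-1)-1$. Eq.~\eqref{eq36} then gives $wt\bigl(c_j\bmod x^{m\mathcal{L}}\bigr)\ge\min\{wt(b):b\in B_{\mathcal{L},\mathcal{T}-r}^{'m}\}$.

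For the reverse inequality I need $B_{\mathcal{L},\mathcal{T}-r}^{'m}\subseteq C^\perp_{2^\mathcal{T}-2^{\mathcal{T}-r}}$, that is, $\ell(x)^{2^{\mathcal{T}-r}}x^{2^{\mathcal{T}-r}-1}$ must be an admissible multiplier $a(x)$. Its degree is $(m(2^r-1)-1)2^{\mathcal{T}-r}+2^{\mathcal{T}-r}-1=m(2^r-1)2^{\mathcal{T}-r}-1$, which lies below the bound, so it is admissible.

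I expect the main obstacle to be checking that nothing in the case analysis secretly relied on $\mathcal{L}=2^\mathcal{T}$. One possible concern is that a component nonzero before truncation could vanish after truncation. This is harmless, because I only choose $j$ among components that stay nonzero, and the shift argument of Step II bounds weights after truncation regardless. The other concern is whether Theorem~\ref{theorem11} applies, namely $1\le j\le\mathcal{L}-1$. This holds because $r\le R$ gives $2^\mathcal{T}-2^{\mathcal{T}-r}\le2^\mathcal{T}-2^{\mathcal{T}-R}<\mathcal{L}$.
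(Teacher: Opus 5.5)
Your proposal is correct and is essentially the paper's argument: the paper omits this proof, saying it follows that of Theorem~\ref{theorem13} (and hence Theorem~\ref{theorem12}) with truncation at $x^{m\mathcal{L}}$. That is exactly what you carry out, including the residue-class decomposition modulo $2^{\mathcal{T}-r}$, the shift-monotonicity of truncated weights, the degree check $\deg\bigl(\ell(x)^{2^{\mathcal{T}-r}}x^{2^{\mathcal{T}-r}-1}\bigr)<m(2^\mathcal{T}-2^{\mathcal{T}-r})$, and the check that Theorem~\ref{theorem11} applies. One detail worth stating explicitly for the upper bound is that each element of $B_{\mathcal{L},\mathcal{T}-r}^{'m}$ is a \emph{nonzero} codeword. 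This holds because $V'(0)=1$, so $a(x)\mapsto V'(x)^{2^{\mathcal{T}-r}}a(x)\bmod x^{m\mathcal{L}}$ is injective on polynomials of degree $<m\mathcal{L}$.
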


Table~\ref{tab6} presents several optimal and almost optimal binary polycyclic codes associated with $P(x)^\mathcal{L}$ together with the parameters of their dual codes. These parameters are verified using Magma \cite{bosma1997magma}.
\begin{table}[htbp]
	\caption{Some optimal and almost optimal Binary Polycyclic codes and their duals }
	\label{tab6}
	\centering
	\begin{tabular}{|c|c|c|c|c|}
		\hline
		$P(x) $ & $ \mathcal{L} $ & $C$& Parameters of $C$ & Parameters of $C^\perp$ \\ \hline
		
		$x^3+x+1$ & $9$ & $\langle P(x) \rangle$ & $[27,24,2]^*$ & $[27,3,15]^*$\\ \hline
		$x^4+x+1$ & $22$ &  $\langle P(x) \rangle$ & $[88,84,2]^*$ & $[88,4,46]^*$\\\hline
		$x^4+x+1$ & $26$ &  $\langle P(x) \rangle$ & $[104,100,2]^*$ & $[104,4,55]^*$\\\hline
		$x^4+x+1$ & $45$ &  $\langle P(x) \rangle$ & $[180,176,2]^*$ & $[180,4,96]^*$\\\hline	
		$x^5+x^2+1$ & $13$ &  $\langle P(x) \rangle$ & $[65,60,2]^*$ & $[65,5,32]^*$\\\hline
		$x^5+x^2+1$ & $19$ &  $\langle P(x) \rangle$ & $[95,90,2]^*$ & $[95,5,48]^*$\\\hline			
		$x^5+x^2+1$ & $5$ & $\langle P(x) \rangle$ & $[25,20,3]^*$ &$[25,5,11]^{o}$\\ \hline
		$x^5+x^3+1$ & $6$ & $\langle P(x) \rangle$ & $[30,25,3]^*$ &  $[30,5,15]^*$\\ \hline
		$x^6+x+1$ & $6$ & $\langle P(x) \rangle$ & $[36,30,3]^*$ & $[36,6,15]^{o}$ \\\hline
		$x^6+x^5+x^3+x^2+1$ & $10$ & $\langle P(x) \rangle$ & $[60,54,3]^*$ & $[60,6,29]^{o}$\\ \hline
		$x^6+x^5+1$ & $11$ & $\langle P(x) \rangle$ & $[66,60,2]^*$ & $[66,6,32]^*$ \\\hline
		
		$x^7+x^6+x^3+x+1$ & $4$ & $\langle P(x) \rangle$ & $[28,21,3]^{o}$ & $[28,7,11]^{o}$\\\hline
		$x^7+x^6+x^5+x^4+x^2+x+1$ & $6$ & $\langle P(x) \rangle$ & $[42,35,3]^{o}$ & $[42,7,18]^{o}$ \\\hline
		$x^7+x^4+1$ & $11$ & $\langle P(x) \rangle$ & $[77,70,3]^*$ & $[77,7,34]$\\\hline
		$x^7+x^6+x^3+x^2+1$ & $13$ & $\langle P(x) \rangle$ &  $[91,84,3]^*$ & $[91,7,44]^*$ \\\hline
		$x^7+x^6+1$ & $18$ & $\langle P(x) \rangle$ & $[126,119,3]^*$ & $[126,7,63]^*$ \\\hline
		$x^7+x^6+1$ & $19$ & $\langle P(x) \rangle$  & $[133,126,2]^*$ & $[133,7,64]^*$\\\hline
		$x^8+x^7+x^2+x+1$ & $3$ & $\langle P(x) \rangle$ & $[24,16,4]^*$ & $[24,8,8]^*$\\\hline
		$x^8+x^6+x^5+x+1$ & $5$ & $\langle P(x) \rangle$ & $[40,32,3]^{o}$ & $[40,8,15]^{o}$\\\hline
		$x^8+x^7+x^6+x^5+x^2+x+1$ & $9$ & $\langle P(x) \rangle$ & $[72,64,3]^{o}$ & $[72,8,31]^{o}$\\\hline
		$x^9+x^8+x^7+x^6+x^5+x^3+1$ & $3$ & $\langle P(x) \rangle$ & $[27,18,4]^*$ & $[27,9,9]^{o}$\\\hline
		$x^9+x^8+x^6+x^5+x^4+x^3+x^2+x+1$ & $6$ & $\langle P(x) \rangle$ & $[54,45,4]^*$ & $[54,9,20]$\\\hline
		$x^{11}+x^{10}+x^5+x^4+1$ & $8$ & $\langle P(x) \rangle$ & $[88,77,4]^{*}$ & $[88,11,39]^{*}$\\\hline
	\end{tabular}
	\vspace{2mm}
	\begin{flushleft}
		\footnotesize{$^*$ denotes an optimal code and $^o$ denotes an almost optimal code.}
	\end{flushleft}
\end{table}

Hence, in this section, we have provided general methods and explicit results for the Hamming distances of the Euclidean duals of binary polycyclic codes, together with examples of several optimal and almost optimal codes.
\section{LCD Property of Binary Polycyclic Codes}\label{sec5}
A linear code $\mathcal{C}$ over the finite field $\mathbb{F}_q$ is said to be an LCD code if $\mathcal{C} \cap \mathcal{C}^\perp=\{0\}$. Let $LCD(n,k)$ denote the maximum possible Hamming distance of  a binary LCD code with length $n$ and dimension $k$. A binary LCD code with parameters $[n,k, LCD(n,k)]$ is said to be an LCD optimal code. 
Throughout this section, all notations remain the same as introduced earlier.
\subsection{Condition for a code to be LCD and families of binary LCD codes}
 Let $C_j=\langle P(x)^j\rangle$ be a binary polycyclic code associated with $P(x)^\mathcal{L}$, where $1\leq j \leq \mathcal{L}-1$, $\mathcal{L}\geq2.$ If $Ord(P(x))=e$ and $U(x)\in \mathbb{F}_2[x]$ such that $P(x)U(x)=x^{e}+1.$ Then a general codeword of $C_j$ has the form $P(x)^j \gamma(x)$, where $\gamma(x)\in \mathbb{F}_2[x]$ with $\deg(\gamma(x))< m(\mathcal{L}-j)$, while a general element of $C_j^\perp$ can be written as $(x^{e}+1)^{2^\mathcal{T}-j}U^*(x)^j \delta(x)\bmod (x^{m\mathcal{L}})$, where $\delta(x)\in \mathbb{F}_2[x]$ and $\deg(\delta(x))< mj.$ In this section, we establish necessary and sufficient condition for $C_j$ to be an LCD code.
\begin{theorem}\label{theorem15}
	For $1\leq j \leq 2^{\mathcal{T}-1}$, the code $C_j$ is an LCD code if and only if for every non-zero $\delta(x)\in \mathbb{F}_2[x]$ with $\deg(\delta(x))< mj$, we have 
	\begin{equation*}
		\deg\left( (x^{e}+1)^{2^\mathcal{T}-2j} (U(x)U^*(x))^j \delta(x) \bmod (x^{m\mathcal{L}}) \right) \geq m (\mathcal{L}-j).
	\end{equation*}
\end{theorem}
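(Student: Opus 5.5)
The plan is to prove the contrapositive form: $C_j\cap C_j^\perp\neq\{0\}$ if and only if some non-zero $\delta(x)$ with $\deg(\delta(x))<mj$ violates the degree condition. We work throughout with the standard representatives, i.e. polynomials of degree $<m\mathcal{L}$.

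The first step is to factor the generator of $C_j^\perp$ given by Theorem~\ref{theorem11}. Since $j\le 2^{\mathcal{T}-1}$, we have $2^{\mathcal{T}}-2j\ge 0$. Using $x^e+1=P(x)U(x)$,
\begin{equation*}
(x^{e}+1)^{2^{\mathcal{T}}-j}U^*(x)^j=(x^{e}+1)^{2^{\mathcal{T}}-2j}\,(x^e+1)^j\,U^*(x)^j=P(x)^j\,(x^{e}+1)^{2^{\mathcal{T}}-2j}\big(U(x)U^*(x)\big)^j .
\end{equation*}
Set $F(x)=(x^{e}+1)^{2^{\mathcal{T}}-2j}(U(x)U^*(x))^j\delta(x)$ and $F_0(x)=F(x)\bmod (x^{m\mathcal{L}})$. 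Then a general element of $C_j^\perp$ is
\begin{equation*}
w(x)=P(x)^jF(x)\bmod (x^{m\mathcal{L}})=P(x)^jF_0(x)\bmod (x^{m\mathcal{L}}),
\end{equation*}
because $P(x)^j\big(F(x)-F_0(x)\big)$ is divisible by $x^{m\mathcal{L}}$. We also note that the map $\delta\mapsto w$ is injective on the $2^{mj}$ polynomials $\delta$ of degree $<mj$. This follows because, by Theorem~\ref{theorem11} and the fact that $\dim C_j^\perp=mj$ (Theorem~\ref{theorem1}), this parametrisation has exactly $|C_j^\perp|$ inputs and covers $C_j^\perp$. Hence $\delta\neq 0$ implies $w\neq 0$.

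Sufficiency of a small degree for a non-trivial intersection: suppose $\deg F_0<m(\mathcal{L}-j)$ for some non-zero $\delta$. Then $\deg\big(P(x)^jF_0(x)\big)<m\mathcal{L}$, so no reduction occurs and $w(x)=P(x)^jF_0(x)$. This is a multiple of $P(x)^j$ of degree $<m\mathcal{L}$, hence lies in $C_j$. It is non-zero by injectivity, so $C_j$ is not LCD.

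Conversely, suppose $0\ne w\in C_j\cap C_j^\perp$ and write $w=P(x)^j\gamma(x)$ with $\deg\gamma<m(\mathcal{L}-j)$. Comparing with $w\equiv P(x)^jF_0(x)\pmod{x^{m\mathcal{L}}}$ gives $x^{m\mathcal{L}}\mid P(x)^j\big(F_0(x)-\gamma(x)\big)$. Since $\gcd(P(x),x)=1$ (the constant term of $P$ is $1$), it follows that $x^{m\mathcal{L}}\mid F_0-\gamma$. Both $F_0$ and $\gamma$ have degree $<m\mathcal{L}$, so $F_0=\gamma$, and therefore $\deg F_0<m(\mathcal{L}-j)$. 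The corresponding $\delta$ is non-zero because $w\neq0$. Combining the two directions yields the stated equivalence; the case $F_0=0$ (degree $-1$) is covered automatically by the convention on the zero polynomial.

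The main obstacle I expect is justifying that a non-zero $\delta$ gives a non-zero $w$, i.e. the injectivity of the parametrisation in Theorem~\ref{theorem11}. I would settle this by the cardinality count above, using that Theorem~\ref{theorem11} describes all of $C_j^\perp$ with exactly $2^{mj}$ choices of $\delta$. The remaining steps are the routine bookkeeping of truncation modulo $x^{m\mathcal{L}}$.
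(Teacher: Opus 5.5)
Your proof is correct and follows the same route as the paper. You factor $P(x)^j$ out of the generator of $C_j^\perp$ from Theorem~\ref{theorem11}, cancel it modulo $x^{m\mathcal{L}}$ using $\gcd(P(x),x)=1$, and identify $\gamma(x)$ with the truncated product; your explicit check that $\delta\neq 0$ yields a nonzero dual codeword, which the paper's converse step uses tacitly, is a welcome addition. That check also follows directly without the cardinality count: $(x^e+1)^{2^{\mathcal{T}}-j}U^*(x)^j$ has constant term $1$, so it is invertible modulo $x^{m\mathcal{L}}$.
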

\begin{proof}
	Suppose $C_j$ is not an LCD code. Then there exists non-zero polynomials $\gamma(x), \delta(x)\in \mathbb{F}_2[x]$ with $\deg(\gamma(x))< m(\mathcal{L}-j)$ and $\deg(\delta(x))< mj$ such that \allowdisplaybreaks
	\begin{align}\label{eq45}
		\nonumber		P(x)^j \gamma(x) \equiv & (x^{e}+1)^{2^\mathcal{T}-j}U^*(x)^j \delta(x)\mod (x^{m\mathcal{L}})\\ \equiv
		& P(x)^{2^\mathcal{T}-j} U(x)^{2^\mathcal{T}-j} U^*(x)^j \delta(x)\mod (x^{m\mathcal{L}}).
	\end{align}
	Equivalently, 
	\begin{equation*}
		P(x)^j \left(\gamma(x)+P(x)^{2^\mathcal{T}-2j}U(x)^{2^\mathcal{T}-j}U^*(x)^j \delta(x) \right) \equiv 0 \mod (x^{m\mathcal{L}}).
	\end{equation*}
	Since $\gcd \left(x, P(x)\right)=1 $ in $\mathbb{F}_2[x]$, it follows that $\gcd \left(x^{m\mathcal{L}}, P(x)\right)=1.$ Therefore, \allowdisplaybreaks
	\begin{align}\label{eq46}
		\nonumber	\gamma(x) \equiv &  P(x)^{2^\mathcal{T}-2j}U(x)^{2^\mathcal{T}-j}U^*(x)^j \delta(x) \mod (x^{m\mathcal{L}})\\
		\equiv & (x^{e}+1)^{2^\mathcal{T}-2j} (U(x)U^*(x))^j \delta(x) \mod (x^{m\mathcal{L}}).
	\end{align} Since $\gamma(x)$ is non-zero and $\deg (\gamma(x))< m (\mathcal{L}-j)$, it follows that 
	\begin{equation}\label{eq47}
		\deg\left( (x^{e}+1)^{2^\mathcal{T}-2j} (U(x)U^*(x))^j \delta(x) \bmod (x^{m\mathcal{L}}) \right) < m (\mathcal{L}-j).
	\end{equation}
	Conversely, suppose there exists a non-zero $\delta(x)\in \mathbb{F}_2[x]$ with $\deg(\delta(x))< mj$ such that $\deg\Big( (x^{e}+1)^{2^\mathcal{T}-2j} (U(x)U^*(x))^j \delta(x)\\ \bmod (x^{m\mathcal{L}}) \Big) < m (\mathcal{L}-j)$. If we take $\gamma(x)=(x^{e}+1)^{2^\mathcal{T}-2j} (U(x)U^*(x))^j \delta(x) \bmod (x^{m\mathcal{L}})$, then $\gamma(x)$ satisfies the congruence~\eqref{eq45}. Hence, $C_j$ is not an LCD code.
	
	Therefore, $C_j$ is not an LCD code if and only if there exists non-zero $\delta(x)$ with $\deg(\delta(x))<mj$ satisfying~\eqref{eq47}. Equivalently, $C_j$ is an LCD code if and only if for every non-zero $\delta(x)$ with $\deg (\delta(x)) < mj$, we have \begin{equation*}
		\deg\left( (x^{e}+1)^{2^\mathcal{T}-2j} (U(x)U^*(x))^j \delta(x) \bmod (x^{m\mathcal{L}}) \right) \geq m (\mathcal{L}-j).
	\end{equation*} This completes the proof.
\end{proof} 
For the remaining case $2^{\mathcal{T}-1}< j < \mathcal{L}$, in a similar fashion, we obtain the following criterion.
\begin{theorem}\label{theorem16}
	For $2^{\mathcal{T}-1}< j < \mathcal{L}$, the code $C_j$ is an LCD code if and only if there does not exist non-zero polynomials $\delta(x), \gamma(x)\in \mathbb{F}_2[x]$ satisfying $\deg(\delta(x))< mj$, $\deg(\gamma(x))< m (\mathcal{L}-j)$, such that 
	\begin{equation*}
		P(x)^{2j-2^\mathcal{T}} \gamma(x) \equiv U(x)^{2^\mathcal{T}-j} U^*(x)^j \delta(x) \mod (x^{m\mathcal{L}}). 
	\end{equation*}
\end{theorem}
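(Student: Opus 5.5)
The proof will follow Theorem~\ref{theorem15} closely: we write down what a nonzero vector in $C_j\cap C_j^\perp$ looks like and cancel the common factor of $P(x)$. The difference from Theorem~\ref{theorem15} is which side keeps $P(x)$ after cancellation. When $j>2^{\mathcal{T}-1}$ we have $2^\mathcal{T}-j<j$. So the dual generator $(x^{e}+1)^{2^\mathcal{T}-j}U^*(x)^j$ carries fewer copies of $P(x)$ than the code generator, and we cancel $P(x)^{2^\mathcal{T}-j}$ rather than $P(x)^j$.

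\emph{Step 1 (setup).} By Theorem~\ref{theorem1}, a codeword of $C_j$ has the form $P(x)^j\gamma(x)$ with $\deg(\gamma(x))<m(\mathcal{L}-j)$. By Theorem~\ref{theorem11}, a codeword of $C_j^\perp$ has the form $(x^{e}+1)^{2^\mathcal{T}-j}U^*(x)^j\delta(x)\bmod (x^{m\mathcal{L}})$ with $\deg(\delta(x))<mj$. Since $P(x)U(x)=x^{e}+1$, the second expression can be rewritten as
\[
P(x)^{2^\mathcal{T}-j}U(x)^{2^\mathcal{T}-j}U^*(x)^j\delta(x)\bmod (x^{m\mathcal{L}}).
\]
Both codewords have degree less than $m\mathcal{L}$, so they are equal as vectors exactly when they are congruent modulo $x^{m\mathcal{L}}$. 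Hence $C_j$ fails to be LCD if and only if there are $\gamma(x)$ and $\delta(x)$ in the stated degree ranges, not both giving zero, with
\[
P(x)^j\gamma(x)\equiv P(x)^{2^\mathcal{T}-j}U(x)^{2^\mathcal{T}-j}U^*(x)^j\delta(x) \pmod{x^{m\mathcal{L}}}.
\]

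\emph{Step 2 (cancellation).} Since $P(0)=1$, $\gcd(P(x),x^{m\mathcal{L}})=1$, so $P(x)$ is invertible modulo $x^{m\mathcal{L}}$. Writing $P(x)^j=P(x)^{2^\mathcal{T}-j}P(x)^{2j-2^\mathcal{T}}$, where $2j-2^\mathcal{T}>0$, we cancel $P(x)^{2^\mathcal{T}-j}$ from both sides. This shows the congruence above is equivalent to
\[
P(x)^{2j-2^\mathcal{T}}\gamma(x)\equiv U(x)^{2^\mathcal{T}-j}U^*(x)^j\delta(x)\pmod{x^{m\mathcal{L}}}.
\]
Both directions hold: the converse follows by multiplying back by $P(x)^{2^\mathcal{T}-j}$ and reversing the steps.

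\emph{Step 3 (nonzero on both sides).} We must check that a nonzero element of the intersection forces both $\gamma(x)$ and $\delta(x)$ to be nonzero, and conversely. If $\gamma(x)\ne 0$, then $P(x)^j\gamma(x)$ is a nonzero codeword: its degree is less than $m\mathcal{L}$ and the product is nonzero in $\mathbb{F}_2[x]$. The common vector is then nonzero, so the dual-side expression is nonzero, which forces $\delta(x)\neq0$. Conversely, suppose $\delta(x),\gamma(x)$ are nonzero and satisfy the reduced congruence. Then $P(x)^j\gamma(x)$ is a nonzero vector lying in both $C_j$ and $C_j^\perp$, so $C_j$ is not LCD. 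Negating this equivalence gives the statement of Theorem~\ref{theorem16}.

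The only point needing care is Step 2. One has to note that the exponent $2j-2^\mathcal{T}$ is positive precisely because $j>2^{\mathcal{T}-1}$, and that invertibility of $P(x)$ modulo $x^{m\mathcal{L}}$ is what justifies cancelling $P(x)^{2^\mathcal{T}-j}$ from both sides. Unlike Theorem~\ref{theorem15}, we cannot solve for $\gamma(x)$ explicitly as a reduced polynomial, because $P(x)^{2j-2^\mathcal{T}}$ remains on the left-hand side. That is why the criterion is stated as the nonexistence of a pair $(\gamma,\delta)$ rather than as a degree condition.
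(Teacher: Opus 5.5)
Your proof is correct and follows the paper's intended route: it is the proof of Theorem~\ref{theorem15} adapted to $j>2^{\mathcal{T}-1}$. You rewrite the dual generator using $P(x)U(x)=x^{e}+1$, cancel the common factor $P(x)^{2^\mathcal{T}-j}$ using $\gcd(P(x),x^{m\mathcal{L}})=1$, and your Step~3 correctly checks that $\gamma(x)$ and $\delta(x)$ are both nonzero. One small quibble with your closing remark: $P(x)$ is invertible modulo $x^{m\mathcal{L}}$, so you \emph{could} solve for $\gamma(x)$ by multiplying by the inverse of $P(x)^{2j-2^\mathcal{T}}$ modulo $x^{m\mathcal{L}}$, which gives a degree criterion like Theorem~\ref{theorem15}; the pair formulation is more convenient, not forced.
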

The following example illustrates Theorem~\ref{theorem15}.
\begin{example}\label{ex6}
	Consider the code $C_1=\langle x^3+x+1 \rangle$ associated with $(x^3+x+1)^8$. Here, $m=3$, $j=1$, $\mathcal{L}=8$, $\mathcal{T}=3$, and $Ord(x^3+x+1)=7$. Hence, $U(x)= x^4+x^2+x+1$, $U^*(x)= x^4+x^3+x^2+1$, implying $U(x)U^*(x)= (x^7+1)(x+1)$. By Theorem~\ref{theorem15}, $C_1$ is an LCD code if and only if for every non-zero $\delta(x)\in \mathbb{F}_2[x]$ satisfying $\deg (\delta(x))<3$,  we have
	\begin{equation*}
	\deg\left( (x^7+1)^{6} (x^7+1)(x+1) \delta(x) \ \bmod (x^{24})\right) \geq 21.
	\end{equation*}
Now, 
\begin{align*}
	(x^7+1)^{6} (x^7+1)(x+1) \delta(x) \ \bmod (x^{24}) \equiv{}& (x^{28}+1) (x^{14}+1) (x^7+1)(x+1) \delta(x) \ \bmod (x^{24})\\
	\equiv{}& (x^{14}+1) (x^7+1)(x+1) \delta(x) \ \bmod (x^{24})\\
	\equiv{}& \Big((x+1) \delta(x)+x^7 (x+1) \delta(x)+x^{14} (x+1) \delta(x)\Big)+x^{21}(x+1) \delta(x) \ \bmod (x^{24})
\end{align*}
Since for every such $\delta(x)$, $\deg \big(x^{21}(x+1) \delta(x)  \bmod (x^{24})\big) \geq21, $ it follows that $C_1$ is an LCD code. Moreover, by Corollary~\ref{corollary1} and Theorem~\ref{theorem12}, we obtain $d_1=2$ and $d_1^\perp=13.$ Therefore, the parameters of $C_1$ and $C_1^\perp$ are $[24, 21, 2]$ and $[24,3,13]$, respectively.  Both codes are optimal by \cite{grassl_codetables}.
\end{example}
\subsection{Bounds on $LCD(n,k)$ and LCD optimal codes }

In this subsection, we summarize several known results concerning the maximum possible Hamming distance $LCD(n,k)$ of binary LCD codes.
The exact values of $LCD(n,1)$ and $LCD(n,n-1)$ were determined in \cite{dougherty2017combinatorics}. For dimension $k=2$, the values of $LCD(n,2)$ for all $n \geq 2$ were completely characterized in \cite{galvez2018bounds}. In the same work, the authors also computed the exact values of $LCD(n,k)$ for $1 \leq k \leq n \leq 12$ (see Table~1 of \cite{galvez2018bounds}) and established that $LCD(n,n-i)=2$ for all $i \geq 2$ and $n \geq 2^i$. For dimension $k=3$, the exact values of $LCD(n,3)$ for all $n \geq 3$ were determined in \cite{harada2019binary}. For $k=4$, partial results were first obtained in \cite{araya2020minimum}, and later a complete characterization of $LCD(n,4)$ for all $n \geq 4$ was given in \cite{araya2021characterization}. For  $k=5$, bounds and exact values for specific lengths were studied in \cite{araya2020minimum}, while the exact value of $LCD(n,5)$ for all $n \geq 14$ were determined in \cite{liu2024minimum}.

Several results are also available for LCD codes of large dimension. The exact values of $LCD(n,n-5)$ for all $n \geq 6$ were obtained in \cite{araya2021minimum}. Furthermore, in \cite{araya2024characterizations}, the authors determined $LCD(n,n-6)$ for all $n \geq 7$, and $LCD(n,n-7)$ for all $n \geq 8$, except for $19$ exceptional values of $n$. For these remaining cases, the following bound was established:
\begin{equation*}
	3 \leq LCD(n,n-7) \leq 4 \quad \text{for } n \in \{34,36,38,40,42,44,46,48,50,52,54,56,58,59,60,61,62,63,64\}.
\end{equation*}
In addition, bounds for $LCD(n,k)$ with $k \leq 32$ and $n \leq 40$ were investigated in \cite{bouyuklieva2021optimal}. These bounds were later improved in \cite{ishizuka2023construction} for $26 \leq n \leq 40$, and subsequently refined in \cite{li2024several} for $29 \leq n \leq 40$. More recently, \cite{wang2024new} improved several known bounds for $38 \leq n \leq 40$ and $9 \leq k \leq 15$, and also obtained new bounds for $41 \leq n \leq 50$ and $6 \leq k \leq n-6$.

In Table~\ref{tab7}, we present several optimal and LCD optimal binary LCD polycyclic codes $C=\langle P(x) \rangle$ generated by a binary irreducible polynomial $P(x)$  and  associated with $P(x)^\mathcal{L}$ for $\mathcal{L}\geq2$. The parameters of these codes, along with  their duals, have been verified using Magma \cite{bosma1997magma}.
\begin{table}[htbp]
	\caption{Some optimal and LCD optimal binary LCD polycyclic codes $\langle P(x) \rangle$ and their duals }
	\label{tab7}
	\centering
	\begin{tabular}{|c|c|c|c|c|}
		\hline
		$P(x) $ & $ \mathcal{L} $ &  Parameters of $C$ & Parameters of $C^\perp$ & Reference  \\ \hline
		$x^3+x+1$ & $6$ &  $[18,15,2]^{*}$ & $[18,3,9]^\bullet$  &\cite{dougherty2017combinatorics}
		\\\hline
		$x^3+x+1$ & $8$ & $[24,21,2]^*$ & $[24,3,13]^*$& \cite{grassl_codetables}\\ \hline
		$x^3+x+1$ & $13$ & $[39,36,2]^*$ & $[39,3,21]^\bullet$ & \cite{dougherty2017combinatorics}\\ \hline		
		$x^3+x+1$ & $15$ & $[45,42,2]^*$ & $[45,3,25]^*$ & \cite{grassl_codetables} \\\hline
		$x^4+x+1$ & $16$  & $[64,60,2]^*$ & $[64,4,33]^*$ & \cite{grassl_codetables}\\ \hline	
		$x^4+x+1$ & $17$  & $[68,64,2]^*$ & $[68,4,35]^*$&\cite{grassl_codetables}\\ \hline	
		$x^4+x+1$ & $31$  & $[124,120,2]^*$ & $[124,4,65]^*$&\cite{grassl_codetables}\\ \hline
		$x^4+x+1$ & $40$  & $[160,156,2]^*$ & $[160,4,84]^*$&\cite{grassl_codetables} \\ \hline
		$x^4+x+1$ & $46$  & $[184,180,2]^*$ & $[184,4,97]^*$&\cite{grassl_codetables} \\ \hline						
		$x^5+x^2+1$ & $3$  & $[15,10,3]^\bullet$&$[15,5,5]$& \cite{araya2021minimum} \\ \hline
		$x^5+x^2+1$ & $5$  & $[25,20,3]^*$ &$[25,5,11]^{\bullet}$  & \cite{bouyuklieva2021optimal} \\ \hline
		$x^5+x^4+x^2+x+1$ & $8$  & $[40,35,2]^*$ &  $[40,5,19]^\bullet$& \cite{bouyuklieva2021optimal}\\ \hline	
		$x^5+x^2+1$ & $9$ & $[45,40,2]^*$ &$[45,5,21]^{\bullet}$& \cite{liu2024minimum} \\ \hline	
		$x^5+x^2+1$ & $11$  & $[55,50,2]^*$ &$[55,5,27]^{\bullet}$ &\cite{liu2024minimum} \\ \hline		
		$x^5+x^4+x^2+x+1$ & $15$  & $[75,70,2]^*$ &  $[75,5,37]^\bullet$ & \cite{liu2024minimum}\\ \hline
		$x^5+x^2+1$ & $32$  & $[160,155,2]^*$ &  $[160,5,81]^*$ & \cite{grassl_codetables} \\ \hline
		$x^6+x^5+x^4+x+1$ & $5$ &  $[30,24,3]^\bullet$ & $[30,6,12]$ & \cite{araya2024characterizations} \\ \hline
		$x^6+x^5+x^3+x^2+1$ & $7$  & $[42,36,3]^*$ & $[42,6,18]$& \cite{grassl_codetables}\\ \hline
		$x^6+x^5+1$ & $9$  & $[54,48,3]^*$ & $[54,6,25]^o$ & \cite{grassl_codetables} \\\hline
		
		$x^6+x^5+x^3+x^2+1$ & $12$  & $[72,66,2]^*$ & $[72,6,34]^{o}$ & \cite{grassl_codetables}\\ \hline
		$x^6+x^5+x^3+x^2+1$ & $17$ & $[102,96,2]^*$ & $[102,6,49]^{o}$& \cite{grassl_codetables}\\ \hline
		$x^6+x^5+x^3+x^2+1$ & $19$ &  $[114,108,2]^*$ & $[114,6,55]^{o}$ & \cite{grassl_codetables} \\ \hline	
		$x^6+x^5+1$ & $22$ & $[132,126,2]^*$  & $[132,6,65]^*$& \cite{grassl_codetables}\\\hline
		$x^7+x^6+x^3+x+1$ & $10$  & $[70,63,3]^{*}$ & $[70,7,30]$& \cite{grassl_codetables}\\\hline
		$x^7+x^4+1$ & $11$ &  $[77,70,3]^*$ & $[77,7,34]$& \cite{grassl_codetables}\\\hline		
		$x^7+x^6+x^5+x^4+1$ & $16$  & $[112,105,3]^{*}$ & $[112,7,52]$ & \cite{grassl_codetables} \\\hline
		$x^8+x^7+x^2+x+1$ & $3$ & $[24,16,4]^*$ & $[24,8,8]^*$ & \cite{grassl_codetables}\\\hline
		$x^8+x^7+x^2+x+1$ & $7$  & $[56,48,3]^o$ & $[56,8,23]^o$  & \cite{grassl_codetables}\\\hline
		$x^9+x^7+x^2+x+1$ & $2$ & $[18,9,5]^{\bullet}$ & $[18,9,4]$ &\cite{bouyuklieva2021optimal}\\\hline
		$x^9+x^8+x^7+x^6+x^5+x^3+1$ & $3$ &  $[27,18,4]^*$ & $[27,9,9]^{\bullet}$& \cite{bouyuklieva2021optimal}\\\hline
		$x^9+x^6+x^4+x^3+1$ & $4$ & $[36,27,4]^*$ & $[36,9,12]$& \cite{grassl_codetables} \\\hline
		$x^{10}+x^6+x^2+x+1$ & $5$  & $[50,40,4]^*$ & $[50,10,16]$&\cite{grassl_codetables}\\\hline
		$x^{10}+x^9+x^8+x^7+x^5+x^4+1$ & $6$ & $[60,50,3]^{o}$ & $[60,10,22]$& \cite{grassl_codetables}\\\hline
		$x^{11}+x^{10}+x^8+x^6+1$ & $5$ & $[55,44,4]^{bklc}$ & $[55,11,19]$&\cite{grassl_codetables}\\\hline
		$x^{11}+x^{10}+x^5+x^4+1$ & $7$  & $[77,66,4]^{*}$ & $[77,11,30]$&\cite{grassl_codetables}\\\hline
		$x^{11}+x^{10}+x^5+x^4+1$ & $8$  & $[88,77,4]^{*}$ & $[88,11,39]^{*}$&\cite{grassl_codetables}\\\hline
		$x^{12}+x^{11}+x^9+x^7+x^6+x^4+1$ & $5$ & $[60,48,4]^{o}$ & $[60,12,20]$& \cite{grassl_codetables}	\\\hline
		$x^{13}+x^{12}+x^{10}+x^8+x^6+x^4+x^3+x^2+1$ & $4$ & $[52,39,5]^{o} $ & $[52,13,15]$&\cite{grassl_codetables}\\\hline
		$x^{14}+x^{13}+x^{11}+x^6+x^5+x^4+x^2+x+1$ & $4$ & $[56,42,5]^{o}$ & $[56,14,17]$&\cite{grassl_codetables}\\\hline
		$x^{15}+x^7+x^6+x^3+x^2+x+1$ & $4$ & $[60,45,5]^{o}$ & $[60,15,15]$& \cite{grassl_codetables} \\\hline
		$x^{17}+x^8+x^7+x^6+x^4+x^3+1$ & $2$ & $[34,17,7]^{o}$ & $[34,17,5]$&\cite{grassl_codetables}\\\hline
	\end{tabular}
	\vspace{2mm}
	\begin{flushleft}
		\footnotesize{$^*$ denotes an optimal code, $^{o}$ denotes an almost optimal code, $^{bklc}$ denotes a best-known linear code  and $^{\bullet}$ denotes an LCD optimal code.}
	\end{flushleft}
\end{table}

\section{Polycyclic codes associated with powers of self-reciprocal trinomials}\label{sec6}
It is well known from \cite{lidl1997finite} that the trinomial $x^{2n}+x^n+1$ is irreducible over $\mathbb{F}_2$ if and only if $n=3^v$ for some integer $v \geq0$. Equivalently, every self-reciprocal binary irreducible trinomial is of the form $ x^{2\cdot3^v}+x^{3^v}+1$, where $v \geq0.$ Motivated by this characterization, we study polycyclic codes associated with powers of such trinomials. Since these polynomials are self-reciprocal, the corresponding codes possess reversibility properties. This additional structure also allows us to derive explicit formulas for their Hamming distance.

From the general distance results obtained in Section~\ref{sec3}, for an arbitrary binary irreducible polynomial $P(x)$, the Hamming distance of the code $C_j= \langle P(x)^j \rangle$ depends on the Hamming weights of certain polynomial sets such as
$S_{\mathcal{T},s}^n$ and $S_{\mathcal{T},r}^{'n}$.   Therefore, for the family of trinomials considered here, these sets can be analyzed explicitly. This reduces the problem to studying the binary expansion of $\big(x^{2n}+x^n+1\big)^{2^r-1}$, where $r\geq2$ and $n\geq1$. Since this expression is obtained from $\left(x^2+x+1\right)^{2^r-1}$ by replacing $x$ with $x^n$, it is sufficient to compute the expansion of $\left(x^2+x+1\right)^{2^r-1}$.
\begin{lemma}\label{lemma1}
	For $r\geq 2$,	the binary expansion of $\left(x^2+x+1\right)^{2^r-1}$  is given by
	\begin{equation*}
		\left(x^2+x+1\right)^{2^r-1}	= \begin{cases}
			\displaystyle
			\sum_{j=0}^{\frac{2^r-1}{3}-1} \left(x^{3j}+x^{3j+1}\right)+x^{2^r-1}+ \sum_{j=\frac{2^r-1}{3}}^{\frac{2(2^r-1)}{3}-1} \left(x^{3j+2}+x^{3j+3}\right), & \text{if } r \text{ is even},\\
			\displaystyle	\sum_{j=0}^{\frac{2^r-2}{3}-1} \left(x^{3j}+x^{3j+1}\right)+\big(x^{2^r-2}+x^{2^r-1}+x^{2^r}\big)+ \sum_{j=\frac{2^r+1}{3}}^{\frac{2(2^r-2)}{3}} \left(x^{3j+1}+x^{3j+2}\right), &\text{if } r  \text{ is odd}.
		\end{cases}
	\end{equation*}
\end{lemma}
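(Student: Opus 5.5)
The plan is to avoid expanding the power directly and instead use the factorization
\begin{equation*}
\left(x^2+x+1\right)^{2^r-1}=\frac{\left(x^2+x+1\right)^{2^r}}{x^2+x+1}=\frac{x^{2^{r+1}}+x^{2^r}+1}{x^2+x+1},
\end{equation*}
which holds over $\mathbb{F}_2$ by the Frobenius identity used repeatedly earlier (as in the proof of Theorem~\ref{theorem2}). Next, multiplying numerator and denominator by $1+x$ and using $(1+x)(x^2+x+1)=1+x^3$, we obtain
\begin{equation*}
\left(x^2+x+1\right)^{2^r-1}=\frac{(1+x)\big(1+x^{2^r}+x^{2^{r+1}}\big)}{1+x^3}.
\end{equation*}
The quotient is an honest polynomial, so we may compute it as a power series: $1/(1+x^3)=\sum_{k\ge0}x^{3k}$, and all terms of degree exceeding $2^{r+1}-2$ cancel automatically.

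The main step is to write $Q(x)=(1+x)\sum_{k\ge0}x^{3k}=\sum_{k\ge0}(x^{3k}+x^{3k+1})$, so that the answer is $Q(x)+x^{2^r}Q(x)+x^{2^{r+1}}Q(x)$ truncated to degree $\le 2^{r+1}-2$. The residues of $2^r$ and $2^{r+1}$ modulo $3$ then split the argument into the two cases. If $r$ is even, then $2^r\equiv1$ and $2^{r+1}\equiv2 \pmod 3$, so $x^{2^r}Q$ is supported on residues $\{1,2\}$ starting at $2^r$, while $x^{2^{r+1}}Q$ is supported on residues $\{2,0\}$ starting at $2^{r+1}$. If $r$ is odd, the residues swap, $2^r\equiv2$ and $2^{r+1}\equiv1 \pmod 3$.

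In each case I would compute the sum coefficient by coefficient on three ranges. For exponents below $2^r$ only $Q$ contributes, giving the pairs $x^{3j}+x^{3j+1}$. From $2^r$ up to $2^{r+1}-1$, the sum $Q+x^{2^r}Q$ is a periodic pattern modulo $3$. For example, when $r$ is even, residue $1$ appears in both terms and cancels, which leaves residues $0$ and $2$, that is, pairs $x^{3j+2}+x^{3j+3}$. From $2^{r+1}$ onward all three terms cancel, which is consistent with divisibility. Near the boundary $2^r$ a few terms must be checked individually: the isolated $x^{2^r-1}$ in the even case, and the block $x^{2^r-2}+x^{2^r-1}+x^{2^r}$ in the odd case. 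After that, it remains to match the index ranges. In the even case the first sum runs up to $j=(2^r-1)/3-1$ and the second sum stops at the top degree $2^{r+1}-2=3j+3$ with $j=2(2^r-1)/3-1$. The odd case is handled analogously.

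The expected obstacle is purely bookkeeping: getting the boundary terms around $x^{2^r}$ right and checking that the endpoints of the summation indices agree exactly with the stated formula. I would verify these against the small cases $r=2$, where $(x^2+x+1)^3=1+x+x^3+x^5+x^6$, and $r=3$, before writing out the general index computation.
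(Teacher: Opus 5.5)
Your proposal is correct, and it reaches the formula by a somewhat different route than the paper.

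The paper writes $(x^2+x+1)^{2^r-1}=\sum_i b_ix^i$ and compares coefficients in $\bigl(\sum_i b_ix^i\bigr)(1+x+x^2)=1+x^{2^r}+x^{2^{r+1}}$. It solves the resulting recurrence only for $b_0,\ldots,b_{2^r-1}$, finding the repeating block $1,1,0$, and then fills in the upper half by self-reciprocity, $b_{2^r+l}=b_{2^r-2-l}$.

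You instead invert $1+x+x^2$ in closed form as $(1+x)/(1+x^3)=\sum_{k\ge 0}(x^{3k}+x^{3k+1})$. You then read off every coefficient, including the upper half, from the residues of $2^r$ and $2^{r+1}$ modulo $3$. Your residue claims are right: for $r$ even the surviving residues on $[2^r,2^{r+1})$ are $0,2$, for $r$ odd they are $1,2$, and all three shifted copies cancel from $2^{r+1}$ on.

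Both arguments compute the same power series $(1+x^{2^r}+x^{2^{r+1}})/(1+x+x^2)$, so the difference lies in the bookkeeping.
\begin{itemize}
\item Your closed form proves the periodic pattern outright. The paper only states that it continues, which is an implicit induction on $b_n=b_{n-1}+b_{n-2}$.
\item Your computation also yields the vanishing above degree $2^{r+1}-2$ as a built-in consistency check.
\item It adapts immediately to Eq.~\eqref{eq51}. Replacing $1+x$ by $(1+x)^2=1+x^2$ gives $(1+x)(x^2+x+1)^{2^r-1}=(1+x^2)(1+x^{2^r}+x^{2^{r+1}})/(1+x^3)$, which the same residue count expands directly after substituting $x\mapsto x^{3^v}$.
\end{itemize}
In exchange, the paper's use of the palindromic symmetry halves the work. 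Only the lower half has to be computed, and the boundary block around $x^{2^r}$ comes for free by reflection rather than from tracking the overlap of $Q$ and $x^{2^r}Q$.
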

\begin{proof}
		Let 
	\begin{equation*}
		\left(x^2+x+1\right)^{2^r-1}= b_0+b_1x+b_2x^2+\ldots+b_{2\left(2^r-1\right)} x^{2\left(2^r-1\right)}, \quad b_i \in \mathbb{F}_2.
	\end{equation*} 
	Then
	\begin{equation}\label{eq48}
		\left(  b_0+b_1x+b_2x^2+\ldots+b_{2\left(2^r-1\right)} x^{2\left(2^r-1\right)} \right) \left(1+x+x^2\right)= 1+x^{2^r}+x^{2^{r+1}}.
	\end{equation}
	By comparing coefficients of identical powers of $x$ in Eq.~\eqref{eq48}, we obtain the following system of equations:\allowdisplaybreaks
	\begin{align}\label{eq49}
		\nonumber	b_0=1,\quad b_0 + b_1=0,\quad b_0 + b_1 + b_2=0,\quad b_1 + b_2 + b_3=0,\quad 	b_2 + b_3 + b_4=0,&\ldots,\\
		b_{2^r-3} + b_{2^r-2} + b_{2^r-1}=0,\quad 	b_{2^r-2} + b_{2^r-1} + b_{2^r}=1,\quad  	b_{2^r-1} + b_{2^r} + b_{2^r+1}=0,\quad 	b_{2^r} + b_{2^r+1} + b_{2^r+2}=0,&\ldots,\\
		\nonumber	b_{2^r+2^r-4} + b_{2^r+2^r-3} + b_{2^r+2^r-2} =0,\quad 	b_{2^r+2^r-3} + b_{2^r+2^r-2}=0,\quad b_{2^r+2^r-2}=1.&
	\end{align}
Since $x^2+x+1$ is a self-reciprocal polynomial, thus $\left(x^2+x+1\right)^{2^r-1}$ is also self-reciprocal. Therefore,
\begin{equation*}
	b_i= b_{2\left(2^r-1\right)-i},\quad 0\leq i \leq 2\left(2^r-1\right).
\end{equation*}
Thus, it suffices to solve \eqref{eq49} for $b_0,b_1,\ldots,b_{2^r-1}$, and the remaining coefficients  can be computed using the relation  $b_{2^r+l}= b_{2^r-2-l}$, where  $0\leq l \leq 2^r-2$.
Solving the first $2^r$ equations of \eqref{eq49}, we obtain 
\begin{equation*}
	b_0=1,\ \quad b_1=1,\quad b_2=0,\quad b_3=1,\quad b_4=1,\quad b_5=0,\quad b_6=1,\quad b_7=1,\quad b_8=0,\ldots\ldots,
\end{equation*}
and the above sequences continues in the same pattern till $b_{2^r-1}$.
We now consider two cases:\\
\textbf{Case I:} If $r$ is even. Then $3 \mid 2^r-1$. Therefore,  we obtain
\begin{equation*}
	\left(b_0,b_1,b_2,\ldots, b_{2^r-4}, b_{2^r-3}, b_{2^r-2}\right) = \left( 1,1,0,\ldots, 1,1,0  \right) \text{ and }  b_{2^r-1}=1.
\end{equation*}
Hence,
\begin{align*}
	b_0+b_1x+b_2x^2\ldots+b_{2^r-4}x^{2^r-4}+b_{2^r-3}x^{2^r-3}+b_{2^r-2}x^{2^r-2}
	= & \sum_{j=0}^{\frac{2^r-1}{3}-1} \left(x^{3j}+x^{3j+1}\right).
\end{align*}
Since  $b_{2^r+l}= b_{2^r-2-l}$ for  $0\leq l \leq 2^r-2$, thus 
\begin{align*}
	\left(b_{2^r}, b_{2^r+1}, b_{2^r+2}, \ldots, b_{2^r+2^r-4}, b_{2^r+2^r-3}, b_{2^r+2^r-2}\right)&= \left(b_{2^r-2}, b_{2^r-3}, b_{2^r-4},\ldots, b_2,b_1,b_0\right)= \left(0,1,1,\ldots,0,1,1\right).
\end{align*} 
Consequently,
\begin{align*}
	&	b_{2^r}x^{2^r}+ b_{2^r+1}x^{2^r+1}+b_{2^r+2}x^{2^r+2}+\ldots+b_{2(2^r-1)-2} x^{2(2^r-1)-2}+ b_{2(2^r-1)-1}x^{2(2^r-1)-1}+ b_{2(2^r-1)} x^{2(2^r-1)}\\
	=& \sum_{j=\frac{2^r-1}{3}}^{\frac{2(2^r-1)}{3}-1} \left(x^{3j+2}+x^{3j+3}\right).
\end{align*}
Therefore, \allowdisplaybreaks
\begin{align*}
	\left(x^2+x+1\right)^{2^r-1}= & b_0+b_1x+b_2x^2+\ldots+b_{2\left(2^r-1\right)} x^{2\left(2^r-1\right)}\\
	=	& b_0+b_1x+\ldots+b_{2^r-2}x^{2^r-2} + x^{2^r-1}+ b_{2^r}x^{2^r}+b_{2^r+1}x^{2^r+1}+\ldots+ b_{2(2^r-1)}x^{2(2^r-1)}\\
	=& \sum_{j=0}^{\frac{2^r-1}{3}-1} \left(x^{3j}+x^{3j+1}\right)+x^{2^r-1}+ \sum_{j=\frac{2^r-1}{3}}^{\frac{2(2^r-1)}{3}-1} \left(x^{3j+2}+x^{3j+3}\right).
\end{align*}
\textbf{Case II:} If $r$ is odd. Then $3 \mid 2^r-2$. Solving the first $2^r$ equations of \eqref{eq49}, we get
	\begin{equation*}
	\left(b_0,b_1,b_2,\ldots,b_{2^r-5}, b_{2^r-4}, b_{2^r-3}\right) = \left( 1,1,0,\ldots, 1,1,0  \right),\  b_{2^r-2}=1 \text{ and }  b_{2^r-1}=1.
\end{equation*}
Hence,
\begin{align*}
	b_0+b_1x+b_2x^2\ldots+b_{2^r-5}x^{2^r-5}+b_{2^r-4}x^{2^r-4}+b_{2^r-3}x^{2^r-3}+b_{2^r-2}x^{2^r-2}
	= & \sum_{j=0}^{\frac{2^r-2}{3}-1} \left(x^{3j}+x^{3j+1}\right)+x^{2^r-2}.
\end{align*}
Since  $b_{2^r+l}= b_{2^r-2-l}$ for  $0\leq l \leq 2^r-2$, thus 
\begin{align*}
	\left(b_{2^r}, b_{2^r+1}, b_{2^r+2}, \ldots, b_{2^r+2^r-4}, b_{2^r+2^r-3}, b_{2^r+2^r-2}\right)&= \left(b_{2^r-2}, b_{2^r-3}, b_{2^r-4},\ldots, b_2,b_1,b_0\right)= \left(1,0,1,1,\ldots,0,1,1\right).
\end{align*} 
Therefore,
\begin{align*}
	&	b_{2^r}x^{2^r}+ b_{2^r+1}x^{2^r+1}+b_{2^r+2}x^{2^r+2}+\ldots+b_{2(2^r-1)-2} x^{2(2^r-1)-2}+ b_{2(2^r-1)-1}x^{2(2^r-1)-1}+ b_{2(2^r-1)} x^{2(2^r-1)}\\
	=& x^{2^r}+ \sum_{j=\frac{2^r+1}{3}}^{\frac{2(2^r-2)}{3}} \left(x^{3j+1}+x^{3j+2}\right).
\end{align*}
Thus, we have \allowdisplaybreaks
\begin{align*}
	\left(x^2+x+1\right)^{2^r-1}= & b_0+b_1x+b_2x^2+\ldots+b_{2\left(2^r-1\right)} x^{2\left(2^r-1\right)}\\
	=	& b_0+b_1x+\ldots+b_{2^r-2}x^{2^r-2} + x^{2^r-1}+ b_{2^r}x^{2^r}+b_{2^r+1}x^{2^r+1}+\ldots+ b_{2(2^r-1)}x^{2(2^r-1)}\\
	=& \sum_{j=0}^{\frac{2^r-2}{3}-1} \left(x^{3j}+x^{3j+1}\right)+\left(x^{2^r-2}+x^{2^r-1}+x^{2^r}\right)+ \sum_{j=\frac{2^r+1}{3}}^{\frac{2(2^r-2)}{3}} \left(x^{3j+1}+x^{3j+2}\right).
\end{align*}
This completes the proof.
\end{proof}
Replacing $x$ by $x^n$ in Lemma~\ref{lemma1}, we get the following result.
\begin{theorem}\label{theorem17}
	For $n\geq1$ and $ r \geq2$, the binary expansion of $\left(x^{2n}+x^n+1\right)^{2^r-1}$ is given by
	\begin{equation*}
		\left(x^{2n}+x^n+1\right)^{2^r-1}	= \begin{cases}
			\displaystyle
			\sum_{j=0}^{\frac{2^r-1}{3}-1} \big(x^{n(3j)}+x^{n(3j+1)}\big)+x^{n(2^r-1)}+ \sum_{j=\frac{2^r-1}{3}}^{\frac{2(2^r-1)}{3}-1} \big(x^{n(3j+2)}+x^{n(3j+3)}\big), & \text{if } r \text{ is even},\\
			\displaystyle	\sum_{j=0}^{\frac{2^r-2}{3}-1} \big(x^{n(3j)}+x^{n(3j+1)}\big)+\big(x^{n(2^r-2)}+x^{n(2^r-1)}+x^{n(2^r)}\big)+ &\\ \displaystyle \sum_{j=\frac{2^r+1}{3}}^{\frac{2(2^r-2)}{3}} \big(x^{n(3j+1)}+x^{n(3j+2)}\big), & \text{if } r  \text{ is odd}.
		\end{cases}
	\end{equation*}
\end{theorem}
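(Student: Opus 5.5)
The statement is Theorem~\ref{theorem17}. It follows from Lemma~\ref{lemma1} by the ring substitution $x \mapsto x^n$. The plan is as follows.

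\textbf{Step 1: the substitution map.} Consider $\phi_n:\mathbb{F}_2[x]\to\mathbb{F}_2[x]$ defined by $\phi_n(f(x))=f(x^n)$. This is an $\mathbb{F}_2$-algebra homomorphism, since it is evaluation at $x^n$. Consequently,
\begin{equation*}
\phi_n\Big(\big(x^2+x+1\big)^{2^r-1}\Big)=\Big(\phi_n\big(x^2+x+1\big)\Big)^{2^r-1}=\big(x^{2n}+x^n+1\big)^{2^r-1}.
\end{equation*}

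\textbf{Step 2: transport the expansion.} Apply $\phi_n$ to each side of the expansion in Lemma~\ref{lemma1}. Because $\phi_n$ is additive, each monomial $x^k$ is sent to $x^{nk}$. The finite sums over $j$ are mapped termwise, and the index ranges do not change. The case split on the parity of $r$ carries over exactly as in the lemma, and this gives the two displayed formulas.

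\textbf{Step 3: it is still a binary expansion.} We must check that no new cancellation occurs, so that the right-hand side really lists distinct monomials with coefficient $1$. The map $k\mapsto nk$ is injective for $n\ge1$. Since Lemma~\ref{lemma1} presents $(x^2+x+1)^{2^r-1}$ as a sum of distinct monomials (the $b_i$ are determined coefficientwise), their images are also distinct. In particular, the weight is preserved.

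I do not expect any step to be a real obstacle. The only point needing care is Step~3, where one should state the injectivity of the exponent scaling explicitly so that the weight counts used later (for the sets $S^{n}_{\mathcal{T},s}$ and $S^{'n}_{\mathcal{T},r}$) are justified.
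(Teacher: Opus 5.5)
Your proposal is correct and matches the paper, which obtains this theorem simply by substituting $x^n$ for $x$ in Lemma~\ref{lemma1}. Your Step~3 (injectivity of $k\mapsto nk$, so the listed monomials stay distinct and weights are preserved) is not needed for the polynomial identity itself, but it is a harmless and useful addition that justifies the weight counts used later.
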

We now determine the Hamming distance of binary polycyclic codes associated with $\big( x^{2\cdot3^v}+x^{3^v}+1  \big)^\mathcal{L}$, where $v \geq 0$ and  $ \mathcal{L} \geq 2$.
Let $C_j= \langle\left( x^{2\cdot3^v}+x^{3^v}+1 \right)^j \rangle$ be the polycyclic code associated with $\big( x^{2\cdot3^v}+x^{3^v}+1  \big)^\mathcal{L}$ for $1 \leq j \leq \mathcal{L}-1$, and let $d_j$ denote its Hamming distance.  
We recall that $\mathcal{T}$ is the unique positive integer satisfying $2^{\mathcal{T}-1} < \mathcal{L} \leq 2^\mathcal{T}$. Also, we recall from Corollary~\ref{corollary3} that $	Ord \big(x^{2\cdot3^v}+x^{3^v}+1\big) = 3^{v+1}$ for every  $ v \geq0.$
Using Corollary~\ref{corollary1}, we obtain the following result.
\begin{theorem}\label{theorem18}
	Let $C_j= \langle \left(  x^{2\cdot3^v}+x^{3^v}+1 \right)^j \rangle$ be the polycyclic code associated with $\big( x^{2\cdot3^v}+x^{3^v}+1  \big)^\mathcal{L}$, where $1 \leq j \leq 2^{\mathcal{T}-1}$. Then the Hamming distance of $C_j$ is given by
	\begin{equation*}
		d_j= \begin{cases}
			2 & \text{for } 1 \leq j \leq 2^{\mathcal{T}-\mathcal{J}},\\
			3 &\text{for } 2^{\mathcal{T}-\mathcal{J}}+1 \leq j \leq 2^{\mathcal{T}-1},
		\end{cases}
	\end{equation*}
	where $\mathcal{J}$ is the smallest positive integer such that $3^{v+1} 2^{\mathcal{T}-\mathcal{J}}< 2\cdot3^v \mathcal{L}$.
	\\
	
	Next, we determine $d_j$ for $ j> 2^{\mathcal{T}-1}$ for different choices of $\mathcal{L}$.
	\subsection{$d_j$ for $2^{\mathcal{T}-1} < j < \mathcal{L}$, when $\mathcal{L}=2^\mathcal{T}$}

	We first consider the codes  $C_{2^{\mathcal{T}}-2^{\mathcal{T}-r}}$ for $2\leq r \leq \mathcal{T}$. 
	By Theorem~\ref{theorem5}, determining their Hamming distance requires the minimum weight of the set  $S_{\mathcal{T},r}^{'2 \cdot3^v}$, where
	\allowdisplaybreaks
	\begin{align*}
		S_{\mathcal{T},r}^{'2\cdot3^v}=  \Bigl\{ &  g(x)^{2^{\mathcal{T}-r}} \big( x^{2\cdot3^v}+x^{3^v}+1\big)^{2^{\mathcal{T}}-2^{\mathcal{T}-r}}\ \big|\ g(x)\in \mathbb{F}_2[x],\ \deg (g(x)) \leq 2\cdot 3^v-1,\  g(x) \text{ has constant term  }  1 \Bigr\}.
	\end{align*}
	Let 
	\begin{equation*}
		F_{\mathcal{T},r}^{'2\cdot3^v}= \left\{ g(x)\big( x^{2\cdot3^v}+x^{3^v}+1\big)^{2^r-1} \ \big|\ g(x)\in \mathbb{F}_2[x],\ \deg (g(x)) \leq 2\cdot 3^v-1,\ g(x) \text{ has constant term  }  1 \right\}.
	\end{equation*}
	Then 
	\begin{equation*}
		S_{\mathcal{T},r}^{'2\cdot3^v}= \left\{ f(x)^{2^{\mathcal{T}-r}} \mid f(x)\in F_{\mathcal{T},r}^{'2\cdot3^v} \right\}.
	\end{equation*}Next, we determine the weights of the polynomials in the set $F_{\mathcal{T},r}^{'2\cdot3^v}$. Since each element of $S_{\mathcal{T},r}^{'2\cdot3^v}$ is simply the $ 2^{\mathcal{T}-r}$-th power of an element in  $F_{\mathcal{T},r}^{'2\cdot3^v}$, the weights of the elements of the set  $F_{\mathcal{T},r}^{'2\cdot3^v}$ fully determine the weights of the elements of the set  $S_{\mathcal{T},r}^{'2\cdot3^v}$. From Theorem~\ref{theorem17}, for $r\geq 2$, we have
\begin{equation}\label{eq50}
	\big( x^{2\cdot3^v}+x^{3^v}+1\big)^{2^r-1}	=	\begin{cases}
		\displaystyle
		\sum_{j=0}^{\frac{2^r-1}{3}-1} \big(x^{3^{v+1}j}+x^{3^v(3j+1)}\big)+x^{3^v(2^r-1)}+ \sum_{j=\frac{2^r-1}{3}}^{\frac{2(2^r-1)}{3}-1} \big(x^{3^v(3j+2)}+x^{3^v(3j+3)}\big), & \text{if } r \text{ is even},\\
		\displaystyle	\sum_{j=0}^{\frac{2^r-2}{3}-1} \big(x^{3^{v+1}j}+x^{3^v(3j+1)}\big)+\big(x^{3^v(2^r-2)}+x^{3^v(2^r-1)}+x^{3^v(2^r)}\big) & \\+ \sum_{j=\frac{2^r+1}{3}}^{\frac{2(2^r-2)}{3}} \big(x^{3^v(3j+1)}+x^{3^v(3j+2)}\big),  & \text{if } r  \text{ is odd}.
	\end{cases}
\end{equation}
Consequently, 
\begin{equation*}
\hspace{-9.5cm}	(1+x^{3^v}) 	\big( x^{2\cdot3^v}+x^{3^v}+1\big)^{2^r-1}
\end{equation*}
\begin{equation}\label{eq51}
	=\begin{cases}
		\displaystyle
		\sum_{j=0}^{\frac{2^r-1}{3}-1} \big(x^{3^{v+1}j}+x^{3^v(3j+2)}\big)+x^{3^v(2^r-1)}+ x^{3^v(2^r)} +   \sum_{j=\frac{2^r-1}{3}}^{\frac{2(2^r-1)}{3}-1} \big(x^{3^v(3j+2)}+x^{3^v(3j+4)}\big), & \text{if } r \text{ is even}, \\
		\displaystyle	\sum_{j=0}^{\frac{2^r-2}{3}-1} \big(x^{3^{v+1}j}+x^{3^v(3j+2)}\big)+x^{3^v(2^r-2)}+x^{3^v(2^r+1)}+\sum_{j=\frac{2^r+1}{3}}^{\frac{2(2^r-2)}{3}} \big(x^{3^v(3j+1)}+x^{3^v(3j+3)}\big), & \text{if } r  \text{ is odd}.
	\end{cases}
\end{equation}
Let $f(x)$ be an element of $F_{\mathcal{T},r}^{'2\cdot3^v}$. Then it can be expressed as
\begin{equation*}
	f(x)= \big( x^{2\cdot3^v}+x^{3^v}+1\big)^{2^r-1} g(x),
\end{equation*}
where $g(x)= \sum_{i=0}^{2 \cdot3^v-1} g_ix^{i}$ with $g_i \in \mathbb{F}_2$ for each $i$ and $g_0=1$. The polynomial $f(x)$ can be rewritten as
\allowdisplaybreaks
\begin{align*}
	f(x) &= \big( x^{2\cdot3^v}+x^{3^v}+1\big)^{2^r-1}  \sum_{i=0}^{2 \cdot3^v-1} g_ix^{i}= \big( x^{2\cdot3^v}+x^{3^v}+1\big)^{2^r-1} \left(\sum_{i=0}^{3^v-1} g_ix^{i}+ \sum_{i=3^v}^{2 \cdot3^v-1}g_ix^{i}\right)\\
	&= \big( x^{2\cdot3^v}+x^{3^v}+1\big)^{2^r-1} \left(\sum_{i=0}^{3^v-1} g_ix^{i}+ x^{3^v} \sum_{i=0}^{3^v-1} g_{3^v+i}x^{i}\right)= \sum_{i=0}^{3^v-1} \left(   \big(g_i+x^{3^v}g_{3^v+i}\big) \big( x^{2\cdot3^v}+x^{3^v}+1\big)^{2^r-1} x^{i}  \right).
\end{align*} From Eqs.~\eqref{eq50} and \eqref{eq51}, it follows that every exponent of $x$ appearing in 	$\left( x^{2\cdot3^v}+x^{3^v}+1\right)^{2^r-1}$ or in $	\left(1+x^{3^v}\right)$$ \big( x^{2\cdot3^v}+x^{3^v}+1\big)^{2^r-1}$, is a multiple of $3^v$. Thus, we have 
\begin{equation*}
wt(f(x))= \sum_{i=0}^{3^v-1} wt \left( \big(g_i+x^{3^v}g_{3^v+i}\big) \big( x^{2\cdot3^v}+x^{3^v}+1\big)^{2^r-1} \right).
\end{equation*}
Now, we have four possibilities of $ \left(g_i, g_{3^v+i}\right)$.
\begin{itemize}
	\item If $g_i=g_{3^v+i}=0$, then $wt \left( \left(g_i+x^{3^v}g_{3^v+i}\right) \left( x^{2\cdot3^v}+x^{3^v}+1\right)^{2^r-1} \right)=0.$
	\item If $g_i=1,\ g_{3^v+i}=0$, then $wt \left( \left(g_i+x^{3^v}g_{3^v+i}\right) \left( x^{2\cdot3^v}+x^{3^v}+1\right)^{2^r-1} \right)= wt \left(  \left( x^{2\cdot3^v}+x^{3^v}+1\right)^{2^r-1} \right).$
	\item If $g_i=0,\ g_{3^v+i}=1$, then $wt \left( \left(g_i+x^{3^v}g_{3^v+i}\right) \left( x^{2\cdot3^v}+x^{3^v}+1\right)^{2^r-1} \right)= wt \left(  \left( x^{2\cdot3^v}+x^{3^v}+1\right)^{2^r-1} \right).$
	\item If $g_i=g_{3^v+i}=1$, then $ wt \Big( \big(g_i+x^{3^v}g_{3^v+i}\big) \big( x^{2\cdot3^v}+x^{3^v}+1\big)^{2^r-1} \Big)= wt \Big( \big(1+x^{3^v}\big) \big( x^{2\cdot3^v}+x^{3^v}+1\big)^{2^r-1} \Big)$.
\end{itemize}
Suppose there are $\alpha$ values of $i$ for which $g_i=1,\ g_{3^v+i}=0$ or $g_i=0,\ g_{3^v+i}=1$ and $\beta$ values of $i$ for which $g_i=g_{3^v+i}=1$. Then
\begin{equation*}
	wt(f(x))= \alpha wt \left(  \big( x^{2\cdot3^v}+x^{3^v}+1\big)^{2^r-1} \right)+ \beta  wt \left( \big(1+x^{3^v}\big) \big( x^{2\cdot3^v}+x^{3^v}+1\big)^{2^r-1} \right).
\end{equation*}Therefore, weights of all the polynomials in the set  $F_{\mathcal{T},r}^{'2\cdot3^v}$ can be expressed as a linear combination of  $wt \Big(  \big( x^{2\cdot3^v}+x^{3^v}+1\big)^{2^r-1} \Big)$ and $wt \Big( \big(1+x^{3^v}\big) \big( x^{2\cdot3^v}+x^{3^v}+1\big)^{2^r-1} \Big).$ Moreover, since $\big( x^{2\cdot3^v}+x^{3^v}+1\big)^{2^r-1},\  \big(1+x^{3^v}\big) \big( x^{2\cdot3^v}+x^{3^v}+1\big)^{2^r-1} \in F_{\mathcal{T},r}^{'2\cdot3^v}$,  we conclude that 
\allowdisplaybreaks
\begin{align}\label{eq52}
	\nonumber	\min \left\{ wt(s(x)) \mid s(x)\in S_{\mathcal{T},r}^{'2\cdot3^v} \right\} &= \min \left\{ wt(f(x)) \mid f(x)\in F_{\mathcal{T},r}^{'2\cdot3^v} \right\}\\
	&= \min \Big\{ wt \Big(  \big( x^{2\cdot3^v}+x^{3^v}+1\big)^{2^r-1} \Big),
	wt \Big( \big(1+x^{3^v}\big) \big( x^{2\cdot3^v}+x^{3^v}+1\big)^{2^r-1} \Big) \Big\}.
\end{align}
Table~\ref{tab8} illustrate the above discussion for $v=1$ and  groups the elements of $F_{\mathcal{T},r}^{'6}$ according to their weights.
\begin{table}[htbp]
	\caption{Weights of polynomials in the set $F_{\mathcal{T},r}^{'6}$}
	\label{tab8}
	\centering
	\begin{tabular}{|c|c|}
		\hline
		Polynomials & Weight\\\hline
		$\left(x^{6}+x^3+1\right)^{2^r-1}$ &	$wt \left(\left(x^{6}+x^3+1\right)^{2^r-1}\right)$   \\\hline
		$\left(1+x^3\right) \left(x^{6}+x^3+1\right)^{2^r-1} $ &	$wt \left( \left(1+x^3\right) \left(x^{6}+x^3+1\right)^{2^r-1} \right)$   \\\hline
		\begin{tabular}{@{}c@{}} $(1+x) \left(x^{6}+x^3+1\right)^{2^r-1},\ \left(1+x^2\right) \left(x^{6}+x^3+1\right)^{2^r-1}, $\\$ \left(1+x^4\right) \left(x^{6}+x^3+1\right)^{2^r-1},\ \left(1+x^5\right) \left(x^{6}+x^3+1\right)^{2^r-1}$
		\end{tabular} & $2 wt \left(\left(x^{6}+x^3+1\right)^{2^r-1}\right)$ \\\hline
		
		\begin{tabular}{@{}c@{}}
			$\left(1+x+x^2\right) \left(x^{6}+x^3+1\right)^{2^r-1},\ \left(1+x+x^5\right) \left(x^{6}+x^3+1\right)^{2^r-1},$ \\$ \left(1+x^2+x^4\right)
			\left(x^{6}+x^3+1\right)^{2^r-1},\  \left(1+x^4+x^5\right) \left(x^{6}+x^3+1\right)^{2^r-1} $
		\end{tabular} & $3  wt \left(\left(x^{6}+x^3+1\right)^{2^r-1}\right)$\\\hline 
		\begin{tabular}{@{}c@{}} 
			$ \left(1+x+x^3\right) \left(x^{6}+x^3+1\right)^{2^r-1},\ \left(1+x+x^4\right) \left(x^{6}+x^3+1\right)^{2^r-1},$ \\ $ \left(1+x^2+x^3\right)\left(x^{6}+x^3+1\right)^{2^r-1},\ \left(1+x^2+x^5\right) \left(x^{6}+x^3+1\right)^{2^r-1},$ \\$ \left(1+x^3+x^4\right) \left(x^{6}+x^3+1\right)^{2^r-1},\ \left(1+x^3+x^5\right) \left(x^{6}+x^3+1\right)^{2^r-1}$
		\end{tabular} & 	\begin{tabular}{@{}c@{}} $wt \left(\left(x^{6}+x^3+1\right)^{2^r-1}\right)+$\\ $ wt \left( \left(1+x^3\right) \left(x^{6}+x^3+1\right)^{2^r-1} \right) $ \end{tabular} \\\hline 
		\begin{tabular}{@{}c@{}} 
			$ \left(1+x+x^2+x^3\right) \left(x^{6}+x^3+1\right)^{2^r-1},\ \left(1+x+x^2+x^4\right) \left(x^{6}+x^3+1\right)^{2^r-1},$ \\ $ \left(1+x+x^2+x^5\right) \left(x^{6}+x^3+1\right)^{2^r-1},\ \left(1+x+x^3+x^5\right) \left(x^{6}+x^3+1\right)^{2^r-1},$ \\ $ \left(1+x+x^4+x^5\right) \left(x^{6}+x^3+1\right)^{2^r-1},\ \left(1+x^2+x^3+x^4\right) \left(x^{6}+x^3+1\right)^{2^r-1},$ \\ $ \left(1+x^2+x^4+x^5\right) \left(x^{6}+x^3+1\right)^{2^r-1},\ \left(1+x^3+x^4+x^5\right) \left(x^{6}+x^3+1\right)^{2^r-1}$ 
		\end{tabular} & 	\begin{tabular}{@{}c@{}} $2 wt \left(\left(x^{6}+x^3+1\right)^{2^r-1}\right)+$\\ $ wt \left( \left(1+x^3\right) \left(x^{6}+x^3+1\right)^{2^r-1} \right) $ \end{tabular} \\\hline  
		$\left(1+x+x^3+x^4\right) \left(x^{6}+x^3+1\right)^{2^r-1},\ \left(1+x^2+x^3+x^5\right) \left(x^{6}+x^3+1\right)^{2^r-1} $ & 	$2 wt \left( \left(1+x^3\right) \left(x^{6}+x^3+1\right)^{2^r-1} \right)$ \\\hline
		\begin{tabular}{@{}c@{}} 
			$\left(1+x+x^2+x^3+x^4\right) \left(x^{6}+x^3+1\right)^{2^r-1},\ \left(1+x+x^2+x^3+x^5\right) \left(x^{6}+x^3+1\right)^{2^r-1}$\\
			$\left(1+x+x^2+x^4+x^5\right) \left(x^{6}+x^3+1\right)^{2^r-1},\ \left(1+x+x^3+x^4+x^5\right) \left(x^{6}+x^3+1\right)^{2^r-1}$\\ $ \left(1+x^2+x^3+x^4+x^5\right) \left(x^{6}+x^3+1\right)^{2^r-1}$
		\end{tabular} & 	\begin{tabular}{@{}c@{}} $ wt \left(\left(x^{6}+x^3+1\right)^{2^r-1}\right)+$\\ $2 wt \left( \left(1+x^3\right) \left(x^{6}+x^3+1\right)^{2^r-1} \right) $ \end{tabular} \\\hline  
		$\left(1+x+x^2+x^3+x^4+x^5\right) \left(x^{6}+x^3+1\right)^{2^r-1} $ & 	$ 3 wt \left( \left(1+x^3\right) \left(x^{6}+x^3+1\right)^{2^r-1} \right)$
		\\\hline
	\end{tabular}
\end{table}

Using Eqs.~\eqref{eq50} and \eqref{eq51}, the weights of the polynomials $\left( x^{2\cdot3^v}+x^{3^v}+1\right)^{2^r-1}$ and  $\left(1+x^{3^v}\right) \big( x^{2\cdot3^v}+x^{3^v}+1\big)^{2^r-1}$ are given as follows:
\begin{equation}\label{eq53}
	wt \left(\big( x^{2\cdot3^v}+x^{3^v}+1\big)^{2^r-1}\right)= \begin{cases}
		\frac{2^{r+2}-1}{3}, & \text{ if } r \text{ is even},\\
		\frac{2^{r+2}+1}{3}, & \text{ if } r \text{ is odd},
	\end{cases}
\end{equation}
and 
\begin{equation}\label{eq54}
	wt  \left( \big(1+x^{3^v}\big) \big( x^{2\cdot3^v}+x^{3^v}+1\big)^{2^r-1} \right)= \begin{cases}
		\frac{2^{r+2}+2}{3}, & \text{ if } r \text{ is even},\\
		\frac{2^{r+2}-2}{3}, & \text{ if } r \text{ is odd}.
	\end{cases}
\end{equation}
\end{theorem}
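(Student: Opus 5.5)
The plan is to derive the statement directly from Corollary~\ref{corollary1}, part~2, specialized to $P(x)=x^{2\cdot3^v}+x^{3^v}+1$. First I will record the hypotheses of that corollary. As recalled in Section~\ref{sec6}, $P(x)$ is a binary irreducible trinomial for every $v\geq 0$, with degree $m=2\cdot3^v\geq 2$. By Corollary~\ref{corollary3}, its order is $e=3^{v+1}$.

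The only condition to check is which branch of Corollary~\ref{corollary1} applies. We have $e<m\mathcal{L}$ if and only if $3^{v+1}<2\cdot3^v\mathcal{L}$, that is, $\mathcal{L}>3/2$. This holds because $\mathcal{L}\geq 2$, so part~1 (the case $Ord(P(x))\geq m\mathcal{L}$) never occurs and part~2 applies. In part~2, $\mathcal{J}$ is the smallest positive integer with $e\,2^{\mathcal{T}-\mathcal{J}}<m\mathcal{L}$, which here reads $3^{v+1}2^{\mathcal{T}-\mathcal{J}}<2\cdot3^v\mathcal{L}$. This is exactly the $\mathcal{J}$ in the statement. Such a $\mathcal{J}$ exists with $1\leq\mathcal{J}\leq\mathcal{T}$, since $j=\mathcal{T}$ already satisfies the inequality ($3^{v+1}<2\cdot 3^v\mathcal{L}$), so the split point $2^{\mathcal{T}-\mathcal{J}}$ lies in $[1,2^{\mathcal{T}-1}]$.

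Substituting into Corollary~\ref{corollary1} then gives $d_j=2$ for $1\leq j\leq 2^{\mathcal{T}-\mathcal{J}}$ and $d_j=3$ for $2^{\mathcal{T}-\mathcal{J}}+1\leq j\leq 2^{\mathcal{T}-1}$. The lower bound $3$ comes from Theorem~\ref{theorem3}. The upper bound comes from the fact that $wt(P(x))=3$, because the $2^{\mathcal{T}-1}$-th power of a trinomial in characteristic $2$ is again a trinomial.

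If the weight formulas \eqref{eq53} and \eqref{eq54} are also regarded as part of the statement, I will verify them by counting terms in \eqref{eq50} and \eqref{eq51}, where the exponents are distinct because the listed ranges are disjoint. For $r$ even, \eqref{eq50} contributes $2\cdot\frac{2^r-1}{3}+1+2\cdot\frac{2^r-1}{3}=\frac{2^{r+2}-1}{3}$ terms, and \eqref{eq51} contributes $\frac{2^{r+2}-4}{3}+2=\frac{2^{r+2}+2}{3}$ terms. For $r$ odd, the upper summation index gives $\frac{2(2^r-2)}{3}-\frac{2^r+1}{3}+1=\frac{2^r-2}{3}$ summands. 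Hence \eqref{eq50} has $\frac{4(2^r-2)}{3}+3=\frac{2^{r+2}+1}{3}$ terms, and \eqref{eq51} has $\frac{4(2^r-2)}{3}+2=\frac{2^{r+2}-2}{3}$ terms.

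The one point needing care in these counts is that there are no collisions between the boundary terms, for example $x^{3^v(2^r-1)}$ and $x^{3^v 2^r}$ against the neighbouring sums in \eqref{eq51}. I will check this by confirming that the exponent ranges are strictly increasing and disjoint. The rest is routine.
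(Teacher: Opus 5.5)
Your proposal is correct and matches the paper. The paper obtains Theorem~\ref{theorem18} by specializing Corollary~\ref{corollary1}(2) to $m=2\cdot3^v$, $e=3^{v+1}$; your checks that $e<m\mathcal{L}$ holds automatically (since $\mathcal{L}\ge 2$) and that $\mathcal{J}\le\mathcal{T}$ make explicit what the paper leaves implicit. Likewise, \eqref{eq53}--\eqref{eq54} are obtained, as in the paper, by counting the pairwise distinct monomials in \eqref{eq50}--\eqref{eq51}; note that your ``$j=\mathcal{T}$'' should read ``$\mathcal{J}=\mathcal{T}$''.
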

Now, using Theorem~\ref{theorem5}, we determine the exact Hamming distance of the codes $C_{2^{\mathcal{T}}-2^{\mathcal{T}-r}}$, where $2\leq r \leq \mathcal{T}$, in the following theorem. 
\begin{theorem}\label{theorem19}
	The Hamming distance of the code $C_{2^{\mathcal{T}}-2^{\mathcal{T}-r}}$, where $2\leq r \leq \mathcal{T}$, is given by
	\begin{equation*}
		d_{2^{\mathcal{T}}-2^{\mathcal{T}-r}}= \begin{cases}
			\frac{2^{r+2}-1}{3}, & \text{if } r \text{ is even},\\
			\frac{2^{r+2}-2}{3}, &\text{if } r \text{ is odd}.
		\end{cases}
	\end{equation*}
\end{theorem}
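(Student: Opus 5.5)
The plan is to combine Theorem~\ref{theorem5} with the reduction carried out inside Theorem~\ref{theorem18}, and then compare the two explicit weights in Eqs.~\eqref{eq53} and \eqref{eq54}. Here $\mathcal{L}=2^{\mathcal{T}}$ and $m=2\cdot 3^v$. By Theorem~\ref{theorem5}, $d_{2^{\mathcal{T}}-2^{\mathcal{T}-r}}$ is the minimum weight over $S_{\mathcal{T},r}^{'2\cdot3^v}$. Every element of this set is $f(x)^{2^{\mathcal{T}-r}}$ with $f(x)\in F_{\mathcal{T},r}^{'2\cdot3^v}$. Over $\mathbb{F}_2$, squaring only spreads exponents, so $wt\big(f(x)^{2^{\mathcal{T}-r}}\big)=wt(f(x))$. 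The minimum is therefore the minimum over $F_{\mathcal{T},r}^{'2\cdot3^v}$.

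By Eq.~\eqref{eq52} this minimum equals
\begin{equation*}
\min\Big\{ wt\big((x^{2\cdot3^v}+x^{3^v}+1)^{2^r-1}\big),\ wt\big((1+x^{3^v})(x^{2\cdot3^v}+x^{3^v}+1)^{2^r-1}\big)\Big\}.
\end{equation*}
Eq.~\eqref{eq52} rests on the splitting of $g(x)$ into residue classes of exponents modulo $3^v$. Under that splitting, each nonzero pair $(g_i,g_{3^v+i})$ contributes either the first weight or the second, so every weight in $F_{\mathcal{T},r}^{'2\cdot3^v}$ has the form $\alpha w_1+\beta w_2$ with $\alpha+\beta\geq 1$. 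Both $w_1$ and $w_2$ are attained, by taking $g=1$ and $g=1+x^{3^v}$ respectively.

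It remains to justify Eqs.~\eqref{eq53}--\eqref{eq54} by counting terms in Eqs.~\eqref{eq50}--\eqref{eq51}. For $r$ even, Eq.~\eqref{eq50} has $2\cdot\frac{2^r-1}{3}+1+2\cdot\frac{2^r-1}{3}=\frac{2^{r+2}-1}{3}$ terms. Eq.~\eqref{eq51} has $2\cdot\frac{2^r-1}{3}+2+2\cdot\frac{2^r-1}{3}=\frac{2^{r+2}+2}{3}$ terms. For $r$ odd, the first sum in Eq.~\eqref{eq50} has $2\cdot\frac{2^r-2}{3}$ terms, the middle block has $3$, and the last sum has $2\cdot\frac{2^r-2}{3}$ (the index runs from $\frac{2^r+1}{3}$ to $\frac{2^{r+1}-4}{3}$). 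This totals $\frac{2^{r+2}+1}{3}$. Eq.~\eqref{eq51} gives $\frac{2^{r+2}-8}{3}+2=\frac{2^{r+2}-2}{3}$.

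The main obstacle is checking that no exponents coincide within these sums. The monomials must be distinct so that the weight really is the term count. For Eq.~\eqref{eq51}, the product expansion must also have the claimed cancellations. For Eq.~\eqref{eq51} I would verify directly: multiply Eq.~\eqref{eq50} by $1+x^{3^v}$, note that the consecutive pairs $x^{3^v k}+x^{3^v(k+1)}$ become $x^{3^v k}+x^{3^v(k+2)}$, and handle the middle block separately. Then read off the residues of the exponents modulo $3\cdot3^v$ to confirm distinctness.

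Taking the minimum finishes the proof. For $r$ even, $\frac{2^{r+2}-1}{3}<\frac{2^{r+2}+2}{3}$. For $r$ odd, $\frac{2^{r+2}-2}{3}<\frac{2^{r+2}+1}{3}$. This gives exactly the two cases of the statement.
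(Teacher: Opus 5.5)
Your proposal is correct and follows the paper's own argument. Theorem~\ref{theorem5} plus $wt\big(f^{2^{\mathcal{T}-r}}\big)=wt(f)$ reduces the distance to a minimum over the products $g(x)\big(x^{2\cdot3^v}+x^{3^v}+1\big)^{2^r-1}$, Eq.~\eqref{eq52} cuts this down to $\min\{w_1,w_2\}$, and Eqs.~\eqref{eq53}--\eqref{eq54} supply the values; the only difference is that you spell out the term counts and non-collision checks behind Eqs.~\eqref{eq53}--\eqref{eq54}, which the paper takes as read (they are right, e.g.\ $w_1=5,\ w_2=6$ for $r=2$ and $w_1=11,\ w_2=10$ for $r=3$).
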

\begin{proof}
	From Theorem~\ref{theorem5}, we have that for $2\leq r \leq \mathcal{T}$, the Hamming distance of the code $C_{2^{\mathcal{T}}-2^{\mathcal{T}-r}}$ is given by 
	\begin{equation*}
		d_{2^{\mathcal{T}}-2^{\mathcal{T}-r}}= \min  \left\{wt(s(x)) \mid s(x) \in S_{\mathcal{T},r}^{'2\cdot3^v}
		\right\}.
	\end{equation*}
	Using Eq.~\eqref{eq52}, we obtain
	\begin{equation*}
		d_{2^{\mathcal{T}}-2^{\mathcal{T}-r}}= \min \left\{  wt \left(\big( x^{2\cdot3^v}+x^{3^v}+1\big)^{2^r-1}\right),\ wt\left(  \big(1+x^{3^v}\big)\big( x^{2\cdot3^v}+x^{3^v}+1\big)^{2^r-1} \right)  \right\}.               
	\end{equation*} The result follows from Eqs.~\eqref{eq53} and \eqref{eq54}.
\end{proof}
In the next theorem, we provide the complete Hamming distance for all the binary polycyclic codes associated with $ \left(x^{2\cdot3^v}+x^{3^v}+1\right)^{2^\mathcal{T}}$.
\begin{theorem}\label{theorem20}
	The Hamming distance of the code $C_j= \langle  \left(x^{2\cdot3^v}+x^{3^v}+1\right)^j \rangle$, where $1\leq j \leq 2^\mathcal{T}-1$, is given by
	\begin{equation*}
		d_{j}= \begin{cases}
			2, & \text{if }1\leq j \leq 2^{\mathcal{T}-1},\\
			\frac{2^{r+2}-1}{3}, &\text{if } j= 2^{\mathcal{T}} - 2^{\mathcal{T}-r}, \text{ where } r \geq 2 \text{ and } r \text{ is even},\\
			\frac{2^{r+2}-2}{3}, &\text{if } j= 2^{\mathcal{T}} - 2^{\mathcal{T}-r}, \text{ where } r \geq 2 \text{ and } r \text{ is odd},\\
			\frac{2^{r+3} - 2}{3}, &\text{if } j= 2^{\mathcal{T}} - 2^{\mathcal{T}-r}+i,
			\text{ where } r \geq 2,\ r \text{ is even and } 1\leq i \leq 2^{\mathcal{T}-r-1}.
		\end{cases}
	\end{equation*}
	Moreover,
	\begin{equation*}
		4\leq d_{ 2^{\mathcal{T}-1}+i} \leq 5,\quad  1\leq i\leq  2^{\mathcal{T}-2},
	\end{equation*}
	and for odd $r\geq2,$ \begin{equation*}
		\frac{2^{r+3}-4}{3} \leq d_{2^\mathcal{T}- 2^{\mathcal{T}-r}+i} \leq 	\frac{2^{r+3}-4}{3}+1,\quad  1\leq i \leq 2^{\mathcal{T}-r-1}.
	\end{equation*}
\end{theorem}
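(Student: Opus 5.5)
The plan is to assemble Theorem~\ref{theorem20} from Theorems~\ref{theorem18}, \ref{theorem19} and \ref{theorem7}, specialised to $P(x)=x^{2\cdot3^v}+x^{3^v}+1$. Throughout, $m=2\cdot 3^v$, $e=Ord(P(x))=3^{v+1}$ (Corollary~\ref{corollary3}) and $\mathcal{L}=2^{\mathcal{T}}$.

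\textbf{Step 1: $1\le j\le 2^{\mathcal{T}-1}$.} I would use Theorem~\ref{theorem18}, computing $\mathcal{J}$ explicitly. The condition $3^{v+1}2^{\mathcal{T}-\mathcal{J}}<2\cdot 3^v\cdot 2^{\mathcal{T}}$ is equivalent to $3<2^{\mathcal{J}+1}$. This already holds for $\mathcal{J}=1$, so $\mathcal{J}=1$ and $2^{\mathcal{T}-\mathcal{J}}=2^{\mathcal{T}-1}$. Hence $d_j=2$ for all $1\le j\le 2^{\mathcal{T}-1}$.

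\textbf{Step 2: $j=2^{\mathcal{T}}-2^{\mathcal{T}-r}$ with $2\le r\le\mathcal{T}$.} This is exactly Theorem~\ref{theorem19}, so nothing new is needed.

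\textbf{Step 3: intermediate indices.} I would apply Theorem~\ref{theorem7}, which gives
\[
2\,d_{2^{\mathcal{T}}-2^{\mathcal{T}-r}}\le d_{2^{\mathcal{T}}-2^{\mathcal{T}-r}+i}\le d_{2^{\mathcal{T}}-2^{\mathcal{T}-r-1}}\qquad (1\le r\le \mathcal{T}-2,\ 1\le i\le 2^{\mathcal{T}-r-1}),
\]
and insert the values from Steps~1 and~2.
\begin{itemize}
\item For $r=1$ (this needs $\mathcal{T}\ge3$): the lower bound is $2d_{2^{\mathcal{T}-1}}=4$. The upper bound is $d_{2^{\mathcal{T}}-2^{\mathcal{T}-2}}=(2^{4}-1)/3=5$, by Theorem~\ref{theorem19} with $r=2$. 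This gives $4\le d_{2^{\mathcal{T}-1}+i}\le 5$.
\item For even $r\ge2$: the lower bound is $2(2^{r+2}-1)/3=(2^{r+3}-2)/3$. Since $r+1$ is odd, the upper bound is $d_{2^{\mathcal{T}}-2^{\mathcal{T}-(r+1)}}=(2^{(r+1)+2}-2)/3$, the same number. So the value is exact.
\item For odd $r\ge3$: the lower bound is $2(2^{r+2}-2)/3=(2^{r+3}-4)/3$. Since $r+1$ is even, the upper bound is $(2^{r+3}-1)/3=(2^{r+3}-4)/3+1$. This is the stated two-value window.
\end{itemize}

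\textbf{Step 4: boundary cases not covered by Theorem~\ref{theorem7}.} These are $r=\mathcal{T}-1$, where the only index is $i=1$, and $r=1$ with $\mathcal{T}=2$.
\begin{itemize}
\item For $r=\mathcal{T}-1$, the index is $j=2^{\mathcal{T}}-1=2^{\mathcal{T}}-2^{\mathcal{T}-\mathcal{T}}$. Its distance is given directly by Theorem~\ref{theorem19} with $r=\mathcal{T}$, and I would check that this value agrees with (or lies in the window of) the claimed formula. If $\mathcal{T}-1$ is even, then $\mathcal{T}$ is odd and Theorem~\ref{theorem19} gives $(2^{\mathcal{T}+2}-2)/3$, matching the even-$r$ formula. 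If $\mathcal{T}-1$ is odd, it gives $(2^{\mathcal{T}+2}-1)/3$, the top of the odd-$r$ window.
\item For $\mathcal{T}=2$ the only index is $j=3$, and $d_3=5$ lies in $[4,5]$.
\end{itemize}

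The main obstacle is not conceptual. It is this index bookkeeping: checking the ranges of validity of Theorem~\ref{theorem7}, and confirming that the boundary indices handled by Theorem~\ref{theorem19} are consistent with the stated formulas. The substantive inputs, namely the weight formulas \eqref{eq53} and \eqref{eq54} and the reduction \eqref{eq52}, are already established.
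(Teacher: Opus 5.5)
Your proposal is correct and follows the paper's proof essentially verbatim: $\mathcal{J}=1$ in Theorem~\ref{theorem18} gives $d_j=2$, Theorem~\ref{theorem19} gives the values at $j=2^{\mathcal{T}}-2^{\mathcal{T}-r}$, and Theorem~\ref{theorem7} with those values gives the exact value for even $r$ and the two-value windows for $r=1$ and odd $r$. Your Step~4 is a small improvement over the paper. The paper applies Theorem~\ref{theorem7} only for $r\le\mathcal{T}-2$ and never checks the index $j=2^{\mathcal{T}}-1$, which the statement also lists under $r=\mathcal{T}-1$, $i=1$ (and under the $[4,5]$ window when $\mathcal{T}=2$); you correctly settle this case via Theorem~\ref{theorem19} with $r=\mathcal{T}$.
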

\begin{proof}
		Since $\mathcal{L}=2^\mathcal{T}$ and $e=3^{v+1}$, the smallest positive integer $\mathcal{J}$ satisfying $3^{v+1} 2^{\mathcal{T}-\mathcal{J}}< 2 \cdot3^v \mathcal{L}= 2 \cdot3^v  2^\mathcal{T}= 4 \cdot 3^v 2^{\mathcal{T}-1}$ is $\mathcal{J}=1$. Thus, by Theorem~\ref{theorem18}
		\begin{equation}\label{eq55}
			d_j=2 \quad \text{for } 1\leq j \leq 2^{\mathcal{T}-1}.
		\end{equation}
		From Theorem~\ref{theorem7}, we have 
	\begin{equation}\label{eq56}
		2 d_{2^{\mathcal{T}}-2^{\mathcal{T}-r}}\leq d_{2^{\mathcal{T}}-2^{\mathcal{T}-r}+i} \leq d_{2^{\mathcal{T}}-2^{\mathcal{T}-r-1}},
	\end{equation}
	where $1\leq r \leq \mathcal{T}-2$ and $1\leq i \leq 2^{\mathcal{T}-r-1}$. For $r=1$ in Eq.~\eqref{eq56}, we obtain
	\begin{equation*}
		2 d_{2^{\mathcal{T}-1}} \leq d_{2^{\mathcal{T}-1}+i}  \leq  d_{2^{\mathcal{T}}-2^{\mathcal{T}-2}},\quad  1\leq i \leq 2^{\mathcal{T}-2}.
	\end{equation*}
	Using Theorem~\ref{theorem19} together with Eq.~\eqref{eq55}, we get
\begin{equation}\label{eq57}
	4\leq d_{ 2^{\mathcal{T}-1}+i} \leq 5,\quad  1\leq i\leq  2^{\mathcal{T}-2}.
\end{equation}
Next, for $r\geq2$ in Eq.~\eqref{eq56},  Theorem~\ref{theorem19} gives
\begin{equation}\label{eq58}
	d_{2^{\mathcal{T}}-2^{\mathcal{T}-r}}= \begin{cases}
		\frac{2^{r+2}-1}{3}, & \text{if } r \text{ is even},\\
		\frac{2^{r+2}-2}{3}, &\text{if } r \text{ is odd}.
	\end{cases}
\end{equation}
Now, we consider the following two cases:\\
\textbf{Case I:} If $r$ is even, then by Eq.~\eqref{eq58},  both bounds in  Eq.~\eqref{eq56} coincide, so
\begin{equation*}
	d_{2^{\mathcal{T}}-2^{\mathcal{T}-r}+i}= 	\frac{2^{r+3}-2}{3},\quad  1\leq i\leq  2^{\mathcal{T}-r-1}.
\end{equation*}
	\textbf{Case II:} If $r$ is odd, then from Eqs.~\eqref{eq56} and \eqref{eq58}, we get
		\begin{equation*}
		\frac{2^{r+3}-4}{3} \leq 	d_{2^{\mathcal{T}}-2^{\mathcal{T}-r}+i}\leq \frac{2^{r+3}-4}{3}+1, \quad  1\leq i\leq  2^{\mathcal{T}-r-1}.
	\end{equation*} This completes the proof.
\end{proof}
\subsection{$d_j$ for $2^{\mathcal{T}-1} < j < \mathcal{L}$, when $ \mathcal{L} = 2^{\mathcal{T}-1}+ \mathcal{L}'$ for some $1< \mathcal{L}' \leq 2^{\mathcal{T}-2}$}

We consider the codes $C_{2^{\mathcal{T}-1}+i}$ for $ 1\leq i < \mathcal{L}'$. Let $\lambda_1$ be the positive integer such that 
\begin{equation*}
	\left(\lambda_1-1\right) 2^{\mathcal{T}-1} < 2 \cdot3^v \left(\mathcal{L}-2^{\mathcal{T}-1}\right)= 2 \cdot 3^v\mathcal{L}' \leq \lambda_1 2^{\mathcal{T}-1}.
\end{equation*}
Since $\mathcal{L}'\leq 2^{\mathcal{T}-2}$, it follows that $1\leq\lambda_1\leq 3^v$. From Theorem~\ref{theorem4}, we have
\begin{itemize}
	\item If $\lambda_1=1$, then
	$
	d_{2^{\mathcal{T}-1}}= wt \Big( \left(x^{2\cdot3^v}+x^{3^v}+1\right)^{2^{\mathcal{T}-1}} \Big)=3.
	$ 
	\item  If $2\leq \lambda_1\leq 3^v$, then 
	$
	d_{2^{\mathcal{T}-1}}= \min \left\{ wt(s(x)) \mid s(x)\in S_{\mathcal{T},1}^{\lambda_1} \right\}.
	$ Since  $S_{\mathcal{T},1}^{\lambda_1}=S_{\mathcal{T},1}^{'\lambda_1}$, thus
	\begin{equation*}
		d_{2^{\mathcal{T}-1}}=	\min \left\{ wt(s(x)) \mid s(x)\in S_{\mathcal{T},1}^{'\lambda_1} \right\} =  \min \left\{ wt(f(x)) \mid f(x)\in F_{\mathcal{T},1}^{'\lambda_1} \right\},
	\end{equation*}
	where $F_{\mathcal{T},1}^{'\lambda_1} $ is defined in a similar way as we have earlier defined  $F_{\mathcal{T},r}^{'2 \cdot 3^v} $,
	\begin{equation*}
		F_{\mathcal{T},1}^{'\lambda_1}= \left\{ g(x)\big( x^{2\cdot3^v}+x^{3^v}+1 \big)\ \big| g(x)\in \mathbb{F}_2[x],\ \deg (g(x)) \leq \lambda_1-1,\ g(x) \text{ has constant term  }  1 \right\}.
	\end{equation*}
	Let $f(x) \in F_{\mathcal{T},1}^{'\lambda_1}$. Since $\lambda_1\leq 3^v$, we may write
	\begin{equation*}
		f(x)= \big( x^{2\cdot3^v}+x^{3^v}+1\big)  \sum_{i=0}^{ 3^v-1} g_ix^{i},
	\end{equation*} where each $g_i\in \mathbb{F}_2$ and $g_0=1$. Since  $cw \left(\left( x^{2\cdot3^v}+x^{3^v}+1\right)^{2^r-1} \right)=3^v$ and $\deg \left(\sum_{i=0}^{ 3^v-1} g_ix^{i}\right) < 3^v$, thus  $wt(f(x)) \geq wt \left(\left( x^{2\cdot3^v}+x^{3^v}+1\right) \right)$. Moreover, $x^{2\cdot3^v}+x^{3^v}+1 \in F_{\mathcal{T},1}^{'\lambda_1}$, therefore
	\begin{equation*}
		d_{2^{\mathcal{T}-1}}= wt\big( x^{2\cdot3^v}+x^{3^v}+1 \big)=3. 
	\end{equation*}
\end{itemize}
Using Theorem~\ref{theorem8}, we immediately obtain the following result.
\begin{theorem}\label{theorem21}
	For $ 1\leq i < \mathcal{L}'$, the Hamming distance of $C_{2^{\mathcal{T}-1}+i}$ satisfies 
	\begin{equation*}
		d_{2^{\mathcal{T}-1}+i} \geq6.
	\end{equation*}
\end{theorem}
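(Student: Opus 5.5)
The argument combines Theorem~\ref{theorem8} with the value of $d_{2^{\mathcal{T}-1}}$ computed just before the statement. Theorem~\ref{theorem8} applies to any irreducible $P(x)$ whenever $\mathcal{L}=2^{\mathcal{T}-1}+\mathcal{L}'$ with $1<\mathcal{L}'\leq 2^{\mathcal{T}-2}$. For $1\leq i<\mathcal{L}'$ it gives $d_{2^{\mathcal{T}-1}+i}\geq 2\,d_{2^{\mathcal{T}-1}}$. Specializing to $P(x)=x^{2\cdot3^v}+x^{3^v}+1$, the result follows once we know $d_{2^{\mathcal{T}-1}}=3$.

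First I would recall why Theorem~\ref{theorem8} holds. A codeword $c(x)$ of $C_{2^{\mathcal{T}-1}+1}$ has the form $P(x)^{2^{\mathcal{T}-1}}g(x)$ with $g(x)=P(x)f(x)$. Split $g(x)$ into its residue classes modulo $2^{\mathcal{T}-1}$, as in Theorem~\ref{theorem6}. Since $g$ is nilpotent, at least two classes are nonzero. The degree bound here is $\deg g<m\mathcal{L}'$ rather than $m2^{\mathcal{T}-1}$, but each class component still has the form $x^{\ast}a(x)^{2^{\mathcal{T}-1}}$ with $\deg a<m$. Such an $a(x)$ is a unit, so every class contributes at least $d_{2^{\mathcal{T}-1}}$ to the weight. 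The same bound passes to all $C_{2^{\mathcal{T}-1}+i}$ with $i\geq1$ because these codes are nested inside $C_{2^{\mathcal{T}-1}+1}$.

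Next I would confirm $d_{2^{\mathcal{T}-1}}=3$ through the two-case discussion preceding the statement, using Theorem~\ref{theorem4} with $s=1$. Since $\mathcal{L}'\leq 2^{\mathcal{T}-2}$, the parameter $\lambda_1$ satisfies $\lambda_1\leq 3^v$. If $\lambda_1=1$, then $d_{2^{\mathcal{T}-1}}=wt(P(x)^{2^{\mathcal{T}-1}})=wt(P(x))=3$. If $\lambda_1\geq 2$, then every element of $S_{\mathcal{T},1}^{\lambda_1}$ is the $2^{\mathcal{T}-1}$-th power of $a(x)P(x)$ with $\deg a<3^v$. Because $cw(P(x))=3^v>\deg a$, no cancellation occurs, so the weight is $3\,wt(a)\geq 3$. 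Equality is attained at $a=1$.

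Combining the two steps gives $d_{2^{\mathcal{T}-1}+i}\geq 2\cdot 3=6$ for $1\leq i<\mathcal{L}'$.

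The only delicate point is justifying that Theorem~\ref{theorem8}'s transfer of the argument of Theorem~\ref{theorem6} is valid with the shorter length $m\mathcal{L}$. Specifically, one has to check that every residue-class piece $a(x)^{2^{\mathcal{T}-1}}$ has $\deg a<m$, which follows from $\deg g<m\mathcal{L}'\leq m2^{\mathcal{T}-2}$. This makes each piece a unit multiple of $P(x)^{2^{\mathcal{T}-1}}$ lying in the code $C_{2^{\mathcal{T}-1}}$. Since that theorem is already stated earlier in the paper, I would simply invoke it.
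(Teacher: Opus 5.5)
Your proposal is correct and follows the paper's own route. The paper derives Theorem~\ref{theorem21} directly from Theorem~\ref{theorem8}, which gives $d_{2^{\mathcal{T}-1}+i}\geq 2\,d_{2^{\mathcal{T}-1}}$, combined with the preceding computation $d_{2^{\mathcal{T}-1}}=3$ (Theorem~\ref{theorem4} with $\lambda_1\leq 3^v$ and the no-cancellation argument from $cw(P(x))=3^v$); your explicit check that each residue-class piece has $\deg a<m$ under the shorter length usefully fills in the paper's unelaborated ``proceeding as in Theorem~\ref{theorem6}''.
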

\subsection{$d_j$ for $2^{\mathcal{T}-1} < j < \mathcal{L}$, when $ \mathcal{L} =2^\mathcal{T}- 2^{\mathcal{T}-R}+ \mathcal{L}'$ with $2\leq R \leq \mathcal{T}-1$ and $1\leq \mathcal{L}' \leq 2^{\mathcal{T}-R-1}$}

Using Theorems~\ref{theorem9} and \ref{theorem10}, the Hamming distance of the codes $C_j$ for $j> 2^{\mathcal{T}-1}$ can now be determined.
First we compute the Hamming distance for the codes $C_{2^\mathcal{T}-2^{\mathcal{T}-r}}$, where $1\leq r \leq R$.
\begin{theorem}\label{theorem22}
	For $1\leq r\leq R$, let $\lambda_r'$ be the positive integer satisfying $ \left(\lambda_r'-1\right) 2^{\mathcal{T}-r} < 2 \cdot 3^v \left(2^{\mathcal{T}-r}\right)-  2 \cdot 3^v \left(2^{\mathcal{T}-R}-\mathcal{L}'\right) \leq \lambda_r'2^{\mathcal{T}-r}$. Then the Hamming distance of $C_{2^\mathcal{T}-2^{\mathcal{T}-r}}$ is given by
	\begin{equation*}
		d_{2^{\mathcal{T}}-2^{\mathcal{T}-r}}=  \begin{cases}
			\frac{2^{r+2}-1}{3}, & \text{if }  r \text{ is even},\\
			\frac{2^{r+2}-2}{3}, & \text{if } 1\leq r \leq R-1, \ r \text{ is odd},\\
			\frac{2^{r+2}+1}{3}, & \text{if } r=R, \ r \text{ is odd}.
		\end{cases}
	\end{equation*}
\end{theorem}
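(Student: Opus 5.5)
The plan is to reduce everything to Theorem~\ref{theorem9} with $P(x)=x^{2\cdot3^v}+x^{3^v}+1$, $m=2\cdot3^v$, and write $n=3^v$. The first step is the observation used in the passage from $S_{\mathcal{T},r}^{'2\cdot3^v}$ to $F_{\mathcal{T},r}^{'2\cdot3^v}$: since $a(x)^{2^{\mathcal{T}-r}}P(x)^{2^{\mathcal{T}}-2^{\mathcal{T}-r}}=\big(a(x)P(x)^{2^r-1}\big)^{2^{\mathcal{T}-r}}$ and squaring preserves Hamming weight over $\mathbb{F}_2$, Theorem~\ref{theorem9} gives
\begin{equation*}
d_{2^{\mathcal{T}}-2^{\mathcal{T}-r}}=\min\Big\{wt\big(a(x)P(x)^{2^r-1}\big)\ \Big|\ \deg(a(x))\leq \lambda_r'-1,\ a(0)=1\Big\}.
\end{equation*}
This also covers the case $\lambda_r'=1$, where only $a(x)=1$ occurs. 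Everything then hinges on locating $\lambda_r'$ relative to $n$.

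\textbf{Locating $\lambda_r'$.} The defining quantity is $2n\big(2^{\mathcal{T}-r}-2^{\mathcal{T}-R}+\mathcal{L}'\big)$.
\begin{itemize}
\item If $r\leq R-1$, then $2^{\mathcal{T}-R}\leq 2^{\mathcal{T}-r-1}$ and $\mathcal{L}'\geq 1$. Hence the quantity exceeds $n2^{\mathcal{T}-r}$ and is at most $2n2^{\mathcal{T}-r}$, which gives $n<\lambda_r'\leq 2n$.
\item If $r=R$, the quantity equals $2n\mathcal{L}'\leq n2^{\mathcal{T}-R}$, because $\mathcal{L}'\leq 2^{\mathcal{T}-R-1}$. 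Hence $\lambda_R'\leq n$.
\end{itemize}
This dichotomy is the step needing the most care, since it is exactly what separates the third formula from the first two.

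\textbf{Case $r\leq R-1$.} Since $\lambda_r'-1\geq n$, both $a(x)=1$ and $a(x)=1+x^{n}$ are admissible. Since $\deg(a(x))\leq 2n-1$, the decomposition argument preceding Table~\ref{tab8} applies verbatim: write $a(x)=\sum_{i=0}^{n-1}\big(g_i+g_{n+i}x^n\big)x^i$. All exponents of $P(x)^{2^r-1}$ and of $(1+x^n)P(x)^{2^r-1}$ are multiples of $n$, so
\begin{equation*}
wt\big(a(x)P(x)^{2^r-1}\big)=\alpha\, w_1+\beta\, w_2 ,
\end{equation*}
where $w_1=wt\big(P(x)^{2^r-1}\big)$, $w_2=wt\big((1+x^n)P(x)^{2^r-1}\big)$, and $\alpha+\beta\geq1$. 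Thus the minimum is $\min\{w_1,w_2\}$, as in Eq.~\eqref{eq52}.
\begin{itemize}
\item For $r\geq2$, Eqs.~\eqref{eq53} and \eqref{eq54} give $\frac{2^{r+2}-1}{3}$ for even $r$ and $\frac{2^{r+2}-2}{3}$ for odd $r$.
\item For $r=1$ these equations are not stated, so I would check directly: $w_1=3$ and $(1+x^n)P(x)=1+x^{3n}$ has weight $2=\frac{2^{3}-2}{3}$, matching the odd formula.
\end{itemize}

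\textbf{Case $r=R$.} Here $\deg(a(x))<n\leq cw\big(P(x)^{2^R-1}\big)$, because all exponents of $P(x)^{2^R-1}$ are multiples of $n$. Hence no cancellation occurs, and $wt\big(a(x)P(x)^{2^R-1}\big)=wt(a(x))\,w_1\geq w_1$, with equality at $a(x)=1$. Since $R\geq2$, Eq.~\eqref{eq53} gives $w_1=\frac{2^{R+2}-1}{3}$ for even $R$, which agrees with the first line of the statement, and $\frac{2^{R+2}+1}{3}$ for odd $R$, which is the third line. The key point is that $1+x^n$ is no longer admissible when $r=R$, so the smaller value $w_2$ cannot be attained for odd $R$.
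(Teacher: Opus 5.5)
Your proposal is correct and follows the paper's proof essentially step for step. You reduce via Theorem~\ref{theorem9} to minimizing $wt\big(a(x)P(x)^{2^r-1}\big)$ over $\deg(a(x))\le\lambda_r'-1$. You show $3^v<\lambda_r'\le 2\cdot3^v$ when $r\le R-1$ (directly, where the paper argues by contradiction), so that Eq.~\eqref{eq52} together with Eqs.~\eqref{eq53}--\eqref{eq54} applies. You then show $\lambda_R'\le 3^v$, so the coefficient-weight argument yields $wt\big(P(x)^{2^R-1}\big)$.

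Your separate check of $r=1$, where $(1+x^{3^v})P(x)=1+x^{3^{v+1}}$ has weight $2$, is a worthwhile addition. The paper applies Eqs.~\eqref{eq53}--\eqref{eq54} at $r=1$ even though they are stated only for $r\ge 2$.
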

\begin{proof}
	Since 
	\begin{equation*}
		\left(\lambda_r'-1\right) 2^{\mathcal{T}-r} < 2 \cdot 3^v \left(2^{\mathcal{T}-r}\right)-  2 \cdot 3^v \left(2^{\mathcal{T}-R}-\mathcal{L}'\right) \leq \lambda_r'2^{\mathcal{T}-r},
	\end{equation*}
	thus clearly, $1\leq \lambda_r'\leq  2 \cdot 3^v$.
	From Theorem~\ref{theorem9}, we have 
	\begin{equation*}
		d_{2^{\mathcal{T}}-2^{\mathcal{T}-r}}= \begin{cases}
			wt\Big( \left(x^{2\cdot3^v}+x^{3^v}+1\right)^{2^{\mathcal{T}}-2^{\mathcal{T}-r}} \Big), & \text{if }  \lambda_r'=1,\\
			\min\left\{  wt (s(x)) \mid s(x) \in S_{\mathcal{T},r}^{'\lambda_{r}'} \right\}, &\text{if } \lambda_r'\geq2,
		\end{cases}
	\end{equation*}
	where \allowdisplaybreaks
	\begin{align*}
		S_{\mathcal{T},r}^{'\lambda_{r}'}=  \Bigl\{ &  g(x)^{2^{\mathcal{T}-r}} \big( x^{2\cdot3^v}+x^{3^v}+1\big)^{2^{\mathcal{T}}-2^{\mathcal{T}-r}}\ \big|\ g(x)\in \mathbb{F}_2[x],\ \deg (g(x)) \leq \lambda_{r}'-1,\  g(x) \text{ has constant term  }  1 \Bigr\}.
	\end{align*} Let 
\begin{equation*}
F_{\mathcal{T},r}^{'\lambda_{r}'}= \left\{ g(x)\big( x^{2\cdot3^v}+x^{3^v}+1\big)^{2^r-1} \ \big|\ g(x)\in \mathbb{F}_2[x],\ \deg (g(x)) \leq \lambda_{r}'-1,\ g(x) \text{ has constant term  }  1 \right\}.
\end{equation*} 	Then 
\begin{equation*}
S_{\mathcal{T},r}^{'\lambda_{r}'}= \left\{  f(x)^{2^{\mathcal{T}-r}} \mid f(x)\in F_{\mathcal{T},r}^{'\lambda_{r}'} \right\}.
\end{equation*}
Consequently, 
\begin{equation*}
\min\left\{  wt (s(x)) \mid s(x) \in S_{\mathcal{T},r}^{'\lambda_{r}'} \right\} = \min\left\{  wt (f(x)) \mid f(x) \in F_{\mathcal{T},r}^{'\lambda_{r}'} \right\}.
\end{equation*}
Thus, 
\begin{equation}\label{eq59}
d_{2^{\mathcal{T}}-2^{\mathcal{T}-r}}= \begin{cases}
	wt\Big( \left(x^{2\cdot3^v}+x^{3^v}+1\right)^{2^{\mathcal{T}}-2^{\mathcal{T}-r}} \Big), & \text{if }  \lambda_r'=1,\\
	\min\left\{  wt (f(x)) \mid f(x) \in F_{\mathcal{T},r}^{'\lambda_{r}'} \right\}, &\text{if } \lambda_r'\geq2,
\end{cases}
\end{equation} 	The set  $F_{\mathcal{T},r}^{'\lambda_{r}'}$  is clearly a subset of $F_{\mathcal{T},r}^{'2\cdot3^v}$ as $ \lambda_r' \leq 2\cdot3^v$.
Thus, the weights of elements of $F_{\mathcal{T},r}^{'\lambda_{r}'}$ can be written as linear combination  of $wt \left(  \left( x^{2\cdot3^v}+x^{3^v}+1\right)^{2^r-1} \right)$ and $wt \left( \left(1+x^{3^v}\right) \left( x^{2\cdot3^v}+x^{3^v}+1\right)^{2^r-1} \right)$, as discussed earlier.   Now, we consider the following two cases:\\
\textbf{Case I:} If $1\leq r \leq R-1$. Suppose $1\leq \lambda_{r}' \leq 3^v$, then by the definition of $\lambda_{r}'$, we have 
$
2 \cdot 3^v \left(2^{\mathcal{T}-r}\right)-  2 \cdot 3^v \left(2^{\mathcal{T}-R}-\mathcal{L}'\right) \leq 3^v 2^{\mathcal{T}-r}.
$ Consequently,
\begin{equation}\label{eq60}
2^{\mathcal{T}-r-1} \leq 2^{\mathcal{T}-R} - \mathcal{L}'.
\end{equation}
Since $ 1\leq \mathcal{L}' \leq 2^{\mathcal{T}-R-1}$ and $1\leq r \leq R-1$, it follows that
$
2^{\mathcal{T}-R-1} \leq 	2^{\mathcal{T}-R} - \mathcal{L}'  \leq 2^{\mathcal{T}-R}-1 < 2^{\mathcal{T}-R} \leq 2^{\mathcal{T}-r-1}. 
$ This implies that $ 2^{\mathcal{T}-R} - \mathcal{L}' < 2^{\mathcal{T}-r-1}$, which contradicts Inequality~\eqref{eq60}. Therefore, for $1\leq r \leq R-1$, we have $ 3^v+1 \leq \lambda_{r}' \leq 2 \cdot 3^v.$ Hence, in this case 
\begin{equation*}
d_{2^{\mathcal{T}}-2^{\mathcal{T}-r}}= \min\left\{  wt \Big(\big(x^{2\cdot3^v}+x^{3^v}+1\big)^{2^r-1}\Big),\ wt \Big( \big(1+x^{3^v}\big) \big(x^{2\cdot3^v}+x^{3^v}+1\big)^{2^r-1} \Big)  \right\}.
\end{equation*}
	Using Eqs.~\eqref{eq53} and \eqref{eq54}, we obtain
\begin{equation*}
	d_{2^{\mathcal{T}}-2^{\mathcal{T}-r}}= \begin{cases}
		\frac{2^{r+2}-1}{3}, & \text{if } r \text{ is even},\\
		\frac{2^{r+2}-2}{3}, &\text{if } r \text{ is odd}.
	\end{cases}
\end{equation*}
	\textbf{Case II:} If $r=R.$ Since $ \mathcal{L}' \leq 2^{\mathcal{T}-R-1}$,  we have
$
2 \cdot 3^v \left(2^{\mathcal{T}-R}\right)-  2 \cdot 3^v \left(2^{\mathcal{T}-R}-\mathcal{L}'\right)= 2 \cdot 3^v  \mathcal{L}' \leq 3^v 2^{\mathcal{T}-R}.
$
Therefore, $\lambda_{R}' \leq 3^v$.

If $\lambda_{R}'=1$, then by Eq.~\eqref{eq59}, we have 
	\begin{equation*}
	d_{2^{\mathcal{T}}-2^{\mathcal{T}-R}}= wt \Big( \big(x^{2\cdot3^v}+x^{3^v}+1\big)^{2^{\mathcal{T}}-2^{\mathcal{T}-R}} \Big)= wt \Big( \big(x^{2\cdot3^v}+x^{3^v}+1\big)^{2^R-1} \Big).
\end{equation*}

If $2\leq \lambda_{R}' \leq 3^v. $ Then using Eq.~\eqref{eq59}, we get
$
d_{2^{\mathcal{T}}-2^{\mathcal{T}-R}}= \min\left\{  wt (f(x)) \mid f(x) \in F_{\mathcal{T},R}^{'\lambda_{R}'} \right\}.
$
In this case,
\begin{equation*}
	\min\left\{  wt (f(x)) \mid f(x) \in F_{\mathcal{T},R}^{'\lambda_{R}'} \right\}= wt \Big( \big(x^{2\cdot3^v}+x^{3^v}+1\big)^{2^R-1} \Big).
\end{equation*}

Thus,
\begin{equation*}
	d_{2^{\mathcal{T}}-2^{\mathcal{T}-R}}=  wt \Big( \big(x^{2\cdot3^v}+x^{3^v}+1\big)^{2^R-1} \Big).
\end{equation*}
	Therefore, using Eq.~\eqref{eq53}, we obtain
\begin{equation*}
	d_{2^{\mathcal{T}}-2^{\mathcal{T}-R}}=wt \Big( \big(x^{2\cdot3^v}+x^{3^v}+1\big)^{2^R-1} \Big)= \begin{cases}
		\frac{2^{R+2}-1}{3}, & \text{ if } R \text{ is even},\\
		\frac{2^{R+2}+1}{3}, & \text{ if } R \text{ is odd}.
	\end{cases}
\end{equation*}
This completes the proof.
\end{proof}
Using Theorems~\ref{theorem22} and Theorem~\ref{theorem10}, we get the following result.
\begin{theorem}\label{theorem23}
	For $1\leq r \leq R-2$ with $ R \geq3$, and for $1\leq i \leq 2^{\mathcal{T}-r-1}$, the Hamming distance of $C_{2^{\mathcal{T}}-2^{\mathcal{T}-r}+i}$ satisfies the following:
	\begin{enumerate}
		\item If $r$ is even, then
		\begin{equation*}
			d_{2^{\mathcal{T}}-2^{\mathcal{T}-r}+i} = 	\frac{2^{r+3}-2}{3}.
		\end{equation*}
		\item If $r$ is odd, then
		\begin{equation*}
			\frac{2^{r+3}-4}{3} \leq	d_{2^{\mathcal{T}}-2^{\mathcal{T}-r}+i} \leq \frac{2^{r+3}-4}{3}+1.
		\end{equation*}
	\end{enumerate}
	For $r=R-1$ and $1\leq i \leq 2^{\mathcal{T}-R}$,  we have
	\begin{enumerate}
		\item If $R-1$ is even, then
		\begin{equation*}
			\frac{2^{R+2}-2}{3} \leq 	d_{2^{\mathcal{T}}-2^{\mathcal{T}-R+1}+i} \leq \frac{2^{R+2}-2}{3}+1.
		\end{equation*}
		\item  If $R-1$ is odd, then
		\begin{equation*}
			\frac{2^{R+2}-4}{3} \leq 	d_{2^{\mathcal{T}}-2^{\mathcal{T}-R+1}+i} \leq \frac{2^{R+2}-4}{3}+1.
		\end{equation*}
	\end{enumerate}
	Moreover, for $r=R$ and $1\leq i < \mathcal{L}'$,
	\begin{equation*}
		d_{2^{\mathcal{T}}-2^{\mathcal{T}-R}+i} \geq \frac{2^{R+3}-2}{3}, \quad \text{if } R \text{ is even},
	\end{equation*}
	\begin{equation*}
		d_{2^{\mathcal{T}}-2^{\mathcal{T}-R}+i} \geq \frac{2^{R+3}+2}{3}, \quad \text{if }  R \text{ is odd}.
	\end{equation*}
\end{theorem}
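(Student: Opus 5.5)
The plan is to read off every bound directly from Theorem~\ref{theorem10}, which gives
\begin{equation*}
2d_{2^\mathcal{T}-2^{\mathcal{T}-r}}\leq d_{2^\mathcal{T}-2^{\mathcal{T}-r}+i}\leq d_{2^\mathcal{T}-2^{\mathcal{T}-r-1}}
\end{equation*}
for $1\leq r\leq R-1$ and $1\leq i\leq 2^{\mathcal{T}-r-1}$, together with
\begin{equation*}
d_{2^\mathcal{T}-2^{\mathcal{T}-R}+i}\geq 2d_{2^\mathcal{T}-2^{\mathcal{T}-R}}
\end{equation*}
for $1\leq i<\mathcal{L}'$. We then substitute the exact values of $d_{2^\mathcal{T}-2^{\mathcal{T}-r}}$ for $1\leq r\leq R$ supplied by Theorem~\ref{theorem22}. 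The only real work is bookkeeping: identify which of the three formulas of Theorem~\ref{theorem22} applies at the index $r$ and at the index $r+1$, keeping in mind that the exceptional formula $\frac{2^{R+2}+1}{3}$ occurs only at $r+1=R$ with $R$ odd.

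\textbf{Case $1\leq r\leq R-2$.} Here both $r$ and $r+1$ are at most $R-1$, so only the first two formulas of Theorem~\ref{theorem22} are used.

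If $r$ is even, the lower bound is $2\cdot\frac{2^{r+2}-1}{3}=\frac{2^{r+3}-2}{3}$. Since $r+1$ is odd and $r+1\leq R-1$, the upper bound is $\frac{2^{r+3}-2}{3}$. The two bounds coincide, which gives the stated equality.

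If $r$ is odd, the lower bound is $2\cdot\frac{2^{r+2}-2}{3}=\frac{2^{r+3}-4}{3}$. Since $r+1$ is even, the upper bound is $\frac{2^{r+3}-1}{3}=\frac{2^{r+3}-4}{3}+1$.

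\textbf{Case $r=R-1$.} Now $2^{\mathcal{T}-r-1}=2^{\mathcal{T}-R}$, which matches the stated range of $i$. The upper bound is $d_{2^\mathcal{T}-2^{\mathcal{T}-R}}$, evaluated with the $r=R$ branch of Theorem~\ref{theorem22}.

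If $R-1$ is even, then $R$ is odd. The lower bound is $2\cdot\frac{2^{R+1}-1}{3}=\frac{2^{R+2}-2}{3}$, and the upper bound is $\frac{2^{R+2}+1}{3}=\frac{2^{R+2}-2}{3}+1$.

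If $R-1$ is odd, then $R$ is even. The lower bound is $2\cdot\frac{2^{R+1}-2}{3}=\frac{2^{R+2}-4}{3}$, and the upper bound is $\frac{2^{R+2}-1}{3}=\frac{2^{R+2}-4}{3}+1$.

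\textbf{Case $r=R$, $1\leq i<\mathcal{L}'$.} The second part of Theorem~\ref{theorem10} gives $d\geq 2d_{2^\mathcal{T}-2^{\mathcal{T}-R}}$. This equals $\frac{2^{R+3}-2}{3}$ when $R$ is even and $\frac{2(2^{R+2}+1)}{3}=\frac{2^{R+3}+2}{3}$ when $R$ is odd.

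\textbf{Expected obstacle.} There is no genuine obstacle; the main risk is misapplying the boundary case of Theorem~\ref{theorem22}. That case analysis itself rests on the claim $\lambda_r'\geq 3^v+1$ for $r\leq R-1$ and $\lambda_R'\leq 3^v$. I would also double-check the hypothesis $R\geq3$, which is needed only so that the range $1\leq r\leq R-2$ is nonempty. Finally, the range of $i$ in the case $r=R-1$ should be checked against the fact that $\mathcal{L}$ may be smaller than $2^\mathcal{T}-2^{\mathcal{T}-R}+2^{\mathcal{T}-R}$. This is fine, since $2^\mathcal{T}-2^{\mathcal{T}-R+1}+2^{\mathcal{T}-R}=2^\mathcal{T}-2^{\mathcal{T}-R}<\mathcal{L}$.
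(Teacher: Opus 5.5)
Your proposal is correct and takes the same route as the paper, which obtains Theorem~\ref{theorem23} by substituting the exact values of Theorem~\ref{theorem22} into the bounds of Theorem~\ref{theorem10}. Your case bookkeeping checks out, in particular using the exceptional value $\frac{2^{R+2}+1}{3}$ only as the upper bound when $r=R-1$ with $R$ odd, and in the final $r=R$ lower bound.
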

Thus, combining Theorems~\ref{theorem18}--\ref{theorem23}, we obtain complete or explicit bounded descriptions for the Hamming distances of all binary polycyclic codes associated with $\left(x^{2\cdot3^v}+x^{3^v}+1\right)^\mathcal{L}$,  $ \mathcal{L} \geq 2$.
\begin{remark}
	Since the polynomial $x^{2\cdot3^v}+x^{3^v}+1$ is self-reciprocal over $\mathbb{F}_2$, it follows that every binary polycyclic code associated with $\left(x^{2\cdot3^v}+x^{3^v}+1\right)^{2^\mathcal{T}}$ is reversible.
\end{remark}
We next turn to the LCD property of binary polycyclic codes associated with $\left(x^{2\cdot3^v}+x^{3^v}+1\right)^{2^\mathcal{T}}$, where  $v\geq 0$ and $\mathcal{T}\geq1$. We begin by identifying several families that are LCD.
\begin{theorem}\label{theorem24}
	For $0\leq r \leq \mathcal{T}-1$, the binary polycyclic code $C_{2^r}= \big\langle \left(x^{2\cdot3^v}+x^{3^v}+1\right)^{2^r} \big\rangle$ associated with $ \left(x^{2\cdot3^v}+x^{3^v}+1\right)^{2^\mathcal{T}}$ is an LCD code.
\end{theorem}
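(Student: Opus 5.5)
The plan is to apply the criterion of Theorem~\ref{theorem15} with $j=2^r\le 2^{\mathcal{T}-1}$, in the concrete setting of $P(x)=x^{2n}+x^n+1$ with $n=3^v$. Here $m=2n$, $\mathcal{L}=2^{\mathcal{T}}$, $e=3n$ (Corollary~\ref{corollary3}) and $U(x)=U^*(x)=1+x^n$. So $C_{2^r}$ is LCD if and only if, for every nonzero $\delta(x)$ with $\deg\delta(x)<2n\cdot 2^r$,
\begin{equation*}
\deg\Big( (x^{3n}+1)^{2^{\mathcal{T}}-2^{r+1}}\,(1+x^n)^{2^{r+1}}\,\delta(x) \bmod x^{2n2^{\mathcal{T}}}\Big)\;\ge\; 2n\big(2^{\mathcal{T}}-2^r\big).
\end{equation*}

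The key simplification is the Frobenius substitution. Put $u=n2^{r+1}$, $y=x^{u}$ and $N=2^{\mathcal{T}-r-1}\ge 1$. Then
\begin{equation*}
(1+x^n)^{2^{r+1}}=1+y,\qquad (x^{3n}+1)^{2^{\mathcal{T}}-2^{r+1}}=(1+y^3)^{N-1},
\end{equation*}
so the multiplier is $Q(y)=(1+y)(1+y^3)^{N-1}$. The modulus becomes $x^{2n2^{\mathcal{T}}}=y^{2N}$, the degree bound on $\delta$ reads $\deg\delta<u$, and the target degree is $2n(2^{\mathcal{T}}-2^r)=(2N-1)u$.

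Since $\deg\delta<u$ and every exponent occurring in $Q$ is a multiple of $u$, the product $Q(y)\delta(x)$ splits into disjoint blocks $q_k\,y^k\delta(x)$, occupying the $x$-degree ranges $[ku,(k+1)u)$, with no cancellation between blocks. Reduction modulo $y^{2N}$ simply deletes the blocks with $k\ge 2N$. Hence the condition holds for every nonzero $\delta$ exactly when the coefficient $q_{2N-1}$ of $y^{2N-1}$ in $Q$ equals $1$. In that case the surviving top block $y^{2N-1}\delta(x)$ has degree at least $(2N-1)u$.

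It remains to compute $q_{2N-1}$, and I expect this to be the only real obstacle. In $\mathbb{F}_2[[y]]$ we have $(1+y^3)^{N-1}=(1+y^{3N})(1+y^3)^{-1}$, because $(1+y^3)^N=1+y^{3N}$ for $N$ a power of $2$. The factor $1+y^{3N}$ does not affect degrees below $3N$, and $2N-1<3N$. So $q_{2N-1}$ is the coefficient of $y^{2N-1}$ in
\begin{equation*}
(1+y)\sum_{k\ge0}y^{3k},
\end{equation*}
which is $1$ precisely for exponents $\equiv 0,1 \pmod 3$. (Equivalently, one can read this off a Lemma~\ref{lemma1}-type expansion.) Since $2N=2^{\mathcal{T}-r}\equiv 1$ or $2\pmod 3$, we get $2N-1\equiv 0$ or $1\pmod 3$, so $q_{2N-1}=1$ for every $0\le r\le \mathcal{T}-1$.

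As a sanity check, the boundary case $r=\mathcal{T}-1$ gives $N=1$ and $Q=1+y$, and the computation then matches a direct hand check for $x^2+x+1$ with $\mathcal{L}=2$. With $q_{2N-1}=1$ in hand, Theorem~\ref{theorem15} yields the LCD property for all $0\le r\le\mathcal{T}-1$.
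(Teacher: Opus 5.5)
Your proposal is correct and follows essentially the paper's own proof. Both apply Theorem~\ref{theorem15} with $j=2^r$, rewrite the multiplier as $(1+y)(1+y^3)^{N-1}$ with $y=x^{3^v2^{r+1}}$ and $N=2^{\mathcal{T}-r-1}$, use that the blocks $y^k\delta(x)$ cannot overlap, and conclude that the block $y^{2N-1}\delta(x)$ survives reduction modulo $y^{2N}$. The only difference is bookkeeping: the paper splits into the cases $N\le 2$ and $N\ge 3$, the latter using $N_r=\lfloor 2^{\mathcal{T}-r}/3\rfloor$ and the parity of $\mathcal{T}-r$, whereas your identity $(1+y^3)^{N-1}=(1+y^{3N})(1+y^3)^{-1}$ handles every $r$ at once via $2N-1\not\equiv 2 \pmod 3$.
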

\begin{proof}
		Here, we have $P(x)=P^*(x)=x^{2\cdot3^v}+x^{3^v}+1$,  $Ord(x^{2\cdot3^v}+x^{3^v}+1)=e= 3^{v+1}$, and $U(x)=U^*(x)= x^{3^v}+1.$
	By Theorem~\ref{theorem15}, it is enough to show that for every non-zero $\delta(x)\in \mathbb{F}_2[x]$ with $\deg (\delta(x))< 2^{r+1}3^v$, 
	the polynomial
	\begin{equation*}
		R(x)=	\big(x^{3^{v+1}}+1\big)^{2^\mathcal{T}-2^{r+1}} \big(x^{2 \cdot3^v}+1\big)^{2^r} \delta(x) \bmod \left(x^{3^v2^{\mathcal{T}+1}}\right)
	\end{equation*} satisfies
	\begin{equation*}
		\deg(R(x))\geq 2 \cdot3^v (2^\mathcal{T}-2^r).
	\end{equation*}
	Expanding $R(x)$ as 
	\allowdisplaybreaks
	\begin{align}\label{eq61}
		\nonumber R(x)=	&  \Big(x^{2^{r+1}3^{v+1}}+1\Big)^{2^{\mathcal{T}-r-1}-1} \big(x^{2^{r+1}3^v}+1\big) \delta(x) \mod \left(x^{3^v2^{\mathcal{T}+1}}\right)\\\nonumber
		=& \left(1+  x^{2^{r+1}3^{v+1}}+ x^{2(2^{r+1}3^{v+1})}+x^{3(2^{r+1}3^{v+1})}+\ldots+ x^{(2^{\mathcal{T}-r-1}-1)(2^{r+1}3^{v+1})} \right)\big(x^{2^{r+1}3^v}+1\big) \delta(x) \mod \left(x^{3^v2^{\mathcal{T}+1}}\right)\\\nonumber
		=& \left(1+x^{2^{r+1}3^v} \right) \delta(x)+ x^{2^{r+1}3^{v+1}} \left(1+x^{2^{r+1}3^v} \right) \delta(x)+ x^{2(2^{r+1}3^{v+1})} \left(1+x^{2^{r+1}3^v} \right) \delta(x)+\ldots\\&+ x^{(2^{\mathcal{T}-r-1}-1)(2^{r+1}3^{v+1})} \left(1+x^{2^{r+1}3^v} \right) \delta(x) \mod \left(x^{3^v2^{\mathcal{T}+1}}\right).
	\end{align}
	Now, we consider the following two cases:\\
		\textbf{Case I:} If $(2^{\mathcal{T}-r-1}-1)(2^{r+1}3^{v+1}) < 3^v2^{\mathcal{T}+1}$, then $2^\mathcal{T}< 3 \cdot 2^{r+1}.$ Moreover, since $ 0\leq r \leq \mathcal{T}-1$, thus $ 2^{r+1}\leq 2^\mathcal{T}$. Therefore, we have
	\begin{equation*}
		2^{r+1}\leq 2^\mathcal{T} < 3 \cdot 2^{r+1}< 2^{r+3}.
	\end{equation*}
	This gives us $ 2^\mathcal{T}= 2^{r+1}$, or  $ 2^\mathcal{T}= 2^{r+2}$. Consequently, $r=\mathcal{T}-2$ or $r=\mathcal{T}-1$. Now,  consider the following subcases:
	
	\textbf{Subcase I:} If $r=\mathcal{T}-2$, then by Eq.~\eqref{eq61}, we have \allowdisplaybreaks
		\begin{align}\label{eq62}
		\nonumber	R(x)= & \left(1+x^{2^{\mathcal{T}-1}3^v}\right) \delta(x)+ x^{2^{\mathcal{T}-1}3^{v+1}} \left(1+x^{2^{\mathcal{T}-1}3^v}\right) \delta(x) \mod \left(x^{3^v2^{\mathcal{T}+1}}\right)\\
		=& \delta(x)+  x^{2^{\mathcal{T}-1}3^v} \delta(x)+  x^{2^{\mathcal{T}-1}3^{v+1}}  \delta(x) \mod \left(x^{3^v2^{\mathcal{T}+1}}\right).
	\end{align} Here, $\deg (\delta(x))< 2^{\mathcal{T}-1}3^v.$ By Eq.~\eqref{eq62}, we have $2^{\mathcal{T}-1}3^{v+1} \leq \deg(R(x)) < 3^v2^{\mathcal{T}+1}.$ Clearly, for  $r=\mathcal{T}-2$, $2 \cdot3^v (2^\mathcal{T}-2^r)= 2^{\mathcal{T}-1}3^{v+1}. $ Thus, in this case, $C_{2^r}$ is an LCD code.

	\textbf{Subcase II:} If  $r=\mathcal{T}-1$, then by Eq.~\eqref{eq61}, we have
\allowdisplaybreaks
\begin{align*}
	R(x)= & \left(1+x^{2^\mathcal{T}3^v}\right) \delta(x)  \mod \left(x^{3^v2^{\mathcal{T}+1}}\right)= \delta(x)+ x^{2^\mathcal{T}3^v}  \delta(x)  \mod \left(x^{3^v2^{\mathcal{T}+1}}\right),
\end{align*} where $\deg(\delta(x)) < 2^\mathcal{T}3^v$. Clearly, $ 2^\mathcal{T}3^v \leq \deg(R(x)) < 3^v2^{\mathcal{T}+1}. $ Since, for $r=\mathcal{T}-1$, $2 \cdot3^v (2^\mathcal{T}-2^r)= 2^{\mathcal{T}}3^{v}$, thus $C_{2^r}$ is an LCD code.\\
\textbf{Case II:} If $(2^{\mathcal{T}-r-1}-1)(2^{r+1}3^{v+1}) \geq 3^v2^{\mathcal{T}+1}$. Let $N_r$ be the largest integer satisfying $0\leq N_r \leq 2^{\mathcal{T}-r-1}-2$ such that $N_r(2^{r+1}3^{v+1}) < 3^v2^{\mathcal{T}+1}. $ Then 
\begin{equation}\label{eq63}
	N_r= \Big\lfloor \frac{2^{\mathcal{T}-r}}{3} \Big\rfloor = \begin{cases}
		\frac{2^{\mathcal{T}-r}-1}{3}, & \text{if } \mathcal{T}-r \text{ is even},\\
		\frac{2^{\mathcal{T}-r}-2}{3}, & \text{if } \mathcal{T}-r \text{ is odd}.
	\end{cases}
\end{equation} By Eq.~\eqref{eq61}, we have 
\allowdisplaybreaks
\begin{align}\label{eq64}
\nonumber	R(x)=&  \left(1+x^{2^{r+1}3^v} \right) \delta(x)+ x^{2^{r+1}3^{v+1}} \left(1+x^{2^{r+1}3^v} \right) \delta(x)+ x^{2(2^{r+1}3^{v+1})} \left(1+x^{2^{r+1}3^v} \right) \delta(x)+\ldots\\&+ x^{N_r(2^{r+1}3^{v+1})} \left(1+x^{2^{r+1}3^v} \right) \delta(x) \mod \left(x^{3^v2^{\mathcal{T}+1}}\right),
\end{align} where $\deg (\delta(x))< 2^{r+1}3^v.$
Using Eq.~\eqref{eq63}, we obtain
\begin{equation}\label{eq65}
	N_r(2^{r+1}3^{v+1})= \begin{cases}
		\left(2^{\mathcal{T}+1}-2^{r+1}\right) 3^v, & \text{if } \mathcal{T}-r \text{ is even},\\
		\left(2^{\mathcal{T}+1}-2^{r+2}\right) 3^v, & \text{if } \mathcal{T}-r \text{ is odd}.
	\end{cases}
\end{equation} Now, we consider the following two subcases:

\textbf{Subcase I:} If $\mathcal{T}-r$ is even, then by Eqs.~\eqref{eq64} and~\eqref{eq65}, we have 
\allowdisplaybreaks
\begin{align*}
	2\cdot3^v (2^\mathcal{T}-2^r) \leq \deg(R(x))< 3^v2^{\mathcal{T}+1}.
\end{align*} Thus, $C_{2^r}$ is an LCD code. 

	\textbf{Subcase II:} If $\mathcal{T}-r$ is odd, then by Eqs.~\eqref{eq64} and~\eqref{eq65}, we have 
\begin{equation*}
	2\cdot3^v (2^\mathcal{T}-2^r) \leq \deg(R(x))< 3^v2^{\mathcal{T}+1}.
\end{equation*} Thus, $C_{2^r}$ is an LCD code. 

This completes the proof.
\end{proof}
\begin{theorem}\label{theorem25}
	For $2\leq r \leq \mathcal{T}$,	the binary polycyclic code $C_{2^\mathcal{T}-2^{\mathcal{T}-r}}= \big\langle \left(x^{2\cdot3^v}+x^{3^v}+1\right)^{2^\mathcal{T}-2^{\mathcal{T}-r}} \big\rangle$ associated with $ \left(x^{2\cdot3^v}+x^{3^v}+1\right)^{2^\mathcal{T}}$  is an LCD code.
\end{theorem}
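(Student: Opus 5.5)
We plan to apply the criterion of Theorem~\ref{theorem16} and turn it into a one-variable power-series computation. Write $n=3^v$, $m=2n$, $\mathcal{L}=2^{\mathcal{T}}$, $k=2^{\mathcal{T}-r}$ and $j=2^{\mathcal{T}}-k$. Since $r\ge 2$, we have $j>2^{\mathcal{T}-1}$. We know $P=P^*=x^{2n}+x^n+1$ and $U=U^*=1+x^n$ (Corollary~\ref{corollary3}), with $U^{2^{\mathcal{T}}}=1+x^{n2^{\mathcal{T}}}$. By Theorem~\ref{theorem16}, $C_j$ fails to be LCD exactly when there are nonzero $\gamma,\delta$ with $\deg\gamma<2nk$, $\deg\delta<2nj$ and
\begin{equation*}
P(x)^{2^{\mathcal{T}}-2k}\gamma(x)\equiv \big(1+x^{n2^{\mathcal{T}}}\big)\delta(x) \bmod x^{N},\qquad N=2n2^{\mathcal{T}}.
\end{equation*}
Because $P$ is a unit modulo $x^N$, we will instead write $\delta\equiv A(x)^{-1}\gamma$ with $A=P^{-(2^{\mathcal{T}}-2k)}(1+x^{n2^{\mathcal{T}}})$. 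Equivalently, $\gamma\equiv A\delta$, and we will use whichever direction is convenient.

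\textbf{Step 1: reduce to one variable.} Since $2^{\mathcal{T}}-2k=2k(2^{r-1}-1)$ and $(1+x^n)^{2k}=1+x^{2kn}$, we substitute $y=x^{2kn}$. We then have $P^{2k}=(1+y^3)/(1+y)$, $1+x^{n2^{\mathcal{T}}}=1+y^{2^{r-1}}$, and $x^N=y^{2^r}$. Hence
\begin{equation*}
A^{-1}=B(y):=(1+y+y^2)^{2^{r-1}-1}\big(1+y^{2^{r-1}}\big)^{-1}\equiv(1+y+y^2)^{2^{r-1}-1}\big(1+y^{2^{r-1}}\big)\bmod y^{2^r}.
\end{equation*}

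\textbf{Step 2: split by residues of exponents.} As in the proofs of Theorems~\ref{theorem4} and~\ref{theorem12}, every exponent in $B$ is a multiple of $2kn$. We therefore split $\gamma=\sum_{i<2kn}x^i\gamma_i(y)$ and $\delta=\sum_{i<2kn}x^i\delta_i(y)$. The congruence then decouples into $\delta_i(y)\equiv B(y)\gamma_i(y)\bmod y^{2^r}$ for each $i$.

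The degree bounds translate as follows. The bound $\deg\gamma<2kn$ forces each $\gamma_i$ to be a constant $c_i\in\mathbb{F}_2$. The bound $\deg\delta<2nj=2kn(2^r-1)$ forces $\deg_y\delta_i\le 2^r-2$. Since $\gamma\ne0$, some $c_i=1$, and then $\delta_i=B(y)\bmod y^{2^r}$. So it suffices to show that $B(y)\bmod y^{2^r}$ has $y$-degree exactly $2^r-1$. That contradicts $\deg_y\delta_i\le 2^r-2$, and so $C_j$ is LCD.

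\textbf{Step 3: the key coefficient.} Put $F=(1+y+y^2)^{2^{r-1}-1}$, which has degree $2^r-2$. The coefficient of $y^{2^r-1}$ in $F\,(1+y^{2^{r-1}})$ equals the coefficient of $y^{2^{r-1}-1}$ in $F$, i.e. the middle coefficient of $F$. For $r-1\ge2$, Lemma~\ref{lemma1} (the isolated term $x^{2^{r-1}-1}$ in both parity cases) shows this coefficient is $1$. For $r=2$, $F=1+y+y^2$ and the coefficient of $y$ is $1$.

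The only delicate point is the bookkeeping in Steps 1--2. We must make sure the truncation modulo $x^N$ commutes with the residue splitting, by the same argument as Step~I of Theorem~\ref{theorem12}. We must also confirm that inverting $P$ modulo $x^N$ is legitimate in both directions of the equivalence. Once the problem is reduced to the single coefficient in Step 3, Lemma~\ref{lemma1} finishes it.
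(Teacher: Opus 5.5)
Your proof is correct and follows essentially the paper's route. Both arguments use Theorem~\ref{theorem16}, the fact from Lemma~\ref{lemma1} that the middle coefficient of $(1+y+y^2)^{2^{r-1}-1}$ is $1$ (with $y=x^{2\cdot3^v2^{\mathcal{T}-r}}$), and a degree contradiction produced by the shift $x^{3^v2^{\mathcal{T}}}=y^{2^{r-1}}$. The only difference is presentation. You move $1+y^{2^{r-1}}$ to the other side, since it is its own inverse modulo $y^{2^r}$, and contradict $\deg\delta<2\cdot3^v j$ directly. The paper instead keeps both sides and shows the right-hand side reaches degree $2\cdot3^v(2^{\mathcal{T}}-2^{\mathcal{T}-r})$ while the left-hand side cannot. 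You also treat $r=2$ separately, which the paper's appeal to Lemma~\ref{lemma1} glosses over.
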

\begin{proof}
		As in Theorem~\ref{theorem24}, we have 	 $P(x)=P^*(x)=x^{2\cdot3^v}+x^{3^v}+1$,  $Ord(x^{2\cdot3^v}+x^{3^v}+1)=e= 3^{v+1}$, and $U(x)=U^*(x)= x^{3^v}+1.$ 	Suppose that $C_{2^\mathcal{T}-2^{\mathcal{T}-r}}$ is not an LCD code. Then, by Theorem~\ref{theorem16}, there exists non-zero polynomials $\gamma(x), \delta(x)\in \mathbb{F}_2[x]$ such that $\deg(\gamma(x))< 2 \cdot3^v2^{\mathcal{T}-r}$, $\deg(\delta(x))< 2 \cdot3^v (2^\mathcal{T}-2^{\mathcal{T}-r})$, and
			\begin{equation*}
			\big(x^{2\cdot3^v}+x^{3^v}+1\big)^{2^\mathcal{T}-2^{\mathcal{T}-r+1}}\gamma(x) \equiv   \big(x^{3^v}+1\big)^{2^\mathcal{T}}\delta(x) \mod (x^{ 2 \cdot 3^v 2^\mathcal{T}}).
		\end{equation*}
		Equivalently, 
		\begin{equation} \label{eq66}
			\Big(x^{2\cdot3^v 2^{\mathcal{T}-r+1}}+x^{3^v2^{\mathcal{T}-r+1}}+1\Big)^{2^{r-1}-1} \gamma(x) \equiv  \big(x^{3^v}+1\big)^{2^\mathcal{T}}\delta(x) \mod (x^{ 2 \cdot 3^v 2^\mathcal{T}}).
		\end{equation} 	Now, since $\deg (\delta(x)) < 2 \cdot3^v (2^\mathcal{T}-2^{\mathcal{T}-r})$, we can write
	\begin{equation*}
	\delta(x)= \sum_{j=0}^{2 \cdot3^v (2^\mathcal{T}-2^{\mathcal{T}-r})-1} \delta_j x^{j}= \sum_{j=0}^{3^v 2^\mathcal{T}-1} \delta_j x^{j}+ \sum_{j=3^v2^\mathcal{T}}^{3^v(2^{\mathcal{T}+1}-2^{\mathcal{T}-r+1})-1}  \delta_j x^{j}.
\end{equation*} Multiplying by $	\left(x^{3^v}+1\right)^{2^\mathcal{T}}$, we have \allowdisplaybreaks
\begin{align*}
\left(x^{3^v}+1\right)^{2^\mathcal{T}}\delta(x) = & \sum_{j=0}^{3^v 2^\mathcal{T}-1} \delta_j x^{j}+ \sum_{j=3^v2^\mathcal{T}}^{3^v(2^{\mathcal{T}+1}-2^{\mathcal{T}-r+1})-1}  \delta_j x^{j}+x^{3^v2^\mathcal{T}} \sum_{j=0}^{3^v 2^\mathcal{T}-1} \delta_j x^{j}+ x^{3^v2^\mathcal{T}} \sum_{j=3^v2^\mathcal{T}}^{3^v(2^{\mathcal{T}+1}-2^{\mathcal{T}-r+1})-1}  \delta_j x^{j}.
\end{align*} Therefore, Eq.~\eqref{eq66} reduces to \allowdisplaybreaks
\begin{align}\label{eq67}
\nonumber\left(x^{2\cdot3^v 2^{\mathcal{T}-r+1}}+x^{3^v2^{\mathcal{T}-r+1}}+1\right)^{2^{r-1}-1} \gamma(x)=& \sum_{j=0}^{3^v 2^\mathcal{T}-1} \delta_j x^{j}+ \sum_{j=3^v2^\mathcal{T}}^{3^v(2^{\mathcal{T}+1}-2^{\mathcal{T}-r+1})-1}  \delta_j x^{j}+x^{3^v2^\mathcal{T}} \sum_{j=0}^{3^v 2^\mathcal{T}-1} \delta_j x^{j}\\\nonumber
=& \sum_{j=0}^{3^v 2^\mathcal{T}-1} \delta_j x^{j}+ \sum_{j=3^v2^\mathcal{T}}^{3^v(2^{\mathcal{T}+1}-2^{\mathcal{T}-r+1})-1}  \delta_j x^{j}\\&+x^{3^v2^\mathcal{T}} \sum_{j=0}^{3^v(2^\mathcal{T}-2^{\mathcal{T}-r+1})-1} \delta_j x^{j}+ x^{3^v2^\mathcal{T}} \sum_{j=3^v(2^\mathcal{T}-2^{\mathcal{T}-r+1})}^{3^v 2^\mathcal{T}-1} \delta_j x^{j}.
\end{align} The polynomial $\left(x^{2\cdot3^v 2^{\mathcal{T}-r+1}}+x^{3^v2^{\mathcal{T}-r+1}}+1\right)^{2^{r-1}-1}$ can be expressed in the  form $  a_0 +a_1 x^{3^v2^{\mathcal{T}-r+1}}+ a_2 x^{2( 3^v2^{\mathcal{T}-r+1})}+ a_3 x^{3( 3^v2^{\mathcal{T}-r+1})}+\ldots + a_{2^r-2} x^{(2^r-2)( 3^v2^{\mathcal{T}-r+1})},
$ where  $a_i\in \mathbb{F}_2$. Moreover, by Lemma~\ref{lemma1}, we obtain 
$
a_{2^{r-1}-1}=1.
$ Since $\deg(\gamma(x))< 3^v2^{\mathcal{T}-r+1}$, it can be written as $\gamma(x)= \sum_{l=0}^{3^v 2^{\mathcal{T}-r+1}-1 } \gamma_l x^{l}$ for $\gamma_l \in \mathbb{F}_2$. and hence the LHS of Eq.~\eqref{eq67} can be represented as \allowdisplaybreaks
\begin{align*}
	a_0 \sum_{l=0}^{3^v 2^{\mathcal{T}-r+1}-1 } \gamma_l x^{l}+ a_1 x^{3^v2^{\mathcal{T}-r+1}} \sum_{l=0}^{3^v 2^{\mathcal{T}-r+1}-1 } \gamma_l x^{l}+\ldots+a_{2^{r-1}-2} x^{(2^{r-1}-2)(3^v2^{\mathcal{T}-r+1})}\sum_{l=0}^{3^v 2^{\mathcal{T}-r+1}-1 } \gamma_l x^{l}+\\  x^{(2^{r-1}-1)(3^v2^{\mathcal{T}-r+1})}\sum_{l=0}^{3^v 2^{\mathcal{T}-r+1}-1 } \gamma_l x^{l}+ a_{2^{r-1}} x^{2^{r-1}(3^v2^{\mathcal{T}-r+1})}\sum_{l=0}^{3^v 2^{\mathcal{T}-r+1}-1 } \gamma_l x^{l}+\ldots+ a_{2^r-2} x^{(2^r-2)( 3^v2^{\mathcal{T}-r+1})} \sum_{l=0}^{3^v 2^{\mathcal{T}-r+1}-1 } \gamma_l x^{l}.&
\end{align*} 	Since $\gamma(x)\ne 0$, thus in the LHS of Eq.~\eqref{eq67}, the coefficient of $x^{3^v(2^\mathcal{T}- 2^{\mathcal{T}-r+1})+l}$ is $1$ for some $0\leq l \leq 3^v 2^{\mathcal{T}-r+1}-1.$ Hence, by Eq.~\eqref{eq67}, we have 
$
\delta_{3^v(2^\mathcal{T}- 2^{\mathcal{T}-r+1})+l}= 1
$ for some  $0\leq l \leq 3^v 2^{\mathcal{T}-r+1}-1.$  Consequently, the polynomial
\begin{equation*}
x^{3^v2^\mathcal{T}} \sum_{j=3^v(2^\mathcal{T}-2^{\mathcal{T}-r+1})}^{3^v 2^\mathcal{T}-1} \delta_j x^{j} \ne 0.
\end{equation*} This implies that the degree of the RHS of  Eq.~\eqref{eq67} is $\geq 2 \cdot 3^v \left(2^\mathcal{T}-2^{\mathcal{T}-r}\right)$, while the degree of the LHS of Eq.~\eqref{eq67} is strictly less than  $ 2 \cdot 3^v \left(2^\mathcal{T}-2^{\mathcal{T}-r}\right).$ This contradiction shows that the  Equality~\eqref{eq67} cannot hold. Hence, the code  $C_{2^\mathcal{T}-2^{\mathcal{T}-r}}$ is  an LCD code.  This completes the proof.
\end{proof}
The previous two theorems provide large classes of LCD codes corresponding to some particular powers of the polynomial $x^{2\cdot3^v}+x^{3^v}+1$. We now consider an additional case that does not belong to the above families.

\begin{theorem}\label{theorem26}
	The binary polycyclic code $C_{3}= \big\langle \left(x^{2\cdot3^v}+x^{3^v}+1\right)^{3} \big\rangle$ associated with $ \left(x^{2\cdot3^v}+x^{3^v}+1\right)^{2^\mathcal{T}}$, where  $\mathcal{T}\geq 3$ and $v\geq 0$, is an LCD code.
\end{theorem}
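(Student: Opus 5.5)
We will apply Theorem~\ref{theorem15} with $j=3$. This is allowed because $\mathcal{T}\ge 3$ gives $3\le 2^{\mathcal{T}-1}$. Throughout, write $n=3^v$, $N=2^{\mathcal{T}}$, $P(x)=x^{2n}+x^n+1$, so that $m=2n$, $m\mathcal{L}=2nN$, $e=3n$ and $U(x)=U^*(x)=x^n+1$. Then $U(x)U^*(x)=(x^n+1)^2$, and the criterion reads as follows. For every nonzero $\delta(x)$ with $\deg\delta<6n$, the polynomial
\begin{equation*}
R(x)=(x^{3n}+1)^{N-6}(x^n+1)^6\delta(x) \bmod x^{2nN}
\end{equation*}
must have degree at least $2n(N-3)$. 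Equivalently, $R(x)$ must have a nonzero coefficient in the top window $[2nN-6n,\,2nN)$.

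The key simplification is to work in the power series ring $\mathbb{F}_2[[x]]$. There $x^{3n}+1$ is invertible, and $(x^{3n}+1)^N=1+x^{3nN}\equiv 1 \bmod x^{2nN}$. Hence $(x^{3n}+1)^{N-6}\equiv (x^{3n}+1)^{-6}$. Since $(x^n+1)^6=(1+x^{2n})(1+x^{4n})$, multiplying numerator and denominator by $(x^{3n}+1)^2$ gives
\begin{equation*}
R(x)\equiv \delta(x)\,\frac{Q(x)}{1+x^{24n}} \bmod x^{2nN},\qquad Q(x)=(1+x^{2n})(1+x^{4n})(1+x^{6n})=1+x^{2n}+x^{4n}+x^{8n}+x^{10n}+x^{12n}.
\end{equation*}
Thus $R$ is the truncation of $\delta Q\sum_{k\ge0}x^{24nk}$. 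Since $\deg(\delta Q)<18n<24n$, the coefficient sequence of $R$ is periodic with period $24n$, and each period is just $\delta Q$ padded with zeros.

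Next we locate the top window modulo $24n$. For $N=2^{\mathcal{T}}$ with $\mathcal{T}\ge3$, $2N\bmod 24$ equals $16$ when $\mathcal{T}$ is odd and $8$ when $\mathcal{T}$ is even. So the window $[2nN-6n,2nN)$ corresponds to $[10n,16n)$ or $[2n,8n)$ inside one period. It also lies entirely in $[0,2nN)$, because $2nN\ge 16n$. Hence it suffices to show that for every nonzero $\delta$ with $\deg\delta<6n$, the polynomial $\delta Q$ has a nonzero coefficient in $[10n,16n)$ and a nonzero coefficient in $[2n,8n)$.

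Since all exponents of $Q$ are multiples of $n$, we split $\delta=\sum_{i=0}^{n-1}x^i\delta_i(x^n)$, exactly as in the proofs of Theorem~\ref{theorem19} and Theorem~\ref{theorem24}, with each $\deg\delta_i<6$. This reduces the problem to $n=1$. It remains to show that for every nonzero $d(y)\in\mathbb{F}_2[y]$ with $\deg d\le5$, the product $d(y)(1+y^2+y^4+y^8+y^{10}+y^{12})$ has a nonzero coefficient in degrees $10$–$15$ and one in degrees $2$–$7$.

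This last step is the main obstacle, because it is a finite but genuinely combinatorial check. We would first try a triangular argument. For the window $[2,8)$, look at the lowest-degree term $y^a$ of $d$. If $a\ge2$, the coefficient of $y^a$ in the product is $1$ and lies in $[2,8)$. If $a\le1$, the coefficients of $y^{a+2}$ and $y^{a+4}$ involve only the coefficients $d_{a+2}$ and $d_{a+4}$, and a short case analysis should force a nonzero coefficient in the window. The window $[10,16)$ is handled symmetrically using the top-degree term $y^b$ of $d$, since $b+10\in[10,16)$. If the case analysis grows unwieldy, we would instead verify the $63$ possible $d$ directly, in the same way the paper tabulates small cases.
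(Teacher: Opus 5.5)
Your reduction is correct. It packages the same computation the paper does more cleanly.

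The paper reaches the same periodic structure by exact expansion. It writes $(x^{3n}+1)^{N-8}=\sum_{j=0}^{N/8-1}x^{24nj}$ and $(x^{3n}+1)^2(x^{2n}+1)^3=(x^{8n}+1)(x^{4n}+x^{2n}+1)=Q$. It then splits into $\mathcal{T}\in\{3,4\}$ and $\mathcal{T}\ge 5$, truncating the sum at $\lfloor 2^{\mathcal{T}-2}/3\rfloor$. The parity of $\mathcal{T}$ decides whether the top window sees $[10n,16n)$ or $[2n,8n)$ of $Q\delta$, which are exactly your two windows.

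Your inversion $(1+x^{3n})^{N-6}\equiv(1+x^{6n})/(1+x^{24n}) \bmod x^{2nN}$ treats all $\mathcal{T}\ge 3$ at once. Your residue-class split makes the core claim independent of $n$.

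The paper instead argues directly for general $n$. If $\deg\delta<4n$, the leading term of $Q\delta$ falls in $[12n,16n)$. Otherwise it picks $l<2n$ with $\delta_{4n+l}=1$. The coefficients of $x^{10n+l}$ and $x^{12n+l}$ are then $\delta_{2n+l}+\delta_l$ and $1+\delta_{2n+l}+\delta_l$, so one of them is nonzero. The low window $[2n,8n)$ is only said to follow ``as in Subcase I''.

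The weak point is your last step. You leave it unexecuted, and the shortcut you sketch fails as stated.

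For $a\le 1$, the coefficients of $y^{a+2}$ and $y^{a+4}$ are $1+d_{a+2}$ and $1+d_{a+2}+d_{a+4}$. Both vanish when $d_{a+2}=1$ and $d_{a+4}=0$; for example, $d=1+y^2$ gives $dQ_1=1+y^6+y^8+y^{14}$. Use $y^{a+4}$ and $y^{a+6}$ instead. Since $Q_1$ has no $y^6$ term and $d_{a+6}=0$, their coefficients are $1+d_{a+2}+d_{a+4}$ and $d_{a+2}+d_{a+4}$. These sum to $1$, and $a+6\le 7$.

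For the top window, the coefficient of $y^{b+10}$ is $1+d_{b-2}$, which may vanish. Use $y^{b+12}$ when $b\le 3$. When $b\in\{4,5\}$, use $y^{b+6}$ and $y^{b+8}$, whose coefficients are $d_{b-2}+d_{b-4}$ and $1+d_{b-2}+d_{b-4}$. This is exactly the paper's argument at $n=1$.

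Alternatively, $Q_1$ is palindromic of degree $12$ and $\deg d\le 5$. Replacing $d$ by $y^5d(1/y)$ sends degree $k$ to $17-k$ and swaps the windows $[2,7]$ and $[10,15]$, so checking one window suffices. Also, $Q_1(y)=q(y^2)$ with $q(t)=(1+t+t^2)(1+t^4)$, which cuts the brute-force check to $7$ cases.

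With either fix your proof is complete. It also handles the even-$\mathcal{T}$ window explicitly rather than by analogy.
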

\begin{proof}
		Here, we have $P(x)=P^*(x)=x^{2\cdot3^v}+x^{3^v}+1$,  $Ord(x^{2\cdot3^v}+x^{3^v}+1)=e= 3^{v+1}$, and $U(x)=U^*(x)= x^{3^v}+1.$
	By Theorem~\ref{theorem15}, it suffices to prove that for every non-zero $\delta(x)\in \mathbb{F}_2[x]$ with $\deg (\delta(x))< 6 \cdot3^v$, 
	\begin{equation*}
		\deg(R(x))\geq 2 \cdot3^v (2^\mathcal{T}-3),
	\end{equation*}  where 
	\begin{equation*}
		R(x)=	\left(x^{3^{v+1}}+1\right)^{2^\mathcal{T}-6} \left(x^{2 \cdot3^v}+1\right)^{3} \delta(x) \bmod \left(x^{3^v2^{\mathcal{T}+1}}\right).
	\end{equation*} We expand $R(x)$ as \allowdisplaybreaks
	\begin{align*}
		R(x)=& \left(x^{3^{v+1}}+1\right)^{2^\mathcal{T}-8} \left(x^{3^{v+1}}+1\right)^{2}  \left(x^{2 \cdot3^v}+1\right)^{3} \delta(x) \bmod \left(x^{3^v2^{\mathcal{T}+1}}\right)\\
		=& \left(x^{24 \cdot 3^v}+1\right)^{2^{\mathcal{T}-3}-1} \left(x^{6 \cdot3^v}+1\right) \left(x^{2 \cdot3^v}+1\right)^{3} \delta(x) \bmod \left(x^{3^v2^{\mathcal{T}+1}}\right)\\
		=& \left(x^{24 \cdot 3^v}+1\right)^{2^{\mathcal{T}-3}-1}   \left(x^{2 \cdot3^v}+1\right) \left(x^{4\cdot3^v}+x^{2\cdot 3^v}+1\right) \left(x^{2 \cdot3^v}+1\right)^{3} \delta(x) \bmod \left(x^{3^v2^{\mathcal{T}+1}}\right)\\
		= & \left(\sum_{j=0}^{2^{\mathcal{T}-3}-1} x^{24 \cdot 3^v j}\right) \left(x^{8 \cdot3^v}+1\right) \left(x^{4\cdot3^v}+x^{2\cdot 3^v}+1\right)  \delta(x) \bmod \left(x^{3^v2^{\mathcal{T}+1}}\right).
	\end{align*}
	Now, consider the following two cases:\\
\textbf{Case I:} If $24 \cdot 3^v\left(2^{\mathcal{T}-3}-1\right) < 3^v2^{\mathcal{T}+1}$, then $2^\mathcal{T}< 24.$ Moreover, since $ 3 \leq \mathcal{T}$, thus $ 3\leq \mathcal{T} \leq 4$.  Next, we consider the following subcases:

\textbf{Subcase I:} If $\mathcal{T}=3$, then  we have \allowdisplaybreaks
\begin{align*}
	\nonumber	R(x)= & \left(x^{8 \cdot3^v}+1\right) \left(x^{4\cdot3^v}+x^{2\cdot 3^v}+1\right)  \delta(x) \bmod \left(x^{16 \cdot 3^v}\right).
\end{align*} Here, $\deg (\delta(x))< 6\cdot3^v.$ Our aim is to show that $\deg (R(x)) \geq 10\cdot 3^v.$  
If $\deg (\delta(x)) < 4 \cdot 3^v$, then $\deg (R(x)) \geq 12 \cdot 3^v$, and we are done. Suppose $4 \cdot 3^v \leq \deg (\delta(x)) < 6 \cdot 3^v.$  Then it can be expressed as 
\begin{equation*}
	\delta(x)= \sum_{j=0}^{6\cdot 3^v -1} \delta_j x^j,\quad \delta_j \in \mathbb{F}_2.
\end{equation*} Moreover, $\delta_j =1 \text{ for some } 4 \cdot 3^v \leq j < 6 \cdot 3^v$. Consequently, there exists 
$0\leq l \leq 2\cdot3^v-1$ such that $\delta_{4\cdot3^v+l}=1.$ For such $l$, the coefficient of $x^{10\cdot3^v+l}$ in $R(x)$ is $\delta_{2\cdot3^v+l}+\delta_l$, and the coefficient of $x^{12\cdot3^v+l}$ in $R(x)$ is $\delta_{4\cdot3^v+l}+\delta_{2\cdot3^v+l}+\delta_l$. Thus, either $\delta_{2\cdot3^v+l}+\delta_l=1$ or $\delta_{4\cdot3^v+l}+\delta_{2\cdot3^v+l}+\delta_l=1.$ Hence, $\deg (R(x)) \geq 10\cdot 3^v.$  
Therefore, in this case, $C_{3}$ is an LCD code.

\textbf{Subcase II:} If  $\mathcal{T}=4$, then 
\allowdisplaybreaks
\begin{align*}
	R(x)= & \left(x^{8 \cdot3^v}+1\right) \left(x^{4\cdot3^v}+x^{2\cdot 3^v}+1\right)  \delta(x) + x^{24\cdot 3^v} \left(x^{8 \cdot3^v}+1\right) \left(x^{4\cdot3^v}+x^{2\cdot 3^v}+1\right)  \delta(x) \bmod \left(x^{32 \cdot 3^v}\right),
\end{align*} where $\deg(\delta(x)) < 6 \cdot3^v$. Our aim is to show that $\deg (R(x)) \geq 26\cdot 3^v.$   As in Subcase~I, in this case also, we get that $C_{3}$ is an LCD code.\\
\textbf{Case II:} If $24 \cdot 3^v\left(2^{\mathcal{T}-3}-1\right) \geq 3^v2^{\mathcal{T}+1}$, then $2^\mathcal{T}\geq 24.$ Let $\mathcal{N}$ be the largest integer satisfying $0\leq \mathcal{N} \leq 2^{\mathcal{T}-3}-2$ such that $24 \cdot 3^v \mathcal{N} < 3^v2^{\mathcal{T}+1}. $ Then 
\begin{equation*}
	\mathcal{N}= \Big\lfloor \frac{2^{\mathcal{T}-2}}{3} \Big\rfloor = \begin{cases}
		\frac{2^{\mathcal{T}-2}-1}{3}, & \text{if } \mathcal{T}-2 \text{ is even},\\
		\frac{2^{\mathcal{T}-2}-2}{3}, & \text{if } \mathcal{T}-2 \text{ is odd}.
	\end{cases}
\end{equation*} Thus,
\allowdisplaybreaks
\begin{align*}
	\nonumber	R(x)=&  \left(x^{8 \cdot3^v}+1\right) \left(x^{4\cdot3^v}+x^{2\cdot 3^v}+1\right)  \delta(x) + x^{24\cdot3^v} \left(x^{8 \cdot3^v}+1\right) \left(x^{4\cdot3^v}+x^{2\cdot 3^v}+1\right)  \delta(x) + \ldots\\&+x^{24\cdot3^v\mathcal{N}} \left(x^{8 \cdot3^v}+1\right) \left(x^{4\cdot3^v}+x^{2\cdot 3^v}+1\right)  \delta(x)  \mod \left(x^{3^v2^{\mathcal{T}+1}}\right),
\end{align*} where $\deg (\delta(x))< 6\cdot3^v.$
Now,  considering both the cases $\mathcal{T}-2$  odd and $\mathcal{T}-2$  even, and following the same as in Case~I, we get that $	\deg(R(x))\geq 2 \cdot3^v (2^\mathcal{T}-3)$.  Thus, $C_{3}$ is an LCD code. 

This completes the proof.
\end{proof}
The above result, together with Theorems~\ref{theorem24} and \ref{theorem25}, suggests a broader pattern.  Motivated by these results together with extensive computations, we propose the following conjecture.
\begin{conjecture}
	All  binary polycyclic codes associated with $\left(x^{2\cdot3^v}+x^{3^v}+1\right)^{2^\mathcal{T}}$, where  $v\geq 0$, $\mathcal{T}\geq1$, are LCD codes. 
\end{conjecture}
We now determine the parameters of the LCD codes associated with $\left(x^{2\cdot3^v}+x^{3^v}+1\right)^{2^\mathcal{T}}$, where  $v\geq 0$, $\mathcal{T}\geq1$.
By combining Theorems~\ref{theorem20} and \ref{theorem24}, we obtain the following result.
\begin{theorem}\label{theorem27}
	For $0 \leq r \leq \mathcal{T}-1$, the code $C_{2^r} = \Big\langle \left(x^{2\cdot3^v}+x^{3^v}+1\right)^{2^r} \Big\rangle$ is an LCD code with parameters $[3^v2^{\mathcal{T}+1},\; 3^v2^{r+1}\\(2^{\mathcal{T}-r}-1),\; 2].$
	Moreover, its dual $C_{2^r}^\perp$ is an LCD code with parameters
	$
	\left[3^v2^{\mathcal{T}+1},\; 3^v2^{r+1},\; d_{2^r}^\perp\right],
	$
	where $d_{2^r}^\perp$ is given in Theorem~\ref{theorem12}.
\end{theorem}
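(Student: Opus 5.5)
This is essentially an assembly of earlier results, so I will collect the parameters piece by piece. Here $P(x)=x^{2\cdot3^v}+x^{3^v}+1$ has degree $m=2\cdot 3^v$ and $\mathcal{L}=2^{\mathcal{T}}$. The code length is therefore $m\mathcal{L}=2\cdot3^v2^{\mathcal{T}}=3^v2^{\mathcal{T}+1}$.

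For the dimension I will apply Theorem~\ref{theorem1} with $j=2^r$. This gives
\[
\dim C_{2^r}=m(\mathcal{L}-2^r)=2\cdot3^v(2^{\mathcal{T}}-2^r)=3^v2^{r+1}(2^{\mathcal{T}-r}-1).
\]
The range $0\le r\le \mathcal{T}-1$ gives $1\le 2^r\le 2^{\mathcal{T}-1}$. By the first case of Theorem~\ref{theorem20} (which uses $\mathcal{J}=1$ from Theorem~\ref{theorem18}), the minimum distance is therefore $d_{2^r}=2$. Theorem~\ref{theorem24} gives that $C_{2^r}$ is LCD. This settles the first assertion.

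For the dual, I will first note that $\dim C_{2^r}^\perp=n-\dim C_{2^r}=3^v2^{\mathcal{T}+1}-3^v2^{r+1}(2^{\mathcal{T}-r}-1)=3^v2^{r+1}$. Next comes the LCD property. The dual of an LCD code is LCD, because $(C^\perp)^\perp=C$ gives $C^\perp\cap (C^\perp)^\perp=C^\perp\cap C=\{0\}$. So $C_{2^r}^\perp$ is LCD.

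Its minimum distance comes from Theorem~\ref{theorem12} with $s=\mathcal{T}-r$, which ranges over $1\le s\le\mathcal{T}$ as required. The inputs are $e=3^{v+1}$ from Corollary~\ref{corollary3}, $U^*(x)=x^{3^v}+1$, and $V(x)=(x^{3^{v+1}}+1)^{2^{\mathcal{T}-r}-1}(x^{3^v}+1)$.

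I expect no real obstacle here. The only points needing care are the arithmetic simplification of the dimension and checking that the index ranges match the hypotheses of Theorems~\ref{theorem12}, \ref{theorem20} and \ref{theorem24}.
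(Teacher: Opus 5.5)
Your proposal is correct and follows the paper's route: the paper obtains this theorem simply by combining Theorem~\ref{theorem20} (giving $d_{2^r}=2$) with Theorem~\ref{theorem24} (the LCD property), with the dimension coming from Theorem~\ref{theorem1} and the dual's distance from Theorem~\ref{theorem12} at $s=\mathcal{T}-r$. Your remark that $C^\perp\cap(C^\perp)^\perp=C^\perp\cap C=\{0\}$ fills in the one step the paper leaves implicit, namely that the dual is also LCD.
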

In particular, for $r=0$, we obtain a more explicit description using Corollary~\ref{corollary3}.

\begin{theorem}\label{theorem28}
	For $\mathcal{T} \geq 1$ and $v \geq 0$, the code
	$	C_1 = \left\langle x^{2\cdot3^v}+x^{3^v}+1 \right\rangle	$ 	and its dual $C_1^\perp$ are LCD codes with parameters
	$
	\left[3^v2^{\mathcal{T}+1},\; 3^v\left(2^{\mathcal{T}+1}-2\right),\; 2\right]
	$
	and
	$
	\left[3^v2^{\mathcal{T}+1},\; 2\cdot3^v,\; d_1^\perp\right],
	$
	respectively, where
	\[
	d_1^\perp =
	\begin{cases}
		\dfrac{2^{\mathcal{T}+2}-2}{3}, & \text{if } \mathcal{T} \text{ is odd},\\[6pt]
		\dfrac{2^{\mathcal{T}+2}-1}{3}, & \text{if } \mathcal{T} \text{ is even}.
	\end{cases}
	\]
\end{theorem}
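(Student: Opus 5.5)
The proof assembles results already established; no new computation is needed beyond substituting $r=0$. Take $P(x)=x^{2\cdot3^v}+x^{3^v}+1$, so $m=2\cdot3^v$ and $\mathcal{L}=2^{\mathcal{T}}$. The length is then $m\mathcal{L}=3^v2^{\mathcal{T}+1}$.

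\textbf{Dimensions.} By Theorem~\ref{theorem1}, $\dim C_1=m(\mathcal{L}-1)=2\cdot3^v(2^{\mathcal{T}}-1)=3^v(2^{\mathcal{T}+1}-2)$. Hence $\dim C_1^\perp=m\mathcal{L}-\dim C_1=2\cdot3^v$. These agree with Theorem~\ref{theorem27} at $r=0$.

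\textbf{LCD property.} Theorem~\ref{theorem24} with $r=0$ (allowed, since $0\le\mathcal{T}-1$) gives that $C_1=C_{2^0}$ is LCD. For the dual we use the general fact that a linear code $C$ is LCD iff $C^\perp$ is LCD. This holds because $(C^\perp)^\perp=C$, so $C^\perp\cap(C^\perp)^\perp=C^\perp\cap C=\{0\}$.

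\textbf{Minimum distances.} Since $e=\mathrm{Ord}(P(x))=3^{v+1}$ and $\mathcal{L}=2^{\mathcal{T}}$, the smallest $\mathcal{J}$ with $3^{v+1}2^{\mathcal{T}-\mathcal{J}}<2\cdot3^v2^{\mathcal{T}}$ is $\mathcal{J}=1$, because $3<4$. Theorem~\ref{theorem18} (equivalently the first line of Theorem~\ref{theorem20}) then gives $d_1=2$. Concretely, $P(x)(x^{3^v}+1)=x^{3^{v+1}}+1$ is a weight-$2$ codeword, and its degree $3^{v+1}<3^v2^{\mathcal{T}+1}$ since $\mathcal{T}\ge1$. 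The value of $d_1^\perp$ is exactly Corollary~\ref{corollary3}, which treats the case $\mathcal{L}=2^{\mathcal{T}}$; it gives $\frac{2^{\mathcal{T}+2}-2}{3}$ for odd $\mathcal{T}$ and $\frac{2^{\mathcal{T}+2}-1}{3}$ for even $\mathcal{T}$.

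There is no real obstacle, since every ingredient is already proved. The only care needed is in verifying the hypotheses of the cited results: that $r=0$ is admissible in Theorem~\ref{theorem24}, that $\mathcal{J}=1$ in Theorem~\ref{theorem18}, and that Corollary~\ref{corollary3} applies with $\mathcal{L}=2^{\mathcal{T}}$ for all $\mathcal{T}\ge1$.
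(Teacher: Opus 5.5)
Your proposal is correct and follows essentially the same route as the paper, which obtains this result by specializing Theorem~\ref{theorem27} (itself a combination of Theorems~\ref{theorem20} and~\ref{theorem24}) to $r=0$ and reading off $d_1^\perp$ from Corollary~\ref{corollary3}. Two of your steps are left implicit in the paper: the observation that $C^\perp$ is LCD whenever $C$ is (since $(C^\perp)^\perp=C$), and the concrete weight-$2$ codeword $x^{3^{v+1}}+1$.
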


Combining Theorems~\ref{theorem20} and \ref{theorem25}, we obtain another family of LCD codes.
\begin{theorem}\label{theorem29}
	For $2 \leq r \leq \mathcal{T}$, the code
	$
	C_{2^\mathcal{T}-2^{\mathcal{T}-r}} =
	\Big\langle \left(x^{2\cdot3^v}+x^{3^v}+1\right)^{2^\mathcal{T}-2^{\mathcal{T}-r}} \Big\rangle
	$
	is an LCD code with parameters
	$
	\left[3^v2^{\mathcal{T}+1},\; 3^v2^{\mathcal{T}-r+1},\; d_{2^{\mathcal{T}}-2^{\mathcal{T}-r}}\right].
	$
	Moreover, its dual is also an LCD code with parameters
	$
	[3^v2^{\mathcal{T}+1},\; 3^v(2^{\mathcal{T}+1}-2^{\mathcal{T}-r+1}),\\ d_{2^{\mathcal{T}}-2^{\mathcal{T}-r}}^\perp],
	$
	where
	\[
	d_{2^{\mathcal{T}}-2^{\mathcal{T}-r}} =
	\begin{cases}
		\dfrac{2^{r+2}-1}{3}, & \text{if } r \text{ is even},\\[6pt]
		\dfrac{2^{r+2}-2}{3}, & \text{if } r \text{ is odd},
	\end{cases}
	\]
	and $d_{2^{\mathcal{T}}-2^{\mathcal{T}-r}}^\perp$ is given by Theorem~\ref{theorem13}.
\end{theorem}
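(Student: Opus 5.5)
The plan is to assemble the statement from results already proved, with no new computation. Throughout, $P(x)=x^{2\cdot3^v}+x^{3^v}+1$, $m=2\cdot3^v$ and $\mathcal{L}=2^\mathcal{T}$.

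First, the LCD property of $C_{2^\mathcal{T}-2^{\mathcal{T}-r}}$ for $2\leq r\leq\mathcal{T}$ is exactly Theorem~\ref{theorem25}, so nothing further is needed there.

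Second, the parameters of $C_{2^\mathcal{T}-2^{\mathcal{T}-r}}$ follow from Theorem~\ref{theorem1}.
\begin{itemize}
\item The length is $m\mathcal{L}=2\cdot3^v\cdot2^\mathcal{T}=3^v2^{\mathcal{T}+1}$.
\item The dimension is $m(\mathcal{L}-j)$ with $j=2^\mathcal{T}-2^{\mathcal{T}-r}$, which gives $2\cdot3^v\cdot2^{\mathcal{T}-r}=3^v2^{\mathcal{T}-r+1}$.
\item The minimum distance comes from Theorem~\ref{theorem19} (equivalently, the second and third lines of Theorem~\ref{theorem20}). It equals $\frac{2^{r+2}-1}{3}$ for even $r$ and $\frac{2^{r+2}-2}{3}$ for odd $r$. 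Theorem~\ref{theorem19} applies here because we are in the case $\mathcal{L}=2^\mathcal{T}$ and $2\leq r\leq\mathcal{T}$.
\end{itemize}

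Third, for the dual we use the general identity $(C^\perp)^\perp=C$, valid for any linear code over a finite field with the Euclidean inner product. This gives
\begin{equation*}
C^\perp\cap (C^\perp)^\perp=C^\perp\cap C=\{0\},
\end{equation*}
so the dual of an LCD code is again LCD. The dimension of $C_{2^\mathcal{T}-2^{\mathcal{T}-r}}^\perp$ is $3^v2^{\mathcal{T}+1}-3^v2^{\mathcal{T}-r+1}=3^v(2^{\mathcal{T}+1}-2^{\mathcal{T}-r+1})$. Its minimum distance is the quantity described in Theorem~\ref{theorem13}, which applies since $\mathcal{L}=2^\mathcal{T}$ and $1\leq r\leq\mathcal{T}$. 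For this theorem we instantiate $e=3^{v+1}$ and $U^*(x)=x^{3^v}+1$, as computed in Corollary~\ref{corollary3}.

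There is no real obstacle here. The only points needing care are checking that the index ranges match, namely $2\leq r\leq\mathcal{T}$ against the hypotheses of Theorems~\ref{theorem13}, \ref{theorem19} and \ref{theorem25}, and the small arithmetic simplifications of the length and dimensions.
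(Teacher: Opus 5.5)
Your proposal is correct and matches the paper's argument, which obtains Theorem~\ref{theorem29} by combining Theorem~\ref{theorem20} (the distance values, i.e.\ Theorem~\ref{theorem19}) with Theorem~\ref{theorem25} (the LCD property). Like the paper, you take the dimensions from Theorem~\ref{theorem1}, use $(C^\perp)^\perp=C$ to conclude that the dual is LCD, and defer the dual distance to Theorem~\ref{theorem13}; your write-up simply spells out these routine steps more explicitly than the paper does.
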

\begin{remark}
	Using the bounds on $LCD(n,k)$ from \cite{galvez2018bounds}, it follows that the code $C_{2^r}$ is LCD optimal whenever 
	$
	3^v 2^{\mathcal{T}+1} \geq 2^{3^v2^{r+1}}.
	$ Equivalently, for 
	$
	\mathcal{T} \geq 3^v2^{r+1} - \lceil v \log_2 3 \rceil - 1,
	$
	the code $C_{2^r}$ attains the largest possible minimum distance among all binary LCD codes with the same length and dimension.
	Moreover, the dual code $C_1^\perp$ is LCD optimal for $v=0$. 
\end{remark}

\section{Conclusion}\label{sec7} 
In this paper, we have studied binary polycyclic codes associated with powers of irreducible polynomials, with emphasis on their algebraic structure, Hamming distance, dual codes, and LCD property. We first obtained a complete structural description of these codes and then developed general methods for determining the Hamming distance of the codes and their Euclidean duals. Several examples were included to illustrate the theory and demonstrate the effectiveness of the proposed approach. We also investigated the LCD property of binary polycyclic codes, established necessary and sufficient conditions for such codes to be LCD codes, and constructed several families of binary LCD codes. These constructions further produced many optimal and LCD optimal binary LCD codes. A substantial part of the paper was devoted to polycyclic codes associated with powers of self-reciprocal irreducible trinomials. For this important class, we determined the Hamming distance of all such codes and showed that the corresponding codes are reversible.

In the present work, our attention was restricted to binary polycyclic codes associated with powers of irreducible polynomials. Future studies may involve the extension of these methodologies to non-binary fields, the exploration of polycyclic codes associated with other types of polynomials, and discovering more families of optimal and LCD-optimal codes..

\end{document}